\newcommand{\mydriver}{hypertex}
\renewcommand{\mydriver}{pdftex}
\newcommand{\nis}{\mathrm{nis}}
\newcommand{\Var}{\mathrm{Var}}
\newcommand{\eps}{\varepsilon}
\newcommand{\E}{\mathop{\mathbf{E}}}
\newcommand{\poly}{\mathrm{poly}}
\newcommand{\cost}{\mathrm{cost}}
\newcommand{\vol}{\mathrm{vol}}
\newcommand{\MST}{\mathrm{MST}}
\newcommand{\junk}[1]{}
\newcommand{\pnote}[1]{\textcolor{red}{(Pan: #1)}}
\definecolor{mulberry}{rgb}{0.77, 0.29, 0.55}
\newtheorem{theorem}{Theorem}[section]
\newtheorem{fact}[theorem]{Fact}
\newtheorem{lemma}[theorem]{Lemma}
\newtheorem{corollary}[theorem]{Corollary}
\theoremstyle{definition}
\newtheorem{definition}[theorem]{Definition}
\providecommand{\abs}[1]{\ensuremath{\left\lvert#1\right\rvert}}
\definecolor{offwhite}{rgb}{0.98, 0.98, 0.98}
\title{\Large Sublinear Algorithms for Estimating Single-Linkage Clustering Costs}
\author{Pan Peng\thanks{School of Computer Science and Technology, University of Science and Technology of China. Email: \href{mailto:ppeng@ustc.edu.cn}{ppeng@ustc.edu.cn}. Supported in part by NSFC grant 62272431.} \and Christian Sohler\thanks{Department of Mathematics and Computer Science, University of Cologne. Email: \href{mailto:csohler@uni-koeln.de}{csohler@uni-koeln.de}. Funded by the Deutsche Forschungsgemeinschaft (DFG, German Research Foundation) – Project Number 459420781.}\and Yi Xu\thanks{School of Computer Science and Technology, University of Science and Technology of China. Email: \href{mailto:yi\_xu@mail.ustc.edu.cn}{\mbox{yi\_xu@mail.ustc.edu.cn}}. Supported in part by NSFC grant 62272431.}}
\date{}
\begin{document}
\maketitle

\begin{abstract}
Single-linkage clustering is a fundamental method for data analysis. 
{It proceeds iteratively by merging the two clusters with the smallest inter-cluster distance, starting from singleton clusters, until all points are combined into a single cluster. The distance between two clusters is defined as the minimum distance between any pair of points across the clusters.}
Algorithmically, one can compute a single-linkage $k$-clustering (a partition into $k$ clusters) by computing a minimum spanning tree and dropping the $k-1$ most costly edges. 
This clustering minimizes the sum of spanning tree weights of the clusters. This motivates us to define the cost of a single-linkage $k$-clustering as the weight of the corresponding spanning forest, denoted by $\mathrm{cost}_k$. Besides, if we consider single-linkage clustering as computing a hierarchy of clusterings, the total cost of the hierarchy is defined as the sum of the individual clusterings, denoted by $\mathrm{cost}(G) = \sum_{k=1}^{n} \mathrm{cost}_k$.

In this paper, we assume that the distances between data points are given as a graph $G$ with average degree $d$ and edge weights from $\{1,\dots, W\}$. If there is no edge, we assume the distance to be infinite.
{Given query access to the adjacency list of $G$, we present a sampling-based algorithm that computes a succinct representation of estimates $\widehat{\mathrm{cost}}_k$ for all $k$. The running time is $\tilde O(d\sqrt{W}/\varepsilon^3)$, and the estimates satisfy $\sum_{k=1}^{n} |\widehat{\mathrm{cost}}_k - \mathrm{cost}_k| \le \varepsilon\cdot \mathrm{cost}(G)$, for any $0<\varepsilon <1$. Thus we can approximate the cost of every $k$-clustering upto $(1+\varepsilon)$ factor \emph{on average}.}
In particular, our result ensures that we can estimate $\cost(G)$ upto a factor of $1\pm \varepsilon$ in the same running time. We further establish a lower bound showing that our algorithm is nearly optimal, {by proving that any algorithm achieving such accuracy must take $\Omega(d\sqrt{W}/\varepsilon^2)$ queries.}

We also extend our results to the similarity setting, where edges represent similarities rather than distances. In this case, the clusterings are defined by a maximum spanning tree, and our algorithms run in $\tilde{O}(dW/\varepsilon^3)$ time.
We also prove a nearly matching lower bound of $\Omega(Wd/\varepsilon^2)$ queries for estimating the total similarity-based cost.
These bounds reveal an interesting -- and perhaps surprising -- separation between the distance and similarity settings. Finally, we extend our algorithms to metric space settings and validate our theoretical findings through extensive experiments.
\end{abstract}

\thispagestyle{empty}

\newpage
\tableofcontents
\thispagestyle{empty}

\newpage
\pagenumbering{arabic}
\section{Introduction}
Hierarchical clustering is a fundamental and widely used technique in data analysis and machine learning. In agglomerative (bottom-up) clustering, we start with a set of $n$ objects and a defined distance or similarity relationship between these objects, as well as a method to evaluate the distance or similarity between clusters, i.e. groups of objects. The clustering process begins by treating each object as its own individual cluster. It then iteratively merges pairs of clusters based on a specified criterion related to their distance or similarity, continuing this process until only a single cluster remains. This merging process implicitly defines a hierarchy of clusters, and for every choice of the number of clusters $k$, the corresponding clustering is the one obtained after the first $n-k$ merges. Agglomerative clustering—typically implemented through methods such as single-linkage, complete-linkage, and average linkage—has been extensively studied and widely used in various fields, including data analysis and machine learning (see e.g. \cite{hastie2009elements,fortunato2010community}). These methods are well-established in standard data analysis toolkits. 

In this work, we focus on one of the most widely used agglomerative clustering methods: single-linkage clustering (SLC). When the underlying relationship between objects is defined in terms of \emph{distance}, SLC operates by iteratively identifying the \emph{closest} pair of objects that belong to different clusters and connecting them with an edge. This approach emphasizes the merging of clusters based on proximity. To mathematically represent this process, we utilize a graph $ G $ where each vertex corresponds to an object, and the weight of an edge indicates the distance between these objects. In this framework (see e.g. \citep{gower1969minimum}), SLC can be formalized by first constructing a minimum spanning tree (MST) $ T $ of $ G $. The edges of $ T $ are then considered in \emph{non-decreasing} order of their weights to facilitate cluster merging. This iterative process continues until the desired number of clusters, $ k $, is reached, resulting in a $ k $-clustering. In particular, when $ k=1 $, it reveals the complete hierarchical structure of the graph. In cases where the relationship between objects is based on \emph{similarity} rather than distance, SLC operates by iteratively identifying the \emph{most similar} pair of objects that belong to different clusters and can be defined using a \emph{maximum} spanning tree. Specifically, we first construct a maximum spanning tree (MaxST) $ T $ of graph $ G $ and then order the edges in $ T $ in \emph{non-increasing} order of their weights, subsequently merging clusters iteratively. 

SLC typically requires at least linear time for implementation. In many applications, the underlying graph can be massive, making computational analysis challenging -- simply reading the input may become impractical or even impossible. This motivates to design \emph{sublinear-time algorithms} that only read a small portion of the input to extract the SLC information. Since producing the full clustering hierarchy necessarily takes linear time, we instead ask: 
\begin{center}
\begin{tcolorbox}[colback=offwhite, colframe=lightgray, boxrule=1.5pt, arc=5mm, width=\textwidth]
\emph{Can we define \textbf{cost measures} that effectively capture the behavior and structure of SLC, and can we approximate these measures -- as well as key summary statistics, such as a \textbf{succinct representation} of the hierarchy -- in sublinear time?}
\end{tcolorbox}
\end{center}

\subsection{Our Contributions}
We answer the above question in the affirmative by introducing several cost measures and presenting nearly optimal sublinear-time algorithms that approximate these costs and the corresponding key summary statistics. We begin by introducing the cost measures.

Depending on the relationship between objects is distance or similarity, we consider two settings. %

\subsubsection{Cost Functions from SLC}
\paragraph{SLC in the Distance Graph} 
We are given a connected weighted graph $G=(V,E)$ with edge weights $w(e)\in [1,W]$, for some parameter $W\geq 1$. The edge weights are the distances between the objects; for any pair $i,j$ that does not form an edge, i.e., $(i,j)\notin E$, we define $w((i,j))=\infty$.

Recall that in each step of single-linkage clustering (SLC), we merge the two clusters that are connected by the closest pair of objects -- that is, by the shortest edge between them. Since this decision is made by evaluating pairs of clusters and the cost of merging them, it is natural to define a cost function at the cluster level. In particular, the cost of each cluster should be independent of the structure of the other clusters. This leads us to define the overall clustering cost as the sum of individual cluster costs, analogous to classical objectives in $k$-median and $k$-means clustering.

A key question, then, is: what cost function is implicitly being optimized by this greedy merging process? A natural candidate is the cost of the minimum spanning tree, which represents the minimum total weight needed to connect all elements of a cluster. Indeed, under this cost function, the optimal $k$-clustering is exactly the one produced by SLC when the algorithm is halted with $k$ clusters remaining. Now we formally define the $k$-clustering cost for SLC. 

\textbf{$k$-clustering Cost:} In SLC, the $k$-clustering corresponds to a partition into $k$ connected components obtained by removing the $k-1$ most expensive edges from the MST. This observation motivates the following definition: the cost of a $k$-clustering is defined as the sum of the weights of the minimum spanning trees of each of the $k$ clusters. We denote this cost as $\cost_k$. Formally, let $w_1, \dots, w_{n-1}$ be the edge weights of the MST sorted in non-decreasing order. Then, the \emph{cost of the $k$-single-linkage clustering ($k$-SLC)} is given by
\[
\cost_k = \sum_{i=1}^{n-k} w_i.
\]
This corresponds to the total weight of the $n-k$ smallest edges in the MST, that is, the edges used to form the $k$ clusters in SLC. Equivalently, it is the total weight of the resulting spanning forest with $k$ connected components.

\textbf{Total Cost of the Hierarchy:}
In a graph with $n$ vertices, the cost of the $n$-clustering is $0$, since every vertex forms its own cluster. At the other extreme, the $1$-clustering cost equals the total weight of the MST, marking the end of the hierarchy. 
We define the \emph{total cost of the clustering hierarchy} computed by the SLC algorithm as
    \begin{align}
        \cost(G)=\sum_{k=1}^{n} \cost_k =\sum_{k=1}^{n-1} \cost_k =\sum_{i=1}^{n-1} w_i+\sum_{i=1}^{n-2}w_i+\dots+\sum_{i=1}^2w_i+w_1= \sum_{i=1}^{n-1}(n-i)w_i \label{eqn:costG_MST}
    \end{align}

This can be interpreted as the sum over all edges in the MST with each edge weighted according to the point of time when they are used to merge two clusters by the single-linkage algorithm—earlier edges contribute more to the total cost. %

Note that each merge step in the SLC algorithm minimizes the corresponding flat clustering cost under our definition. This follows by induction: For the base case $k = n - 1$, SLC selects the minimum-weight edge in the graph, minimizing $\cost_{n-1}$. Assuming $\cost_k$ is minimized, SLC merges the closest pair of clusters to form $k - 1$ clusters, adding the next lightest MST edge and thus minimizing $\cost_{k-1}$. Therefore, SLC minimizes each $\cost_k$, and the entire hierarchy it constructs is optimal under this cost function.

\textbf{Profile Vector:} Note that $\cost_k$  provides finer-grained information about the state of the hierarchy at each level. 
Specifically, we define the length-$n$ vector $(\cost_1,\cost_2,\dots,\cost_n)$ as the \emph{SLC profile vector of the distance graph $G$}. 
This vector captures nuanced information about the cluster structure in the graph and offers insight into the hierarchical relationships among clusters.

For example, if the curve of $\cost_k$ versus $k$ is relatively smooth, this suggests that as $k$ decreases, 
$\cost_k$ increases more gradually, which may indicate a denser underlying graph with generally shorter pairwise distances.
Furthermore, if the total clustering cost, $\cost(G)$, is relatively small, this can reflect short distances between vertices in $G$.
These characteristics are further evidenced in our experiments (see \cref{fig:normalized-profile-dist} and \cref{fig:normalized-profile-sim}).

\paragraph{SLC in the Similarity Graph} 
Similarly to the distance based formulation we are given a connected weighted graph $G=(V,E)$ with edge weights $w(e) \in [1,W]$ for some parameter $W\ge 1$. Here the edge weights represent similarities and for any pair $(i,j) \notin E$ we define
$w((i,j))=0$. %

If the relationship between objects is based on \emph{similarity} rather than distance, SLC iteratively finds the \emph{most similar} pair of objects that belong to different clusters to merge, and can be defined using a \emph{maximum} spanning tree instead. In this case, edges are processed in 
    \emph{non-increasing} order of their weights. %

\textbf{$k$-clustering Cost:}
    We extend our formulation in the distance-based clustering to the case of similarity-based clustering, and define the cost of a $k$-clustering as the sum of the costs of the maximum spanning trees of the clusters, denoted as $\cost_k^{(s)}$.
Formally, let $w_1^{(s)},w_2^{(s)},...,w_{n-1}^{(s)}$ be the weights of the
maximum spanning tree (MaxST) in non-increasing order (we use the superscript $^{(s)}$ to indicate the case of a \textit{similarity relationship}).
The cost of a $k$-clustering in the similarity graph can then be defined as
    \[
    \cost_k^{(s)}= \sum_{i=1}^{n-k} w_i^{(s)}.
    \]
This corresponds to the total weight of the $n-k$ largest edges in the MaxST, that is, the edges used to form the $k$ clusters in SLC.    
    
\textbf{Total Cost of the Hierarchy:}
    The total cost of the single-linkage clustering  in the similarity graph is then defined as 
    \begin{align}
    \cost^{(s)}(G)= \sum_{k=1}^{n} \cost_k^{(s)}=\sum_{k=1}^{n-1} \cost_k^{(s)}
    =\sum_{i=1}^{n-1}(n-i)\cdot w_i^{(s)}\label{eqn:costG_similarity}   
    \end{align}

Analogous to the induction analysis in the distance setting, we observe that each merge step in the SLC algorithm precisely maximizes the similarity of the corresponding flat clustering under our definition. 
    Consequently, the entire SLC algorithm maximizes the total similarity of the resulting hierarchy according to this new cost function. %

\textbf{Profile Vector:}
Analogously, we define the length-$n$ vector $(\cost^{(s)}_1,\cost^{(s)}_2,\dots,\cost^{(s)}_n)$ as the \emph{SLC profile vector of the similarity graph $G$}.

\subsubsection{Sublinear Time Algorithms}

We give sublinear algorithms for estimating the costs of the hierarchy of single-linkage clustering for both distance-based and similarity-based clustering. Our algorithms assume query access to the graph in the \emph{adjacency list model} (except in the metric space setting discussed below), where one can access the weight of the $i$-th neighbor of a specified vertex $v$ in $\Theta(i)$ time\footnote{Our hardness result also holds for the case of $O(1)$ time access.} 
(see \Cref{sec:preliminaries}). The query complexity of an algorithm refers to the maximum number of such queries made on any input.

\paragraph{Distance-based Clustering}
We start with the problem of estimating $\cost(G)$ given in \Cref{eqn:costG_MST} in sublinear time. We have the following result. (Throughout the paper, $\Tilde{O}(f)$ refers to $O(f\cdot \poly\log f)$.)%

\begin{restatable}{thm}{distance}
\label{thm:cost_distance}
Let $G$ be a weighted graph with edge weights in $\{1,\dots, W\}$ with average 
(unweighted) 
degree $d$. 
Assume that $\sqrt{W}\leq n$ and let $0< \varepsilon<1$ 
be a parameter.
\cref{alg:appcost} outputs an estimate $\widehat{\cost}(G)$ of the 
single-linkage clustering cost $\cost(G)$ in the distance graph 
such that with probability at least $3/4$,
\[
(1-\varepsilon)\cost(G)\leq \widehat{\cost}(G)\leq (1+\varepsilon)\cost(G).
\]

The query complexity and running time 
of the algorithm are $O(\frac{\sqrt{W}d}{\varepsilon^3}\log^4(\frac{Wd}{\varepsilon}))$ in expectation.
\end{restatable}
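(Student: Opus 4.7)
The starting point is an identity that rewrites $\cost(G)$ in terms of the component counts of the sublevel subgraphs. Writing $w_i=\sum_{j=1}^{W}\mathbf{1}[w_i\ge j]$ in $\cost(G)=\sum_{i=1}^{n-1}(n-i)w_i$ and swapping the order of summation gives
\[
\cost(G) \;=\; \sum_{j=0}^{W-1}\binom{c(j)}{2},
\]
where $c(j)$ is the number of connected components of the unweighted subgraph that keeps only edges of weight at most $j$; the calculation uses the fact that the last $c(j-1)-1$ MST edges (in non-decreasing order) have weight at least $j$, and the sum $\sum_{m=1}^{c(j-1)-1} m$ is exactly $\binom{c(j-1)}{2}$. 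This identity is the hinge of the proof: estimating $\cost(G)$ reduces to simultaneously estimating the counts $c(j)$ at every threshold and combining them through the nonlinear map $\binom{\cdot}{2}$. The $j=0$ term already yields the universal lower bound $\cost(G)\ge\binom{n}{2}$, which is crucial for calibrating the error tolerance.

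The algorithm (\cref{alg:appcost}) then follows the Chazelle--Rubinfeld--Trevisan component-counting paradigm, with the key feature that one local exploration serves all thresholds at once. The plan is to sample a set $S$ of uniform vertices and, from each $v\in S$, run a weight-increasing local exploration (a truncated Prim from $v$) that records $|C_j(v)|$ for every $j$, halting once the explored set exceeds a size threshold $T$. Because edges are visited in order of weight, one run yields the information for every $j\in\{0,\dots,W-1\}$ simultaneously. The natural estimator is $\hat c(j)=(n/|S|)\sum_{v\in S}\mathbf{1}[|C_j(v)|\le T]/|C_j(v)|$, and the algorithm outputs $\widehat{\cost}(G)=\sum_{j=0}^{W-1}\binom{\hat c(j)}{2}$.

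The error analysis has a bias component and a variance component. For the bias, the deterministic inequality $0\le c(j)-\E_S[\hat c(j)]\le n/T$ holds because every unexplored component contributes at least $T$ vertices; propagating this through $\binom{\cdot}{2}$ and summing over $j$ yields a total bias of order $(n/T)\sum_j c(j)=(n/T)(\MST(G)+W)$. The non-obvious ingredient is
\[
\MST(G)+W \;\le\; O\bigl(\sqrt{W\cdot\cost(G)}\bigr),
\]
which follows by applying Cauchy--Schwarz to $\sum_j c(j)$ together with the identity $\sum_j c(j)^2=2\cost(G)+\MST(G)+W$; solving the resulting quadratic in $\MST(G)+W$, and using the lower bound $\cost(G)\ge\binom{n}{2}$ in conjunction with the hypothesis $\sqrt{W}\le n$, gives the bound. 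This forces the choice $T=\Theta(\sqrt{W}/\varepsilon)$ and is exactly where the $\sqrt{W}$ factor enters the running time.

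The main obstacle I anticipate is the variance, because the same samples are reused across all $W$ thresholds and the deviations $\hat c(j)-\E_S[\hat c(j)]$ are highly correlated; a naive union bound loses a factor of $W$ and only recovers the classical $\tilde O(dW/\varepsilon^2)$ CRT rate. My plan is to argue via a U-statistic decomposition of $\sum_j\hat c(j)^2$, exploiting the monotonicity of $|C_j(v)|$ in $j$ so that each sampled vertex contributes non-trivially to only the first $O(T)$ values of $j$, to show that $\mathrm{Var}(\widehat{\cost}(G))$ is bounded by $\varepsilon^2\cost(G)^2$ already at $|S|=\tilde\Theta(1/\varepsilon^2)$. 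Chebyshev then gives the $3/4$ success probability, and multiplying the three factors---sample size $|S|=\tilde\Theta(1/\varepsilon^2)$, per-sample exploration size $T=\Theta(\sqrt{W}/\varepsilon)$, and the amortized $\tilde O(d)$ adjacency-list cost per explored edge---yields the claimed $\tilde O(d\sqrt{W}/\varepsilon^3)$ expected query complexity and running time.
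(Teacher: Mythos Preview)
Your cost identity and the bias analysis (in particular the Cauchy--Schwarz step giving $\sum_j c(j)=O(\sqrt{W\,\cost(G)})$, which forces $T=\Theta(\sqrt{W}/\varepsilon)$) are correct and match the paper. But the algorithm you describe is not \cref{alg:appcost}, and the variance step has a concrete gap. The assertion that ``each sampled vertex contributes non-trivially to only the first $O(T)$ values of $j$'' is false: a vertex that stays isolated up to threshold $W$ has $X_v^{(j)}=1$ for \emph{every} $j<W$. Consider $G$ built from a weight-$1$ path on $n-s$ vertices together with $s=\lfloor n/\sqrt{W}\rfloor$ leaves attached by weight-$W$ edges; then $c_j=s+1$ for all $1\le j\le W-1$ and $\cost(G)=\Theta(n^2)$. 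With $T<n-s$ the path is truncated and your estimator collapses to $\hat c(j)=nK/|S|$ for every $j\ge 1$, where $K\sim\mathrm{Bin}(|S|,1/\sqrt{W})$ counts sampled leaves. Getting $(W-1)\binom{nK/|S|}{2}$ within $\varepsilon n^2$ of $(W-1)\binom{s+1}{2}$ amounts to a relative-$\varepsilon$ estimate of a $\mathrm{Bernoulli}(1/\sqrt{W})$ mean, which forces $|S|=\Omega(\sqrt{W}/\varepsilon^2)$; combined with per-sample exploration cost $\Theta(Td)$ this is $\Theta(Wd/\varepsilon^3)$, a $\sqrt{W}$ factor too slow. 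Here the $\hat c(j)$ across $j$ are literally identical, so no U-statistic decoupling is available.

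The paper's \cref{alg:appcost} takes a different route: it never reuses samples across thresholds, and instead exploits the monotonicity of $j\mapsto c_j$ to avoid estimating most of them at all. It fixes a grid of $t=O((\log W)/\varepsilon)$ target values $n=B_1>\cdots>B_t=1$ and, for each $B_i$, binary-searches over $j\in\{1,\dots,W\}$ to locate where $c_j$ crosses $B_i$, touching only $O(\log W)$ indices. Each touched $\hat c_j$ is an \emph{independent} call to the component estimator (\cref{cor:numberccappdelta}) with error $(\varepsilon/8)\max\{n/\sqrt{W},c_j\}$ and cost $\tilde O(\sqrt{W}d/\varepsilon^2)$. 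The delicate point is that the binary search is run on a non-monotone noisy sequence; the paper introduces the notion of a \emph{valid discretization} (\cref{lem:discretization_distance}, \cref{lem:bound_of_xj}) to guarantee that the bucketed surrogate $\bar c_j=B_i$ still satisfies $|\bar c_j-c_j|=O(\varepsilon\max\{n/\sqrt{W},c_j\})$, after which plugging into the cost formula is a direct calculation. Total: $O((\log^2 W)/\varepsilon)$ calls at $\tilde O(\sqrt{W}d/\varepsilon^2)$ each, giving the claimed $\tilde O(\sqrt{W}d/\varepsilon^3)$.
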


Note that the running time of our algorithm only depends on $W,d,\varepsilon$ and is independent of the size $n$ of the graph. Furthermore, 
in \cref{thm:cost_distance}, we assumed that $\sqrt{W}\leq n$, 
as otherwise (i.e., $n<\sqrt{W}$),
we can directly find the minimum spanning tree algorithm in $\Tilde{O}(n\cdot d)=\tilde{O}(\sqrt{W}d)$ time and obtain the clustering cost exactly from it.

We then strengthen our result and %
give an algorithm that efficiently derives a succinct representation of the
SLC profile vector $(\cost_1,\cost_2,\dots,\cost_n)$ in sublinear time, which allows us to approximate the total cost of clustering. We obtain the following result.

\begin{restatable}{thm}{distanceprofile}
\label{thm:profile}
Assume that $\sqrt{W}\leq n$ and let $0< \varepsilon<1$ 
be a parameter. 
\cref{alg:appprofile} generates a succinct representation
    of an approximation $(\widehat{\cost}_1,\dots,\widehat{\cost}_{n})$ of the SLC profile 
    vector $(\cost_1,\dots,\cost_{n})$ in the distance graph such that  
    with probability at least $3/4$, it holds that
    \[
    \sum_{k=1}^{n} \abs{\widehat{\cost}_k-\cost_k}\leq \varepsilon\cdot\cost(G).
    \]
    
    The query complexity and running time 
    of the algorithm are $O(\frac{\sqrt{W}d}{\varepsilon^3}\log^4(\frac{Wd}{\varepsilon}))$ in expectation.
\end{restatable}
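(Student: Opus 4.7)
The plan is to show that the entire profile vector is determined by the sequence of component counts $c(t)$ of the threshold subgraphs consisting of edges of weight at most $t$, so that the estimator built for \cref{thm:cost_distance} already yields, essentially for free, a succinct representation of the profile.

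First I would establish the key identity
\[
\cost_k \;=\; \sum_{t=0}^{W-1}\max\{0,\,c(t)-k\}.
\]
This follows from the layer-cake formula $w_i=\sum_{t=0}^{W-1}\mathbf{1}[w_i>t]$ applied to $\cost_k=\sum_{i\le n-k}w_i$, together with the fact that the MST contains exactly $n-c(t)$ edges of weight at most $t$, so the number of the $n-k$ smallest MST edges that exceed weight $t$ is precisely $\max(0,c(t)-k)$. Summing over $k$ recovers $\cost(G)=\sum_t\binom{c(t)}{2}$, which is the quantity already estimated in \cref{thm:cost_distance}.

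Second, \cref{alg:appprofile} is obtained by running \cref{alg:appcost} and recording, as the succinct representation, the internal estimates $\widehat{c}(0),\dots,\widehat{c}(W-1)$ that it produces; the profile is then recovered by setting $\widehat{\cost}_k:=\sum_t\max\{0,\widehat{c}(t)-k\}$. The query complexity and running time therefore match \cref{thm:cost_distance}. For the error bound I would exchange the sums and use $|\max\{0,a\}-\max\{0,b\}|\le|a-b|$ to obtain
\[
\sum_{k=1}^n\bigl|\widehat{\cost}_k-\cost_k\bigr|\;\le\;\sum_{t=0}^{W-1}T_t,\qquad T_t:=\sum_{k=1}^n\bigl|\max\{0,\widehat{c}(t)-k\}-\max\{0,c(t)-k\}\bigr|.
\]
Writing $\delta_t=\widehat{c}(t)-c(t)$, a direct case analysis gives $T_t=\min(c(t),\widehat{c}(t))\cdot|\delta_t|+|\delta_t|(|\delta_t|+1)/2$, which is within a constant factor of the per-level contribution $\bigl|\binom{\widehat{c}(t)}{2}-\binom{c(t)}{2}\bigr|$ whose aggregate is controlled in the proof of \cref{thm:cost_distance}.

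The main obstacle is that the analysis of \cref{thm:cost_distance} only controls the \emph{signed} aggregate error, in which the per-level signed contributions $c(t)\delta_t$ may cancel, whereas our bound requires the \emph{unsigned} sum $\sum_t\bigl(c(t)|\delta_t|+\delta_t^2\bigr)$. I would resolve this by invoking a per-threshold relative accuracy $|\delta_t|\le\varepsilon\, c(t)+1$ that the underlying sampling-based component-count estimator already provides through a Chebyshev bound on its variance, combined with a union bound over the $W$ thresholds (whose $\log W$ cost is absorbed in the $\log^4$ factor in the stated running time). Plugging this into the displayed bound yields $\sum_t T_t=O\!\bigl(\varepsilon\sum_t c(t)^2\bigr)=O(\varepsilon)\cdot\cost(G)$, and rescaling $\varepsilon$ by a constant completes the proof.
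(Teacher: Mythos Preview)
Your layer-cake identity $\cost_k=\sum_{t=0}^{W-1}\max\{0,c(t)-k\}$ is correct and gives a genuinely different and cleaner route than the paper's, which works instead from the formula $\cost_k=n+\sum_{j<w_{n-k}}c_j-k\,w_{n-k}$ (\cref{lem:costk}) and proceeds through an intricate case analysis over the interval structure (\cref{lem:single_costk_error}) before summing over $k$.

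There is, however, a real gap at the final step. First, \cref{alg:appcost} does \emph{not} produce $W$ direct sampling estimates: it calls the component-count estimator at only $O((\log^2 W)/\varepsilon)$ thresholds and uses binary search to build a piecewise-constant surrogate $\bar c_1,\dots,\bar c_W$ represented by $O((\log W)/\varepsilon)$ breakpoints. The per-threshold guarantee you should be invoking is \cref{lem:bound_of_bar_cj}, namely $|\bar c_t-c_t|\le 4\varepsilon\max\{n/\sqrt{W},\,c_t\}$, not $|\delta_t|\le\varepsilon c_t+1$. Second, and more damaging, the bound you state is too weak even for your own calculation: the ``$+1$'' contributes $\sum_t c_t$ to $\sum_t T_t$ via the term $\min(c_t,\hat c_t)\cdot|\delta_t|$, and $\sum_{t\ge 1} c_t=\cost(\MST)-n+W$ can be $\Theta(\cost(G))$ (take $W\approx n^2$ with almost all $c_t=1$), so you would only conclude $\sum_t T_t=O(\cost(G))$, not $O(\varepsilon\,\cost(G))$. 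The correct bound repairs this precisely because its additive term already carries a factor of $\varepsilon$: for thresholds with $c_t<n/\sqrt{W}$ one gets $T_t=O(\varepsilon\, c_t\cdot n/\sqrt{W}+\varepsilon^2 n^2/W)=O(\varepsilon n^2/W)$, which sums over at most $W$ thresholds to $O(\varepsilon n^2)=O(\varepsilon\,\cost(G))$; for $c_t\ge n/\sqrt{W}$ one gets $T_t=O(\varepsilon c_t^2)$ directly. With this substitution your framework does yield the theorem.
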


Note that the above theorem implies that we can approximate the cost $\cost_k$ of every $k$-clustering upto an absolute error that \emph{on average} is a $(1+\varepsilon)$-approximation of the true cost. Given the succinct representation and any specified integer $k\in\{1,\dots,n\}$, we can recover an estimate $\widehat{\cost}_k$ for
$\cost_k$ in $O(\log(\frac{\log W}{\varepsilon}))$ time. The estimate satisfies that $|\widehat{\cost}_k-\cost_k|=O(\varepsilon\cdot(\cost_k+\max\{k,\frac{n}{\sqrt{W}}\}W))$ (see \Cref{lem:single_costk_error}).

We remark that by applying the median trick, the success probability of the two aforementioned algorithms can be enhanced to $1-\delta$ for any $\delta$, while incurring an $O(\log(1/\delta))$ factor in the running time (see \cref{sec:preliminaries}).

To complement our algorithmic result, we show that the query complexity of the algorithm from \Cref{thm:cost_distance} for estimating the SLC cost, $\cost(G)$, is nearly optimal by giving the following lower bound.

\begin{restatable}{thm}{distancelowerbound}
\label{thm:cost_distance_lowerbound}
Let $\frac{W^{1/4}}{\sqrt{40 n}}<\varepsilon<\frac{1}{2}$ and $W>1$. Any algorithm that $(1+\varepsilon)$-approximates the cost of SLC cost $\cost(G)$ in the distance graph with success probability at least $3/4$ needs to make $\Omega(d\sqrt{W}/\varepsilon^2)$ queries.  
\end{restatable}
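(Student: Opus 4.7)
The plan is to invoke Yao's minimax principle: construct a distribution over hard instances and show that any deterministic algorithm making fewer than $\Omega(d\sqrt{W}/\varepsilon^{2})$ adjacency-list queries fails to $(1+\varepsilon)$-approximate $\cost(G)$ with probability at least $3/4$. The starting point is the identity
\[
\cost(G)=\binom{n}{2}+\sum_{j=1}^{W-1}\binom{N_j(G)}{2},
\]
where $N_j(G)$ is the number of connected components of the subgraph of $G$ restricted to edges of weight at most $j$ (this follows by grouping the terms of $\sum_i(n-i)w_i$ according to their weight and using a telescoping identity). Thus the task reduces to approximating a sum of quadratic functions of component counts over the weight levels, and the hard instance should be simultaneously hard at $\sqrt{W}$ of those levels.

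I would build two distributions $\mathcal{D}_0,\mathcal{D}_1$ over graphs with average degree $\Theta(d)$ and edge weights in $\{1,\dots,W\}$, each consisting of $\Theta(n/d)$ edge-disjoint \emph{gadgets} on $\Theta(d)$ vertices, glued into a connected whole by a backbone of weight-$W$ edges. Inside each gadget, for every planted level $j\in\{1,\dots,\sqrt{W}\}$, we plant an independent \emph{feature} whose presence is an independent Bernoulli coin with bias $p$ under $\mathcal{D}_0$ and $p+\Theta(\varepsilon)$ under $\mathcal{D}_1$. The two design requirements are (a) the features at distinct levels use edge-disjoint and essentially vertex-disjoint portions of the gadget, so that a single adjacency-list query reveals the state of at most one (gadget,\,level) coin; and (b) flipping one coin at level $j$ changes $N_j(G)$ by exactly one, always in the same direction.

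Given (a) and (b), the gap analysis is a direct computation using the identity above: the per-level expected shift of $N_j$ between the two distributions is $\Theta(\varepsilon\cdot n/d)$, and the derivative $\partial\binom{N_j}{2}/\partial N_j=N_j-1/2$ is $\Theta(n)$ in the regime of interest; summing over the $\sqrt{W}$ planted levels yields an expected gap of $\Theta(\varepsilon\cdot\sqrt{W}\cdot n^{2}/d)$, which matches $\Theta(\varepsilon\cdot \cost(G))$ because $\cost(G)=\Theta(\sqrt{W}\cdot n^{2}/d)$ in the construction (the hypothesis $\varepsilon>W^{1/4}/\sqrt{40n}$ is precisely what guarantees there is enough room to host the required number of gadgets and planted coins). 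For the query lower bound I would reduce to distinguishing two product-of-Bernoulli distributions: by (a) each query reveals at most one coin bit and, under an adversarial neighbor-list ordering, it takes $\Omega(d)$ queries to fully expose a gadget at a given level; distinguishing bias $\Theta(\varepsilon)$ at a single level requires $\Omega(1/\varepsilon^{2})$ coin samples, hence $\Omega(d/\varepsilon^{2})$ queries per level; and because the coin populations at the $\sqrt{W}$ planted levels are independent, a hybrid argument using KL chain rule and Pinsker's inequality causes these costs to add, giving $\Omega(d\sqrt{W}/\varepsilon^{2})$ queries overall.

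The main obstacle is the rigorous implementation of requirement (a): one must ensure that no clever adaptive query strategy extracts information about multiple (gadget,\,level) pairs at once. The standard remedy is to randomly relabel the vertex set and independently randomly permute each vertex's neighbor list before the algorithm begins, so that any $o(d)$ queries inside a gadget look like uninformative random samples from its neighbor list; a coupling between the query transcripts under $\mathcal{D}_0$ and $\mathcal{D}_1$ then yields the desired total-variation bound. A secondary bookkeeping obstacle is to satisfy simultaneously the average-degree budget, the weight range, and the boundary conditions $W>1$ and $\varepsilon>W^{1/4}/\sqrt{40n}$ while consistently choosing the gadget count, gadget size, and coin bias; this bookkeeping is what pins down the exact form of the parameter regime in the theorem statement.
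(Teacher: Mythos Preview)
Your cost identity is correct and matches the paper's Theorem~4.1. However, the heart of your argument --- the claim that ``a hybrid argument using KL chain rule and Pinsker's inequality causes these costs to add'' across the $\sqrt{W}$ planted levels --- does not work. In your construction, $\mathcal{D}_0$ and $\mathcal{D}_1$ differ by shifting the bias at \emph{all} $\sqrt{W}$ levels simultaneously. Each query reveals (at most) one coin bit, and the per-query KL between the two distributions is $\Theta(\varepsilon^2)$ regardless of which level the queried coin belongs to. After $q$ queries the KL is $O(q\varepsilon^2)$, so Pinsker gives a query lower bound of only $\Omega(1/\varepsilon^2)$ coin observations (hence $\Omega(d/\varepsilon^2)$ queries), with no $\sqrt{W}$ factor. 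A hybrid over levels goes the wrong way: it shows that the algorithm distinguishes two adjacent hybrids with advantage $\Omega(1/\sqrt{W})$, which \emph{weakens} the resulting per-level lower bound rather than multiplying it by $\sqrt{W}$. Intuitively, having $\sqrt{W}$ independent coin populations all shifted in the same direction gives the algorithm \emph{more} signal per query, not less. There is also an arithmetic inconsistency: you assert $N_j=\Theta(n)$ at each planted level yet claim $\cost(G)=\Theta(\sqrt{W}\,n^2/d)$; the first already forces $\sum_{j\le\sqrt{W}}\binom{N_j}{2}=\Theta(\sqrt{W}\,n^2)$, and in any case $\cost(G)\ge\binom{n}{2}=\Theta(n^2)$ always.

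The paper obtains the $\sqrt{W}$ factor by a completely different mechanism: it uses only \emph{two} weights, $1$ and $W$, on a simple path, with each edge independently receiving weight $W$ with probability $q=1/\sqrt{W-1}$ (biased by $\pm\varepsilon$ in the two families). Then $\cost(G)=\frac{n(n-1)+(W-1)(T_W^2+T_W)}{2}$ where $T_W$ is the number of weight-$W$ edges, and a Chernoff bound shows the two families' costs separate by $\Theta(\varepsilon n^2)$ with high probability (this is where the hypothesis $\varepsilon>W^{1/4}/\sqrt{40n}$ is used). The $\sqrt{W}$ in the lower bound comes from the Chazelle--Rubinfeld--Trevisan coin lemma: distinguishing $q$ from $q(1\pm\varepsilon)$ requires $\Omega(1/(q\varepsilon^2))=\Omega(\sqrt{W}/\varepsilon^2)$ samples, precisely because the informative (weight-$W$) edges are \emph{rare}. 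The degree-$d$ factor is then added by padding with weight-$W$ edges and randomly permuting adjacency lists. So the missing idea in your plan is rarity of the informative bits, not multiplicity of levels.
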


\paragraph{Similarity-based Clustering}
We extend our results to the case of similarity-based clustering. Our algorithmic result for estimating the cost defined in \Cref{eqn:costG_similarity} is given in the following theorem.

\begin{restatable}{thm}{similarity}
\label{thm:cost_similarity}
Let $G$ be a weighted graph with edge weights in $\{1,\dots, W\}$ with average (unweighted) degree $d$. 
Assume that $W\le n$ and let $0<\varepsilon<1$ be a parameter.
\cref{alg:appcost_sim} outputs an estimate $\widehat{\cost^{(s)}}(G)$ 
of the single-linkage clustering cost $\cost^{(s)}(G)$ in the similarity graph 
such that with probability at least $3/4$,
\[
(1-\varepsilon)\cost^{(s)}(G)\leq \widehat{\cost^{(s)}}(G)\leq 
(1+\varepsilon)\cost^{(s)}(G).
\]

The {query complexity and} running time of the algorithm are $O(\frac{Wd}{\varepsilon^3}\log^4(\frac{Wd}{\varepsilon}))$ in expectation.
\end{restatable}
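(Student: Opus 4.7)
The plan is to reduce the estimation of $\cost^{(s)}(G)$ to estimating the numbers of connected components of thresholded subgraphs, and then apply a Chazelle--Rubinfeld--Trevisan-style sampler. The starting point is the combinatorial identity
\[
\cost^{(s)}(G)\;=\;\sum_{j=1}^{W}\!\left(\binom{n}{2}-\binom{c_j^{(s)}}{2}\right),
\]
where $c_j^{(s)}$ is the number of connected components (counting isolated vertices) of the subgraph $G_{\ge j}$ consisting of edges of weight at least $j$. This follows by substituting $w_i^{(s)}=\sum_{j=1}^{W}\mathbf{1}[w_i^{(s)}\ge j]$ into $\cost^{(s)}(G)=\sum_i(n-i)w_i^{(s)}$, swapping the order of summation, and using that the top $n-c_j^{(s)}$ MaxST edges in non-increasing order are exactly those of weight $\ge j$, so $\sum_{i:\,w_i^{(s)}\ge j}(n-i)=\sum_{\ell=c_j^{(s)}}^{n-1}\ell=\binom{n}{2}-\binom{c_j^{(s)}}{2}$. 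An immediate corollary is $\cost^{(s)}(G)\ge \binom{n}{2}$, so a $(1\pm\varepsilon)$-multiplicative guarantee reduces to controlling the additive error by $\varepsilon\binom{n}{2}$.

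For each threshold $j$, the algorithm estimates $c_j^{(s)}$ with a truncated-BFS CRT estimator on $G_{\ge j}$: sample a uniform vertex set $S$, and for each $v\in S$ set $\hat\rho_j(v):=1/|\mathrm{CC}_j^{(s)}(v)|$ when $|\mathrm{CC}_j^{(s)}(v)|\le X$ and $0$ otherwise; then $\hat c_j^{(s)}:=(n/|S|)\sum_{v\in S}\hat\rho_j(v)$ and
\[
\widehat{\cost^{(s)}}(G)\;:=\;\sum_{j=1}^{W}\!\left(\binom{n}{2}-\binom{\hat c_j^{(s)}}{2}\right).
\]
To amortize work across the $W$ thresholds, the BFSes from a fixed sampled $v$ are realized by a single Prim-like traversal in $G$ that examines neighbors in non-increasing order of edge weight and simultaneously records $|\mathrm{CC}_j^{(s)}(v)|$ for every $j$, at per-vertex cost $\tilde O(Xd)$. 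Using the Lipschitz bound $|\binom{\hat c}{2}-\binom{c}{2}|\le |\hat c-c|\cdot n$ and the triangle inequality, the total error is at most $n\sum_{j}|\hat c_j^{(s)}-c_j^{(s)}|$; choosing $X$ so each per-threshold bias is $O(\varepsilon n/W)$ and $|S|$ so that (via Chebyshev plus a union bound over the $W$ thresholds) each per-threshold deviation is $O(\varepsilon n/W)$ yields total error $O(\varepsilon n^2)\le \varepsilon\cost^{(s)}(G)$ with probability $\ge 3/4$. The runtime uses the key inequality $\sum_{j=1}^{W}d_j^{(s)}\le Wd$ (from $\sum_j|E_{\ge j}|=\sum_e w(e)\le mW$), combined with the amortized per-sample cost $\tilde O(Xd)$, to match the claimed $\tilde O(Wd/\varepsilon^3)$ expected-time bound.

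The main obstacle is that the $\varepsilon/W$ per-threshold precision forced by the Lipschitz blow-up of $\binom{c_j^{(s)}}{2}$ naively inflates the query complexity to $\tilde O(W^3 d/\varepsilon^2)$. Achieving the stated $\tilde O(Wd/\varepsilon^3)$ requires both the Prim-based sharing of the exploration across thresholds (so that the dominant per-sample work becomes $\tilde O(Xd)$ rather than $\tilde O(WXd)$) and a careful joint bias/variance analysis so that a single parameter choice of $(X,|S|)$ simultaneously controls all $W$ estimators while keeping the total cost dominated by $\sum_j d_j^{(s)}$.
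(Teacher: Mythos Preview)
Your combinatorial identity is correct and equivalent to the paper's formula $\cost^{(s)}(G)=\sum_j \tfrac{1}{2}(c_j^{(s)}+n-1)(n-c_j^{(s)})$. The gap is in the error/runtime analysis. Requiring each per-threshold deviation to be $O(\varepsilon n/W)$ is too strong: when $c_j^{(s)}\approx n/2$, your estimator has $\Var[\hat\rho_j]=\Theta(1)$ (since $s(1-s)\approx 1/4$ for $s=c_{U,j}/n$), hence $\Var[\hat c_j^{(s)}]=\Theta(n^2/|S|)$, and Chebyshev forces $|S|=\Omega(W^2/\varepsilon^2)$. The Prim-sharing idea at best amortizes exploration across thresholds; it cannot reduce this sample-size requirement. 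Moreover, the per-sample cost $\tilde O(Xd)$ you quote is not justified in the adjacency-list model: a deterministic bottleneck-Dijkstra visiting $X=W/\varepsilon$ vertices costs $\sum_{u\text{ visited}}\deg_G(u)$, which with the usual degree threshold $d^{(G)}=d\cdot X$ is $O(dX^2)=O(dW^2/\varepsilon^2)$ in the worst case, and the coin-flip trick that makes per-sample cost $\tilde O(d)$ is per-threshold and does not combine with one shared Prim traversal. Finally, the inequality $\sum_j d_j^{(s)}\le Wd$ does not help here, since in this model one must scan \emph{all} incident edges in $G$ to identify those in $G_{\ge j}$, so per-step cost is governed by $d$, not $d_j^{(s)}$. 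Putting these together, your plan as stated yields at best $\tilde O(W^2 d/\varepsilon^2)$ and plausibly $\tilde O(W^3 d/\varepsilon^3)$, not $\tilde O(Wd/\varepsilon^3)$.

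The paper avoids exactly this bottleneck by \emph{not} demanding uniform $\varepsilon n/W$ accuracy. It factors each summand as $A_jD_j$ with $A_j=c_j^{(s)}+n-1\in[n,2n]$ and $D_j=n-c_j^{(s)}$, estimates $A_j$ coarsely (additive $\varepsilon n$, cost $\tilde O(d/\varepsilon^2)$ per $j$), and estimates $D_j$ with the \emph{adaptive} error $\varepsilon\max\{n/W,\min\{D_j,n-D_j\}\}$ via a new subroutine that separately counts isolated and non-isolated vertices (this is where your basic CRT estimator, as analyzed, falls short). The resulting per-term error is $O(\varepsilon A_jD_j+\varepsilon n^2/W)$, whose sum over $j$ is $O(\varepsilon\cost^{(s)}(G))$ using $\sum_j D_j=\cost(\mathrm{MaxST})\le \tfrac{2}{n}\cost^{(s)}(G)$. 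Crucially, because each high-precision $D_j$ estimate costs $\tilde O(Wd/\varepsilon^2)$, the paper cannot afford all $W$ of them; it exploits monotonicity of $(D_j)_j$ and a robust binary-search bucketing to evaluate only $\tilde O((\log^2 W)/\varepsilon)$ of them, which is what brings the total down to $\tilde O(Wd/\varepsilon^3)$. None of these three ingredients---the $A_jD_j$ factorization with asymmetric accuracy, the isolated-vertex estimator for $D_j$, and the binary search over thresholds---appears in your plan.
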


We also give a sublinear time algorithm for approximating the SLC profile vector in the similarity graph $G$ and establish the following theorem.

\begin{restatable}{thm}{similarityprofile}
\label{thm:profile_sim}
Assume that $W\le n$ and let $\varepsilon<1$.
    \cref{alg:appprofile_similarity} generates a succinct representation
    of an approximation $(\widehat{\cost^{(s)}}_1,\dots,\widehat{\cost^{(s)}}_{n})$ of 
the SLC profile vector $(\cost_1^{(s)},\dots,\cost_{n}^{(s)})$ in the similarity graph such that 
    with probability at least $3/4$, it holds that
    \[
    \sum_{k=1}^{n}|\widehat{\cost^{(s)}}_k-\cost^{(s)}_k|\leq\varepsilon\cdot\cost^{(s)}(G).
    \]

    The query complexity and running time 
    of the algorithm are $O(\frac{Wd}{\varepsilon^3}\log^4(\frac{Wd}{\varepsilon}))$ in expectation. 
\end{restatable}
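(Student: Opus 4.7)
The plan is to mirror the approach used for the distance-based profile in \cref{thm:profile}: reduce the task to estimating, for each weight threshold, the number of connected components of the corresponding thresholded subgraph, and then bound the aggregated profile error. Concretely, for each integer $t\in\{1,\dots,W\}$, let $c_t^{(s)}$ denote the number of connected components of $G_{\ge t}$, the subgraph of $G$ containing exactly the edges of weight at least $t$. Since by Kruskal's algorithm a MaxST contains precisely $n-c_t^{(s)}$ edges of weight $\ge t$, a layer-cake expansion of the sorted MaxST weights gives the identity
\[
\cost^{(s)}_k \;=\; \sum_{t=1}^{W}\min\!\bigl(n-k,\,n-c_t^{(s)}\bigr) \;=\; \sum_{t=1}^{W}\bigl(n-\max(k,\,c_t^{(s)})\bigr),
\]
and summing over $k$ yields $\cost^{(s)}(G)=\sum_{t=1}^W \frac{(n-c_t^{(s)})(n+c_t^{(s)}-1)}{2}\ge \frac{n}{2}\sum_t(n-c_t^{(s)})$.

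This reformulation dictates the algorithm: \cref{alg:appprofile_similarity} produces estimates $\hat c_t^{(s)}$ of $c_t^{(s)}$ by running a CRT-style connected-components estimator on each $G_{\ge t}$, sharing BFS explorations from a common sample of vertices across all thresholds to avoid redundant work. The succinct representation stores these estimates (compressed to $O(\log(W)/\varepsilon)$ breakpoints in the spirit of \cref{thm:profile}), and a query for $\widehat{\cost^{(s)}}_k$ is answered by evaluating $\sum_{t=1}^{W}(n-\max(k,\hat c_t^{(s)}))$.

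For the error analysis, the $1$-Lipschitzness of $x\mapsto\max(k,x)$, the triangle inequality, and swapping the order of summation give
\[
\sum_{k=1}^{n}\bigl|\widehat{\cost^{(s)}}_k-\cost^{(s)}_k\bigr| \;\le\; \sum_{t=1}^{W}\sum_{k=1}^{n}\bigl|\max(k,c_t^{(s)})-\max(k,\hat c_t^{(s)})\bigr|,
\]
and a direct computation bounds the inner sum by $c_t^{(s)}\Delta_t+\Delta_t^2/2$, where $\Delta_t:=|\hat c_t^{(s)}-c_t^{(s)}|$. Combined with the lower bound on $\cost^{(s)}(G)$ above, it suffices to ensure $\Delta_t = O(\varepsilon(n-c_t^{(s)}))$ with high probability: then $c_t^{(s)}\Delta_t = O(\varepsilon\, c_t^{(s)}(n-c_t^{(s)})) \le O(\varepsilon\, n(n-c_t^{(s)}))$, which after summation and a union bound over the $W$ thresholds is absorbed into $\varepsilon\cdot \cost^{(s)}(G)$. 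The $\tilde O(Wd/\varepsilon^3)$ query complexity follows by implementing the estimator with this target accuracy at each threshold in $\tilde O(d/\varepsilon^2)$ amortized queries (the extra $\varepsilon^{-1}$ factor absorbing the sharper per-threshold target relative to plain CRT).

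The main obstacle is precisely this calibration step: a uniform additive guarantee $\Delta_t\le\varepsilon n$, which is what a plain CRT call delivers in time $\tilde O(d/\varepsilon^2)$, is too crude when $c_t^{(s)}$ is close to $n$, because the per-threshold profile error $c_t^{(s)}\Delta_t$ then scales as $n\cdot\varepsilon n$ while the cost contribution only scales as $(n-c_t^{(s)})\cdot n$. Achieving the sharper guarantee $\Delta_t=O(\varepsilon(n-c_t^{(s)}))$ without paying a factor-$W$ overhead---by adaptively truncating BFS, distinguishing ``sparse'' from ``dense'' thresholds, and reusing samples across thresholds---is where essentially all of the technical work lies; the remaining steps (success amplification and the union bound) are routine and follow the same pattern as \cref{thm:profile}.
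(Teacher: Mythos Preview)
Your layer-cake identity $\cost^{(s)}_k=\sum_{t=1}^W\bigl(n-\max(k,c_t^{(s)})\bigr)$ is correct, and the resulting aggregate bound $\sum_k|\widehat{\cost^{(s)}}_k-\cost^{(s)}_k|\le\sum_t\bigl(c_t^{(s)}\Delta_t+\Delta_t^2/2\bigr)$ is a clean alternative to the paper's route, which instead proves a per-$k$ bound $|\widehat{\cost^{(s)}}_k-\cost^{(s)}_k|\le O(\varepsilon)\max\{\cost^{(s)}_k,n\}$ (\cref{lem:single_costk_error_sim}) and then sums. In fact, if one plugs in the paper's breakpoint values $\hat c_t^{(s)}:=n-\overline D_t$, your formula for $\widehat{\cost^{(s)}}_k$ agrees with the paper's $\overline{\cost^{(s)}}_{n-B_i}$ at $k=n-B_i$, so the two analyses are compatible.

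There are two gaps, however. First, the target $\Delta_t=O(\varepsilon(n-c_t^{(s)}))$ is unattainable in sublinear time when $n-c_t^{(s)}$ is tiny or zero; you need an additive slack of order $\varepsilon n/W$ (as in \cref{lem:appncc_sim}). This is easy to absorb into your sum since $\sum_t c_t^{(s)}\cdot\varepsilon n/W\le\varepsilon n^2\le 4\varepsilon\,\cost^{(s)}(G)$, but it should be stated.

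Second, and more seriously, the running time is where the real content lies, and your proposal does not establish it. Obtaining $\Delta_t\le\varepsilon\max\{n/W,\min(D_t,n-D_t)\}$ at a \emph{single} threshold already costs $\tilde O(Wd/\varepsilon^2)$ (this is \cref{lem:appncc_sim} with $k=W$; the factor $W$ is intrinsic because one must sample $\Omega(W/\varepsilon^2)$ vertices just to hit the $O(n/W)$ non-isolated ones when $D_t$ is that small). Done at all $W$ thresholds this is $\tilde O(W^2d/\varepsilon^2)$, a factor $W$ over budget. Your suggested fix---``reusing samples across thresholds'' for $\tilde O(d/\varepsilon^2)$ amortized---is not how the paper closes this gap and is not obviously workable: the expensive thresholds are precisely those where almost all sampled vertices are isolated, so sharing the sample across thresholds does not reduce the per-threshold variance. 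The paper's actual mechanism is different: it runs the expensive estimator at only $O(\log W/\varepsilon)$ thresholds, selected by binary search on the monotone sequence $(D_t)$ with a specially designed five-regime discretization (\cref{def:interval_similarity}, \cref{lem:discretization_similarity}) that is robust to the two-sided error profile of \cref{lem:appncc_sim}. You allude to breakpoints ``in the spirit of \cref{thm:profile}'', but the distance-case discretization does not transfer; the similarity case needs both the new estimator of \cref{alg:appncc_sim} and a new interval design, and these are the substance of the proof.
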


Similarly, we note that the above theorem implies that we can approximate the cost $\cost_k^{(s)}$ of every $k$-clustering upto an absolute error that \emph{on average} is a $(1+\varepsilon)$-approximation of the true cost. We further remark that given the succinct representation and any specified integer $k\in\{1,\dots,n\}$, we can recover an estimate $\widehat{\cost^{(s)}}_k$ for
$\cost_k^{(s)}$ in $O(\log(\frac{\log W}{\varepsilon}))$ time. The estimate satisfies that $|\widehat{\cost^{(s)}}_k-\cost^{(s)}_k|=O(\varepsilon\cdot\max\{\cost^{(s)}_k,n\})$.

We can also use the median trick to boost the success probability of the above two algorithms to $1-\delta$ while incurring a $O(\log(1/\delta))$ factor in the running time.

We also show that our algorithm for estimating $\cost^{(s)}(G)$ achieves nearly optimal query complexity by providing the following lower bound.

\begin{restatable}{thm}{similaritylowerbound}
\label{thm:cost_similarity_lowerbound}
Let $\sqrt{\frac{W}{40 n}}<\varepsilon<\frac{1}{2}$ and $W>10$. %
Any algorithm that $(1+\varepsilon)$-approximates the SLC cost $\cost^{(s)}(G)$ in the similarity graph with success probability at least $3/4$ needs to make $\Omega(d W/\varepsilon^2)$ queries.  
\end{restatable}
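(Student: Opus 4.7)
The plan is to apply Yao's minimax principle by exhibiting two distributions $\mathcal{D}_{\text{yes}}$ and $\mathcal{D}_{\text{no}}$ on $n$-vertex weighted graphs of average degree $d$ with weights in $\{1,W\}$, whose similarity-based SLC costs differ by at least a factor of $1+3\varepsilon$, yet which no deterministic algorithm making $o(dW/\varepsilon^2)$ adjacency-list queries can distinguish with constant advantage. The intuition for the $\sqrt{W}$-blow-up compared to the distance case is that here a \emph{small} number of heavy edges sitting at the top of the MaxST dominate $\cost^{(s)}$ through the $(n-i)$ coefficient in \eqref{eqn:costG_similarity}, so it is the multiplicative \emph{count} of these heavy edges that must be estimated, and they are substantially rarer than in the distance setting.

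First, I would fix a connected $d$-regular base graph $G_0$ on $n$ vertices with all edges of weight $1$, for instance a high-girth expander or a random $d$-regular graph, so that $m = nd/2$. Set $t = \lceil n/W\rceil$ and $\gamma = c\varepsilon$ for a suitable absolute constant $c$. The distribution $\mathcal{D}_{\text{yes}}$ selects a uniformly random subset of exactly $t$ edges of $G_0$ and re-weights them to $W$; $\mathcal{D}_{\text{no}}$ does the same with $\lceil(1-\gamma)t\rceil$ heavy edges. Both distributions share the same underlying (unweighted) graph and the same per-query access pattern, differing only in the edge weights.

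Next, I would verify the cost gap. Conditional on the high-probability event that the random heavy-edge subset forms almost a forest -- which holds because the expected number of heavy-edge cycles is $O(t^2/m) = O(n/(dW^2)) = o(t)$ in the stated regime -- Kruskal's rule places $t' = t - o(t)$ heavy edges at positions $1,\dots,t'$ of the MaxST, giving via \eqref{eqn:costG_similarity}
\begin{equation*}
\cost^{(s)}(G) \;=\; W\sum_{i=1}^{t'}(n-i) \;+\; \sum_{i=t'+1}^{n-1}(n-i) \;=\; t' n W \;+\; \tfrac{(n-t')(n-t'-1)}{2}.
\end{equation*}
With $tW = \Theta(n)$ both costs are $\Theta(n^2)$, while their expected difference is $\gamma t n W = \Theta(\varepsilon n^2)$. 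Choosing $c$ large enough, this yields the claimed multiplicative gap of at least $1+3\varepsilon$, so any $(1+\varepsilon)$-approximation of $\cost^{(s)}$ decides the two-hypothesis problem with constant advantage.

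Finally, I would upper-bound the distinguishing probability by a two-point / Le Cam argument. By symmetry of the uniformly random heavy-edge placement, the transcript of any $q$-query algorithm is, up to relabeling, a function of the number $X$ of heavy edges it has encountered; under $\mathcal{D}_{\text{yes}}$ and $\mathcal{D}_{\text{no}}$, $X$ is essentially hypergeometric with means $qt/m = 2q/(dW)$ and $(1-\gamma)$ times that, and variance $\Theta(q/(dW))$. A chi-squared bound on the total variation distance between the transcript laws shows it remains $o(1)$ unless the mean separation exceeds one standard deviation, i.e., unless $\gamma \cdot q/(dW) \gtrsim \sqrt{q/(dW)}$, which forces $q = \Omega(dW/\gamma^2) = \Omega(dW/\varepsilon^2)$. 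The main obstacle will be replacing this clean non-adaptive intuition with a lower bound against \emph{adaptive} adjacency-list queries -- where successive probes at the same vertex are correlated and the algorithm may steer its queries based on past answers -- which I expect to handle by the standard coupling that draws the heavy-edge subset independently of the algorithm's queries, reducing the algorithm's view to an adaptive without-replacement sample from a uniform random labeling of $E(G_0)$.
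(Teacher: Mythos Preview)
Your approach is sound and would yield the bound, but it is more intricate than the paper's and carries two technical obligations you have not fully discharged. The paper sidesteps both by a much simpler construction: it takes the base graph to be a \emph{path} on $n$ vertices and makes each edge independently heavy with probability $q=\frac{1}{W-1}$, biased by a $(1\pm\varepsilon)$ factor in the two families $\mathcal{P}_0,\mathcal{P}_1$. Because the MaxST of a path is the path itself, every heavy edge automatically sits at the top of the MaxST---there is no ``almost a forest'' step to verify---and $\cost^{(s)}(G)$ becomes an explicit quadratic in the heavy-edge count $T_W$. A Chernoff bound then gives the $\Theta(\varepsilon n^2)$ cost gap directly. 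Indistinguishability is not argued from scratch either: the paper invokes the CRT biased-coin lemma (Lemma~\ref{lem:CRT_lowerlemma}), which already handles adaptivity and yields $\Omega(1/(q\varepsilon^2))=\Omega(W/\varepsilon^2)$ as a black box. Average degree $d$ is obtained afterwards by padding the path with weight-$1$ edges (so the MaxST is unchanged), randomising adjacency-list order, and charging queries exactly as in the distance lower bound.

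Compared to this, your construction bakes the degree-$d$ structure into $G_0$ from the start, which is natural but forces you to (i) control how many of the $t\approx n/W$ random heavy edges survive into the MaxST, and (ii) prove the adaptive two-point bound by hand. On~(i), your statement ``expected heavy-edge cycles $=O(t^2/m)=o(t)$'' is neither the right quantity nor the right asymptotic: for fixed $d,W$ the ratio $t^2/(mt)=2/(dW)$ is a constant, not $o(1)$. What actually holds for a high-girth $d$-regular $G_0$ is that the expected cycle rank of the heavy subgraph is $O(1)$ (summing $((d-1)t/m)^\ell\le(2/W)^\ell$ over $\ell\ge g$), which is stronger than you need but should be stated correctly. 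On~(ii), the coupling you sketch does work: since the heavy set is a uniformly random subset of $E(G_0)$, the sequence of weight bits revealed by any adaptive sequence of distinct edge probes is exchangeable, hence has the same hypergeometric law as a non-adaptive sample; $X$ is then sufficient and your Le~Cam bound applies. So your route is viable, but the paper's path-plus-CRT-lemma argument is shorter and avoids both detours entirely.
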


\paragraph{Comparison between the Distance and Similarity Settings}
Recall that for distance-based clustering, we achieved a running time of $\tilde{O}(\sqrt{W}d/\varepsilon^3)$ with a lower bound of $\Omega(\sqrt{W}d/\varepsilon^2)$, whereas for similarity-based clustering, our algorithm runs in $\tilde{O}(Wd/\varepsilon^3)$ with a matching lower bound of $\Omega(Wd/\varepsilon^2)$. This reveals an interesting -- and perhaps surprising -- separation in the complexities between the two settings.

\paragraph{Metric Space Clustering}
We further extend our results to the metric space, where the metric can either represent distance or similarity between vertices, and the metric must satisfy the triangle inequality. In this case, we assume that the algorithm can query the weight of any specified vertex pair in constant time. 

When the metric represents distance, we have the following result.

\begin{restatable}{thm}{metricdistance}
\label{thm:cost_metric_distance}
Let $G$ be an $n$-point graph in metric space, where each edge weight represents \emph{distance} between two vertices, and $0< \varepsilon<1$ be a 
parameter. \cref{alg:appcost_metric} outputs an estimate
$\widehat{\cost}(G)$ of single-linkage clustering cost $\cost(G)$ in
metric space, such that with probability at least $3/4$,
\[
(1-\varepsilon)\cost(G)\leq \widehat{\cost}(G)\leq (1+\varepsilon)\cost(G).
\]

The query complexity and running time of the algorithm are 
$\Tilde{O}(n/\varepsilon^7)$ in expectation.
\end{restatable}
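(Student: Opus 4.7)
The plan is to reduce the estimation of $\cost(G)$ to estimating the number of connected components in a sequence of threshold graphs, and then to run a Chazelle--Rubinfeld--Trevisan (CRT) style estimator adapted to the metric oracle model.

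First, I would establish the layering identity
\[
\cost(G) \;=\; \int_0^{\infty} \binom{c(t)}{2}\, dt,
\]
where $c(t)$ is the number of connected components of the threshold graph $G_t=(V,\{uv:\dist(u,v)\le t\})$. This comes from writing $w_i=\int_0^\infty \mathbf{1}[t<w_i]\,dt$ in the closed form $\cost(G)=\sum_{i=1}^{n-1}(n-i)w_i$ from \eqref{eqn:costG_MST} and observing that for a fixed $t$ the MST contains exactly $c(t)-1$ edges of weight $>t$ (by the cut property), so $\sum_{i:w_i>t}(n-i)$ telescopes to $\sum_{k=1}^{c(t)-1}k=\binom{c(t)}{2}$.

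Next, I would discretize $t$ into a geometric sequence $t_0<t_1<\cdots<t_L$ with ratio $1+\varepsilon$. Since $c(t)$ is monotone non-increasing, the Riemann sum $\sum_j \binom{c(t_j)}{2}(t_{j+1}-t_j)$ matches the integral up to a $(1+O(\varepsilon))$ factor. After scaling so that the smallest pairwise distance is $1$, the integrand is supported on $t\in[1,\Delta]$ with $\Delta\le n\cdot\MST\le \poly(n)$, giving $L=O(\log(n)/\varepsilon)$ layers whose $\poly\log$ factors are absorbed by $\tilde O$. It therefore suffices to estimate each $c(t_j)$ with enough accuracy that the total weighted error is at most $\varepsilon\cdot\cost(G)$.

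Then, for each threshold $t_j$, I would estimate $c(t_j)$ by a metric-space variant of the CRT estimator: sample $\tilde O(1/\varepsilon^2)$ vertices $u$, run a BFS in $G_{t_j}$ from $u$ truncated once $K=\Theta(1/\varepsilon)$ vertices have been found, set $\hat\xi_u=1/|C_u|$ if the component was exhausted and $\hat\xi_u=0$ otherwise, and return $\hat c(t_j)=(n/|S|)\sum_{u\in S}\hat\xi_u$. In the metric oracle model a BFS step requires scanning the remaining vertices against the current frontier using $O(n)$ distance queries per new frontier vertex, so with the BFS size capped at $K$ a single sample costs $\tilde O(n/\varepsilon)$ queries.

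The main obstacle I expect is the conversion from the CRT estimator's additive error (of order $\varepsilon n$ on $c(t)$) to a relative error on the integrand $\binom{c(t)}{2}$, since $\binom{c(t)}{2}$ can be as small as a constant while $c(t)$ is still as large as $\Theta(\sqrt{n})$. I would handle this by splitting layers into a \emph{heavy} regime, where $c(t_j)$ is large enough that additive error $\varepsilon n$ already yields the needed multiplicative control, and a \emph{light} regime, whose contribution to the integral is bounded by $\binom{c(t_j)}{2}\cdot\varepsilon t_j$ and absorbed against the global lower bound $\cost(G)\ge\binom{n}{2}$ (which holds after scaling). Multiplying the $O(\log n/\varepsilon)$ layers, the $\tilde O(1/\varepsilon^2)$ samples per layer, the $\tilde O(n/\varepsilon)$ queries per sample, and additional $1/\varepsilon$ factors needed both to amplify the CRT concentration and to carry out the heavy/light split yields the claimed $\tilde O(n/\varepsilon^7)$ complexity.
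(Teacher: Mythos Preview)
Your plan has a minor gap and a fundamental one.

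The minor one: scaling so that the smallest pairwise distance is $1$ does \emph{not} force $\MST$ (or $\Delta$) to be $\poly(n)$; the aspect ratio of a metric can be arbitrary, so your claim $L=O(\log(n)/\varepsilon)$ is unjustified. The paper instead scales so that an estimate of the \emph{largest} distance equals $2n^2/\varepsilon$ and then rounds every distance below $1$ up to $1$; this is harmless because $\cost(G)\ge\cost(\MST)\ge(\text{max distance})=2n^2/\varepsilon$, and it bounds the number of geometric layers by $O(\varepsilon^{-1}\log(n/\varepsilon))$.

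The fundamental gap is that a plain CRT estimator with BFS truncated at $K=\Theta(1/\varepsilon)$ \emph{vertices} cannot achieve the stated guarantee, and your heavy/light split does not rescue it. Take $n-1$ points at mutual distance $1$ together with one point $p$ at distance $M=n^2$ from all of them. For every $t\in[1,M)$ we have $c(t)=2$, and this ``light'' range contributes $M-1\approx\tfrac{2}{3}\cost(G)$ to the integral---so it certainly cannot be ``absorbed against $\cost(G)\ge\binom{n}{2}$''. Yet from any clique vertex your truncated BFS enters a component of size $n-1\gg K$ and returns $\hat\xi=0$; with $O(1/\varepsilon^2)$ samples and $\varepsilon\gg 1/\sqrt{n}$ you will almost surely output $\hat c(t)\in\{0,1\}$ on every such layer and underestimate $\cost(G)$ by a constant factor. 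Enlarging $K$ (or shrinking the CRT error parameter) enough to fix this forces $K=\Theta(n)$ and an $\Omega(n^2)$ algorithm. The paper's route is genuinely different: it invokes the Czumaj--Sohler \textsc{Clique-Tree-Traversal}, which exploits the triangle inequality (two points within $\varepsilon(1+\varepsilon)^j$ share the same neighborhood in $G_j$) to truncate on the number of \emph{representative} vertices---points pairwise at least $\varepsilon(1+\varepsilon)^j$ apart---rather than on all visited vertices. This yields $|\hat c_j-c_j|\le (c_j-c_{j+1})+\tfrac{3\varepsilon}{8r}\,|U_{\mathrm{rep}}^{(j)}|$, and the proof then closes via two metric lower bounds, $\cost(G)\ge \Omega\big(\varepsilon(1+\varepsilon)^j\,|U_{\mathrm{rep}}^{(j)}|\,c_j\big)$ and $\cost(G)\ge\Omega\big(\varepsilon^2(1+\varepsilon)^j\,|U_{\mathrm{rep}}^{(j)}|^2/r\big)$, which tie the weighted error back to $\cost(G)$. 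In the example above $|U_{\mathrm{rep}}^{(j)}|\le 2$ on every light layer, so the Czumaj--Sohler estimator recovers $c(t)=2$ essentially exactly---precisely what your vertex-truncated BFS cannot do.
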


When the metric represents similarity, we design an algorithm with the same performance guarantees as above, though the algorithm and analysis differ slightly. See \Cref{thm:cost_metric_similarity} for the formal statement.

To complement our theoretical findings, we conduct experiments on various real networks, including those where edges represent distance relationships and those where edges represent similarity relationships. Our experiments show that our algorithms achieve both good accuracy and better running time. 

\subsection{Technical Overview}
\paragraph{The CRT Approach for Estimating the Weight of MST} Our algorithms and lower bounds are inspired by the work of Chazelle, Rubinfeld, and Trevisan \cite{chazelle2005approximating} on estimating MST weight, which we refer to as the CRT approach. Their key insight is that the MST cost can be reduced to estimating the number of connected components (\#CCs) in a sequence of thresholded subgraphs. Let $G_j$ be the subgraph containing edges with weight at most $j$, and let $c_j$ be its number of connected components. Then they show that the MST cost is $\mathrm{cost}(\mathrm{MST}) = n - W + \sum_{j=1}^{W-1} c_j$.

To estimate each $c_j$, their algorithm samples vertices and performs BFS-based local exploration, with the exploration size determined by a stochastic process. This yields an estimate of $c_j$ within additive error $\varepsilon n$ in expected time $\tilde{O}(d/\varepsilon^2)$, where $d$ is the average degree. Combining the estimates across all $j$ gives a $(1+\varepsilon)$-approximation to $\mathrm{cost}(\mathrm{MST})$ in total time $\tilde{O}(dW/\varepsilon^2)$.

For the lower bound, they reduce from a distributional problem of distinguishing between two biased coin distributions, parameterized by $q \in (0,1/2]$ and $\varepsilon \in (0,1)$. Any algorithm with success probability at least $3/4$ must make $\Omega(1/(q\varepsilon^2))$ queries. This is then used to construct two distributions over weighted graphs whose MST costs differ by at least a factor of $(1+\varepsilon)$, yet cannot be distinguished using $o(Wd/\varepsilon^2)$ queries. 

\vspace{1em}
We now present a high-level overview of the main techniques underlying our algorithms for estimating SLC costs.

\subsubsection{On Estimating $\cost(G)$}
Our approach to estimating $\cost(G)$ proceeds in three main steps.
(1) We begin by reducing the problem to estimating the number of connected components in a sequence of subgraphs, following the framework of \cite{chazelle2005approximating}.
(2) We then adapt and refine their component-counting technique to suit our setting, providing a more tailored analysis.
(3) Finally, to improve efficiency, we exploit the monotonicity of connected component counts across the subgraph sequence and apply a binary search strategy to accelerate computation. This final step constitutes the main technical contribution of our paper.

\junk{
However, if we naively estimate the number of connected components for every subgraph, we need to spend time linear in $W^{3/2}$.

Here, we make a critical observation: the sequence of the number of connected components across subgraphs is monotonic. To accelerate computation, we leverage this monotonicity and employ a binary search strategy.However, since we can only access estimated (not exact) values of the number of connected components, monotonicity may not strictly hold.

Our approach introduces two main challenges:
\begin{enumerate}
    \item \textbf{Handling approximate monotonicity:} How can binary search be adapted when we only have access to noisy, potentially non-monotonic estimates?
    \item \textbf{Improved estimation for similarity:} In the similarity setting, we require much more accurate estimates of the number of connected components. This motivates a novel estimation technique.
\end{enumerate}

A particularly interesting aspect of our work lies in constructing a succinct representation of the profile vector, which is also based on monotonicity and binary search.

We address various settings, including distance, similarity and metric space. 
Each setting shares the core ideas but with subtle differences, 
notably the increased accuracy for estimating the number of connected components required in the similarity setting.
Accordingly, we first focus on the distance setting, then highlight additional details for the similarity setting.
}
\paragraph{Reduction to Estimating \#CCs}
By exploiting the relation between 
the number of edges on the MST with weight $j$ and the number of connected components in certain subgraphs of $G$ (\cite{chazelle2005approximating}), we derive an equivalent formula of $\cost(G)$ in \cref{thm:formula_cost}. We show that 
\[\cost(G)=\frac{n(n-1)}{2}+\frac12\cdot\sum_{j=1}^{W-1}(c_j^2-c_j),\]
where $c_j$ is the number of connected components in the subgraph induced by the edges of weight at most $j$. {Note that our formula for $\cost(G)$ is a quadratic function of the number of the $c_j$. This stands in contrast to \citep{chazelle2005approximating} where the MST weight is a linear function of the $c_j$.}

\junk{
Our cost functions can apply to any hierarchical clustering algorithm. 
In particular, we focus on SLC as it minimizes these costs.
Under SLC, our cost functions are determined by the edge weights on the MST.
Inspired by \cite{chazelle2005approximating}, 
we observe that the number of edges on the MST with weight $j$ can be related to the number of connected components in certain subgraphs.

This observation allows us to reduce our cost estimation problem to estimating the number of connected components.
Specifically, we derive an equivalent formula of $\cost(G)$ by showing that $\cost(G)=\frac{n(n-1)}{2}+\frac12\cdot\sum_{j=1}^{W-1}(c_j^2-c_j)$, where $c_j$ is the number of connected components in the subgraph induced by the edges of weight at most $j$.

A straightforward estimator samples several vertices and performs breadth-first search from each of them. 
To speed up computation, we truncate BFS traversals in components with large sizes and high-degree vertices.
}

\paragraph{Invoking Sublinear Time Algorithm for \#CCs} To approximate $\cost(G)$ within a factor of $(1+\varepsilon)$, it suffices to estimate each $c_j$ up to additive error $\varepsilon n/\sqrt{W}$. Recall that the CRT approach approximates the number of connected components in a subgraph within additive error $\varepsilon n$ in expected $\tilde{O}(d/\varepsilon^2)$ time, and the success probability can be boosted to $1 - \delta$ with a multiplicative $\log(1/\delta)$ overhead. By setting the error tolerance to $\varepsilon n/\sqrt{W}$ and $\delta = O(1/W)$ for each $c_j$, we obtain a total running time of $\tilde{O}(d/(\varepsilon/\sqrt{W})^2\cdot W)=\tilde{O}(W^2 d/\varepsilon^2)$ 
to estimate $\cost(G)$.

To improve this, we show in \Cref{lem:appncc} that each $c_j$ can instead be estimated within additive error $\max\{\varepsilon n/\sqrt{W}, \varepsilon c_j\}$ in time $\tilde{O}(\sqrt{W} d/\varepsilon^2)$ by adapting a variant of CRT algorithm with a refined analysis. Estimating all $c_1, \dots, c_{W-1}$ to this precision yields a $(1+\varepsilon)$-approximation to $\cost(G)$ in total time $\tilde{O}(\sqrt{W}d/\varepsilon^2\cdot W)=\tilde{O}(W^{3/2}d/\varepsilon^2)$.

\paragraph{Applying a Binary Search Strategy
} We now ask whether the running time for estimating $\cost(G)$ can be further improved, possibly to sublinear in $W$. Recall that $\cost(G) = \frac{n(n-1)}{2} + \frac{1}{2} \sum_{j=1}^{W-1} (c_j^2 - c_j)$. Therefore, it suffices to approximate the sum $\sum_{j=1}^{W-1} (c_j^2 - c_j)$. Computing this exactly would require evaluating all $W-1$ terms, which is too costly. To design an $o(W)$-time algorithm, we must avoid estimating every $c_j$ individually.

A crucial observation is that the sequence $(c_1, \dots, c_W)$ is non-increasing, since $G_j$ includes more edges as $j$ increases, and adding edges cannot increase the number of connected components. This implies that the sequence $(c_j^2 - c_j)$ is also non-increasing. We exploit this monotonicity to design a faster algorithm.

\textbf{A High Level Idea} 
We partition the range $[1,n]$ into buckets $(B_{i+1}, B_i]$ using a geometric sequence of breakpoints: $B_1 \ge B_2 \ge \dots \ge B_t$, with $B_{i+1} = B_i/(1+\varepsilon)$. Within each bucket, the $c_j$ values are close, so we approximate all $c_j$ in the bucket by a representative value, e.g., $B_i$. Then we estimate the sum $\sum_{j=1}^{W-1}(c_j^2-c_j)$ as:
\[
\sum_{i=1}^{t} \left[\#\text{ of } c_j \text{ in } (B_{i+1}, B_i]\right] \cdot (B_i^2 - B_i).\]

To estimate the number of $c_j$ values in a bucket, we perform binary search over the $\{c_1,\dots,c_W\}$ sequence to locate the indices where the values cross $B_i$ and $B_{i+1}$. Since each such search uses $O(\log W)$ steps and the number of buckets is $O(\log_{1+\varepsilon} W) = O((\log W)/\varepsilon)$, we access at most $O((\log^2 W)/\varepsilon)$ values of $c_j$'s.

Each $c_j$ can be estimated within additive error $\max{\varepsilon n/\sqrt{W}, \varepsilon c_j}$ in time $\tilde{O}(\sqrt{W}d/\varepsilon^2)$, so the total running time becomes:
$\tilde{O}\left(\frac{\sqrt{W} d}{\varepsilon^2} \cdot \frac{\log^2 W}{\varepsilon}\right) = \tilde{O}\left(\frac{\sqrt{W} d}{\varepsilon^3}\right)$. This yields the desired sublinear-in-$W$ algorithm for estimating $\cost(G)$.

\textbf{The Challenge} However, the main challenge is that we do not have direct access to the true sequence $(c_1, \dots, c_W)$ -- only approximate estimates $\hat{c}_j$. Due to estimation errors, the estimated sequence $(\hat{c}_1, \dots, \hat{c}_W)$ may no longer be monotonic, making the direct application of the above algorithm infeasible.

\junk{
Specifically, now due to the monotonicity of $(c_1,\dots, c_W)$ and the fact that they all lie in $[1,n]$, a natural idea is to divide $c_j$'s into buckets (specified by a sequence of key values $B_1\geq B_2\geq ...\geq B_t$ in $[1,n]$), so that within each bucket $(B_{i+1},B_i]$, the $c_j$ values are sufficiently close to each other (i.e., the bucket endpoints $B_{i+1},B_i$ are sufficiently close). For each bucket $(B_{i+1},B_i]$, we count the number of $c_j$ it contains, and then use a single representative (e.g., one key $B_i$) to approximate all the $c_j$ values in that bucket. Finally, we can use these representative values and corresponding counts within each buckets to summarize the sequence $(c_1,\dots,c_W)$ and estimate the sum $\sum_{j=1}^{W-1}(c_j^2-c_j)$ as:
\[
\sum_{\text{the $i$-th bucket}}[\text{the number of $c_j$'s within this bucket}]\cdot(B_i^2-B_i),
\]

Now to estimate the number of $c_j$'s within any bucket $(B_{i+1},B_i]$, we can apply the binary search over $(c_1,\dots,c_{W})$ for the keys $B_{i+1}$,$B_i$ and find the right positions (or indices) of these two keys. Then he number of $c_j$'s within any bucket $(B_{i+1},B_i]$ will simply be the difference between the returned two indices. Then we only need to access $O(\log W)$ values of $c_j$'s for each such a bucket. Furthermore, one can simply set $B_i$ in a geometric sequence so that $B_i$ and $B_{i+1}$ differs by a factor of $1+\eps$, which ensures that the bucket endpoints $B_i$ and $B_{i+1}$ are close enough and the total number of such buckets is $O(\log_{1+\eps}W)$. Thus, the total number of $c_j$'s that we need to access is just $O(\log_{1+\eps}W\cdot \log W)=(\log^2W)/\eps$. Since each $c_j$ can be estimated within additive $\max\{\varepsilon n/\sqrt{W}, \varepsilon c_j\}$ in time $\tilde{O}(\sqrt{W} d/\varepsilon^2)$, the total running time will be 
$\tilde{O}(\sqrt{W} d/\varepsilon^2\cdot (\log^2W)/\eps)=\tilde{O}(\sqrt{W} d/\varepsilon^3)$.

This monotonicity property allows us to use binary search to efficiently determine the number of $c_j$'s within each interval.

\pnote{we should somehow relate this to \Cref{sec:binarysearch}, and briefly describe the idea of that section and say that is our main technical contribution.}
}

\textbf{Handling the Challenge} In \cref{sec:binarysearch}, we abstract a core technical problem: given a noisy version of the \emph{non-increasing} sequence $X = (x_1, x_2, \dots, x_W)$ with each $x_j \in [L, R]$, the goal is to obtain a succinct approximation of $X$ without estimating every entry individually. 

To build intuition, consider the simpler setting where all $x_j$ are known. We divide the range $[L, R]$ into $t$ intervals:
\[
[B_t, B_{t-1}],\ (B_{t-1}, B_{t-2}],\ \dots,\ (B_2, B_1],\]
with $L = B_t < B_{t-1} < \dots < B_1 = R$. Using the fact that $X$ is non-increasing, we can binary search for each endpoint $B_i$ to find the smallest index $j_i$ such that $x_{j_i} \leq B_i$, thereby computing how many entries fall into each interval. We then approximate all values in each interval using a representative (e.g., $B_i$).

In the general setting, we do not have direct access to $x_j$ but only to estimates $\hat{x}_j$ satisfying error bounds $\abs{\hat{x}_j - x_j} \leq T_j$. These estimates may break monotonicity, so we design the interval endpoints to ensure robustness. Specifically, we guarantee that if $x_j \in (B_{i+1}, B_i]$, then $\hat{x}_j$ lies in the same or a neighboring interval, provided that $T_j < \min\{B_{i-1} - B_i, B_{i+1} - B_{i+2}\}$. Moreover, we show (in \cref{lem:foundindices}) that binary search over the noisy sequence $\hat{X}$ still preserves a weak monotonicity: for any two keys $B < B'$, their corresponding indices $j_B$ and $j_{B'}$ satisfy $j_B \geq j_{B'}$, maintaining consistency of bucket counts. This then ensures that using $B_i$ to approximate $x_j$ introduces only a small error (see \cref{lem:bound_of_xj}).

Finally, we must carefully choose the bucket endpoints. If we use too few, the approximation within each bucket may be inaccurate; if we use too many, we risk excessive running time and reduced robustness due to estimation noise. By striking the right balance,\junk{-- e.g., choosing geometrically spaced endpoints --} we can ensure both accuracy and efficiency in approximating the entire sequence.

\textbf{Back to Approximating $\cost(G)$}
We now apply the above binary search strategy to estimate the clustering cost $\cost(G)$, using the estimated sequence $(\hat{c}_1, \dots, \hat{c}_W)$.

As discussed, choosing appropriate bucket endpoints is non-trivial in this setting. We design the interval endpoints to align with the error bounds for $\hat{c}_j$, which satisfy
$
|\hat{c}_j - c_j| \leq T_j := O\left(\varepsilon \cdot \max\left\{ \frac{n}{\sqrt{W}}, c_j \right\}\right).
$ To handle this, we partition the range $[1, n]$ into three subranges -- $[1, \varepsilon n/\sqrt{W})$, $[\varepsilon n/\sqrt{W}, n/\sqrt{W})$, and $[n/\sqrt{W}, n]$ -- and construct arithmetic and geometric sequences separately of bucket endpoints $B_i$ within each. This allows us to use only $O(\log W / \varepsilon)$ buckets while still maintaining the desired approximation guarantees.

For each bucket, we use binary search to determine the number of $\hat{c}_j$ values it contains, requiring $O(\log W)$ accesses per bucket. Thus, the total number of $\hat{c}_j$ queries is
$
O\left(\frac{\log^2 W}{\varepsilon}\right),
$
and the total running time for estimating $\cost(G)$ is this quantity multiplied by the cost of estimating a single $\hat{c}_j$, resulting in $\tilde{O}(\sqrt{W}d/\eps^3)$ time.

\textbf{Remark} It is tempting to apply our binary search strategy to approximate the weight of MST, which is
$\cost(\mathrm{MST}) = n - W + \sum_{j=1}^{W-1} c_j$. 
Indeed, we can reuse the binary search framework to ensure that we only need to access
$
O\left( \frac{\log^2 W}{\varepsilon} \right)
$
estimates $\hat{c}_j$ of the true values $c_j$. However, to ensure a $(1+\varepsilon)$-approximation of the MST weight, each $\hat{c}_j$ must approximate $c_j$ within an additive error of
$
\varepsilon \cdot \max\left\{ \frac{n}{W}, c_j \right\},
$
which requires
$
\tilde{O}\left( \frac{Wd}{\varepsilon^2} \right)
$
query time per estimate (\Cref{lem:appncc}). As a result, we do not achieve an asymptotic improvement over the original algorithm of \cite{chazelle2005approximating}. This limitation is not unexpected, since \cite{chazelle2005approximating} established a lower bound of
$
\Omega\left( \frac{Wd}{\varepsilon^2} \right)
$
queries for approximating the MST weight.

\subsubsection{Succinct Representation of the Profile Vector}
We now present our approach for efficiently constructing a succinct representation of the estimated profile vector $(\widehat{\cost}_1, \dots, \widehat{\cost}_n)$, which approximates the true clustering cost profile $(\cost_1, \dots, \cost_n)$.

To do so, we first leverage the relationship between the number of edges in an MST and the number of connected components, deriving an equivalent formula for the $k$-clustering cost, in \cref{eqn:cost_k}:
\[
\cost_k=\sum_{i=1}^{n-k}w_i=n+\sum_{j=1}^{w_{n-k}-1}c_j-k\cdot w_{n-k}.
\]

The central challenge is that, for an arbitrary $k$, $w_{n-k}$ -- the $(n-k)$-th smallest weight in the MST -- is difficult to determine.
Then we observe that for any weight $j\in\{1,\dots,W\}$, the number of edges in the MST of weight at most $j$ can be computed as $n-c_j$. %
Thus, if $w_{n-k}=j$, we can compute $\cost_k$ accordingly.
In particular, for any $j\in\{1,\dots,W\}$, if $k=c_j$, then we have $\cost_k=n+\sum_{i=1}^{j-1}c_i-c_j\cdot j$, as stated in \cref{lem:costk}.

In our binary search strategy, we partition the $c_j$ values into intervals and use an endpoint of each interval to represent all $c_j$'s within it. This suggests a natural way to estimate $\cost_{c_j}$:
for each $c_j$, we use a representative value $\overline{\cost}_{B_i}$ to estimate it, as stated in \cref{eqn:cost_Bi}, where $B_i$ is the interval endpoint corresponding to $c_j$.

Furthermore, we note that the profile vector $(\cost_1,\dots,\cost_n)$ is monotonic in $k$, with $k$ ranging from $1$ to $n$.
Therefore, for $k\in[B_{i+1},B_i)$, the value $\cost_k$ lies within the range $(\overline{\cost}_{B_i},\overline{\cost}_{B_{i+1}}]$. This allows us to use $\overline{\cost}_{B_{i+1}}$ as a representative for all $\cost_k$ values in that interval.
Namely, we partition the index range $[1,n]$ into $O(\log W/\varepsilon)$ intervals. 
For each interval, we estimate the clustering cost at a chosen endpoint $B_{i+1}$ as $\overline{\cost}_{B_{i+1}}$
and use $\widehat{\cost}_k=\overline{\cost}_{B_{i+1}}$ to approximate all $\cost_k$ for $k$ within the interval. 

We provide an accuracy guarantee for the estimate of $\cost_k$ for any fixed $k$ (see \cref{lem:costk}). The idea is as follows.
Observe that $|\widehat{\cost}_k-\cost_k|=|\overline{\cost}_{B_{i+1}}-\cost_k|\le|\overline{\cost}_{B_{i+1}}-\cost_{c_{j_{i+1}}}|+|\cost_{c_{j_{i+1}}}-\cost_k|$, {where $j_{i+1}$ is the smallest index such that $\hat{c}_{j_{i+1}}\le B_{i+1}$}. We bound these two terms separately.
First, for each interval, $B_{i+1}$ is close to $c_{j_{i+1}}$, so the gap $|\overline{\cost}_{B_{i+1}}-\cost_{c_{j_{i+1}}}|$ is small.
Second, since both $k$ and $c_{j_{i+1}}$ are constrained by nearby interval endpoints, the gap $|\cost_{c_{j_{i+1}}}-\cost_k|$ is also small. Consequently, the error $|\widehat{\cost}_k-\cost_k|$ is bounded.

We show in \cref{thm:profile} that the sum of these errors over all $k\in[n]$ accounts to only an additive error of $\varepsilon\cdot \cost(G)=\varepsilon\sum_{k=1}^{n}\cost_k$, so that, \emph{on average}, each estimator of $\cost_k$ achieves a $(1+\varepsilon)$-approximation.

\subsubsection{More Accurate Estimation in the Similarity Setting}

To extend our approach to the similarity setting, which in turn is more closely related to the maximum spanning tree, we first derive an equivalent formula for the clustering cost in \cref{thm:costsimformula}:
\[\cost^{(s)}(G)= \sum_{j=1}^W\frac{(c_j^{(s)}+n-1)(n-c_j^{(s)})}{2},\] 
where $c_j^{(s)}$ denotes the number of connected components in the subgraph induced by edges with weight \emph{at least} $j$. 

To $(1+\varepsilon)$-approximate $\cost^{(s)}(G)$, it suffices to approximate each $c_j^{(s)}+n-1$ within an additive error $\varepsilon n$ 
and approximate each $n-c_j^{(s)}$ within an additive error $\max\{\varepsilon n/W, \varepsilon(n-c^{(s)}_j)\}$.
The former can be done in expected $\tilde{O}(d/\varepsilon^2)$ time.
However, estimating $n-c_j^{(s)}$ is technically more subtle.

Our earlier algorithm for estimating the number of connected components (\cref{alg:appncc}) achieves only an additive error of 
$ \max\{\varepsilon n/W, \varepsilon c_j^{(s)}\} $ in expected $ \tilde{O}(Wd/\varepsilon^2) $ time.
But when $c_j^{(s)}>n/2$, %
the error of \cref{alg:appncc} is greater than $\varepsilon(n-c^{(s)}_j)$, which is insufficient for our purposes.
To resolve this, we observe that when $c_j^{(s)}$ is large, most vertices are isolated.
Therefore, we separately estimate the number of isolated vertices, along with the number of connected components among the non-isolated portion of $G_j^{(s)}$.

This refinement allows us to estimate $n-c_j^{(s)}$ within an additive error of
\[\max\{\varepsilon n/W, \varepsilon \min\{c_j^{(s)},n-c_j^{(s)}\}\},\]
running in $\tilde{O}(Wd/\varepsilon^2)$ time, as stated in \cref{lem:appncc_sim}.
Finally, we apply binary search to the sequence of $n-c_j^{(s)}$ estimates, and design the interval endpoints to align with the error bounds for $n-c_j^{(s)}$. Since these error bounds are more complex than those in the distance case, careful consideration is required when choosing the endpoints.
In particular, we partition the range $[0, n-1]$ into five subranges -- $[0, \varepsilon n/W)$, $[\varepsilon n/W, n/W)$, $[n/W, n/2)$, $[n/2,n-n/W)$, and $[n-n/W, n-1]$ -- and construct bucket endpoints $B_i$ according to either arithmetic or geometric progressions within each subrange. This allows us to use only $O(\log W / \varepsilon)$ buckets while still maintaining the desired approximation guarantees.
We show in \cref{thm:cost_similarity} that the sum of the corresponding products of estimators for $c_j^{(s)}+n-1$ and $n-c_j^{(s)}$ yields a good approximation to $\cost^{(s)}(G)$. 

\subsubsection{On the Lower Bounds and Metric Space Setting}
\paragraph{The Lower Bounds} Our lower bounds build on the distributional problem of distinguishing between two biased coin distributions, as studied in \cite{chazelle2005approximating}. Leveraging the hardness of this problem, we construct two graph distributions such that, with high probability, their total clustering costs $\cost(G)$ differ by at least a factor of $1+\varepsilon$. We then show that any algorithm making only $o(\sqrt{W}d/\varepsilon^2)$ queries cannot distinguish between graphs drawn from these two distributions. The construction closely follows the MST lower bound from \cite{chazelle2005approximating}, but uses different edge weights. The lower bound for approximating the similarity-based clustering cost $\cost^{(s)}(G)$ is established using a similar argument.

\paragraph{The Metric Space Setting}
In this setting, our approach closely follows the work of Czumaj and Sohler on estimating the MST weight in metric spaces \cite{czumaj2009estimating}. We begin by expressing the clustering cost as a quadratic function of the number of connected components across $\tilde{O}(\log n / \varepsilon)$ subgraphs. To estimate the number of connected components, we leverage the algorithm of \cite{czumaj2009estimating}, which runs in $\tilde{O}(n / \varepsilon^6)$ time and identifies a set of \emph{representative vertices} -- vertices whose neighborhoods are shared by many others -- allowing the algorithm to inspect only the neighborhoods of these representatives.

Our contribution includes a refined error analysis for estimating the number of connected components, as well as new lower bounds on $\cost(G)$ in terms of the number of representative vertices and connected components. Using these insights, we show that directly applying the cost formula to the component estimates yields a $(1 + \varepsilon)$-approximation of the clustering cost, for both distance- and similarity-based metrics.

\subsection{Related Work}
Hierarchical clustering has been extensively studied in the context of approximation algorithms with polynomial running times. A significant body of work has explored hierarchical clustering in metric and Euclidean spaces, including studies on hierarchical $k$-median \citep{lin2010general} and the use of the largest cluster radius as a cost measure \citep{dasgupta2005performance}. Recently, \cite{dasgupta2016cost} introduced a cost function for similarity-based hierarchical clustering and proposed an approximation algorithm for this cost. Several improvements and generalizations have since emerged \citep{charikar2017approximate,moseley2023approximation,cohen2019hierarchical,charikar2019hierarchical}.
{Other recent work on hierarchical correlation clustering and fitting distances in ultrametrics, such as \cite{an2025handlinglproundinghierarchicalclustering}, has further improved approximation algorithms in the hierarchical setting.}

In terms of sublinear algorithms for hierarchical clustering, \cite{agarwal2022sublinear} studied such algorithms concerning Dasgupta cost \citep{dasgupta2016cost} within dynamic streaming, query, and massively parallel computation models. 
\cite{bateni2017affinity} also considered affinity and single-linkage clustering in the massively parallel computation setting.
\cite{assadi2022hierarchical} focused on streaming algorithms for identifying a hierarchical clustering with low Dasgupta cost and estimating the value of the optimal hierarchica l tree. Additionally, \cite{kapralov2023learning} provided a sublinear-time hierarchical clustering oracle for graphs exhibiting significant flat clustering, and \cite{kapralov2025approximatingdasguptacostsublinear} estimated Dasgupta cost for well-clusterable graphs in sublinear-time. Other sublinear algorithms addressing graph clustering with conductance-based measures include local graph clustering \citep{spielman2013local,andersen2006local} and spectral clustering oracles for well-clusterable graphs \citep{peng2020robust,gluch2021spectral,shen2024sublinear}.

Our work is closely related to a series of studies on estimating the number of connected components and the weight of minimum spanning trees. \cite{chazelle2005approximating} pioneered a sublinear-time algorithm for these problems using sampling and truncated BFS. Subsequent investigations have explored these problems in various contexts, including Euclidean space \citep{czumaj2005approximating}, metric space \citep{czumaj2009estimating}, and graph streaming \citep{huang2019dynamic,peng2018estimating}.

\paragraph{Organization}

The rest of the paper is organized as follows.
We present preliminaries in \cref{sec:preliminaries}, and describe our connected component estimation algorithm and give its analysis in \cref{sec:estimateNCC}.
In \cref{sec:distancemeasure}, we express the total clustering cost $\cost(G)$ via connected components.
We then give our binary search strategy for succinctly approximating noisy monotone sequence in \cref{sec:binarysearch}. In \cref{sec:algorithmfordistance}, we prove \cref{thm:cost_distance} for distance-based clustering, and \cref{thm:profile}. 
We extend our approach to the similarity setting in \cref{sec:similaritymeasure}, introducing a new algorithm and proving \cref{thm:cost_similarity} and \cref{thm:profile_sim}.
Lower bounds for both settings are given in \cref{sec:lowerbounds}, and experimental validation appears in \cref{sec:experiments}.
Additional proofs, generalizations, metric space results, and further experiments are deferred to the appendix for clarity.

\section{Preliminaries}\label{sec:preliminaries}

\paragraph{Model of Computation}
We will assume that the algorithm is given access to an adjacency list representation of the (undirected and connected) input graph $G=(V,E)$, except when $G$ is a metric graph. 
That is, we assume w.l.o.g. that $V=\{1,\dots, n\}$ and
that the parameter $n$ is given to the algorithm. The edges of the graph are stored in an array of size $n$
whose $i$-th entry is a pointer to the list of neighbors of vertex $i$. With each neighbor $j$ that appears in the adjacency list of vertex $i$, we store the weight $w((i,j))$.
In particular, computing the degree $\deg(v)$ of a vertex $v$ requires scanning through all of its neighbors and so this can be done in $O(\deg(v))$ time.
We will use $d$ to denote the average vertex degree of $G$. We emphasize that the algorithm has the ability to query the $i$-th neighbor of a specified vertex $v$ in $\Theta(i)$ time, for any given $i\leq \deg(v)$.

\paragraph{Probability Amplification}
Consider an algorithm $A$ which outputs $\mathrm{out}_A$, with the assumption that $\mathrm{out}_A$ 
is correct with constant probability greater than $2/3$. We can amplify
the success probability by constructing a new algorithm $A^*$
that runs $C\cdot \log (1/\delta)$ independent instances of $A$, for some sufficiently large constant $C>0$. The output $\mathrm{out}_{A^*}$
is then set to be the median of the output values from these instances. By Chernoff bound it follows
that $\mathrm{out}_{A^*}$ is correct with probability at least $1-\delta$,
where $\delta$ can be made arbitrarily small (see e.g. \cite{motwani1995randomized}).

\paragraph{Chernoff--Hoeffding bound} We will make use of the following Chernoff--Hoeffding bound (see Theorem 4.4 and Theorem 4.5 in \cite{mitzenmacher2017probability}).
\begin{theorem}[The Chernoff--Hoeffding bound]\label{thm:chernoff}
	Let $t\geq 1$. Let $X=\sum_{1\leq i\leq t}X_i$, where $X_i, 1\leq i\leq t$, are independently distributed in $[0,1]$. Then for all $0<\varepsilon\leq 1$, 
	\[
	\Pr[X\geq(1+\varepsilon)\E[X]] \leq e^{-\E[X]\varepsilon^2/3},\text{ }
         \Pr[X\leq(1-\varepsilon)\E[X]] \leq e^{-\E[X]\varepsilon^2/2}.
         \]
\end{theorem}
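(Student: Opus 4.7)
The plan is to prove both tail bounds by the standard Chernoff method, i.e., apply Markov's inequality to the exponential moment $e^{\lambda X}$ for a carefully chosen real parameter $\lambda$, then use independence to factor the moment generating function over the $X_i$. The only property of the $X_i$ we exploit is that they lie in $[0,1]$ and are independent; their individual distributions will be irrelevant after one uniform convexity bound.

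First I would handle the upper tail. The initial step is Markov's inequality in the form
\[
\Pr[X \geq (1+\varepsilon)\mu] \;\leq\; e^{-\lambda(1+\varepsilon)\mu}\cdot \E[e^{\lambda X}], \qquad \lambda>0,
\]
where $\mu = \E[X]$. Independence gives $\E[e^{\lambda X}] = \prod_{i=1}^t \E[e^{\lambda X_i}]$. Since each $X_i \in [0,1]$, the convexity of $x \mapsto e^{\lambda x}$ on $[0,1]$ yields $e^{\lambda X_i} \leq 1 - X_i + X_i e^{\lambda}$, so
\[
\E[e^{\lambda X_i}] \;\leq\; 1 + \E[X_i]\,(e^\lambda - 1) \;\leq\; \exp\!\bigl(\E[X_i]\,(e^\lambda-1)\bigr),
\]
using $1+x\leq e^x$. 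Multiplying and substituting back gives $\E[e^{\lambda X}] \leq \exp(\mu(e^\lambda-1))$. Optimizing with $\lambda = \ln(1+\varepsilon)$ produces the canonical form
\[
\Pr[X \geq (1+\varepsilon)\mu] \;\leq\; \Bigl(\frac{e^\varepsilon}{(1+\varepsilon)^{1+\varepsilon}}\Bigr)^{\!\mu}.
\]
The lower tail is symmetric: apply Markov to $e^{-\lambda X}$ with $\lambda = -\ln(1-\varepsilon)>0$, and the same factorization gives $\Pr[X\leq (1-\varepsilon)\mu] \leq \bigl(e^{-\varepsilon}/(1-\varepsilon)^{1-\varepsilon}\bigr)^\mu$.

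What remains is to convert these two clean but awkward bounds into the advertised forms $e^{-\mu\varepsilon^2/3}$ and $e^{-\mu\varepsilon^2/2}$. This reduces to establishing the two scalar inequalities
\[
(1+\varepsilon)\ln(1+\varepsilon) - \varepsilon \;\geq\; \varepsilon^2/3 \quad\text{and}\quad (1-\varepsilon)\ln(1-\varepsilon) + \varepsilon \;\geq\; \varepsilon^2/2,
\]
for $0<\varepsilon\leq 1$. I expect this to be the main obstacle: everything above is mechanical once Chernoff's method is set up, whereas these two inequalities require a real calculus argument rather than a one-line manipulation. My plan is to verify each by defining $f(\varepsilon)$ equal to the left-hand side minus the right-hand side, noting $f(0)=0$, and then showing $f'(\varepsilon)\geq 0$ on $(0,1]$ by a further derivative comparison (equivalently, by Taylor expanding $\ln(1\pm\varepsilon)$ and bounding the tail of the series, using that the omitted terms have the right sign over this range). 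Once these two inequalities are in hand, plugging them into the Chernoff bounds above immediately yields the stated probabilities.
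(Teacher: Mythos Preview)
Your proof is correct and is precisely the standard argument. Note, however, that the paper does not actually prove this statement: it is placed in the Preliminaries section and simply cited from Mitzenmacher and Upfal's textbook (Theorems 4.4 and 4.5), so there is no ``paper's own proof'' to compare against. Your write-up is exactly the argument one finds in that reference.
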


\section{Estimating the Number of Connected Components}\label{sec:estimateNCC}
To estimate the cost of single-linkage clustering, we first need an
algorithm that efficiently approximates the number of connected
components in a subgraph $H$ of a graph $G$. This approximation
serves as a crucial step in our overall methodology and builds upon an
algorithm to approximate the number of connected components and the cost of a minimum spanning tree from
\citep{chazelle2005approximating}. We give the formal description in  \cref{alg:appncc}. We start by considering a slight modification of an algorithm from $\citep{chazelle2005approximating}$
to approximate the number of connected components and we give an improved analysis for the case that
their number is relatively small. 
We remark that the edges of subgraph $H$ are implicitly defined (as a function of the weight of the edge in $G$ and its vertices; for example, it could be the set of edges above or below a certain weight), so we must
inspect \emph{all} edges incident to a vertex $v$ in $G$ to find its neighbors in $H$.

\begin{algorithm}[h!]
    \DontPrintSemicolon
    \caption{\textsc{ApproximateConnectedComponents}($G, H, \varepsilon,k,d$)}
    \label{alg:appncc}
    \SetKwInOut{Input}{input}
    \SetKwInOut{Output}{output}
    \Input{graph $G$, implicit subgraph $H$, approx. parameter $\varepsilon$, threshold parameter
    $k$, avg. degree $d$}
    \Output{an estimate $\hat{c}$ of the number of connected components in $H$}
    \BlankLine
    choose $r=\lceil64k/\varepsilon^2\rceil$ vertices $u_1,\dots,u_r$ uniformly at random\;
    set threshold $\Gamma=\lceil4k/\varepsilon\rceil$ and $d^{(G)}=d\cdot\Gamma$\;
    
    \For{each sampled vertex $u_i$}{
    set $\beta_i=0$\;
    
    take the first step of BFS: identify neighbors of $u_i$ in $H$ by examining all incident edges in $G$\;
    let $d_{u_i}^{(G)}$ be the degree of $u_i$ in $G$\;
    \eIf{$u_i$ is isolated in $H$}
    {set $\beta_i=1$\;
    }
    {
        (*) flip a coin\;
        
        \If{(heads) \& (\# vertices visited $\in H<\Gamma$ during the BFS) \& (no visited vertex $\in H$
        has degree in $G>d^{(G)}$ during the BFS)}{
        resume BFS on $H$, doubling the number of visited edges in $G$\;
            \eIf{this allows BFS on $H$ to explore the whole connected component of $u_i$}
            {
            set $\beta_i=d_{u_i}^{(G)}\cdot 2^{\text{\# coin flips}}/\text{\# edges visited in $G$}$\;
            }
            {go to (*)\;}
        }
    }
    }
    \Return $\hat{c}=\frac{n}{r}\sum_{i=1}^r\beta_i$\; 
\end{algorithm}

The underlying intuition is as follows. For a vertex $u$ in a
connected component $C_u$ of subgraph $H$,
let $\vol(C_u)=\sum_{v\in C_u}d_v^{(G)}$,
where $d_v^{(G)}$ is degree of $v$ in graph $G$.
The sum $\sum_{u\in C_u} \frac{d_u^{(G)}}{\vol(C_u)}=1$ for each component $C_u$.
Therefore, across all vertices in $V$, we have:
$\sum_{u\in V}\frac{d_u^{(G)}}{\vol(C_u)}=c$,
where $c$ is the total number of connected components in subgraph $H$.
By sampling and averaging estimates $\beta_i$
for sampled vertices, we approximate the value of $c$.

At this point, we provide a lemma to establish a theoretical guarantee for \cref{alg:appncc}.

\begin{lemma}%
\label{lem:appncc}
Let $1>\varepsilon>0$ and $k\ge 1$.
Suppose (an upper bound on) the average degree 
$d$ of graph $G$ is known.
Given access to the graph $G$ and an implicit subgraph $H\subseteq G$ (where $H$ shares 
the same vertex set as $G$, and the edges of $H$ are determined by evaluating conditions 
on the edges of $G$), \Cref{alg:appncc} outputs $\hat{c}$ such that
\[
|\hat{c}-c| \leq \varepsilon\cdot\max\left\{\frac{n}{k}, c\right\},
\]

with probability at least $7/8$,
where $c$ is the number of connected components in $H$.
The query complexity and running time of the algorithm are $O(\frac{k}{\varepsilon^2}d\log(\frac{k}{\varepsilon}d))$ in expectation.
\end{lemma}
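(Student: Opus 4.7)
The plan is to analyze the estimator $\hat{c} = (n/r)\sum_i \beta_i$ by separating the contribution of ``good'' vertices, where the truncated BFS returns an unbiased estimate of $d_u^{(G)}/\vol(C_u)$, from ``bad'' vertices, where the truncation makes $\beta_i=0$. Define $f(u) := d_u^{(G)}/\vol(C_u)$ with $C_u$ the connected component of $u$ in $H$ and $\vol(C_u)=\sum_{v\in C_u} d_v^{(G)}$. The telescoping identity $\sum_{u\in C}d_u^{(G)}/\vol(C)=1$ for every component gives $\sum_u f(u)=c$. Let $V_{\text{good}}$ be the set of $u$ whose component $C_u$ in $H$ has fewer than $\Gamma$ vertices and contains no vertex of degree exceeding $d^{(G)}=d\Gamma$ in $G$, and set $V_{\text{bad}}=V\setminus V_{\text{good}}$.

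The first step is to bound the bias incurred by ignoring $V_{\text{bad}}$. At most $n/\Gamma$ components can have size $\geq\Gamma$, and by Markov's inequality on $\sum_v d_v^{(G)}=nd$ there are at most $n/\Gamma$ vertices with $d_v^{(G)}>d\Gamma$, so at most $n/\Gamma$ components contain such a vertex. Since each component contributes exactly $1$ to $\sum_u f(u)$, I obtain $\sum_{u\in V_{\text{bad}}} f(u) \leq 2n/\Gamma \leq \varepsilon n/(2k)$.

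Next I analyze $\beta_i$. For $u\in V_{\text{good}}$, let $\ell$ be the smallest integer with $d_u^{(G)}\cdot 2^\ell\geq \vol(C_u)$. The geometric procedure succeeds precisely when $\ell$ consecutive heads occur (the truncation conditions never fire for good $u$), which happens with probability $2^{-\ell}$; upon success the edges visited in $G$ equal $\vol(C_u)$, so $\beta_i=d_u^{(G)}\cdot 2^\ell/\vol(C_u)$. Hence $\E[\beta_i\mid u_i=u]=f(u)$ for good $u$, and $\E[\beta_i\mid u_i=u]=0$ for bad $u$ because the truncation conditions preclude completion. Minimality of $\ell$ (i.e.\ $d_u^{(G)}\cdot 2^{\ell-1}<\vol(C_u)$) forces $\beta_i\leq 2$. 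Therefore
\[
\E[\hat{c}]=\sum_{u\in V_{\text{good}}} f(u),\qquad |\E[\hat{c}]-c|\leq \varepsilon n/(2k),
\]
and using $\beta_i^2\leq 2\beta_i$ together with $\E[\beta_1]=\E[\hat{c}]/n\leq c/n$,
\[
\Var(\hat{c})=\frac{n^2}{r}\Var(\beta_1)\leq \frac{n^2}{r}\E[\beta_1^2]\leq \frac{n^2}{r}\cdot \frac{2c}{n}=\frac{2cn}{r}.
\]

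To conclude concentration, Chebyshev with $\lambda=\tfrac12\varepsilon\max\{c,n/k\}$ and the inequality $\max\{c,n/k\}^2\geq cn/k$ yields
\[
\Pr\bigl[|\hat{c}-\E[\hat{c}]|\geq \lambda\bigr]\leq \frac{2cn/r}{\varepsilon^2 cn/(4k)}=\frac{8k}{r\varepsilon^2}\leq \frac{1}{8},
\]
by the choice $r=\lceil 64k/\varepsilon^2\rceil$. Combining this with the bias bound $\varepsilon n/(2k)\leq \tfrac12\varepsilon\max\{c,n/k\}$ yields $|\hat{c}-c|\leq \varepsilon\max\{c,n/k\}$ with probability at least $7/8$. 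For the running time, iteration $j$ of the coin-flip loop is reached only after $j-1$ heads (probability $2^{-(j-1)}$) and performs $O(d_{u_i}^{(G)}\cdot 2^j)$ work, while the truncation caps edges visited at $d^{(G)}\cdot \Gamma = d\Gamma^2$, so at most $J=O(\log(d\Gamma^2/d_{u_i}^{(G)}))$ iterations occur. Summing the geometric series gives $O(d_{u_i}^{(G)}\cdot J)$ expected work per sample; averaging over uniform $u_i$ replaces $d_{u_i}^{(G)}$ by $d$, and multiplying by $r$ gives the stated $O\bigl(\tfrac{k}{\varepsilon^2}\,d\log(\tfrac{kd}{\varepsilon})\bigr)$. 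The main subtlety, which I expect to require the most care in the write-up, is pinning down the semantics of the geometric BFS: verifying that the truncation conditions really do force $\beta_i=0$ on every bad vertex (no spurious completions on oversized or high-degree components) and that on good vertices the combination of coin-flip probability $2^{-\ell}$ and output $d_u^{(G)}\cdot 2^\ell/\vol(C_u)$ yields an estimator that is simultaneously unbiased and bounded by $2$; once this is in place, the bias–variance split and the running-time sum are routine.
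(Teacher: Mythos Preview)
The proposal is correct and follows essentially the same approach as the paper: bound the bias by the at-most-$2n/\Gamma$ bad components, use $\beta_i\le 2$ to get $\Var(\hat c)\le 2cn/r$, and finish with Chebyshev plus the same running-time bookkeeping. Your single Chebyshev application via the inequality $\max\{c,n/k\}^2\ge cn/k$ is slightly cleaner than the paper's two-case split ($c<n/k$ versus $c\ge n/k$), but otherwise the arguments coincide.
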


Before proving \cref{lem:appncc}, consider the following lemma, 
adapted from \cite{chazelle2005approximating}. 

\begin{lemma}[\cite{chazelle2005approximating}]\label{lem:onbetai}
Let $U$ be the set of vertices that lie in components in subgraph $H$
with fewer than $\Gamma$ vertices; and all of these vertices in 
the original graph are of degree at most $d^{(G)}$.  
The random variable $\beta_i$ in \cref{alg:appncc} is given by:
\begin{displaymath}
\beta_i = \left\{ \begin{array}{ll}
0, & \textrm{if $u_i\notin U$}\\
2^{\left\lceil\log\left(\frac{\vol(C_{u_i})}{d_{u_i}^{(G)}}\right)\right\rceil}
\frac{d_{u_i}^{(G)}}{\vol(C_{u_i})}, & 
w.p.\ 2^{-\left\lceil\log\left(\frac{\vol(C_{u_i})}{d_{u_i}^{(G)}}\right)\right\rceil}
\textrm{, if } \vol(C_{u_i})\neq0\\
1, & w.p. 1 \textrm{, if } \vol(C_{u_i})=0\\
0, & \textrm{otherwise}
\end{array} \right.
\end{displaymath}

The expectation and variance of $\beta_i$ is $\E[\beta_i]=c_U$ and
$\Var[\beta_i]\leq\frac{2c_U}{n}$.
\end{lemma}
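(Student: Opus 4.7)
The plan is to unwrap the random process that \cref{alg:appncc} performs on a sampled vertex and compute the moments of $\beta_i$ by conditioning on that sample. Fix a uniformly sampled vertex $u_i = u$ and do a case analysis from the algorithm's description. If $u \notin U$, then either $|C_u| \geq \Gamma$ or some vertex of $C_u$ has $G$-degree exceeding $d^{(G)}$; in both subcases the BFS will eventually detect the violating condition before exploring all of $C_u$, so the algorithm exits the loop without ever reaching the line that assigns a non-zero value to $\beta_i$, and $\beta_i = 0$ deterministically. If $u \in U$, both conditions in the \texttt{if}-test remain true throughout, so the only ways the doubling-BFS terminates are by flipping tails (giving $\beta_i = 0$) or by exhausting the whole component $C_u$.

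Next I would track exactly when the doubled BFS first finishes. After the initial step the algorithm has visited $d_u^{(G)}$ edges of $G$; each subsequent heads flip doubles that quantity, so after $j$ consecutive heads it has examined $2^j d_u^{(G)}$ edges of $G$. The component is fully explored the first time $2^j d_u^{(G)} \geq \vol(C_u)$, which happens at $j^\star = \lceil \log(\vol(C_u)/d_u^{(G)}) \rceil$, and this occurs only if all $j^\star$ flips come up heads, an event of probability $2^{-j^\star}$; conditioned on it, $\beta_i = d_u^{(G)} \cdot 2^{j^\star}/\vol(C_u)$, while every earlier tails gives $\beta_i = 0$. The boundary $\vol(C_u) = 0$ (hence $u$ isolated in both $G$ and $H$) is handled by the algorithm's first branch, which sets $\beta_i = 1$ outright; when $u$ is isolated in $H$ but not in $G$, $C_u = \{u\}$ and $\vol(C_u) = d_u^{(G)}$, so the algorithm again sets $\beta_i = 1$, matching $d_u^{(G)}/\vol(C_u) = 1$.

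Taking expectation, the payoff $2^{j^\star}$ cancels the probability $2^{-j^\star}$, giving for every $u \in U$
\[
\E[\beta_i \mid u_i = u] \;=\; \frac{d_u^{(G)}}{\vol(C_u)},
\]
with the convention that this ratio equals $1$ at the degenerate boundary. Averaging over the uniform choice of $u_i$ and grouping the non-zero contributions by the components $C \subseteq U$ of $H$, the inner sums telescope via $\sum_{u \in C} d_u^{(G)} = \vol(C)$, so $\E[\beta_i] = \tfrac{1}{n}\sum_{C \subseteq U} 1 = c_U/n$, the quantity needed for the estimator $\hat{c} = (n/r)\sum_i \beta_i$ to be unbiased for $c_U$.

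For the variance, the key point is that $\beta_i$ is uniformly bounded: by the minimality of $j^\star$ we have $2^{j^\star - 1} < \vol(C_u)/d_u^{(G)}$, whence $2^{j^\star} d_u^{(G)}/\vol(C_u) \leq 2$, so $\beta_i \leq 2$ almost surely. Therefore
\[
\Var[\beta_i] \;\leq\; \E[\beta_i^2] \;\leq\; 2\,\E[\beta_i] \;=\; \frac{2 c_U}{n}.
\]
The main obstacle I expect is the careful case analysis of the stopping rule -- in particular verifying that for $u \in U$ no other termination path can fire before the doubling reaches $j^\star$, and that the degenerate boundary cases ($u$ isolated in $G$ or only in $H$) line up consistently with the closed form for $\E[\beta_i \mid u_i = u]$; beyond that, the argument reduces to a telescoping identity for the expectation and a uniform-boundedness estimate for the second moment.
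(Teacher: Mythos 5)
Your proof is correct and follows essentially the same route as the paper: compute the conditional expectation $\E[\beta_i \mid u_i = u]$ (the $2^{j^\star}$ payoff cancels the $2^{-j^\star}$ probability, giving $d_u^{(G)}/\vol(C_u)$, with the convention $1$ at the isolated boundary), average over $u$, telescope within each component of $H$ lying in $U$ to obtain $c_U/n$, and use the uniform bound $\beta_i \le 2$ for the variance. The only cosmetic difference is that the paper splits the averaging by conditioning on $u_i$ being $H$-isolated or not (introducing $n'$ and $c'_U$ and combining via $n-n'=c_U-c'_U$), whereas you group directly by components; both are the same calculation. One small imprecision in your opening paragraph: a vertex isolated in $H$ with $d_u^{(G)}>d^{(G)}$ lies outside $U$ yet the algorithm unconditionally sets $\beta_i=1$ in its first branch, never entering the BFS loop where the degree check would fire, so the blanket claim ``$\beta_i=0$ deterministically for $u\notin U$'' does not literally hold in that case — but this edge case is present in the lemma statement itself and in the paper's $n-n'=c_U-c'_U$ identity, so it is not a gap you introduced.
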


\begin{proof}
    We let $H_{\nis}$ be the subgraph of $H$ induced by all non-isolated vertices w.r.t. $H$. Note that if a vertex 
    $u\in H\setminus H_{\nis}$, then $u$ is isolated in $H$.
    The number of vertices in $H_{\nis}$ is defined as $n'$, and let $c'_U$ denote the number of connected components in $H_{nis}[U]$.
    For each $1\leq i\leq r$, the expectation of $\beta_i$
    is given by conditional expectation:
    \begin{align*}
        \E[\beta_i]&=\Pr[u_i\in H_{nis}]\E[\beta_i|u_i\in H_{nis}]+\Pr[u_i\in H\setminus H_{nis}]\E[\beta_i|u_i\in H\setminus H_{nis}]\\
        &=\frac{n'}{n}\cdot\frac{1}{n'}\left(\sum_{u\in H_{nis}\setminus U}0+
        \sum_{u\in H_{nis}\cap U}2^{-\left\lceil\log\left(\frac{\vol(C_{u})}{d_{u}^{(G)}}\right)\right\rceil}
    \cdot2^{\left\lceil\log\left(\frac{\vol(C_{u})}{d_{u}^{(G)}}\right)\right\rceil}
    \frac{d_u^{(G)}}{\vol(C_{u})}\right)
    +\frac{n-n'}{n}\cdot 1\\
    &=\frac{1}{n}\cdot c'_U+\frac{1}{n}(n-n')
    \tag{since $\sum_{u\in U}\frac{d_u^{(G)}}{\vol(C_{u})}=c_U$}\\
    &=\frac{1}{n}\cdot c'_U+\frac{1}{n}(c_U-c'_U)
    \tag{since $n-n'=c_U-c'_U$ is the number of isolated vertices}\\
    &=\frac{c_U}{n}
    \end{align*}

Since when $\vol(C_{u_i})\neq 0$, $\beta_i\leq 2^{\log\left(\frac{\vol(C_{u_i})}{d_{u_i}^{(G)}}\right)+1} \frac{d_{u_i}^{(G)}}{\vol(C_{u_i})}\leq 2$, 
we have $\Var[\beta_i]\leq\E[\beta_i^2]\leq2\cdot\E[\beta_i]\leq\frac{2c_U}{n}$.

\end{proof}

Now we are ready to prove \Cref{lem:appncc}.

\begin{proof}[Proof of \Cref{lem:appncc}]
    Let $U$ be the set of vertices defined in \cref{lem:onbetai},
    and let $c_U$ be the number of connected components in the vertex-induced 
    subgraph $G[U]$.
    The number of connected components containing vertices with degree
    greater than $d^{(G)}$, is at most 
    $\frac{n\cdot d}{d^{(G)}}=\frac{n}{\Gamma}$.
    Furthermore, the number of connected components with size greater
    than $\Gamma$ is at most $\frac{n}{\Gamma}$. Then we have that, 
    \[c-\frac{2 n}{\Gamma}\leq c_U\leq c.\]

    By \cref{lem:onbetai}, $\E[\beta_i]=\frac{c_U}{n}$ and $\Var[\beta_i]\leq\frac{2 c_U}{n}$.
    Thus, $\E[\hat{c}]=\frac{n}{r}\cdot r\cdot \E[\beta_i]=c_U$, 
    leading to: $c-\frac{2 n}{\Gamma}\leq \E[\hat{c}]\leq c$.
    Furthermore, $\Var[\hat{c}]=\frac{n^2}{r^2}\cdot r\cdot \Var[\beta_1]\leq 
    \frac{n^2}{r}\cdot \frac{2c_U}{n} \leq \frac{2n c}{r}$.
    
    We bound the error of $\hat{c}$ by two cases: $c<\frac{n}{k}$, 
    and $c\geq\frac{n}{k}$. If $c< \frac{n}{k}$, by Chebyshev's inequality, 
    \[
    \Pr[|\hat{c}-c_U|\geq \frac{\varepsilon n}{2k}]\leq\frac{\Var[\hat{c}]
    \cdot 4k^2}{\varepsilon^2n^2}
    \leq\frac{2nc\cdot 4k^2}{r\cdot\varepsilon^2n^2}
    \leq\frac{8c\cdot k^2}{64k\cdot n}
    \leq\frac{1}{8}
    \]
    
    The last inequality holds as $c<\frac{n}{k}$. Therefore, the error of $\hat{c}$ is
    \[
    |\hat{c}-c|\leq|\hat{c}-c_U|+|c_U-c|\leq\frac{\varepsilon n}{2k}+\frac{2n}{\Gamma}
    \leq \frac{\varepsilon n}{2k}+\frac{\varepsilon n}{2k}
    =\frac{\varepsilon n}{k}
    \]
    
    Else, when $c\geq \frac{n}{k}$,
    \[
    \Pr[|\hat{c}-c_U|\geq\frac{\varepsilon c}{2}]\leq\frac{4\Var[\hat{c}]}{\varepsilon^2c^2}
    \leq\frac{8nc}{r\cdot\varepsilon^2c^2}
    \leq\frac{8n}{64k\cdot c}
    \leq\frac{1}{8}
    \]
    
    The last inequality holds as $c\geq\frac{n}{k}$. And the error of $\hat{c}$ is
    \[
    |\hat{c}-c|\leq|\hat{c}-c_U|+|c_U-c|\leq
    \frac{\varepsilon c}{2}+\frac{2n}{\Gamma}= \frac{\varepsilon c}{2}
    +\frac{\varepsilon n}{2k} \leq\varepsilon c
    \]

    Combining both cases, we have that 
    $\abs{\hat{c}-c}\leq \varepsilon\max\{\frac{n}{k},c\}$, 
    with probability more than $\frac{7}{8}$.  

    \textbf{Running time analysis.} 
    Denote $\E[T]$ as the expected running time of \cref{alg:appncc}, and
    $\E[T(u)]$ as the expected running time of BFS when vertex $u$ is sampled.
    Then,
    \[
    \E[T] = O(r)\frac{1}{n}\sum_{u\in V}\E[T(u)],
    \]

    as we sampled $r$ vertices, each vertex $u\in V$ is sampled with probability $\frac{1}{n}$.
    
    Note that by our truncation, there are at most $\Gamma$ vertices visited in BFS, 
    and each of them has degree less than $d^{(G)}=O(d\cdot\Gamma)$. 
    Therefore, at most $O(\Gamma^2 d)$ edges are visited in BFS, the number of coin flips is at most
    $O(\log(\Gamma^2 d))=O(\log(\Gamma d))$.

    If a sampled vertex $u$ flips coins for $t$ times, the running time of BFS is 
    $(d_u+2d_u+\dots+2^{t-1}\cdot d_u)\leq 2^t d_u$. The coin is flipped at most $O(\log(\Gamma d))$ times,
    so the expected time of BFS is
    \[
    \E[T(u)]\leq\sum_{t=1}^{O(\log(\Gamma d))}2^{-t}\cdot 2^t d_u=O(d_u\log(\Gamma d))
    \]

    So the expected running time of \cref{alg:appncc} is 
    $O(r\cdot\frac{1}{n}\sum_{u\in V}[d_u\log(\Gamma d)])=O(r\cdot d\log(\Gamma d))=O(\frac{k}{\varepsilon^2}d\log(\frac{k}{\varepsilon}d))$.
\end{proof}

Note that \cite{chazelle2005approximating} achieves an additive error of $\varepsilon\cdot n$,
with a running time that is quadratic in $\varepsilon$. 
Utilizing their approach to attain an additive error of $\varepsilon\frac{n}{k}$ would require a
running time that is quadratic in $k$.
In contrast, our method achieves this with a running time that scales linearly with $k$ in the regime when $n/k \ge c$.
By amplifying\cref{alg:appncc} we obtain \cref{alg:median-trick} that satisfies the following corollary.

\begin{algorithm}[h!]
    \DontPrintSemicolon
    \caption{\textsc{AppNCCMedianTrick}($G, H, \varepsilon,k,d, \delta$)}
    \label{alg:median-trick}
    Let $C>0$ be a sufficiently large constant\;
    \For{$i=1,\dots,C\cdot \log (1/\delta)$
}
    {
        invoke $i$-th independent instance of \cref{alg:appncc} and get output $\mathrm{out}_{i}$\;
    }
    \Return $\hat{c}$, which is the median of $\mathrm{out}_{1},\dots,\mathrm{out}_{k}$\;
\end{algorithm}

\begin{corollary}\label{cor:numberccappdelta}
Let $\varepsilon,\delta\in(0,1)$ and $k\ge 1$.
Suppose (an upper bound on) the average degree 
$d$ of graph $G$ is known.
Given access to the graph $G$ and an implicit subgraph $H\subseteq G$ (where $H$ shares 
the same vertex set as $G$, and the edges of $H$ are determined by evaluating conditions 
on the edges of $G$), 
\cref{alg:median-trick}  outputs $\hat{c}$ satisfying
\[
|\hat{c}-c| \leq \varepsilon\cdot\max\left\{\frac{n}{k}, c\right\},
\]    
with probability at least $1-\delta$.
Here, $c$ is the number of connected components in the subgraph $H$. 
The algorithm runs in time $O(\frac{k\log(1/\delta)}{\varepsilon^2}d\log(\frac{k}{\varepsilon}d))$ in expectation.
\end{corollary}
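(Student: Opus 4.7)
The plan is a standard median-of-independent-estimators argument, building directly on Lemma~\ref{lem:appncc}. First I would define, for each instance $i\in\{1,\dots,C\log(1/\delta)\}$, the Bernoulli indicator $X_i$ of the ``good'' event $E_i := \{\,|\mathrm{out}_i - c| \le \varepsilon\cdot\max\{n/k, c\}\,\}$. By Lemma~\ref{lem:appncc} and the independence of the instances, the $X_i$ are independent with $\Pr[X_i=1]\ge 7/8$.

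The key structural observation is that if strictly more than half of the $\mathrm{out}_i$ fall in the interval $I := [\,c-\varepsilon\cdot\max\{n/k,c\},\ c+\varepsilon\cdot\max\{n/k,c\}\,]$, then the median $\hat c$ must also lie in $I$: otherwise, say if $\hat c > c+\varepsilon\cdot\max\{n/k,c\}$, then at least half of the $\mathrm{out}_i$ would exceed the right endpoint of $I$, contradicting the assumption that a majority lie in $I$ (a symmetric argument handles the lower side). So it suffices to show $\sum_i X_i > (C\log(1/\delta))/2$ with probability at least $1-\delta$.

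Setting $r = C\log(1/\delta)$, we have $\mathbb{E}[\sum_i X_i] \ge 7r/8$. Applying the lower-tail Chernoff--Hoeffding bound from Theorem~\ref{thm:chernoff} with deviation parameter $\varepsilon'=3/7$, the probability that $\sum_i X_i \le r/2$ is at most $\exp(-(7r/8)(3/7)^2/2) = \exp(-\Theta(r)) = \exp(-\Theta(C\log(1/\delta)))$, which is at most $\delta$ for sufficiently large constant $C$. Combined with the previous paragraph, $\Pr[\hat c \in I] \ge 1-\delta$, which is the claimed accuracy bound.

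Finally, the expected running time follows by linearity: each of the $r = O(\log(1/\delta))$ calls to Algorithm~\ref{alg:appncc} takes $O\!\left(\frac{k}{\varepsilon^2}d\log(\frac{k}{\varepsilon}d)\right)$ in expectation by Lemma~\ref{lem:appncc}, plus $O(r)$ overhead for computing the median, giving the stated bound of $O\!\left(\frac{k\log(1/\delta)}{\varepsilon^2}d\log(\frac{k}{\varepsilon}d)\right)$. No step here is genuinely difficult; the only thing to be slightly careful about is the ``majority in $I$ implies median in $I$'' argument, which is why I would write it out explicitly rather than quote it.
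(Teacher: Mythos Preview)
Your proposal is correct and follows exactly the standard median-trick amplification that the paper invokes (see the Probability Amplification paragraph in Section~\ref{sec:preliminaries}); the paper does not write out a separate proof for this corollary beyond that reference. The only minor nit is that $\mathbb{E}[\sum_i X_i]$ is only known to be \emph{at least} $7r/8$, so to apply Theorem~\ref{thm:chernoff} cleanly you should either couple each $X_i$ below by an i.i.d.\ Bernoulli$(7/8)$ variable or note that the lower-tail bound is monotone in the mean, but this is routine.
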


\section{Cost Formula for Distance-Based Clustering
}\label{sec:distancemeasure}

Recall that the cost of a hierarchical SLC is
$\cost(G)=\sum_{i=1}^{n-1}(n-i)\cdot w_i $, where $w_i$ is the $i$-th smallest edge weight in the MST. 
Similarly to the work on approximating the cost of an MST \citep{chazelle2005approximating}, we will express $\cost(G)$ as a function of the number of connected components in certain subgraphs of $G$. 
For this purpose, let $G_j$ denote the subgraph induced by edges of weight at most $j$ and let $c_j$ denote the number of connected components in $G_j$. In particular, we let 
$G_0$ denote the graph consisting of $n$ singleton vertices, so $c_0=n$. We assume the graph $G$ is connected, so that the top cluster (corresponding to $1$-SLC) contains all vertices.

We assume that the edge weights of $G$ are integers. If this is not the case, one may rescale the weights by a factor of $\approx 1/\varepsilon$ and
round them to their closest integer. We refer to \Cref{sec:appendix-rounding-weights} for details.

\begin{theorem}\label{thm:formula_cost}
Let $G$ be a connected graph on $n$ vertices, with edge weights from 
$\{1,\dots,W\}$. Then
\[
\cost(G)=\frac{n(n-1)}{2}+\frac12\cdot \sum_{j=1}^{W-1}(c_j^2-c_j).
\]
\end{theorem}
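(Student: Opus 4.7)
The plan is to rewrite $\cost(G) = \sum_{i=1}^{n-1}(n-i)w_i$ by expressing each $w_i$ as a count ($w_i = |\{j : 1 \le j \le w_i\}|$) and swapping the two summations. The essential structural fact I will use is the Kruskal/cut-property identity that for every $j \in \{0,1,\ldots,W\}$, the number of MST edges of weight at most $j$ equals $n - c_j$ (the same identity underpins the MST formula of \cite{chazelle2005approximating}). This identity locates precisely where each sorted MST weight sits in the sequence $(c_j)$, which is what bridges the $w_i$-form of $\cost(G)$ to the $c_j$-form claimed in the theorem.

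First, I would write
\[
\cost(G) \;=\; \sum_{i=1}^{n-1}(n-i)\sum_{j=1}^{W}[\,j \le w_i\,] \;=\; \sum_{j=1}^{W}\,\sum_{i\,:\, w_i \ge j}(n-i).
\]
Next, for each fixed $j$, the MST edges with weight strictly less than $j$ are exactly the first $n - c_{j-1}$ edges in the sorted order, so the indices contributing to the inner sum are $\{\,n - c_{j-1}+1,\ldots,n-1\,\}$ (the upper endpoint $n-1$ uses $c_W = 1$, which holds since $G$ is connected). The inner sum then telescopes to $\sum_{s=1}^{c_{j-1}-1} s = \binom{c_{j-1}}{2}$. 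Reindexing the outer sum over $c_0, c_1, \ldots, c_{W-1}$ and splitting off the $j=0$ term, which contributes $\binom{c_0}{2} = \binom{n}{2}$ because $G_0$ consists of $n$ isolated vertices, yields
\[
\cost(G) \;=\; \binom{n}{2} + \sum_{j=1}^{W-1}\binom{c_j}{2} \;=\; \frac{n(n-1)}{2} + \frac12\sum_{j=1}^{W-1}(c_j^2 - c_j).
\]

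The only non-routine ingredient is the identity ``number of MST edges of weight $\le j$ equals $n - c_j$''; I would either cite it or establish it in one line via the cut property (any MST restricted to edges of weight $\le j$ is a maximal spanning forest of $G_j$ and hence has $n - c_j$ edges). The remaining difficulty, such as it is, is purely bookkeeping: making sure the summation limits align, correctly handling the boundary cases $j=0$ (giving the $\binom{n}{2}$ term) and $j=W$ (where connectivity of $G$ guarantees the index range reaches $n-1$), and verifying the off-by-one in the reindexing step.
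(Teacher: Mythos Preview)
Your proposal is correct and takes essentially the same approach as the paper: both proofs hinge on the identity that the number of MST edges of weight at most $j$ is $n-c_j$, group the sum $\sum_i(n-i)w_i$ by threshold level, and arrive at $\sum_{j=0}^{W-1}\binom{c_j}{2}$. Your swap-of-summations via $w_i=\sum_{j=1}^W[\,j\le w_i\,]$ is a cleaner packaging of the same manipulation the paper carries out through explicit regrouping and telescoping.
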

\begin{proof}
Let $n_j$ be the number of edges in the MST with weight $j$. 
Observe that the number of edges in the MST with
weight $1$ is $n_1=n-c_1=c_0-c_1$ and the number of edges with weight $2$ is $n_2=c_1-c_2$, $\dots$, the number of edges with weight $j$ is 
$n_j=c_{j-1}-c_j$ for any integer $j\in \{1,\dots,W\}$. By our assumption that the graph $G$ is connected, $c_W=1$. 
Then we have,
\begin{align*}
\cost(G)&=\sum_{i=1}^{n-1}(n-i)\cdot w_i 
\tag{by \cref{eqn:costG_MST}}\\
& =\sum_{i=1}^{n_1}(n-i)\cdot 1+\sum_{i=n_1+1}^{n_1+n_2}(n-i)\cdot 2+...+\sum_{i=n_1+...+n_{W-1}+1}^{n_1+...+n_W}(n-i)\cdot W
\tag{reorganize the sum by grouping the contributions by edge weights}\\
&=\sum_{i=1}^{n-c_1}(n-i)\cdot 1 + \sum_{i=n-c_1+1}^{n-c_2}(n-i)\cdot 2 + \dots + \sum_{i=n-c_{W-1}+1}^{n-c_W}
(n-i)\cdot {W}
\tag{since $n_j=c_{j-1}-c_j$}\\
&=\sum_{i=1}^{n-c_W}(n-i) + \sum_{i=n-c_1+1}^{n-c_2}(n-i)\cdot 1 + \dots + \sum_{i=n-c_{W-1}+1}^{n-c_W}
(n-i)\cdot (W-1)
\tag{factoring out $(n-i)$ from each summation and combining the terms}\\
&=\sum_{i=1}^{n-c_W}(n-i) + \sum_{i=n-c_1+1}^{n-c_W}(n-i) + \dots + \sum_{i=n-c_{W-1}+1}^{n-c_W}(n-i)
\tag{repeating this decomposition until all summations end with $i=n-c_W$}\\
&=\sum_{i=1}^{n-1}i + \sum_{i=1}^{c_1-1}i + \dots + \sum_{i=1}^{c_{W-1}-1}i 
\tag{substituting $(n-i)$ with $i$, and noting that $c_W=1$}\\
&=\sum_{j=0}^{W-1}\frac{(1+c_j-1)\cdot(c_j-1)}{2}
\tag{noting that $c_0=n$}\\
&=\frac{n(n-1)}{2}+\frac12\cdot \sum_{j=1}^{W-1}(c_j^2-c_j)
\end{align*}
\end{proof}

\section{Problem Abstraction and Binary Search}\label{sec:binarysearch}

As mentioned earlier, our goal is to efficiently approximate the total clustering cost $\cost(G)$, which reduces to estimating the sum $\sum_{j=1}^{W-1}(c_j^2 - c_j)$. A key observation is that the sequence $(c_1, \dots, c_W)$ is non-increasing: as $j$ increases, the subgraph $G_j$ includes more edges, leading to fewer connected components. This monotonicity suggests the possibility of summarizing the sequence succinctly using only a few representative values. However, we only have access to estimated values $\hat{c}_j$, and estimation errors may disrupt the monotonicity. This motivates us to study an abstract version of the problem: how to efficiently approximate the sum over a non-increasing sequence when only noisy estimates are available. This abstraction also arises in our algorithm for similarity-based clustering, and we believe it could be of independent interest in other sublinear or local computation settings.

Now, let us describe this abstract problem in a bit more formal manner: Given a noisy version of a \emph{non-increasing} sequence $X$ with 
elements $(x_1,x_2,\dots,x_W)$, and each element $x_j$ from $[L,R]$ (where $L$ and $R$ are lower and upper bounds for the smallest and largest element of $X$),
we want to find a 
succinct approximation of every $x_j$
without approximating each $x_j$ separately.

Initially, assume the exact values of $X$ are known.
We divide the range $[L,R]$ into $t-1$ intervals, for some integer $t$ to be chosen later. The intervals are specified by some numbers $L=B_{t}<B_{t-1}<\dots<B_{1}=R$.
Specifically, we divide $[L,R]$ into: 
\[
[B_{t},B_{t-1}],(B_{t-1},B_{t-2}],\dots,(B_{2},B_1]
\]

Then to get our estimates we do the following. For each $i \in \{1,\dots, t\}$ we perform binary search on the sequence $X' = (x_1,\dots, x_W, {L})$
to find the smallest index $j_i\in \{1,\dots,W+1\}$,
$1\le i \le t-1$
such that $x_{j_i}\leq B_i$
and where we define $j_t=W+1$. 
This results in a non-decreasing sequence
$(j_1,j_2,\dots, j_t)$ that
partitions $\{1,\dots,W\}$ sets
into $J_i = \{j_{i},\dots, j_{i+1}-1\}, 1\le i \le t-1$ 
(where the set is empty when $j_i = j_{i+1})$. 
Then every $x_j$ with $j \in J_i$ is estimated by $B_i$.

Now consider that we only have access to estimates $\hat{x}_j \in[L,R]$ of the $x_j$ in $X$. 
In particular, we assume that there is a sequence $T=(T_1,\dots, T_W)$ of error bounds $T_j$, 
such that $|\hat{x}_j-x_j|\leq T_j$. 
To find succinct approximation of all entries in the sequence $X$,
we can still apply the approach sketched above and perform a variant of the standard binary search on the sequence
$\widehat{X}'=(\hat{x}_1,\dots,\hat{x}_W,{L})$ corresponding to the estimates, searching for each interval endpoint $B_i$.
For completeness, we give the pseudocode in
\cref{alg:binarysearch}. The algorithm is initialized with $\ell=1$, $r=W+1$,  $B=B_i$, and  sequence $\hat X'$.
\begin{algorithm}[h!]
    \DontPrintSemicolon
    \caption{\textsc{BinarySearch}($\widehat{X}',\ell,r,B$)}
    \label{alg:binarysearch}
    \SetKwInOut{Input}{input}
    \SetKwInOut{Output}{output}
    \Input{Estimated array $\widehat{X}'$, leftmost and rightmost
    indices $\ell$ and $r$, search key $B$}
    \Output{Index $i$}
    \BlankLine
    \uIf{$\ell=r$}{\Return $\ell$}
    \Else{
    $m=\left\lfloor\frac{\ell+r}{2}\right\rfloor$\;
    \uIf {\label{alg:binary_m}$\hat x_{m} \le B$}{\Return\textsc{BinarySearch}($\widehat{X}',\ell,m,B$)} 
        \Else{
        \Return \textsc{BinarySearch}($\widehat{X}',m+1,r,B$)}
    }
\end{algorithm}
We note that, since we only have access to the estimates $\hat x_j$, the sequence $\widehat{X}'=(\hat x_1,\dots,\hat x_W,{L})$ may not be non-increasing. Furthermore, we note that \cref{alg:binarysearch} always
returns the index that is reached at the end of the recursion when
$\ell$ and $r$ become equal. 
 We show that our binary search approach satisfies a certain monotonicity property, namely, that the indices $\hat j_i$ found by the searches for the interval endpoints $B_i$ 
are in non-decreasing
order. This is stated formally in the following lemma.

\begin{lemma}\label{lem:foundindices}
Let $a,b \in [L,R]$ with $a<b$. Let $\hat j_a$ and $\hat j_b$ be the
two indices returned by 
\cref{alg:binarysearch} on a (possibly unsorted) input sequence $\widehat X'$ with $B=a$ and $B=b$, respectively. Then we have $ \hat j_a \ge  \hat j_b$.
\end{lemma}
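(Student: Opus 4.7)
The plan is to prove the lemma by strong induction on the length $r-\ell$ of the search interval, tracking the two executions of \cref{alg:binarysearch} on the same sequence $\widehat{X}'$ but with search keys $a$ and $b$ in parallel. Since both executions start with $\ell=1$ and $r=W+1$, it suffices to show that the following stronger statement holds: for every invocation on a common interval $[\ell,r]$, the index returned with key $a$ is at least the index returned with key $b$.

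The base case $\ell=r$ is immediate, since both invocations return $\ell$. For the inductive step, let $m=\lfloor(\ell+r)/2\rfloor$ and split into three cases based on the value of $\hat x_m$:
\begin{itemize}
\item If $\hat x_m\le a$, then also $\hat x_m\le b$, so both executions recurse on $[\ell,m]$ with the same subproblem but different keys, and the induction hypothesis applied to $[\ell,m]$ yields $\hat j_a\ge \hat j_b$.
\item If $\hat x_m>b$, then also $\hat x_m>a$, so both executions recurse on $[m+1,r]$, and the induction hypothesis again gives $\hat j_a\ge \hat j_b$.
\item If $a<\hat x_m\le b$, the two executions diverge: the $a$-execution recurses on $[m+1,r]$ while the $b$-execution recurses on $[\ell,m]$. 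No matter what these subcalls return, we have $\hat j_a\ge m+1$ and $\hat j_b\le m$, so $\hat j_a>\hat j_b$.
\end{itemize}
In every case $\hat j_a\ge \hat j_b$, completing the induction.

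The conceptually interesting point -- and the only place where the argument is not immediate -- is the divergence case. It is precisely here that we use that the comparison branches of the algorithm are \emph{determined only by the threshold}, not by any monotonicity of $\widehat{X}'$: as soon as the two executions make different choices at some node, the two recursive calls operate on disjoint subranges sitting on the correct sides of $m$, which forces the strict inequality regardless of the (possibly non-monotonic) estimates encountered deeper in the recursion. I do not expect any genuine obstacle; the only care needed is to phrase the induction on the pair of simultaneous executions restricted to a common interval $[\ell,r]$, so that the inductive hypothesis can be applied to both of the non-divergent cases.
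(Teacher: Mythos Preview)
Your proposal is correct and takes essentially the same approach as the paper: both track the two executions in parallel, split on whether $\hat x_m\le a$, $\hat x_m>b$, or $a<\hat x_m\le b$, and observe that in the divergence case the returned indices land on opposite sides of $m$. The only difference is that you package this as a formal induction on $r-\ell$, whereas the paper phrases it as an informal ``thought experiment'' that follows the parallel executions until they diverge.
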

\begin{proof}
As a thought experiment for the proof,  perform a binary search for $B=a$ and $B=b$ in parallel.
During binary search, the only place where we are using $B$ is \cref{alg:binary_m} where we compare to the value $\hat x_m$.
If $\hat{x}_m\leq a < b$ or $\hat{x}_m>b>a$, the comparison $\hat x_m \le B$ in \cref{alg:binary_m} has the same outcome for both search keys $a$ and $b$.
The two searches behave differently only if $a<\hat{x}_m\leq b$.
For $a$, the algorithm continues to search in the range $[m+1,r]$ which ensures that the returned index 
$\hat j_a$ is at least $m+1$.
For $b$, setting the range to be $[\ell,m]$ ensures the returned 
index $\hat j_b$ is at most $m$. This guarantees that regardless of the cases encountered during the binary search,
we always end up with $\hat j_a\geq \hat j_b$.
\end{proof}

We can apply the above lemma directly on the $B_i$ to obtain the following corollary.

\begin{corollary}\label{cor:index-order}
    For $i\in \{1,\dots,t-1\}$ let $\hat j_i$ be the index returned when \cref{alg:binarysearch} is run on a (possibly unsorted) input sequence $\widehat X'$ with $B=B_i$
    and let $\hat j_t=W+1$.
 Then $\hat j_1\le \hat j_2\le \dots \le \hat j_t$.
\end{corollary}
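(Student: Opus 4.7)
The proof is essentially immediate: the plan is to apply Lemma~\ref{lem:foundindices} pairwise along the decreasing chain of thresholds. Since the endpoints satisfy $B_1 > B_2 > \dots > B_{t-1}$, for each $i \in \{1,\dots,t-2\}$ I invoke Lemma~\ref{lem:foundindices} with $a = B_{i+1}$ and $b = B_i$; this yields $\hat{j}_{i+1} \ge \hat{j}_i$, where $\hat{j}_i$ is the index returned by \cref{alg:binarysearch} on the (possibly unsorted) sequence $\widehat{X}'$ with search key $B_i$. Concatenating these $t-2$ inequalities gives $\hat{j}_1 \le \hat{j}_2 \le \dots \le \hat{j}_{t-1}$.

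To extend the chain up to $\hat{j}_t = W+1$, I only need $\hat{j}_{t-1} \le W+1$: \cref{alg:binarysearch} is initialized with $\ell = 1$ and $r = W+1$ and the recursion maintains $1 \le \ell \le r \le W+1$, so its returned value always lies in $\{1,\dots,W+1\}$. Combined with the chain above this yields the desired $\hat{j}_1 \le \hat{j}_2 \le \dots \le \hat{j}_t$. I do not anticipate any obstacle, since the corollary is a direct iteration of Lemma~\ref{lem:foundindices}; the only point worth flagging is that Lemma~\ref{lem:foundindices} crucially does not require $\widehat{X}'$ to be non-increasing, so the fact that noise in the estimates $\hat{x}_j$ may destroy monotonicity of the underlying sequence poses no difficulty for the argument.
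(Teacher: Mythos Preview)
Your proof is correct and follows exactly the approach the paper intends: the paper simply states that the corollary is obtained by applying Lemma~\ref{lem:foundindices} directly to the $B_i$, and you have spelled out precisely that application (pairwise along the chain $B_1>\dots>B_{t-1}$) together with the trivial observation that any returned index lies in $\{1,\dots,W+1\}$ to handle $\hat j_t$.
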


Given the above corollary, 
our plan is to simply search for the interval endpoints $B_i$ and denote the corresponding resulting indices $\hat j_i$ and we always set $\hat j_t=W+1$. If we then invoke \cref{alg:binarysearch} 
with parameter $B= \hat x_j$  that satisfies $B_{i+1} \le \hat x_j \le B_i$
and let $i^*$ be the index returned by 
our binary search procedure, then we know by \Cref{lem:foundindices} that $\hat j_i \le i^* \le \hat j_{i+1}$.
Thus, it seems that we can simply estimate $\hat x_j$ by $B_i$ and the error resulting from the binary search is $T_j$ plus $B_{i}-B_{i+1}$. 
However, we may encounter a scenario where several consecutive $\hat j_i$ values are 
mapped to the same value, in which case it is not immediately clear that this approach works. 
We can define  $\hat J_i = \{\hat j_{i},\dots, \hat j_{i+1}-1\}, 1\le i \le t-1$,  as in the case of binary search with error.
This ensures that the $\hat J_i$ 
form a partition of $\{1,\dots,W\}$, but in the case that many $\hat j_i$ are mapped to the same value we could still potentially get a large error. In the following we show that this will not be the case.
We will now define the notion of a valid discretization with respect to (w.r.t.) $X$ and $T$ that is a partition of $[L,R]$ into intervals, such that for every $x_i$,
the error of the interval containing $x_i$ is at most the minimum length of the neighboring intervals. 

\begin{definition}
Let $X=(x_1,\dots,x_W)$ be a non-increasing sequence with $x_j\in [L,R]$ for all $j\in \{1,\dots,W\}$
and let $T=(T_1,\dots, T_W)$ be a sequence of error bounds.
Let $\widehat X =(\hat x_1,\dots, \hat x_W)$, $\hat x_j\in [L,R]$
with $|\hat x_j - x_j| \le T_j$.
We say that a sequence of interval endpoints
$L=B_t < B_{t-1}<\dots < B_1=R$ is a \emph{valid discretization} of $[L,R]$ w.r.t. $X$ and $T$, 
if for every $j\in \{1,\dots,W\}$ and $i \in\{1,\dots, t-1\}$ such that $B_{i+1}\le x_{j} 
\le B_{i}$ we have 
that 
\begin{itemize}
\item
$T_j<B_{i-1}-B_{i}$, if $i\ge 2$, and
\item 
$
T_j< B_{i+1} - B_{i+2}
$, if $i \le t-2$.
\end{itemize}
\end{definition}

Intuitively, this ensures that whenever we search for a key $\hat x_j$ with $B_{i+1} \le x_j \le B_i$
we will end up in the set of indices $\hat J_i$ or a neighboring one.

For every $j\in \hat J_i$ we now define the estimator for $x_j$ to be $\bar x_j = B_i$.
In the following lemma, we formalize our intuition and
show that the value of $x_j$ is between $B_{i+2}$ and $B_{i-1}$ and thus 
$\bar x_j$ is close to $x_j$ if our discretization is sufficiently fine. 

\begin{lemma}\label{lem:bound_of_xj}
Let $X=(x_1,\dots,x_W)$ be a non-increasing sequence with $x_j\in [L,R]$ for all $j\in \{1,\dots,W\}$
and let $T=(T_1,\dots, T_W)$ be a sequence of error bounds.
Let $\widehat X =(\hat x_1,\dots, \hat x_W)$, where $\hat x_j\in [1,n]$
with $|\hat x_j - x_j| \le T_j$ for any $j\in[W]$.
Let $L= B_t< \dots < B_1 =R$ be a sequence of interval endpoints that is a valid discretization of $[L,R]$ w.r.t. $X$ and $T$.
Let $\hat j_i$ be the index returned by \cref{alg:binarysearch} on $\widehat X'=(\hat x_1,\dots, \hat x_W,{L})$ for search key $B=B_i$ and let $\hat J_i = \{\hat j_{i},\dots, 
\hat j_{i+1}-1\}$ and define $\hat j_t = W+1$.
For any $1\le i\le t-1$ and every $j\in \hat J_i$ 
we have 
$$
B_{i+2} \le x_{j} \le B_{i-1},
$$
where we define $B_0 = \infty$ and $B_{t+1} = - \infty$.
\end{lemma}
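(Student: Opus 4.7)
The plan is to exploit the structure of the binary search tree implicitly defined by \cref{alg:binarysearch} on the input of length $W+1$, and then apply the valid-discretization inequalities to turn a single ``bad'' comparison into a contradiction. The key structural observation I would establish first is that for every index $j \in \{1, \dots, W\}$ there is a unique internal node $v^{*}_{j}$ in this tree whose midpoint is exactly $j$, and moreover $v^{*}_{j}$ lies on the path $P_{j}$ from the root $[1, W+1]$ down to the leaf $\{j\}$; at $v^{*}_{j}$ the path $P_{j}$ necessarily takes the left branch, since the midpoint there equals $j$ and $j \le j$. This structural claim follows by a simple induction on $W$: the root has midpoint $m_0 = \lfloor (W+2)/2 \rfloor$, and by induction the internal midpoints of the two subtrees on $[1, m_0]$ and $[m_0+1, W+1]$ exhaust $\{1,\dots,m_0-1\}$ and $\{m_0+1,\dots,W\}$ respectively, with leaf $j$ always lying in the subtree rooted at $v^{*}_{j}$.

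Using this, to establish the upper bound $x_{j} \le B_{i-1}$ (the case $i=1$ is vacuous since $B_{0}=\infty$), I would argue as follows. Because $j \in \hat J_{i}$ we have $\hat j_{i} \le j$. Consider running \cref{alg:binarysearch} with key $B = B_{i}$. Either the search follows $P_{j}$ down to $v^{*}_{j}$, in which case it cannot take the right branch at $v^{*}_{j}$ (that would place $\hat j_i$ in the right subtree, whose indices all exceed $j$, contradicting $\hat j_i \le j$), so it takes the left branch and hence $\hat x_{j} \le B_{i}$; set $m := j$. Otherwise the search diverges from $P_{j}$ at some earlier internal node with midpoint $m'$; the same subtree argument forces $P_{j}$ to go right and the search to go left at that node, giving $m' < j$ and $\hat x_{m'} \le B_{i}$, and we set $m := m'$. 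In both cases there is some $m \le j$ with $\hat x_{m} \le B_{i}$. If we had $x_{m} > B_{i-1}$, then $x_{m} \in [B_{s+1}, B_{s}]$ for some $s \le i-2$, and the right-neighbor validity condition (applicable since $s \le i-2 \le t-2$) gives $T_{m} < B_{s+1} - B_{s+2}$. Therefore
\[
\hat x_{m} \ge x_{m} - T_{m} > B_{s+1} - (B_{s+1} - B_{s+2}) = B_{s+2} \ge B_{i},
\]
contradicting $\hat x_{m} \le B_{i}$. Thus $x_{m} \le B_{i-1}$, and monotonicity of $X$ with $m \le j$ yields $x_{j} \le x_{m} \le B_{i-1}$.

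The lower bound $x_{j} \ge B_{i+2}$ (vacuous when $i = t-1$, as $B_{t+1} = -\infty$) is entirely symmetric, using the search with key $B_{i+1}$ and the fact that $j < \hat j_{i+1}$. The same tree-path analysis would yield an index $m \ge j$ with $\hat x_{m} > B_{i+1}$: either $m = j$, when the search takes the right branch at $v^{*}_{j}$, or $m$ is the midpoint of an earlier divergence at which $P_{j}$ goes left and the search goes right. If $x_{m} < B_{i+2}$, then $x_{m} \in [B_{s+1},B_{s}]$ for some $s \ge i+2$, and the left-neighbor validity condition (applicable since $s \ge 2$) gives $T_{m} < B_{s-1} - B_{s}$, so $\hat x_{m} \le x_{m} + T_{m} < B_{s} + (B_{s-1}-B_{s}) = B_{s-1} \le B_{i+1}$, a contradiction. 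Hence $x_{m} \ge B_{i+2}$, and monotonicity yields $x_{j} \ge x_{m} \ge B_{i+2}$. The hard part, as I see it, is really the structural claim about $v^{*}_{j}$: once the existence of an internal node with midpoint $j$ on the path $P_{j}$ is in hand, each bound reduces to the short contradiction displayed above, whereas without it the noisy binary search appears to demand a far more delicate case analysis depending on exactly which indices the algorithm happens to probe.
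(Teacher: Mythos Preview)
Your proof is correct. Both your argument and the paper's hinge on the same valid-discretization inequality (e.g., $x_m > B_{i-1} \Rightarrow \hat x_m > B_i$), but you connect it to the binary search differently. You establish a structural lemma about the full recursion tree---each $j\in\{1,\dots,W\}$ is the midpoint of a unique internal node $v^*_j$, and $v^*_j$ lies on the root-to-leaf-$j$ path---and use it to extract, for any $j \in \hat J_i$, a witness index $m \le j$ at which the search compared and found $\hat x_m \le B_i$. The paper avoids this lemma entirely: it defines $j^*$ as the largest index with $x_{j^*} > B_{i-1}$, uses the discretization inequality to show $\hat x_m > B_i$ for \emph{all} $m\le j^*$, and then argues directly that the search with key $B_i$ must recurse right whenever it compares at any such $m$, forcing $\hat j_i > j^*$ (the sentinel $\hat x_{W+1}=L$ guarantees termination). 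The two are essentially contrapositive formulations of the same implication. Your route is a bit more constructive and the tree claim is a pleasant standalone fact, but the paper's route is shorter and sidesteps what you flagged as ``the hard part'' altogether---the structural lemma is not actually needed.
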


\begin{proof}
Let $i \in \{1,\dots,t-1\}$. We will show that if $x_j > B_{i-1}$
then $x_j \notin \hat J_i$ and if $x_j < B_{i+2}$ then $x_j\notin \hat J_i$.
This implies the theorem.
Now
define $j^*$ to be the largest index such that $x_{j^*} > B_{i-1}$ (if no such $x_{j^*}$ exists there is nothing to prove).
Since the $B_i$ form a valid discretization, we have for every $j\le j^*$ with $B_{i'+1} \le x_j \le B_{i'}$ that
$$
\hat x_{j} \ge x_{j}-T_{j} \ge B_{i'+1} - T_{j} > B_{i'+1} - (B_{i'+1}-B_{i'+2}) = B_{i'+2} \ge B_{i}
$$
since $x_j\ge x_{j^*}>B_{i-1}$ and thus $i'+1 \le i-1$.
Now consider an invocation of 
\cref{alg:binarysearch} with $B= B_{i}$. Whenever we compare an element $\hat x_m$, $m \le j^*$,
with $B=B_i$ we have $\hat x_m > B$ and we recurse with the sequence $\hat x_{m+1},\dots, \hat x_r$. Since we have $\hat x_{W+1} = {L \le} B_i$ this implies that we always have $\hat j_i> j^*$ and so $j^* \notin \hat J_i$.

Now consider the smallest index $j^*$ such that $x_{j^*} < B_{i+2}$ (again, if such an $x_{j^*}$ does not exist, there is nothing to prove). Since the $B_i$'s form a valid discretization, we have for every $j\ge j^*$ 
with $B_{i'+1} \le x_j \le B_{i'}$
that 
$$
    \hat x_{j} \le x_{j}+T_{j} \le B_{i'}+ T_{j} < B_{i'} + (B_{i'-1}-B_{i'})= B_{i'-1}\le B_{i+1}
$$
since $x_j\le x_{j^*}<B_{i+2}$ and thus $i'\ge i+2$.
Now consider an invocation of \cref{alg:binarysearch} with $B= B_{i+1}$. Whenever we compare an element $\hat x_m$, $m \ge j^*$,
with $B=B_{i+1}$ we have $\hat x_m \le B$ and we recurse with the sequence $\hat x_{\ell},\dots, \hat x_m$. This implies that $j_{i+1} \le j^*$. Combining both cases, we observe that there is no $j\in \hat J_i$ with $x_j>B_{i-1}$ or $x_j < B_{i+2}$. Hence the lemma follows.
\end{proof}

\section{Sublinear Algorithms in Distance Case}\label{sec:algorithmfordistance}

We now give a sublinear time algorithm to estimate the SLC cost $\cost(G)$ in the distance graph, 
with a $(1+\varepsilon)$-approximation factor.
The straightforward approach is to estimate each $c_j$ using \cref{alg:appncc}, 
and it can be easily verified that this results in a $(1+\varepsilon)$-approximation. 
However, this method results in a running time that is linear in $W$. 
To improve on this, 
we leverage the non-increasing nature of $c_j$ with the binary search technique given in \cref{sec:binarysearch} to achieve our goal with a more efficient running time. Once we obtain the algorithm for the SLC cost, we then build upon it to give a sublinear time algorithm for estimating the  vector $(\cost_1,\dots,\cost_n)$, which we define as the \emph{SLC profile vector in the distance graph}. 

\subsection{From Binary Search to Clustering Cost Estimation}

We will apply binary search to the sequence of estimates $ \hat{c}_j $ of the number of connected components, where $ \hat{c}_j $ and its error bound are defined in the following lemma. 

\begin{lemma}\label{lem:hatcj}
For any $1\leq j\leq W$, let $\hat c_j=\min\{\max\{\hat{c}_j',1\},n\}$, where $\hat{c}_j'$ is the output of \cref{alg:median-trick} with input $G$, $H=G_j$, $\varepsilon/8$, 
$k=\sqrt{W}$, $d$, $\delta=1/(4W)$. 
Then with probability at least $3/4$, it holds that for \emph{all} $1\leq j\leq W$, $\hat{c}_j\in[1,n]$, and
\[
\abs{\hat{c}_j-c_j}\leq T_j:=\frac{\varepsilon}{8}\cdot\max\left\{\frac{n}{\sqrt{W}},c_j\right\}.
\]
\end{lemma}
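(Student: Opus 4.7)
The plan is straightforward: invoke \cref{cor:numberccappdelta} on each subgraph $G_j$ with the stated parameters, take a union bound over $j \in \{1,\dots,W\}$, and then verify that the clipping step $\hat c_j = \min\{\max\{\hat c_j',1\},n\}$ can only decrease the error.

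First, for a fixed $j$, applying \cref{cor:numberccappdelta} to the implicit subgraph $H = G_j$ with approximation parameter $\varepsilon/8$, threshold $k = \sqrt{W}$, and failure probability $\delta = 1/(4W)$ yields an estimate $\hat c_j'$ satisfying
\[
|\hat c_j' - c_j| \leq \frac{\varepsilon}{8}\cdot \max\left\{\frac{n}{\sqrt{W}},\, c_j\right\} = T_j,
\]
with probability at least $1 - 1/(4W)$. A union bound over the $W$ choices of $j$ then ensures that the above holds simultaneously for \emph{all} $j \in \{1,\dots,W\}$ with probability at least $1 - W \cdot \tfrac{1}{4W} = 3/4$. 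Condition on this event for the remainder of the argument.

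Next, I would verify that clipping does not hurt. Since $G$ has $n$ vertices, every $c_j$ lies in the range $[1,n]$; in particular $c_j \geq 1$ (the empty graph $G_0$ having $n$ components and $G_W$ being connected, both endpoints are in range, and all intermediate $c_j$ are as well). Suppose $\hat c_j' > n$. Then $c_j \leq n < \hat c_j'$, so
\[
|\hat c_j - c_j| = |n - c_j| = n - c_j \leq \hat c_j' - c_j = |\hat c_j' - c_j| \leq T_j.
\]
The symmetric argument handles the case $\hat c_j' < 1$: since $c_j \geq 1 > \hat c_j'$, we get $|\hat c_j - c_j| = 1 - c_j \leq c_j - \hat c_j' = |\hat c_j' - c_j| \leq T_j$. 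If $\hat c_j' \in [1,n]$, then $\hat c_j = \hat c_j'$ and the bound is inherited directly. In all three cases the claimed inequality $|\hat c_j - c_j| \leq T_j$ holds, and by construction $\hat c_j \in [1,n]$.

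There is no real obstacle here; the lemma is essentially a packaging step that lifts the single-subgraph guarantee of \cref{cor:numberccappdelta} into a bound that holds uniformly over all $W$ threshold subgraphs, at the cost of a $\log W$ factor absorbed into $\delta = 1/(4W)$. The only subtle point is ensuring that the truncation to $[1,n]$ is admissible, which follows from the trivial observation that the projection onto an interval containing the true value $c_j$ is a contraction with respect to absolute error.
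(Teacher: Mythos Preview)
Your proposal is correct and follows essentially the same approach as the paper: apply \cref{cor:numberccappdelta} per subgraph, union bound over the $W$ indices, and observe that clipping to $[1,n]$ only reduces the error since $c_j \in [1,n]$. One small typo: in the case $\hat c_j' < 1$ you wrote $|\hat c_j - c_j| = 1 - c_j$, but since $c_j \geq 1$ this should read $c_j - 1$; the subsequent inequality $c_j - 1 \leq c_j - \hat c_j'$ is then correct.
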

\begin{proof}

By \cref{cor:numberccappdelta}, for any fixed $j\in[W]$, it holds that with probability at least
$1-1/(4W)$, 
$\abs{\hat{c}_j'-c_j}\leq \frac{\varepsilon}{8}\cdot\max\left\{\frac{n}{\sqrt{W}},c_j\right\}=T_j$. 
Besides, rounding $\hat{c}_j'$ to $\hat{c}_j=\min\{\max\{\hat{c}_j',1\},n\}$ guarantees that $\hat{c}_j\in [1,n]$. Furthermore,
if $\hat{c}_j<1$ or $\hat{c}_j>n$, this rounding process decreases the gap between $\hat{c}_j$ and $c_j$, and ensures the error remains within the error bound $T_j$.

Therefore, by the union bound, with probability at least $\frac{3}{4}$, 
for \emph{all} $j\in [W]$, $\abs{\hat{c}_j-c_j}\leq T_j$, and $\hat{c}_j\in[1,n]$.
\end{proof}

\emph{Throughout the following, we will assume that for \emph{all} $j$, the inequality $\abs{\hat c_j-c_j}\leq T_j$ holds, and $\hat{c}_j\in[1,n]$}. According to \cref{lem:hatcj}, this occurs with probability at least $3/4$. 

Now we define the endpoints of intervals which partition $[1,n]$.

\begin{definition}\label{def:interval_distance}
    Let $0<\varepsilon<1$, $t_1$ be the largest integer such that
    $\frac{n}{(1+\varepsilon)^{t_1-1}}\ge\frac{n}{\sqrt{W}}$,
    and $t_2$ be the largest integer such that
    $\frac{n}{\sqrt{W}}(1-\varepsilon\cdot t_2)\ge\frac{\varepsilon n}{\sqrt{W}}$. 
    Note that $t_1=\lfloor\log_{1+\varepsilon}\sqrt{W}+1\rfloor$ and $t_2=\lfloor\frac{1}{\varepsilon}-1\rfloor$.
    Define $B_i$ such that
\[
B_i=
\begin{cases}
    \frac{n}{(1+\varepsilon)^{i-1}} & \quad \text{if $1\le i\le t_1$}\\
    \frac{n}{\sqrt{W}}(1-\varepsilon\cdot (i-t_1)) & \quad \text{if $t_1<i\le t_1+t_2$}\\     
    1 & \quad \text{if $i=t:=t_1+t_2+1$}
\end{cases}
\]
\end{definition}

According to the above definition, we have the following fact.
\begin{fact}\label{fact:distanceinterval} 
It holds that
\begin{enumerate}
\item $t=t_1+t_2+1=O(\log_{1+\varepsilon}\sqrt{W}+1/\varepsilon)=O(\log W/\varepsilon)$;
\item $\frac{n}{(1+\varepsilon)^{t_1}}<\frac{n}{\sqrt{W}}$; and thus, $B_{t_1}=\frac{n}{(1+\varepsilon)^{t_1-1}}=(1+\varepsilon)\frac{n}{(1+\varepsilon)^{t_1}}<(1+\varepsilon)\frac{n}{\sqrt{W}}$;
\item $B_{t_1}-B_{t_1+1}\ge\frac{n}{\sqrt{W}}-(1-\varepsilon)\frac{n}{\sqrt{W}}=\frac{\varepsilon n}{\sqrt{W}}$, and $B_{t_1}-B_{t_1+1}<(1+\varepsilon)\frac{n}{\sqrt{W}}-(1-\varepsilon)\frac{n}{\sqrt{W}}=2\frac{\varepsilon n}{\sqrt{W}}$;
\item when $i\le t_1$, $B_{i-1}-B_i=(1+\varepsilon)B_i-B_i=\varepsilon B_i$;
and when $i>t_1$, $B_{i-1}-B_i\ge\frac{\varepsilon n}{\sqrt{W}}$.
\item as for \emph{all} $j\in[W]$, $\hat{c}_j\le n$, when we invoke \textsc{BinarySearch($\widehat{C},1,W,B_1$)}, we will never access $\hat{c}_m>B_1$.
Thus, $\ell$ remains to be $1$, and when $\ell=r$, the returned index $j_1=1$.
\end{enumerate}
\end{fact}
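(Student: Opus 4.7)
The plan is to verify each of the five items in \cref{fact:distanceinterval} by direct computation from the definitions of $t_1$, $t_2$, and $B_i$; since this is a Fact (not a theorem), I expect each item to follow essentially from unpacking definitions and basic estimates on $\log(1+\varepsilon)$.

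For item (1), I would use that $\log_{1+\varepsilon}(\sqrt{W}) = \ln(\sqrt{W})/\ln(1+\varepsilon)$, and since $\ln(1+\varepsilon)\ge \varepsilon/2$ for $\varepsilon\in(0,1)$, we get $t_1 = O(\log W/\varepsilon)$. Combined with $t_2 \le 1/\varepsilon$ and $t=t_1+t_2+1$, this yields $t=O(\log W/\varepsilon)$. For item (2), the definition of $t_1$ as the \emph{largest} integer with $n/(1+\varepsilon)^{t_1-1}\ge n/\sqrt{W}$ means the next exponent violates the inequality, so $n/(1+\varepsilon)^{t_1} < n/\sqrt{W}$; multiplying by $(1+\varepsilon)$ then gives $B_{t_1}<(1+\varepsilon)n/\sqrt{W}$.

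For item (3), observe that $B_{t_1+1}=(n/\sqrt{W})(1-\varepsilon)$ by plugging $i=t_1+1$ into the arithmetic part of the definition. Combining with $B_{t_1}\ge n/\sqrt{W}$ (which is simply the inequality defining $t_1$) and $B_{t_1}<(1+\varepsilon)n/\sqrt{W}$ from item (2) yields the two-sided bound immediately. Item (4) is a direct calculation: in the geometric regime $i\le t_1$, consecutive ratios are $1+\varepsilon$, so $B_{i-1}-B_i = \varepsilon B_i$; in the arithmetic regime $t_1<i\le t_1+t_2$, the step is exactly $\varepsilon n/\sqrt{W}$ by construction; and for $i=t$ the claim reduces to $B_{t-1}-1\ge \varepsilon n/\sqrt{W}$, which I would verify using $B_{t-1}=B_{t_1+t_2}\ge \varepsilon n/\sqrt{W}$ (by the defining property of $t_2$) together with the standing assumption $\sqrt{W}\le n$ to absorb the $-1$.

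Item (5) follows from the invariant that throughout \textsc{BinarySearch}$(\widehat C,1,W,B_1)$, every probed estimate satisfies $\hat c_m\le n=B_1$, so the test on line~\ref{alg:binary_m} always succeeds and the recursion always moves into the left half $[\ell,m]$; hence $\ell$ is never advanced past $1$, and the routine returns $1$ upon reaching $\ell=r$. The only mildly delicate point in the whole argument is the seam between the geometric and arithmetic regimes (the $i=t_1+1$ transition in item (3), and the final step $i=t$ in item (4)); this is where one must juggle the defining inequality of $t_1$ and the definition of $t_2$ carefully, but no deeper tool is needed.
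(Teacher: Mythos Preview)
Your proposal is correct and matches the paper's (omitted) approach: the paper states this Fact without proof, and direct verification from \cref{def:interval_distance} is the only reasonable route.

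One small point to flag: your handling of item~(4) at $i=t$ does not quite go through. You propose to show $B_{t-1}-B_t=B_{t_1+t_2}-1\ge\varepsilon n/\sqrt{W}$ using $B_{t_1+t_2}\ge\varepsilon n/\sqrt{W}$ and ``absorbing the $-1$'' via $\sqrt{W}\le n$. But $\sqrt{W}\le n$ only gives $n/\sqrt{W}\ge 1$, not $\varepsilon n/\sqrt{W}\ge 1$, and in fact when $1/\varepsilon$ is an integer one has $B_{t_1+t_2}=\varepsilon n/\sqrt{W}$ exactly, so $B_{t_1+t_2}-1<\varepsilon n/\sqrt{W}$. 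In other words, item~(4) as literally stated fails at $i=t$. This is harmless: the paper only ever invokes the lower bound $B_{i-1}-B_i\ge\varepsilon n/\sqrt{W}$ for $t_1<i\le t-1$ (see the proofs of \cref{lem:discretization_distance} and \cref{lem:bound_of_bar_cj}), where it holds by the arithmetic step size and the transition bound in item~(3). So you should simply restrict the range in item~(4) to $t_1<i\le t-1$, or note that the $i=t$ case is not needed downstream.
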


We then prove that the endpoints $B_i$'s defined in this way form a valid discretization of $[1,n]$ w.r.t. $C$ and $T=(T_1,\dots,T_W)$.

\begin{lemma}\label{lem:discretization_distance}
    Assume that the event stated in \cref{lem:hatcj} holds. The sequence of endpoints $(B_1,\dots,B_t)$ defined in \cref{def:interval_distance} is
    a valid discretization of $[1,n]$ w.r.t. $C$ and error bound 
    $T_j=\frac{\varepsilon}{8}\cdot\max\left\{\frac{n}{\sqrt{W}},c_j\right\}$ on each $c_j,j\in[W]$.
\end{lemma}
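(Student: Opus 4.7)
The plan is to carry out a case analysis on the position of $c_j$ within the partition induced by $(B_1,\dots,B_t)$, verifying in each case the two inequalities
$T_j<B_{i-1}-B_i$ (when $i\ge 2$) and $T_j<B_{i+1}-B_{i+2}$ (when $i\le t-2$), where $i$ is the unique index with $B_{i+1}\le c_j\le B_i$. The error bound rewrites as
\[
T_j=\frac{\varepsilon}{8}\cdot\max\!\left\{\frac{n}{\sqrt{W}},\,c_j\right\},
\]
so the behaviour of $T_j$ only depends on whether $c_j\ge n/\sqrt{W}$ or $c_j<n/\sqrt{W}$. Combined with \cref{fact:distanceinterval}, this splits the work into three regimes according to whether $i$ lies in the purely geometric part ($i\le t_1-1$), the transition ($i\in\{t_1-1,t_1,t_1+1\}$), or the purely arithmetic part ($i\ge t_1+1$).

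First I would handle the geometric regime: if $i\le t_1-1$, then $B_{i+1}=n/(1+\varepsilon)^i\ge n/\sqrt{W}$, so $c_j\ge n/\sqrt{W}$ and $T_j=\varepsilon c_j/8\le\varepsilon B_i/8$. By \cref{fact:distanceinterval}(4), $B_{i-1}-B_i=\varepsilon B_i$, so $T_j<B_{i-1}-B_i$ is immediate. For the forward gap, if $i+2\le t_1$ then $B_{i+1}-B_{i+2}=\varepsilon B_{i+2}=\varepsilon B_i/(1+\varepsilon)^2$, and since $c_j\le B_i$ we have $T_j\le\varepsilon B_i/8<\varepsilon B_i/(1+\varepsilon)^2$ because $(1+\varepsilon)^2<4<8$ for $\varepsilon<1$.

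Next I would treat the transition. When $i=t_1-1$ so that $i+2=t_1+1$, \cref{fact:distanceinterval}(3) gives $B_{t_1}-B_{t_1+1}\ge\varepsilon n/\sqrt{W}$; and using the definition of $t_1$, $(1+\varepsilon)^{t_1-2}>\sqrt{W}/(1+\varepsilon)^2\ge\sqrt{W}/4$, so $T_j\le\varepsilon B_{t_1-1}/8\le\varepsilon n/(8(1+\varepsilon)^{t_1-2})<\varepsilon n/(2\sqrt{W})$, which is strictly smaller than the gap. When $i=t_1$ itself, by \cref{fact:distanceinterval}(2) we have $c_j\le B_{t_1}<(1+\varepsilon)n/\sqrt{W}$, so $T_j\le\varepsilon(1+\varepsilon)n/(8\sqrt{W})$, while $B_{t_1-1}-B_{t_1}=\varepsilon B_{t_1}\ge\varepsilon n/\sqrt{W}$ and $B_{t_1+1}-B_{t_1+2}$ is an arithmetic step (or else involves $B_t=1$, handled separately below) of size $\varepsilon n/\sqrt{W}$, so both inequalities follow. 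A symmetric computation handles $i=t_1+1$.

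Finally, for $i\ge t_1+1$ in the pure arithmetic part, by the construction the consecutive differences are exactly $\varepsilon n/\sqrt{W}$, and $T_j\le\varepsilon n/(8\sqrt{W})<\varepsilon n/\sqrt{W}$ (in this range $c_j\le B_i\le n/\sqrt{W}$, so $T_j=\varepsilon n/(8\sqrt{W})$). The only subtlety is at $i=t-2$ or $i=t-1$, where $B_{i+2}$ or $B_{i+1}$ equals $B_t=1$; here I use that by the definition of $t_2$ we still have $B_{t-1}-B_t\ge\varepsilon n/\sqrt{W}-1$, which dominates $\varepsilon n/(8\sqrt{W})$ under the standing assumption $\sqrt{W}\le n$ (more specifically once $\varepsilon n/\sqrt{W}$ is at least a small constant, which is guaranteed by the way $t_2$ terminates). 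The boundary $i=1$ is handled via \cref{fact:distanceinterval}(5), which ensures that $i=1$ only occurs with vacuous left-gap condition. The main obstacle is keeping track of the various boundary indices $i\in\{1,t_1-1,t_1,t_1+1,t-2,t-1\}$ and confirming the strict inequalities at each, since this is where the spacing changes discontinuously between the geometric and arithmetic regimes; once those are discharged, the interior cases follow routinely from \cref{fact:distanceinterval}.
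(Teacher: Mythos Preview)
Your proposal is correct and follows essentially the same route as the paper: a case analysis on the index $i$ with $B_{i+1}\le c_j\le B_i$, split into the purely geometric block (the paper groups this as $i\le t_1-2$), the transition indices $i\in\{t_1-1,t_1\}$, and the arithmetic tail $i>t_1$, with all the gap estimates read off from \cref{fact:distanceinterval} exactly as you outline. Two minor corrections: the vacuousness of the left-gap condition at $i=1$ is built into the definition of a valid discretization (which only imposes $T_j<B_{i-1}-B_i$ for $i\ge2$), not a consequence of \cref{fact:distanceinterval}(5); and your hand-waving about the last gap $B_{t-1}-B_t$ matches the paper's own treatment, which in Case~(IV) simply asserts $B_{t-1}-B_t\ge\varepsilon n/\sqrt{W}$ without further comment.
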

\begin{proof}
We define $B_0=\infty$ and $B_{t+1}=-\infty$. 
We consider the sequence $(\hat{c}_1, \dots, \hat{c}_W)$ as the approximation of $C=(c_1,\dots,c_W)$. According to the previous assumption, it holds that for all $j$, $\abs{\hat c_j - c_j}\leq T_j$, and $\hat{c}_j\in [1,n]$.

Thus, to show that $(B_1,\dots,B_t)$ is a valid discretization of $[1,n]$ w.r.t. $C$ and $T=(T_1,\dots,T_W)$, it suffices to prove that for any $1\leq j\leq W, 1\leq i\leq t-1$ such that $B_{i+1}\leq c_j\leq B_i$, it holds that 
\[
T_j<\min\{B_{i-1}-B_i, B_{i+1}-B_{i+2}\}.
\]

Now consider any fixed $ j \in [W] $ and the corresponding interval $ [B_{i+1}, B_i] $ that contains $ c_j $ for some $ i \in [t-1] $. We analyze the following cases by using properties given in \cref{fact:distanceinterval}.  

Case (I): $i\le t_1-2$. 
In this case, we have $B_{i-1}-B_{i}=\varepsilon B_i$, for any $2\leq i\leq t_1-2$ and $B_{i-1}-B_i=\infty$ if $i=1$. Furthermore,  $B_{i+1}-B_{i+2}=\varepsilon B_{i+2}=\frac{\varepsilon}{(1+\varepsilon)^2}B_i$. Thus,
\[
\min\{B_{i-1}-B_i,B_{i+1}-B_{i+2}\}=\frac{\varepsilon}{(1+\varepsilon)^2}B_i > \frac{\varepsilon}{8}B_i,
\]
where the last inequality follows from the fact that $(1+\varepsilon)^2\leq 4$. 

Now note that in this case, it holds that $B_{i}\geq c_j\ge B_{i+1}>\frac{n}{\sqrt{W}}$. Therefore, $T_j=\frac{\varepsilon}{8}\cdot \max\{\frac{n}{\sqrt{W}},c_j\}=\frac{\varepsilon}{8}\cdot c_j\leq \frac{\varepsilon}{8}\cdot B_i<\min\{B_{i-1}-B_i,B_{i+1}-B_{i+2}\}$. 

Case (II): $i=t_1-1$. Here, we have that $B_{i-1}-B_i=B_{t_1-2}-B_{t_1-1}=\varepsilon B_{t_1-1}$,
    and $B_{i+1}-B_{i+2}=B_{t_1}-B_{t_1+1}\ge\frac{\varepsilon n}{\sqrt{W}}>\frac{\varepsilon}{1+\varepsilon}B_{t_1} = \frac{\varepsilon}{(1+\varepsilon)^2}B_{t_1-1}$. 
    Furthermore, $B_i=B_{t_1-1}\geq c_j\ge B_{i+1}=B_{t_1}\ge\frac{n}{\sqrt{W}}$, and thus
    \[
    T_j=\frac{\varepsilon}{8}\cdot \max\{\frac{n}{\sqrt{W}},c_j\}=\frac{\varepsilon}{8}c_j \leq \frac{\varepsilon}{8}B_{t_1-1}
    <\frac{\varepsilon}{(1+\varepsilon)^2}B_{t_1-1} = \min\{B_{i-1}-B_i,B_{i+1}-B_{i+2}\}
    \]

Case (III): $i=t_1$. Here, we have $B_{i-1}-B_i=\varepsilon B_i=\varepsilon B_{t_1}$,
    and 
    $B_{i+1}-B_{i+2}=\frac{\varepsilon n}{\sqrt{W}}>\frac{\varepsilon}{1+\varepsilon}B_{t_1}$. Furthermore, $c_j\leq B_{t_1} <(1+\varepsilon)\frac{n}{\sqrt{W}}$. 
   Thus, 
    \[
    T_j=\frac{\varepsilon}{8}\cdot\max\left\{\frac{n}{\sqrt{W}},c_j\right\}
    \le\frac{(1+\varepsilon)\varepsilon}{8}\frac{n}{\sqrt{W}}<\frac{\varepsilon}{1+\varepsilon}B_{t_1}
    =\min\{B_{i-1}-B_i,B_{i+1}-B_{i+2}\}
    \]

Case (IV): $i>t_1$. In this case, we have $B_{i-1}-B_i\ge\frac{\varepsilon n}{\sqrt{W}}$ 
    and $B_{i+1}-B_{i+2}\ge\frac{\varepsilon n}{\sqrt{W}}$ for any $t_1<i\leq t-2$ and $B_{i+1}-B_{i+2}=\infty$ for $i=t-1$. Thus, 
\[
\min\{B_{i-1}-B_i,B_{i+1}-B_{i+2}\} \geq \frac{\varepsilon n}{\sqrt{W}} 
\]

Furthermore, $c_j\le B_i\le B_{t_1+1}<\frac{n}{\sqrt{W}}$. Thus, $T_j=\frac{\varepsilon}{8}\cdot\max\left\{\frac{n}{\sqrt{W}},c_j\right\}=\frac{\varepsilon}{8}\cdot\frac{n}{\sqrt{W}} <\min\{B_{i-1}-B_i,B_{i+1}-B_{i+2}\}$.

This completes the proof of the lemma. 
\end{proof}

Now we are ready to exploit the binary search based algorithm for succinctly approximating the sequence $C=(c_1,\dots,c_W)$ to estimate the single-linkage clustering cost $\cost(G)$. The algorithm simply performs binary search \cref{alg:binarysearch} on the estimated array 
$\widehat{C}=\{\hat{c}_1,\dots,\hat{c}_W,0\}$ with search keys $B_1,\dots,B_t$ 
as defined in \cref{def:interval_distance}, where $\hat{c}_j$'s are estimators of $c_j$'s as defined in \cref{lem:hatcj}. Then we obtain the corresponding indices $\hat j_1,\dots,\hat j_t$. For $i\in\{1,\dots,t-1\}$, let $\hat J_i=\{\hat j_i,\dots,\hat j_{i+1}-1\}$. Note that $\hat J_1,\dots, \hat J_{t-1}$ form a partition of the index set $[W]=\{1,\dots, W\}$. 
For any $j\in\hat J_i$, we define $\bar c_j=B_i$ as the estimate for $c_j$ and then use  %
$\bar c_j$'s to estimate the cost $\cost(G)$. %

\begin{algorithm}[h!]
    \DontPrintSemicolon
    \caption{\textsc{AppCost}($G, \varepsilon, W, d$)}
    \label{alg:appcost}
    \SetKwInOut{Input}{input}
    \SetKwInOut{Output}{output}
    \Input{graph $G$, approximation parameter $\varepsilon$, 
    maximum weight $W$, average degree $d$}
    \Output{$\widehat{\cost}(G)$, which estimates $\cost(G)$}
    \BlankLine
    set $t$ according to \cref{def:interval_distance} and
    set parameters $k=\sqrt{W}$,
    $d^{(G)}=d\cdot\lceil\frac{4k}{\varepsilon}\rceil$\;
    \For{$1\leq i\leq t$}{
        set $B_i$ according to \cref{def:interval_distance}\;
        invoke \textsc{BinarySearch$(\widehat{C},1,W,B_i)$} and get output index $\hat{j}_i$, where $\widehat{C}=(\hat{c}_1,\dots,\hat{c}_W,1)$ and each $\hat{c}_j=\min\{\max\{\hat{c}_j',1\},n\}$ and $\hat{c}_j'$
        is the output of \cref{alg:median-trick} with parameters $G$, $H=G_j$, $\varepsilon/8$, 
        $k=\sqrt{W}$, $d$, $\delta=1/(4W)$
        \Comment{for consistency, reuse stored value of $\hat{c}_j$, if previously estimated}\label{alg:appcostline4}\;
    }
for each $i\in\{1,\dots,t-1\}$, and $j\in \hat J_i:=\{\hat j_i,\dots,\hat j_{i+1}-1\}$, define $\bar c_j=B_i$ 
\Comment{$\bar c_j$'s are defined for the analysis only}\label{alg:appline5}\;
let $\widehat{\cost}(G)=\frac{1}{2}\sum_{i=1}^{t-1}(\hat j_{i+1}-\hat j_i)\cdot(B_i^2-B_i)$\label{alg:appcostline6}\;
output $\widehat{\cost}(G)$ and the sequence $\{\hat j_1,\dots,\hat j_t\}$
\end{algorithm}

The pseudocode of the algorithm is Described in \cref{alg:appcost}. It is important to note that we do not need to access all the values in the estimated array 
$\widehat{C}$. 
Instead, we only need to access each $\hat{c}_m$ in the search path corresponding to the binary search process. 
Besides, to ensure consistency in $\hat C$, we reuse $\hat{c}_m$'s stored value, if it was accessed previously. Intuitively, this means that we only need to approximate the number of connected components for $ O(\poly\log W) $ subgraphs $ G_j $.
The interval endpoints are set according to \cref{def:interval_distance}, which ensures that the sequence $ (B_1, \dots, B_t) $ forms a valid discretization of $[1,n]$ w.r.t. $ C $ and an appropriately chosen error bound as described in \cref{lem:discretization_distance}. Furthermore, we note that given the values $\hat{j}_i$'s and $B_i$'s , \cref{alg:appline5,alg:appcostline6} can be implemented in $ O(t) $ time, as we only need to use the right-hand side (RHS) of the estimator $ \widehat{\cost}(G) $ to estimate $ \cost(G) $, and the definitions for $ \bar{c}_j $'s are primarily used for the analysis.

We first show the following guarantee of the estimate $\bar c_j$, for $1\leq j\leq W$.

\begin{lemma}\label{lem:bound_of_bar_cj}
Assume that for \emph{all} $j\in [W]$, the inequality $\abs{\hat c_j-c_j}\leq T_j$ holds. Then  for \emph{all} $j\in[W]$, 
    \[
    |\bar c_j-c_j|\le 4\varepsilon\cdot\max\left\{\frac{n}{\sqrt{W}},c_j\right\}.
    \]
\end{lemma}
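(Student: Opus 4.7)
The approach is straightforward: combine \cref{lem:bound_of_xj} (applied to our setting) with a case analysis over the regimes of \cref{def:interval_distance}.

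First, since $(B_1,\dots,B_t)$ is a valid discretization of $[1,n]$ with respect to $C$ and the error bounds $T_j = \frac{\varepsilon}{8}\max\{n/\sqrt{W},c_j\}$ by \cref{lem:discretization_distance}, \cref{lem:bound_of_xj} yields $B_{i+2} \le c_j \le B_{i-1}$ for every $i \in \{1,\dots,t-1\}$ and every $j \in \hat J_i$, with the convention $B_0 = \infty$ and $B_{t+1} = -\infty$. Since $\bar c_j = B_i$, this gives
\[
|\bar c_j - c_j| \;\le\; \max\{\,B_{i-1}-B_i,\; B_i-B_{i+2}\,\},
\]
so it suffices to upper bound each of $B_{i-1}-B_i$ and $B_i-B_{i+2}$ by $4\varepsilon \max\{n/\sqrt{W},c_j\}$.

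The second step is a case analysis on the position of $i$. In the pure geometric regime ($i+2 \le t_1$), we have $B_{i-1}-B_i = \varepsilon B_i \le \varepsilon(1+\varepsilon)^2 c_j$ and $B_i - B_{i+2} = B_{i+2}\bigl((1+\varepsilon)^2-1\bigr) \le 3\varepsilon c_j$, using $c_j \ge B_{i+2}$; both are at most $4\varepsilon c_j$ for $\varepsilon\le 1$. In the pure arithmetic regime ($i \ge t_1+1$ and $i+2 \le t_1+t_2$), consecutive differences equal $\varepsilon n/\sqrt{W}$, so both quantities are at most $2\varepsilon n/\sqrt{W}$. The remaining boundary cases $i \in \{t_1-1, t_1, t_1+1\}$ mix the two formulas, and the endpoint case $i = t-1$ (where $B_{i+2}=-\infty$) uses $c_j \ge 1$ together with the direct bound $B_{t-1} \le 2\varepsilon n/\sqrt{W}$ coming from the definition of $t_2$. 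In all boundary cases, one exploits item~2 of \cref{fact:distanceinterval}, namely $B_{t_1} \le (1+\varepsilon)\cdot n/\sqrt{W}$, to bound geometric-regime values of $B_{i-1}$ and $B_i$ by a small constant multiple of $n/\sqrt{W}$, so each of the two differences is at most $4\varepsilon \cdot n/\sqrt{W}$.

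The main obstacle is precisely this case analysis at the geometric-to-arithmetic transition: the constant $4$ is essentially tight there, and it is only achievable because $B_{t_1}$ is within a factor of $(1+\varepsilon)$ of $n/\sqrt{W}$. Once each regime is handled, the claimed bound $|\bar c_j - c_j| \le 4\varepsilon \max\{n/\sqrt{W},c_j\}$ follows by using $c_j \le \max\{n/\sqrt{W},c_j\}$ in the geometric case and $n/\sqrt{W} \le \max\{n/\sqrt{W},c_j\}$ in the arithmetic and boundary cases.
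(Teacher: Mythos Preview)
Your approach is correct and essentially identical to the paper's: both invoke \cref{lem:discretization_distance} to apply \cref{lem:bound_of_xj}, obtain $B_{i+2}\le c_j\le B_{i-1}$, and then do a case analysis on $i$ across the geometric, arithmetic, and transition regimes of \cref{def:interval_distance}, with the $4\varepsilon$ constant arising at the $i=t_1-1$ boundary via $B_{t_1}<(1+\varepsilon)n/\sqrt{W}$. Your case enumeration has a couple of small holes to patch in execution (for $i=1$ the quantity $B_{i-1}-B_i$ is infinite but irrelevant since $c_j\le n=B_1$; and $i=t-2$ is covered neither by your ``pure arithmetic'' condition $i+2\le t_1+t_2$ nor by your listed endpoint $i=t-1$), but these are routine and the paper's proof handles them the same way.
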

\begin{proof}
    By \cref{lem:discretization_distance}, 
    the preconditions of \cref{lem:bound_of_xj} are satisfied. 
   By \cref{lem:bound_of_xj}, we have $B_{i+2}\le c_j\le B_{i-1}$. By the definition of $\bar c_j$, we have that 
    \begin{itemize}
        \item When $i=1$, we have $\abs{\bar c_j-c_j}\le B_i-B_{i+2}
        =((1+\varepsilon)^2-1)B_{i+2}\le3\varepsilon B_{i+2}\le3\varepsilon c_j$.
        
        \item When $1<i\le t_1-2$, we have 
            if  $c_j>\bar c_j$,
        \text{then } $0<c_j-\bar c_j\le B_{i-1}-B_i=\varepsilon B_i\le\varepsilon c_j$; and if  $c_j\leq \bar c_j$, then $0\leq \bar c_j-c_j\le B_i-B_{i+2}
            =((1+\varepsilon)^2-1)B_{i+2}\leq 3\varepsilon\cdot c_j$.

        \item When $i=t_1-1$, we have 
        if $c_j>\bar c_j$, {then } $0<c_j-\bar c_j\le B_{i-1}-B_i=\varepsilon B_i\le\varepsilon c_j$; and if  $c_j\leq \bar c_j$, {then }  $0\leq \bar c_j-c_j\le B_{t_1-1}-B_{t_1+1} <((1+\varepsilon)^2-(1-\varepsilon))\frac{n}{\sqrt{W}} \le 4\varepsilon\frac{n}{\sqrt{W}}$, in which we make use of the fact that $B_{t_1-1}<(1+\varepsilon)^2\frac{n}{\sqrt{W}}$, which in turns follows from the fact that $B_{t_1}<(1+\varepsilon)\frac{n}{\sqrt{W}}$.
        
        \item When $i=t_1$, we have that 
       {if } $c_j>\bar c_j$, then
     $0\leq c_j-\bar c_j\le B_{i-1}-B_i=\varepsilon B_i\le\varepsilon c_j$; and if $c_j\leq \bar c_j$, {then } $0\leq \bar c_j-c_j\le B_{t_1}-B_{t_1+2}
    <((1+\varepsilon)-(1-2\varepsilon))\frac{n}{\sqrt{W}} = 3\varepsilon\frac{n}{\sqrt{W}}.$
        
        \item When $i=t_1+1$, we have that 
 \text{if } $c_j>\bar c_j$, then $0<c_j-\bar c_j\le B_{t_1}-B_{t_1+1}
<((1+\varepsilon)-(1-\varepsilon))\frac{n}{\sqrt{W}}
=2\varepsilon\frac{n}{\sqrt{W}}$; and if $c_j\leq \bar c_j$, {then } $0\leq \bar c_j-c_j\le B_{t_1+1}-B_{t_1+3}
            ((1-\varepsilon)-(1-3\varepsilon))\frac{n}{\sqrt{W}} = 2\varepsilon\frac{n}{\sqrt{W}}$.

        \item When $t_1+2\le i\le t-2$, $B_{i-1}-B_i = \frac{\varepsilon n}{\sqrt{W}}$,
        and $B_{i}-B_{i+2}=B_i-B_{i+1}+B_{i+1}-B_{i+2} = 2\frac{\varepsilon n}{\sqrt{W}}$,
        and so
        \[
        |\bar c_j-c_j|\leq\max\{B_{i-1}-B_i,B_i-B_{i+2}\}\leq2\frac{\varepsilon n}{\sqrt{W}}
        \]

        \item When $i=t-1$, $B_{i-1}-B_i=\frac{\varepsilon n}{\sqrt{W}}$ and 
        $B_i-B_{i+1}\le2\frac{\varepsilon n}{\sqrt{W}}$.
        Since $B_{t}\le c_j\le B_{t-2}$, we have $\abs{\bar c_j - c_j}\le \max\{B_{t-2}-B_{t-1}, B_{t-1}-B_{t}\}\le2\frac{\varepsilon n}{\sqrt{W}}$.

        \item When $i=t$, $\abs{\bar c_j-c_j}\le B_{t-1}-B_t=\frac{\varepsilon n}{\sqrt{W}}$.
    \end{itemize}
    
 Therefore, for \emph{all} $1\le j\le W$, we have 
    $\abs{\bar c_j-c_j}\le 4\varepsilon\cdot\max\left\{\frac{n}{\sqrt{W}},c_j\right\}$
\end{proof}

Now we analyze the performance of \cref{alg:appcost} and prove \cref{thm:cost_distance}.
\distance*

\begin{proof}[Proof of \cref{thm:cost_distance}]
    By \cref{lem:bound_of_bar_cj}, with probability at least $\frac{3}{4}$, we have 
    $\abs{\bar c_j-c_j}\le 4\varepsilon\cdot\max\left\{\frac{n}{\sqrt{W}},c_j\right\}$ for all $j\in[W]$.
    Then, $\abs{\bar c_j^2 - c_j^2} = \abs{\bar c_j - c_j}\cdot\abs{\bar c_j + c_j}
    \leq\abs{\bar c_j - c_j}\cdot(2c_j+|\bar c_j - c_j|)
    \le 4\varepsilon\cdot\max\left\{\frac{n}{\sqrt{W}},c_j\right\} \cdot 6\max\left\{\frac{n}{\sqrt{W}},c_j\right\}
    \leq 24\varepsilon\left(\frac{n^2}{W}+c_j^2\right)$.
    
    As $\widehat{\cost}(G)=\frac{1}{2}\sum_{i=1}^{t-1}(\hat j_{i+1}-\hat j_i)\cdot(B_i^2-B_i)$, and by the definition of $\bar c_j$, this is equivalent to $\widehat{\cost}(G)=\frac{n(n-1)}{2}+\frac{1}{2}\sum_{j=1}^{W-1}(\bar c_j^2-\bar c_j)$.
    We have that 
    \begin{align*}
        \abs{\cost(G)-\widehat{\cost}(G)}&=\frac{1}{2}
        \tag{by definitions of $\cost(G)$ and $\widehat{\cost}(G)$}
        \abs{\sum_{j=1}^{W-1}[(c_j^2-c_j)-(\bar c_j^2-\bar c_j)]}\\
        &\leq\frac{1}{2}\sum_{j=1}^{W-1}(\abs{c_j^2-\bar c_j^2}+\abs{\bar c_j-c_j})\\
        &\leq\frac{1}{2}\sum_{j=1}^{W-1} (24\varepsilon\frac{n^2}{W}+
        24\varepsilon c_j^2+4\varepsilon\frac{n}{\sqrt{W}}+4\varepsilon c_j)
        \tag{applying the error bound for $\bar c_j$}\\
        &\le\frac{24}{2}\varepsilon n^2+\frac{24}{2}\varepsilon\sum_{j=1}^{W-1} (c_j^2-c_j)
        +2\varepsilon n\sqrt{W}+14\varepsilon\sum_{j=1}^{W-1} c_j
    \end{align*}
    
Since $\sqrt{W}\leq n$, it holds that $W\leq n^2$.
By \cite{chazelle2005approximating}, we also know that the weight of minimum spanning tree is
    $\cost(\MST)=n-W+\sum_{j=1}^{W-1}c_j$, and $\cost(\MST)=\sum_{i=1}^{n-1}w_i=\cost_1\leq\cost(G)$.
    Thus, $\sum_{j=1}^{W-1}c_j=\cost(\MST)-n+W\leq\cost(G)-n+W\leq\cost(G)+n^2$,
    and we have
    \[
    \abs{\cost(G)-\widehat{\cost}(G)}\leq\frac{24}{2}\varepsilon n^2+
    \frac{24}{2}\varepsilon\sum_{j=1}^{W-1}(c_j^2-c_j)+2\varepsilon^2n^2
    +14\varepsilon(\cost(G)+n^2)\leq 102\varepsilon\cost(G)
    \]
    
    The last inequality follows as $\cost(G)=\frac{n(n-1)}{2}+\frac{1}{2}\sum_{j=1}^{W-1}(c_j^2-c_j)\geq \frac{n(n-1)}{2}$.
    When $n\geq 2$, $\cost(G)\geq\frac{1}{4}n^2$. Replacing $\varepsilon$ with $\varepsilon/102$
    achieves a $(1+\varepsilon)$ approximation factor.
    
    \textbf{Running time analysis.}
 Note that \cref{alg:appcost} invokes \textsc{BinarySearch} for 
    $t=O(\log W/\varepsilon)$ search keys, and each invocation of \textsc{BinarySearch}
    takes $O(\log W)$ times. 
    Thus, the algorithm accesses at most $O(\log^2 W/\varepsilon)$ estimates $\hat c_j$ in $\widehat{C}$.
    According to \cref{cor:numberccappdelta},
    each estimate of $\hat{c}_j$ can be obtained in
    $O(\frac{\sqrt{W}}{\varepsilon^2}d\log(\frac{\sqrt{W}}{\varepsilon}d)\cdot\log(W))$ time. 
    Thus, the running time of \cref{alg:appcost} is 
    $O(\frac{\log^2 W}{\varepsilon}\cdot\frac{\sqrt{W}}{\varepsilon^2}d\log\frac{\sqrt{W}d}{\varepsilon}\cdot\log W)=
    O(\frac{\sqrt{W}d}{\varepsilon^3}\log^4\frac{Wd}{\varepsilon})
    =\tilde{O}(\frac{\sqrt{W}d}{\varepsilon^3})$.
\end{proof}

\subsection{Estimating the Profile Vector} \label{subsec:cost_k}

Recall that $\cost(G)=\sum_{k=1}^{n}\cost_k=\sum_{k=1}^{n}\sum_{i=1}^{n-k}w_i$, 
where 
$\cost_k$ is the cost of the $k$-SLC and $w_1,\dots, w_{n-1}$ are the weights of the minimum spanning tree in non-decreasing order. %
Now we give an algorithm for approximating the SLC profile vector $(\cost_1,\dots,\cost_n)$ and prove \cref{thm:profile}.

\distanceprofile*

In the following, we first derive a formula for the quantity $ \cost_k $ for each $ k = c_j $, where $ 1 \leq j \leq W $. Using this formula, we define a corresponding quantity $ \overline{\cost}_{B_i} $ for each $ B_i $, where $ 1 \leq i \leq t $, as introduced in \cref{def:interval_distance}. These quantities constitute our succinct representation. Finally, we define a profile oracle that, given any specified $ k $, outputs the estimator $ \widehat{\cost}_k $ of $ \cost_k $.
\paragraph{Formulas of Profile.}

By definition, we can equivalently express $\cost_k$ as follows:
\begin{align}
    \cost_k&=\sum_{i=1}^{n-k}w_i \nonumber\\
    &=(n-c_1)+(c_1-c_2)\cdot 2+\cdots+(c_{w_{n-k}-2}-c_{w_{n-k}-1})\cdot (w_{n-k}-1)+(c_{w_{n-k}-1}-k)\cdot w_{n-k} \nonumber\\
    &=n+c_1+c_2+\cdots+c_{w_{n-k}-1}-k\cdot w_{n-k} \nonumber\\
    &=n+\sum_{j=1}^{w_{n-k}-1}c_j-k\cdot w_{n-k}   \label{eqn:cost_k}
\end{align}

Now we observe that for any given integer $j\in \{1,\dots,W\}$, we can determine the number of edges in the MST that have weights at most $j$.
This allows us to compute $\cost_k$, where $k$ corresponds to the rank of the first edge in the MST with weight $j$.
We have the following lemma.

\begin{lemma}\label{lem:costk}
Given any integer $j\in\{1,...,W\}$ and define $k=c_j$, where $c_j$ is the number of connected components in $G_j$. 
Here, $G_j$ is a subgraph of $G$, and contains all edges with weights at most $j$.
Then we have,
\[
\cost_k=n+\sum_{i=1}^{j-1}c_i-c_j\cdot j.
\]
\end{lemma}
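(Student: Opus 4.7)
\medskip

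\noindent\textbf{Proof plan for \cref{lem:costk}.} The plan is to start from the formula \cref{eqn:cost_k} already established in the excerpt,
\[
\cost_k = n + \sum_{i=1}^{w_{n-k}-1} c_i - k\cdot w_{n-k},
\]
and specialise it to $k = c_j$. The main conceptual step is to pin down $w_{n-k}$ when $k = c_j$. Recall from the proof of \cref{thm:formula_cost} that the number of MST edges of weight exactly $i$ equals $n_i = c_{i-1}-c_i$, so the number of MST edges of weight at most $j$ is
\[
\sum_{i=1}^{j} n_i \;=\; c_0 - c_j \;=\; n - c_j.
\]
Hence when $k = c_j$ the $(n-k)$-th smallest MST edge is the last MST edge of weight at most $j$, so $w_{n-k} \le j$.

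\smallskip

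\noindent I would then split into two cases. \emph{Case 1 (easy):} $n_j > 0$, equivalently $c_{j-1} > c_j$. Then $w_{n-c_j} = j$ exactly, and substituting $w_{n-k} = j$ and $k = c_j$ into \cref{eqn:cost_k} immediately gives
\[
\cost_{c_j} \;=\; n + \sum_{i=1}^{j-1} c_i - c_j\cdot j,
\]
as claimed. \emph{Case 2 (the main obstacle):} $n_j = 0$, in which case $w_{n-k}$ is strictly less than $j$ and we need to check that the claimed formula is still correct. Let $j^{*}$ be the largest index with $1 \le j^{*} \le j$ and $n_{j^{*}} > 0$ (such a $j^{*}$ exists because $c_0 = n > c_j \ge 1$ forces some drop). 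Then $w_{n-c_j} = j^{*}$, and applying Case~1 with $j$ replaced by $j^{*}$ yields
\[
\cost_{c_j} \;=\; n + \sum_{i=1}^{j^{*}-1} c_i - c_j\cdot j^{*}.
\]

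\smallskip

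\noindent It remains to verify that the right-hand side above equals the right-hand side of the claimed formula. Their difference is
\[
\sum_{i=j^{*}}^{j-1} c_i \;-\; c_j\,(j - j^{*}).
\]
By the choice of $j^{*}$, we have $n_{j^{*}+1} = \cdots = n_j = 0$, so $c_{j^{*}} = c_{j^{*}+1} = \cdots = c_{j-1} = c_j$. Hence the sum equals $(j-j^{*})c_j$ and the difference vanishes, closing Case~2. The only subtlety is this bookkeeping in Case~2, and it is handled entirely by the telescoping structure of $n_i = c_{i-1}-c_i$ already exploited in \cref{thm:formula_cost}.
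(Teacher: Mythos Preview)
Your proof is correct and follows essentially the same approach as the paper, which likewise counts the MST edges of weight at most $j$ as $n-c_j$ and then substitutes into \cref{eqn:cost_k}. Your Case~2 analysis is in fact more careful than the paper's own argument, which simply asserts $w_{n-c_j}=j$ without addressing the possibility $n_j=0$; your telescoping verification is the right way to close that gap (the only remaining triviality is the degenerate case $c_j=n$, where both sides are $0$).
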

\begin{proof}
Let $n_j$ denote the number of edges in the $\MST$ with weight $j$. From our earlier analysis in \Cref{thm:formula_cost}, we have that 
    $n_j=c_{j-1}-c_j$ for any $1\leq j\leq W$, where $c_0=n$.
    Then the number of edges in the $\MST$ with weights at most $j$ is
\[n_1+n_2+\dots+n_j=(n-c_1)+(c_1-c_2)+\dots+(c_{j-1}-c_j)=n-c_j\]

    As $k=c_j$, the $(n-k)$-th smallest edge in the MST has weight $w_{n-k}=w_{n-c_j}=j$.
    Then the statement of lemma follows from \Cref{eqn:cost_k}.
\end{proof}

\paragraph{Algorithm to Estimate Profile.}

Our main idea of approximating the profile of clustering is to first define $\overline{\cost}_{B_i}$ for
non-integer $B_i,\ 1\leq i\leq t$, where $B_i$'s are interval endpoints defined in 
\cref{def:interval_distance}. 
We will make use of \cref{alg:appcost}, which performs a binary search using the search key $ B_i $ and parameters as described in \cref{alg:appcost}, ultimately returning the index $ \hat{j}_i $. By substituting $ j $ with $ \hat{j}_i $ and $ c_j $ with $ \bar{c}_j = B_i $ in the formula given in \cref{lem:costk}, we obtain:
\begin{align}
    \overline{\cost}_{B_i}=n+\sum_{j=1}^{\hat j_i-1}\bar c_j-B_i\cdot \hat j_i
    =n+\sum_{k=1}^{i-1}(\hat j_{k+1}-\hat j_k)\cdot B_k-B_i\cdot \hat j_i.
    \label{eqn:cost_Bi}
\end{align}

This approach gives a $t$-dimensional vector $(\overline{\cost}_{B_1},\dots,\overline{\cost}_{B_t})$ and effectively summarizes the histogram of profile vector.
The detailed algorithm is described in \cref{alg:appprofile}.

\begin{algorithm}[h!]
    \DontPrintSemicolon
    \caption{\textsc{AppProfile}($G, \varepsilon,W,d$)}
    \label{alg:appprofile}
    get sequence $\{\hat j_1,\dots,\hat j_t\}$ from \cref{alg:appcost} \textsc{AppCost($G,\varepsilon,W,d$)}\;
    set $t$ according to \cref{def:interval_distance}\;
    {set $\overline{\cost}_{B_1}=0$}\;
    \For{$i\in\{{2},\dots,t\}$}{
        set $B_i$ according to \cref{def:interval_distance}\;
        set $\overline{\cost}_{B_i}=n-B_i\cdot \hat j_i+\sum_{k=1}^{i-1}
        (\hat j_{k+1}-\hat j_k)\cdot B_{k}$\;
    }
    output the vector $(\overline{\cost}_{B_1},\dots,\overline{\cost}_{B_t})$
\end{algorithm}

Given $\{B_1,\dots,B_t\}$, the vector $(\overline{\cost}_{B_1},\dots,\overline{\cost}_{B_t})$, and any specified integer $k\in[1,n]$ such that $B_{i+1}\le k< B_{i}$, we define $\widehat{\cost}_k=\overline{\cost}_{B_{i+1}}$ to be the estimate for $\cost_k$, as described in \cref{alg:profile_oracle}. 

\begin{algorithm}[h!]
    \DontPrintSemicolon
    \caption{\textsc{ProfileOracle}($k, \{B_1,\dots,B_t\}, (\overline{\cost}_{B_1},\dots,\overline{\cost}_{B_t})$)}
    \label{alg:profile_oracle}
    define $B_0=\infty$\;
    use binary search over $(B_0,B_1,\dots,B_t)$, and find the index $i$ such that $B_{i+1}\le k<B_i$\;
    output $\widehat{\cost}_k:=\overline{\cost}_{B_{i+1}}$\;
\end{algorithm}

We call the vector $(\overline{\cost}_{B_1},\dots,\overline{\cost}_{B_t})$ a \emph{succinct representation} of the vector $(\widehat{\cost}_1,\dots,\widehat{\cost}_{n})$. %
We call \cref{alg:profile_oracle} a \emph{profile oracle}, as it can take any index $k$ as input and answer $\widehat{\cost}_k$. 

\paragraph{Analysis of \Cref{alg:appprofile,alg:profile_oracle}.}
Now we analyze \cref{alg:appprofile,alg:profile_oracle}. In particular, we will show that the vector $(\widehat{\cost}_1,\dots,\widehat{\cost}_{n})$ is a good approximation of the profile vector $({\cost}_1(G),\dots,{\cost}_{n}(G))$.
We first give an error bound for each individual $\widehat{\cost}_k$, and then bound the sum of the error and give the proof of \cref{thm:profile}.

\begin{fact}\label{fact:B_i-B_i+3}
    For any $1\le i\le t-3$, $B_i-B_{i+3}\le 8\varepsilon\cdot\max\{B_{i+1},\frac{n}{\sqrt{W}}\}$.
\end{fact}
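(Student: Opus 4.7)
The plan is to bound $B_i - B_{i+3}$ via the telescoping sum
\[
B_i - B_{i+3} = (B_i - B_{i+1}) + (B_{i+1} - B_{i+2}) + (B_{i+2} - B_{i+3}),
\]
and to control each single-step difference uniformly. Concretely, I would first establish the auxiliary claim that for every $1 \le k \le t-1$,
\[
B_k - B_{k+1} \le 2\varepsilon \cdot \max\{B_{k+1},\, n/\sqrt{W}\}.
\]

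The auxiliary claim is proved by case analysis on where $k$ sits relative to the regime boundaries $t_1$ and $t$. If $k+1 \le t_1$ (fully geometric regime), \cref{fact:distanceinterval}(4) gives $B_k - B_{k+1} = \varepsilon B_{k+1}$, and the definition of $t_1$ ensures $B_{k+1} \ge B_{t_1} \ge n/\sqrt{W}$. If $k = t_1$ (geometric-to-arithmetic transition), \cref{fact:distanceinterval}(3) directly gives $B_k - B_{k+1} < 2\varepsilon n/\sqrt{W}$. If $t_1 < k \le t-2$ (fully arithmetic regime), \cref{def:interval_distance} gives $B_k - B_{k+1} = \varepsilon n/\sqrt{W}$. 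The delicate case is $k = t-1$, where $B_t = 1$ breaks the arithmetic pattern. Here I would use the maximality of $t_2$ from \cref{def:interval_distance}, which forces $(n/\sqrt{W})(1-\varepsilon(t_2+1)) < \varepsilon n/\sqrt{W}$, hence $B_{t_1+t_2} = (n/\sqrt{W})(1-\varepsilon t_2) < 2\varepsilon n/\sqrt{W}$. Combined with the standing assumption $\sqrt{W}\le n$ (so $B_t = 1 \le n/\sqrt{W}$), this yields $B_{t-1} - B_t < 2\varepsilon n/\sqrt{W} = 2\varepsilon \cdot \max\{B_t, n/\sqrt{W}\}$.

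With the auxiliary claim in hand, I would invoke the monotonicity $B_{i+1} \ge B_{i+2} \ge B_{i+3}$, which gives $\max\{B_{i+j}, n/\sqrt{W}\} \le \max\{B_{i+1}, n/\sqrt{W}\}$ for $j \in \{2,3\}$. Summing the three consecutive-step bounds then yields
\[
B_i - B_{i+3} \le 6\varepsilon \cdot \max\{B_{i+1}, n/\sqrt{W}\} \le 8\varepsilon \cdot \max\{B_{i+1}, n/\sqrt{W}\},
\]
which proves the fact (with room to spare). The main obstacle is the boundary case $k = t-1$: the floor in the definition of $t_2$ means the final step down to $B_t = 1$ can exceed the nominal arithmetic step $\varepsilon n/\sqrt{W}$ by a small constant factor, and this slack is precisely what the factor $2\varepsilon$ in the auxiliary claim (as opposed to $\varepsilon$) is designed to absorb.
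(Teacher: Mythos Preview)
Your proof is correct and takes a genuinely different route from the paper. The paper proceeds by a direct case analysis on $i$ (five cases: $i\le t_1-3$, $i=t_1-2$, $i=t_1-1$, $i=t_1$, $i\ge t_1+1$), computing or bounding $B_i-B_{i+3}$ explicitly in each. You instead establish a uniform single-step bound $B_k-B_{k+1}\le 2\varepsilon\max\{B_{k+1},n/\sqrt{W}\}$ and telescope. Your approach is more modular: the auxiliary claim is a clean, reusable statement, and the telescoping step is mechanical. It also yields a tighter constant ($6\varepsilon$ rather than $8\varepsilon$), and it handles the boundary case $i+3=t$ more explicitly than the paper does (the paper's final case ``$i\ge t_1+1$'' asserts $B_i-B_{i+3}=3\varepsilon n/\sqrt{W}$, which silently assumes $B_{i+3}$ is still in the arithmetic regime rather than equal to $B_t=1$). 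The trade-off is that the paper's direct approach makes the constants in each regime visible, which could be useful if one later wanted to sharpen specific cases.
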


\begin{proof}
    According to \cref{def:interval_distance}, when $i+3\le t_1$, i.e., $i\le t_1-3$, we have $B_i=\frac{n}{(1+\varepsilon)^{i-1}}$ and $B_{i+3}=\frac{n}{(1+\varepsilon)^{i+2}}$. Thus, since $\frac{1}{1+\varepsilon}\ge 1-2\varepsilon$,
    \[
    B_i-B_{i+3}=((1+\varepsilon)-\frac{1}{(1+\varepsilon)^2})\frac{n}{(1+\varepsilon)^i}\le (1+\varepsilon-(1-2\varepsilon)^2)B_{i+1} \le 5\varepsilon B_{i+1}
    \]

    When $i=t_1-2$, $B_i-B_{i+3}=(1+\varepsilon)^2B_{t_1}-B_{t_1+1}=(2\varepsilon+\varepsilon^2)B_{t_1}+(B_{t_1}-B_{t_1+1})$. Since $B_{t_1}-B_{t_1+1}<2\frac{\varepsilon n}{\sqrt{W}}$ and $B_{t_1}<(1+\varepsilon)\frac{n}{\sqrt{W}}<2\frac{n}{\sqrt{W}}$, $B_i-B_{i+3}< 3\varepsilon B_{t_1}+2\frac{\varepsilon n}{\sqrt{W}}<6\frac{\varepsilon n}{\sqrt{W}}+2\frac{\varepsilon n}{\sqrt{W}}= 8\varepsilon\frac{n}{\sqrt{W}}$.

    When $i=t_1-1$, $B_i-B_{i+3}=(1+\varepsilon)B_{t_1}-B_{t_1+2}=\varepsilon B_{t_1}+(B_{t_1}-B_{t_1+1})+\frac{\varepsilon n}{\sqrt{W}}<2\frac{\varepsilon n}{\sqrt{W}}+2\frac{\varepsilon n}{\sqrt{W}}+\frac{\varepsilon n}{\sqrt{W}} = 5\varepsilon\frac{n}{\sqrt{W}}$.

    When $i=t_1$, $B_i-B_{i+3}=B_{t_1}-B_{t_1+3}=(B_{t_1}-B_{t_1+1})+2\frac{\varepsilon n}{\sqrt{W}}<2\frac{\varepsilon n}{\sqrt{W}}+2\frac{\varepsilon n}{\sqrt{W}}=4\varepsilon\frac{n}{\sqrt{W}}$.

    When $i\ge t_1+1$, $B_i-B_{i+3}=3\varepsilon\frac{n}{\sqrt{W}}$, which finishes the proof of the fact.
\end{proof}

The following lemma provides an error bound for individual $\widehat{\cost}_k$, demonstrating that the error grows as $k$ increases. 
Observe that $|\widehat{\cost}_k-\cost_k|=|\overline{\cost}_{B_{i+1}}-\cost_k|\le|\overline{\cost}_{B_{i+1}}-\cost_{c_{j_{i+1}}}|+|\cost_{c_{j_{i+1}}}-\cost_k|$, {where $j_{i+1}$ is the smallest index such that $\hat{c}_{j_{i+1}}\le B_{i+1}$}. We bound these two terms separately.
First, for each interval, $B_{i+1}$ is close to $c_{j_{i+1}}$, so the gap $|\overline{\cost}_{B_{i+1}}-\cost_{c_{j_{i+1}}}|$ is small.
Second, by definition, $\cost_k=\sum_{j=1}^{n-k}w_j$, {where $w_j$ is the $j$-th smallest edge weight in the MST}. Since both $k$ and $c_{j_{i+1}}$ are constrained by nearby interval endpoints, the gap $|\cost_{c_{j_{i+1}}}-\cost_k|$ is also small. Consequently, the error $|\widehat{\cost}_k-\cost_k|$ is bounded.

\begin{lemma}\label{lem:single_costk_error}
    Assume that $\sqrt{W}\le n$ and let $0<\varepsilon<1$ be a parameter. With probability at least 3/4, for any integer $k\in\{1,\dots,n\}$, \cref{alg:profile_oracle} returns an estimate $\widehat{\cost}_k$ for the $k$-clustering cost, i.e., $\cost_k$, such that
    \[
    |\widehat{\cost}_k-\cost_k| \leq 4\varepsilon\cdot\cost_k+48\varepsilon\cdot\max\{k,\frac{n}{\sqrt{W}}\}\cdot \max\{w_{n-k}, j_{i+1}\}\le 4\varepsilon\cdot\cost_k+48\varepsilon\cdot\max\{k,\frac{n}{\sqrt{W}}\} W
    \]

    Given the succinct representation $(\overline{\cost}_{B_1},\dots,\overline{\cost}_{B_t})$, the running time of the algorithm is $O(\log (\frac{\log W}{\varepsilon}))$.
\end{lemma}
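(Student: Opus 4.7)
My plan is to compare $\widehat{\cost}_k=\overline{\cost}_{B_{i+1}}$ to $\cost_k$ through an intermediate exact quantity for which \cref{lem:costk} applies. Let $i$ be the index returned by \cref{alg:profile_oracle} so that $B_{i+1}\le k<B_i$, set $J:=\hat j_{i+1}$ (this is the $j_{i+1}$ appearing in the lemma statement), and define $k^\ast:=c_J$. Then \cref{lem:costk} with $j=J$ gives the closed form $\cost_{k^\ast}=n+\sum_{j=1}^{J-1}c_j-c_J\cdot J$, whereas by construction $\overline{\cost}_{B_{i+1}}=n+\sum_{j=1}^{J-1}\bar c_j-B_{i+1}\cdot J$. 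I will then decompose $|\widehat{\cost}_k-\cost_k|\le |\overline{\cost}_{B_{i+1}}-\cost_{k^\ast}|+|\cost_{k^\ast}-\cost_k|$ and bound the two parts separately.

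For the first part, subtracting the two closed forms gives $|\overline{\cost}_{B_{i+1}}-\cost_{k^\ast}|\le \sum_{j=1}^{J-1}|\bar c_j-c_j|+J\cdot|B_{i+1}-c_J|$. The entry-wise bound from \cref{lem:bound_of_bar_cj} yields $\sum|\bar c_j-c_j|\le 4\varepsilon(J-1)\,n/\sqrt W+4\varepsilon\sum_{j=1}^{J-1}c_j$, and I rewrite $\sum_{j=1}^{J-1}c_j=\cost_{k^\ast}+k^\ast J-n$ via \cref{lem:costk} again, producing a $4\varepsilon\cost_{k^\ast}$ contribution plus $O(\varepsilon)$ terms of the form $\max\{k,n/\sqrt W\}\cdot J$. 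For the term $J\cdot|B_{i+1}-c_J|$ I use $J\in\hat J_{i+1}$: \cref{lem:bound_of_xj} gives $B_{i+3}\le c_J\le B_i$, and since $B_{i+1}$ also lies in $[B_{i+3},B_i]$, \cref{fact:B_i-B_i+3} yields $|B_{i+1}-c_J|\le B_i-B_{i+3}\le 8\varepsilon\max\{B_{i+1},n/\sqrt W\}\le 8\varepsilon\max\{k,n/\sqrt W\}$, where the last step uses $k\ge B_{i+1}$.

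For the second part, I use the fact that $\cost_k-\cost_{k+1}=w_{n-k}$, so by telescoping $|\cost_{k^\ast}-\cost_k|\le|k-k^\ast|\cdot\max\{w_{n-k},w_{n-k^\ast}\}$. Since $k^\ast=c_J$, by the same edge-counting argument used in the proof of \cref{lem:costk} the MST contains $n-k^\ast$ edges of weight at most $J$, hence $w_{n-k^\ast}\le J=j_{i+1}$. Both $k$ and $k^\ast$ lie in $[B_{i+3},B_i]$, so \cref{fact:B_i-B_i+3} gives $|k-k^\ast|\le B_i-B_{i+3}\le 8\varepsilon\max\{k,n/\sqrt W\}$. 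Combining yields $|\cost_{k^\ast}-\cost_k|\le 8\varepsilon\max\{k,n/\sqrt W\}\cdot\max\{w_{n-k},j_{i+1}\}$.

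Finally I assemble the two bounds and replace $\cost_{k^\ast}$ by $\cost_k+|\cost_{k^\ast}-\cost_k|$, so that the leading $4\varepsilon\cost_{k^\ast}$ becomes $4\varepsilon\cost_k$ plus an extra $O(\varepsilon^2)\max\{k,n/\sqrt W\}\max\{w_{n-k},j_{i+1}\}$ term that is absorbed into the secondary summand. A careful tally of constants then gives the announced bound $4\varepsilon\cost_k+48\varepsilon\max\{k,n/\sqrt W\}\max\{w_{n-k},j_{i+1}\}$, and the crude inequality $\max\{w_{n-k},j_{i+1}\}\le W$ yields the second stated form. The runtime claim is immediate since \cref{alg:profile_oracle} only binary searches over the $t=O(\log W/\varepsilon)$ breakpoints. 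I expect the main obstacle to be bookkeeping: since the $4\varepsilon$ coefficient on $\cost_k$ must be preserved exactly, I have to ensure that the cross term $4\varepsilon|\cost_{k^\ast}-\cost_k|$ coming from the substitution $\cost_{k^\ast}\to\cost_k$ is fully absorbed into the secondary error without inflating the leading constant.
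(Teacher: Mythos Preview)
Your proposal is correct and follows essentially the same route as the paper: both triangle through $\cost_{c_{j_{i+1}}}$, subtract the two closed forms and invoke \cref{lem:bound_of_bar_cj} for the first piece, and telescope over MST edge weights for the second. The only differences are cosmetic bookkeeping—the paper case-splits on the sign of $k-c_{j_{i+1}}$ (getting the tighter gap $B_i-B_{i+1}$ in one branch) and handles $i\ge t-2$ separately since \cref{fact:B_i-B_i+3} does not cover those indices; your uniform $B_i-B_{i+3}$ bound is cleaner but tallies nearer $60$, so if you insist on the exact constant $48$ you will need that case split.
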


\begin{proof}
We first note that in the case that $\varepsilon<\frac{\sqrt{W}}{n}$, we can simply find the exact minimum spanning tree in $\tilde{O}(nd)\ll \tilde{O}(\frac{n^3}{W}d)=\tilde{O}(\frac{\sqrt{W}}{\varepsilon^3}d)$, and calculate the exact SLC profile vector in linear time. Then the theorem trivially holds. 
Thus in the following, we assume that $\varepsilon\geq \frac{\sqrt{W}}{n}$.

Recall from \cref{lem:bound_of_bar_cj} that with probability at least $\frac{3}{4}$, we have 
$|\bar c_j-c_j|\le 4\varepsilon\cdot\max\left\{\frac{n}{\sqrt{W}},c_j\right\}$ for all $j\in[W]$. In the following, we will assume that this event holds. 

For simplicity of notation, we use $j_i$  to denote  $\hat{j}_i$, the value returned by the binary search with key $B_i$ as invoked in \cref{alg:appcost}. 
By definitions of $\overline{\cost}_{B_{i}}$ and $\cost_{c_{j_{i}}}(G)$, we have that
\begin{align*}
|\overline{\cost}_{B_{i}}-\cost_{c_{j_{i}}}|
    &=\abs{(n+\sum_{j=1}^{j_{i}-1}\bar c_j-B_{i}\cdot j_{i})-(n+\sum_{j=1}^{j_{i}-1}c_i-c_{j_{i}}\cdot j_{i})}\\
    &=|\sum_{j=1}^{j_{i}-1}(\bar c_j-c_j)-B_{i}\cdot j_{i}+c_{j_{i}}\cdot j_{i}|\\
    &\leq\sum_{j=1}^{j_{i}-1}|\bar c_j-c_j|+|B_{i}-c_{j_{i}}|\cdot j_{i}\\
    &\leq 4\varepsilon\sum_{j=1}^{j_{i}-1}c_j+4\varepsilon\cdot c_{j_{i}}\cdot j_{i} \tag{applying the error bound for $\bar c_j$}\\
    &= 4\varepsilon(\cost_{c_{j_i}}-n+c_{j_{i}}\cdot j_{i})+4\varepsilon\cdot c_{j_{i}}\cdot j_{i} \tag{according to \cref{lem:costk}}\\
    &\le 4\varepsilon\cdot\cost_{c_{j_i}} + 8\varepsilon\cdot c_{j_i}\cdot j_i
\end{align*}

When $k=B_1=n$, as $\widehat{\cost}_{B_1}=0$ and $\cost_n=0$, we have $|\widehat{\cost}_k-\cost_k|=0$.

For $B_{i+1}\le k< B_i$ where $1\le i\le t-1$, we estimate $\cost_k$ using $\cost_{B_{i+1}}$, and thus we have
\begin{align*}
    |\widehat{\cost}_k-\cost_k|&\leq
    |\overline{\cost}_{B_{i+1}}-\cost_{c_{j_{i+1}}}|
    +|\cost_{c_{j_{i+1}}}-\cost_k|\\
    &\le 4\varepsilon(\cost_{c_{j_{i+1}}}+2 c_{j_{i+1}}\cdot j_{i+1})+|\cost_{c_{j_{i+1}}}-\cost_k|\\
    &\le 4\varepsilon(\cost_k+|\cost_{c_{j_{i+1}}}-\cost_k|+2B_i\cdot j_{i+1})+|\cost_{c_{j_{i+1}}}-\cost_k| \tag{since $c_{j_{i+1}}\le B_i$}\\
    &\le 4\varepsilon(\cost_k+2B_i\cdot j_{i+1})+5|\cost_{c_{j_{i+1}}}-\cost_k| \tag{since $\varepsilon<1$}\\
    &\le 4\varepsilon\cdot\cost_k+8\varepsilon\cdot\max\{k,\frac{n}{\sqrt{W}}\}\cdot j_{i+1}+|\cost_{c_{j_{i+1}}}-\cost_k| \tag{since $B_i\le B_{i+1}+\varepsilon\cdot\max\{B_{i+1},\frac{n}{\sqrt{W}}\}\le 2\max\{k,\frac{n}{\sqrt{W}}\}$}
\end{align*}

Thus, we only need to bound $|\cost_{c_{j_{i+1}}}-\cost_k|$.

Case (I): $c_{j_{i+1}}\le k$. In this case, $\cost_k\le \cost_{c_{j_{i+1}}}$ and $w_{n-k}\le j_{i+1}$. Since $\cost_k=\sum_{j=1}^{n-k}w_{j}$, 
\[
|\cost_{c_{j_{i+1}}}-\cost_k|=\cost_{c_{j_{i+1}}}-\cost_k=\sum_{j=n-k+1}^{n-c_{j_{i+1}}}w_j
\le(k-c_{j_{i+1}})\cdot w_{n-c_{j_{i+1}}} = (k-c_{j_{i+1}})\cdot j_{i+1}
\]

When $t-1\le i\le t-2$, $c_{j_{i+1}}\ge B_t=1$ and $k\le B_i\le B_{t-2}=2\frac{\varepsilon n}{\sqrt{W}}$, and thus $|\cost_{c_{j_{i+1}}}-\cost_k|\le 2\varepsilon\frac{n}{\sqrt{W}}\cdot j_{i+1}$.

When $i+3\le t$, i.e., $i\le t-3$, from \cref{lem:bound_of_xj}, $B_{i+3}\le c_{j_{i+1}}\le B_i$, and thus $k-c_{j_{i+1}}\le B_i-B_{i+3}$. According to \cref{fact:B_i-B_i+3}, $B_i-B_{i+3}\le 8\varepsilon\cdot\max\{B_{i+1},\frac{n}{\sqrt{W}}\} \le 8\varepsilon\cdot\max\{k,\frac{n}{\sqrt{W}}\}$. 
Thus, $|\cost_{c_{j_{i+1}}}-\cost_k|\le 8\varepsilon\cdot\max\{k,\frac{n}{\sqrt{W}}\}\cdot j_{i+1}$. Therefore
\[
|\widehat{\cost}_k-\cost_k| \leq 4\varepsilon\cdot\cost_k+(8\varepsilon+5\cdot8\varepsilon)\max\{k,\frac{n}{\sqrt{W}}\}\cdot j_{i+1}=4\varepsilon\cdot\cost_k+48\varepsilon\cdot\max\{k,\frac{n}{\sqrt{W}}\}\cdot j_{i+1}
\]

Case (II): $k<c_{j_{i+1}}$. In this case, $\cost_{c_{j_{i+1}}}\le\cost_k$ and $j_{i+1}\le w_{n-k}$. We have,
\[
|\cost_{c_{j_{i+1}}}-\cost_k|=\cost_k-\cost_{c_{j_{i+1}}}=\sum_{j=n-c_{j_{i+1}}+1}^{n-k}w_j
\le(c_{j_{i+1}}-k)\cdot w_{n-k} \le (B_i-B_{i+1})\cdot w_{n-k}
\]

When $i<t_1$, we have $B_i-B_{i+1}=\varepsilon B_{i+1}\le \varepsilon k$; when $i\ge t_1$, we have $B_i-B_{i+1}=\varepsilon\frac{n}{\sqrt{W}}$. Thus, $B_i-B_{i+1}\le\varepsilon\cdot\max\{k,\frac{n}{\sqrt{W}}\}$. Then,
\[
|\widehat{\cost}_k-\cost_k| \leq 4\varepsilon\cdot\cost_k+8\varepsilon\cdot\max\{k,\frac{n}{\sqrt{W}}\}\cdot j_{i+1}+8\varepsilon\cdot\max\{k,\frac{n}{\sqrt{W}}\}\cdot w_{n-k}
\]

In a conclusion, as $\max\{w_{n-k}, j_{i+1}\}\le W$, we have
\[
|\widehat{\cost}_k-\cost_k| \leq 4\varepsilon\cdot\cost_k+48\varepsilon\cdot\max\{k,\frac{n}{\sqrt{W}}\}\cdot \max\{w_{n-k}, j_{i+1}\}\le 4\varepsilon\cdot\cost_k+48\varepsilon\cdot\max\{k,\frac{n}{\sqrt{W}}\} W
\]

\textbf{Running time analysis.}
For any $k$, given the succinct representation, one can simply perform binary search to find the first $i$ with $B_{i+1}\leq k<B_i$, and thus the running time of \cref{alg:profile_oracle} is $O(\log t)=O(\log(\frac{\log W}{\varepsilon}))$.
\end{proof}

Note that when $k$ is very close to $n$, it becomes impossible to estimate $ \cost_k $ within a constant factor in sublinear time. For instance, when $k = n - 1$, estimating $ \cost_k $ amounts to finding the minimum edge weight in the graph -- an inherently hard task to perform in sublinear time. Consequently, no algorithm can provide an estimator $ \widehat{\cost}_k $ with a constant-factor approximation guarantee for arbitrary values of $k$ under sublinear time constraints. %
However, in \Cref{thm:profile} we prove that the \textit{average} error bound for $\widehat{\cost}_k$ is $\varepsilon\cdot\cost_k$, that is, $\sum_{k=1}^{n}|\widehat{\cost}_k-\cost_k|\le\varepsilon\sum_{k=1}^{n}\cost_k$. %

\begin{proof}[Proof of \Cref{thm:profile}]

For $i<t_1$, $B_i-B_{i+1}=\varepsilon B_{i+1}\ge\varepsilon B_{t_1}\ge\frac{\varepsilon n}{\sqrt{W}}$,
and for $i\ge t_1$, $B_i-B_{i+1}\ge\frac{\varepsilon n}{\sqrt{W}}$.
Since $\varepsilon \geq \frac{\sqrt{W}}{n}$, %
we have that 
for every $i$, $B_i-B_{i+1}\ge 1$, then by the error bound for individual $\widehat{\cost}_k$ in \cref{lem:single_costk_error}, we have that %
\begin{align*}
    \sum_{k=1}^{n}|\widehat{\cost}_k-\cost_k|
    &\le\sum_{i=1}^{t-1} \sum_{k\in[B_{i+1},B_i)}\left(
    4\varepsilon\cdot\cost_k+48\varepsilon\cdot\max\{k,\frac{n}{\sqrt{W}}\}\cdot \max\{w_{n-k}, j_{i+1}\} \right)\\
    &=4\varepsilon\sum_{k=1}^{n}\cost_k+48\varepsilon\sum_{i=1}^{t}\sum_{k\in[B_{i+1},B_i)}\max\{k,\frac{n}{\sqrt{W}}\}\cdot\max\{w_{n-k},j_{i+1}\}\\
    &\le 4\varepsilon\cdot\cost(G)+48\varepsilon\sum_{i=1}^{t_1}\sum_{k\in[B_{i+1},B_i)}k\cdot\max\{w_{n-k},j_{i+1}\}+48\varepsilon\sum_{i=t_1+1}^{t}\sum_{k\in[B_{i+1},B_i)}\frac{n}{\sqrt{W}}\cdot W \tag{by the definition of $\cost(G)$, and $\max\{w_{n-k},j_{i+1}\}\le W$}\\
    &\le 4\varepsilon\cdot\cost(G)+48\varepsilon\frac{1}{\varepsilon}\cdot\frac{\varepsilon n}{\sqrt{W}}\cdot\frac{n}{\sqrt{W}}W+48\varepsilon\sum_{i=1}^{t_1}\sum_{k\in[B_{i+1},B_i)}k\cdot\max\{w_{n-k},j_{i+1}\} \tag{since $t-t_1=t_2\le\frac{1}{\varepsilon}$, and when $i\ge t_1$, $B_i-B_{i+1}=\frac{n}{\sqrt{W}}$}\\
    &= 4\varepsilon\cdot\cost(G)+48\varepsilon n^2+48\varepsilon\sum_{i=1}^{t_1}\sum_{k\in[B_{i+1},B_i)}k\cdot\max\{w_{n-k},j_{i+1}\}\\
    &\le 196\varepsilon\cdot\cost(G)+48\varepsilon\sum_{i=1}^{t_1}\sum_{k\in[B_{i+1},B_i)}k\cdot\max\{w_{n-k},j_{i+1}\} \tag{since $n^2\le4\cost(G)$}
\end{align*}

Define $(*)=48\varepsilon\sum_{i=1}^{t_1}\sum_{k\in[B_{i+1},B_i)}k\cdot\max\{w_{n-k},j_{i+1}\}$, now we only need to bound $(*)$.

Case (I): $j_{i+1}\le w_{n-k}$. In this case, $(*)\le 48\varepsilon\sum_{k=1}^{{n-1}} k\cdot w_{n-k}=48\varepsilon\cdot\cost(G)$, by the fact that $\cost(G)=\sum_{i=1}^{n-1}(n-i)w_i=\sum_{k=1}^{n-1}k\cdot w_{n-k}$.

Case (II): $w_{n-k}\le j_{i+1}$. In this case, since $k<B_i$, we have that
\[
(*)\le 48\varepsilon\sum_{i=1}^{t_1}(B_i-B_{i+1})B_i\cdot j_{i+1} = 48\varepsilon^2\sum_{i=1}^{t_1}B_{i+1}B_i\cdot j_{i+1}
\]

Next we bound $\sum_{i=1}^{t_1} B_{i+1} B_i\cdot j_{i+1}$. %
On one hand, $\sum_{i=1}^{t_1} B_{i+1} B_i\cdot j_{i+1} =(1+\varepsilon)\sum_{i=1}^{t_1-1}B_{i+1}^2 j_{i+1}
=(1+\varepsilon)\sum_{i'=2}^{t_1} B_{i'}^2 j_{i'}$;
on the other hand, 
$\sum_{i=1}^{t_1} B_{i+1} B_i\cdot j_{i+1}=\frac{1}{1+\varepsilon}\sum_{i=1}^{t_1-1}B_i^2 j_{i+1}$.

Therefore, $(1+\varepsilon-\frac{1}{1+\varepsilon})\sum_{i=1}^{t_1} B_{i+1} B_i\cdot j_{i+1}=\sum_{i=1}^{t_1-1}B_i^2j_{i+1}-\sum_{i=2}^{t_1}B_i^2 j_i
=\sum_{i=1}^{t_1}B_i^2(j_{i+1}-j_i)-B_{t_1}^2 j_{t_1+1}+B_1^2 j_1$. 
Note that according to \cref{fact:distanceinterval}, $j_1=1$.

Moreover, when $i\le t_1$, $B_i\ge B_{t_1}\ge\frac{n}{\sqrt{W}}$. 
If $\frac{n}{\sqrt{W}}\ge 2$, then $B_i^2-B_i\ge\frac{B_i}{2}$;
and if $n\ge 2$, $\frac{n(n-1)}{2}\ge\frac{n^2}{4}$.
Then $\widehat{\cost}(G)=\frac{n(n-1)}{2}+\frac{1}{2}\sum_{i=1}^{t-1}(j_{i+1}-j_i)\cdot(B_i^2-B_i)
\ge \frac{n^2}{4}+\frac{1}{4}\sum_{i=1}^{t_1}B_i^2 (j_{i+1}-j_i)$.
Since $\widehat{\cost}(G)$ approximates $\cost(G)$ by $(1+\varepsilon)$ factor, we have
\[
\frac{(1+\varepsilon)^2-1}{1+\varepsilon}\sum_{i=1}^{t_1} B_{i+1} B_i\cdot j_{i+1} = \frac{3\varepsilon}{1+\varepsilon}\sum_{i=1}^{t_1} B_{i+1} B_i\cdot j_{i+1} 
\le \sum_{i=1}^{t_1}B_i^2(j_{i+1}-j_i)+n^2
\le 4\cdot\widehat{\cost}(G) \le 4(1+\varepsilon)\cost(G)
\]

That is, $\varepsilon \sum_{i=1}^{t_1} B_{i+1} B_i\cdot j_{i+1}\le\frac{4}{3}(1+\varepsilon)^2\cost(G) \le 6\cost(G)$. Then,
\[
(*)=48\varepsilon^2 \sum_{i=1}^{t_1} B_{i+1} B_i\cdot j_{i+1} \le 48\varepsilon\cdot 6\cost(G)=288\varepsilon\cdot\cost(G)
\]

Combining the above analysis, we have $\sum_{k=1}^{n}\abs{\widehat{\cost}_k-\cost_k}\leq (196+288)\varepsilon\cdot\cost(G)=484\varepsilon\cdot\cost(G)$.

Replacing $\varepsilon$ with $\varepsilon/484$, we get an succinct representation of $(\widehat{\cost}_1,\dots,\widehat{\cost}_{n})$ such that each $\widehat{\cost}_k$ is a $(1+\varepsilon)$-estimator \emph{on average}.%

\textbf{Running time analysis.}
Note that we first invoke \cref{alg:appcost} to obtain the sequence $\{j_0,j_1,\dots,j_t\}$,
which takes 
$O(\frac{\sqrt{W}d}{\varepsilon^3}\log^4\frac{Wd}{\varepsilon})$ time.
Given the sequence, to estimate each $\overline{\cost}_{B_i}$, 
we need to calculate $\sum_{k=1}^{i-1}(j_{k+1}-j_k)\cdot B_k$
in $O(t)$ time, for all $i\in[1,t]$.
Thus, getting the estimated vector $(\overline{\cost}_{B_1},\dots,\overline{\cost}_{B_t})$
can be done in $O(t^2)=O(\log^2(\sqrt{W}/\varepsilon)/\varepsilon^2)$ time.
Thus, \cref{alg:appprofile} runs in time $O(\frac{\sqrt{W}d}{\varepsilon^3}\log^4\frac{Wd}{\varepsilon})$.
\end{proof}

\section{Sublinear Algorithms in Similarity Case}\label{sec:similaritymeasure}
Now, we consider single-linkage clustering in similarity graphs. In this case, the agglomerative clustering algorithm merges the two clusters with the highest similarity at each step. The similarity between two clusters is defined as the maximum similarity between any pair of their members. Recall that $w_1^{(s)},w_2^{(s)},...,w_{n-1}^{(s)}$ are the weights of a
maximum spanning tree (MaxST) of $G$ in non-increasing order, and the cost of the SLC in the similarity graph is defined as 
\[
\cost^{(s)}(G)= \sum_{k=1}^{n} \cost_k^{(s)}
=\sum_{i=1}^{n-1}(n-i)\cdot w_i^{(s)},
\]
where $\cost_k^{(s)}= \sum_{i=1}^{n-k} w_i^{(s)}$ is the cost of the corresponding $k$-clustering. 

\subsection{Cost Formula for Similarity-based Clustering}

We first derive an equivalent formula for the clustering cost. 
We let $n_j$ denote the number of edges of 
weight $j$ in the MaxST.
For each weight $j$, 
$G_j^{(s)}$ contains all edges of weight $\geq j$.
We let $c_j^{(s)}$ be the number of 
connected components in $G_j^{(s)}$.  We observe that $c_1^{(s)}=1$ and $c_{W+1}^{(s)}=n$,
if we assume the whole graph $G$ is connected.
Furthermore, it holds that  $\sum_{i<j}n_i=c_j^{(s)}-1$, which gives $n_j=c_{j+1}^{(s)}-c_j^{(s)}$. %

\junk{First of all, let $\cost(\mathrm{MaxST})$ denote the cost of the maximum spanning tree in $G$, then
\begin{align*}
    \cost(\mathrm{MaxST})&=\sum_{i=1}^{n-1} w_i^{(s)}
    \tag{by definition}\\
    &=\sum_{i=1}^{n_W} W
    +\sum_{i=n_W+1}^{n_W+n_{W-1}}(W-1)
    +\dots+\sum_{i=n_W+\dots+n_2+1}^{n_W+\dots+n_2+n_1} 1
    \tag{summing over all possible edge weights}\\
    &=\sum_{i=1}^{n-c_W^{(s)}} W+\sum_{i=c_W^{(s)}+1}^{n-c_{W-1}^{(s)}}(W-1)+\dots+\sum_{i=n-c_3^{(s)}+1}^{n-c_2^{(s)}} 2+\sum_{i=n-c_2^{(s)}+1}^{n-c_1^{(s)}} 1
    \tag{since $n_j=c_{j+1}^{(s)}-c_j^{(s)}$}\\
    &=\sum_{i=1}^{n-c_1^{(s)}}1+\sum_{i=1}^{n-c_W^{(s)}} (W-1)+\sum_{i=c_W^{(s)}+1}^{n-c_{W-1}^{(s)}}(W-2)+\dots+\sum_{i=n-c_3^{(s)}+1}^{n-c_2^{(s)}} 1
    \tag{reorganize terms from each summation for consolidation}\\
    &=\sum_{i=1}^{n-c_1^{(s)}}1+\sum_{i=1}^{n-c_2^{(s)}}1+\dots+\sum_{i=1}^{n-c_W^{(s)}}1\\
    &=\sum_{j=1}^{W}(n-c_j^{(s)})
\end{align*}

Similarly, the cost of the total clustering can be derived in the following theorem.}

\begin{theorem}\label{thm:costsimformula}
Let $G$ be a connected graph on $n$ vertices, with edge weights from $\{1,\dots, W\}$. Then
\[
\cost^{(s)}(G) = \sum_{j=1}^W\frac{(c_j^{(s)}+n-1)(n-c_j^{(s)})}{2}
\]
\end{theorem}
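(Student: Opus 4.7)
The plan is to mirror the argument of \cref{thm:formula_cost}, but it is cleaner here to use an indicator-function trick rather than a layered telescoping sum. The key fact that makes everything go through is that the number of MaxST edges of weight at least $j$ equals $n-c_j^{(s)}$, which follows because the MaxST restricted to edges of weight $\geq j$ is a spanning forest of $G_j^{(s)}$, hence has exactly $n-c_j^{(s)}$ edges.

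First, I would observe that since the weights take integer values in $\{1,\dots,W\}$, we can write
\[
w_i^{(s)} \;=\; \sum_{j=1}^{W} \mathbb{1}[w_i^{(s)} \geq j].
\]
Substituting into the definition $\cost^{(s)}(G) = \sum_{i=1}^{n-1}(n-i)\,w_i^{(s)}$ and swapping the order of summation gives
\[
\cost^{(s)}(G) \;=\; \sum_{j=1}^{W} \sum_{i : w_i^{(s)} \geq j} (n-i).
\]
Because $w_1^{(s)},\dots,w_{n-1}^{(s)}$ are listed in non-increasing order, the set $\{i : w_i^{(s)} \geq j\}$ is exactly $\{1,2,\dots,n-c_j^{(s)}\}$, using the fact noted above.

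Next, I would evaluate the inner sum in closed form:
\[
\sum_{i=1}^{n-c_j^{(s)}}(n-i) \;=\; \sum_{k=c_j^{(s)}}^{n-1} k \;=\; \frac{(n-1)n}{2}-\frac{(c_j^{(s)}-1)c_j^{(s)}}{2} \;=\; \frac{(c_j^{(s)}+n-1)(n-c_j^{(s)})}{2},
\]
where the last equality follows from factoring $n^2-n-(c_j^{(s)})^2+c_j^{(s)} = (n-c_j^{(s)})(n+c_j^{(s)}-1)$. Summing over $j\in\{1,\dots,W\}$ yields the claimed identity.

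There is no real obstacle here; the only thing to be careful about is the boundary convention (here $c_1^{(s)}=1$ and $c_{W+1}^{(s)}=n$), which guarantees that the indicator sum telescopes correctly and that the inner sum is non-empty exactly when $c_j^{(s)}<n$, so the formula remains valid term by term (and the $j=1$ term correctly recovers the weight-independent piece $\binom{n}{2}$). An alternative derivation, following more closely the layered expansion used in \cref{thm:formula_cost} via $n_j = c_{j+1}^{(s)}-c_j^{(s)}$, would also work but requires more bookkeeping; the indicator approach above bypasses this.
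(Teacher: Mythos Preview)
Your proof is correct and takes a cleaner route than the paper's. The paper proceeds by grouping terms according to the edge weight value, substituting $n_j=c_{j+1}^{(s)}-c_j^{(s)}$, and then performing an iterated ``peel off one layer'' telescoping until every partial sum ends at $i=n-1$, finally arriving at $\sum_{j=1}^W\sum_{i=c_j^{(s)}}^{n-1} i$. You reach exactly the same intermediate expression in a single stroke via the identity $w_i^{(s)}=\sum_{j=1}^W\mathbb{1}[w_i^{(s)}\geq j]$ and a swap of summation, using directly that the number of MaxST edges of weight $\geq j$ is $n-c_j^{(s)}$. The paper's approach has the advantage of mirroring the structure of the proof of \cref{thm:formula_cost} step-by-step, which makes the parallel between the distance and similarity cases visually explicit; your indicator-function approach is shorter and avoids the multi-step telescoping bookkeeping, at the cost of relying on the ``MaxST restricted to weight $\geq j$ is a spanning forest of $G_j^{(s)}$'' fact being stated up front (which you do, and which is equivalent to the paper's $n_j=c_{j+1}^{(s)}-c_j^{(s)}$ summed from $j$ to $W$).
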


\begin{proof}
The cost of clustering in the similarity graph can be derived as:
\begin{align*}
    \cost^{(s)}(G)&=\sum_{i=1}^{n-1}(n-i)\cdot w_i^{(s)}
    \tag{by \cref{eqn:costG_similarity}}\\
    &=\sum_{i=1}^{n_W}(n-i)\cdot W
    +\sum_{i=n_W+1}^{n_W+n_{W-1}}(n-i)\cdot(W-1)
    +\dots+\sum_{i=n_W+\dots+n_2+1}^{n_W+\dots+n_2+n_1}(n-i)\cdot1
    \tag{reorganizing the sum by grouping the terms according to edge weights}\\
    &=\sum_{i=1}^{n-c_W^{(s)}}(n-i)\cdot W+\dots+\sum_{i=n-c_3^{(s)}+1}^{n-c_2^{(s)}}(n-i)\cdot2+\sum_{i=n-c_2^{(s)}+1}^{n-c_1^{(s)}}(n-i)\cdot1
    \tag{since $n_j=c_{j+1}^{(s)}-c_j^{(s)}$}\\
    &=\sum_{i=c_W^{(s)}}^{n-1}i\cdot W+\dots+\sum_{i=c_2^{(s)}}^{c_3^{(s)}-1}i\cdot2+\sum_{i=c_1^{(s)}}^{c_2^{(s)}-1}i\cdot1
    \tag{substituting $(n-i)$ with $i$}\\
    &=\sum_{i=c_W^{(s)}}^{n-1}i\cdot (W-1)+\dots+\sum_{i=c_2^{(s)}}^{c_3^{(s)}-1}i\cdot(2-1)+\sum_{i=c_1^{(s)}}^{n-1}i\cdot1
    \tag{factoring out $i$ from each summation and combining the terms}\\
    &=\sum_{i=c_W^{(s)}}^{n-1}i+\dots+\sum_{i=c_2^{(s)}}^{n-1}i+\sum_{i=c_1^{(s)}}^{n-1}i
    \tag{repeating this decomposition until all summations end with $i=n-1$}\\
    &=\sum_{j=1}^W\frac{(c_j^{(s)}+n-1)(n-c_j^{(s)})}{2}
\end{align*}
\end{proof}

We remark that the total weight of the MaxST can be derived in an analogous way:
$\cost(\mathrm{MaxST})= \sum_{j=1}^{W} j \cdot (c_{j+1}^{(s)}  - c_j^{(s)} )
= 1 \cdot (c_2^{(s)} - c_1^{(s)}) + 2 \cdot (c_3^{(s)} - c_2^{(s)}) + 3 \cdot (c_4^{(s)} - c_3^{(s)}) + \cdots + W \cdot (c_{W+1}^{(s)} - c_W^{(s)}) 
=nW -\sum_{j=1}^{W} c_j^{(s)}=\sum_{j=1}^W (n-c_j^{(s)})$.

By the above theorem and the fact that $c_j^{(s)}\ge c_1^{(s)}=1$ for any $j$, we have the following facts.
\begin{fact}\label{fact:lbcostsim-n2}
It holds that $\cost^{(s)}(G)\geq \frac{(c_1^{(s)}+n-1)(n-c_1^{(s)})}{2}=\frac{(1+n-1)(n-1)}{2}=\frac{n(n-1)}{2}$.
\end{fact}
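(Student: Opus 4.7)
The plan is to apply \cref{thm:costsimformula} directly and then observe that every term in the resulting sum is non-negative, so the whole sum is lower bounded by any single term. I will then extract the $j=1$ term, which after substituting $c_1^{(s)}=1$ yields exactly $\frac{n(n-1)}{2}$.

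First, I would write
\[
\cost^{(s)}(G) \;=\; \sum_{j=1}^{W} \frac{(c_j^{(s)} + n - 1)(n - c_j^{(s)})}{2}
\]
via \cref{thm:costsimformula}. Next I would verify term-wise non-negativity: since $c_j^{(s)}$ is the number of connected components of a subgraph on $n$ vertices, we have $1 \le c_j^{(s)} \le n$, whence $c_j^{(s)} + n - 1 \ge n > 0$ and $n - c_j^{(s)} \ge 0$. Thus each summand is non-negative, and dropping all but the $j=1$ term only decreases the right-hand side.

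Finally, I would justify $c_1^{(s)} = 1$: the subgraph $G_1^{(s)}$ contains every edge of $G$ of weight at least $1$, which (since edge weights are in $\{1,\dots,W\}$) is all of $G$. As $G$ is assumed connected, $c_1^{(s)} = 1$. Substituting gives
\[
\cost^{(s)}(G) \;\ge\; \frac{(c_1^{(s)} + n - 1)(n - c_1^{(s)})}{2} \;=\; \frac{(1 + n - 1)(n - 1)}{2} \;=\; \frac{n(n-1)}{2},
\]
as claimed. There is no real technical obstacle here: the whole argument is a one-line consequence of \cref{thm:costsimformula} together with the trivial bounds $1 \le c_j^{(s)} \le n$ and the connectedness of $G$ forcing $c_1^{(s)} = 1$. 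The only thing worth stating explicitly is that non-negativity of every term lets us keep just the $j=1$ contribution without losing the inequality.
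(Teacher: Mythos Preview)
Your proposal is correct and matches the paper's approach exactly: the paper derives this fact directly from \cref{thm:costsimformula} together with the observation that $c_j^{(s)}\ge c_1^{(s)}=1$ for all $j$ (stated just before the fact), which is precisely the non-negativity-and-keep-the-$j=1$-term argument you gave.
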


\begin{fact}\label{fact:lbcostsim-maxst}
It holds that $\cost^{(s)}(G)\geq \frac{n}{2}\cdot\cost(\mathrm{MaxST})$.
\end{fact}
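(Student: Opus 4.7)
The plan is to prove Fact~\ref{fact:lbcostsim-maxst} by a direct term-by-term comparison using the closed-form expressions that are already in hand. From Theorem~\ref{thm:costsimformula} we have
\[
\cost^{(s)}(G) = \sum_{j=1}^{W} \frac{(c_j^{(s)} + n - 1)(n - c_j^{(s)})}{2},
\]
and from the remark immediately after Theorem~\ref{thm:costsimformula} we have $\cost(\mathrm{MaxST}) = \sum_{j=1}^{W} (n - c_j^{(s)})$. So it suffices to establish the per-index inequality
\[
\frac{(c_j^{(s)} + n - 1)(n - c_j^{(s)})}{2} \;\geq\; \frac{n}{2}\,(n - c_j^{(s)})
\]
for every $j \in \{1,\dots,W\}$, and then sum over $j$.

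For each fixed $j$, I would split into two cases. If $n - c_j^{(s)} = 0$, both sides vanish and the inequality is trivial. Otherwise $n - c_j^{(s)} > 0$, and dividing both sides by $(n - c_j^{(s)})/2$ reduces the claim to $c_j^{(s)} + n - 1 \geq n$, i.e.\ $c_j^{(s)} \geq 1$. This holds because $c_j^{(s)}$ counts the connected components of the (nonempty) vertex set in the subgraph $G_j^{(s)}$, so it is at least one.

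Summing the per-index inequality over $j = 1,\dots,W$ then yields
\[
\cost^{(s)}(G) \;\geq\; \frac{n}{2}\sum_{j=1}^{W}(n - c_j^{(s)}) \;=\; \frac{n}{2}\cdot \cost(\mathrm{MaxST}),
\]
which is exactly the claim. There is no real obstacle here: the proof is a one-line consequence of the two closed-form identities and the trivial bound $c_j^{(s)} \geq 1$; the only thing to be careful about is handling the case $n = c_j^{(s)}$ (i.e.\ the subgraph $G_j^{(s)}$ is edgeless) so that no division by zero is performed.
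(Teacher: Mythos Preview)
Your proof is correct and follows essentially the same approach as the paper: the paper states that both Fact~\ref{fact:lbcostsim-n2} and Fact~\ref{fact:lbcostsim-maxst} follow from Theorem~\ref{thm:costsimformula} together with the bound $c_j^{(s)}\ge c_1^{(s)}=1$, which is exactly the per-term comparison you carry out.
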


\subsection{Estimating the Clustering Cost}
The formula of $\cost^{(s)}(G)$ in \Cref{thm:costsimformula} includes a sum of products $(c_j^{(s)}+n-1)\cdot (n-c_j^{(s)})$. Our approach is to estimate each product in $\tilde{O}(Wd\poly(1/\varepsilon))$ time with small additive error. 
To do so, we note that by using \cref{alg:appncc}, the first term $c_j^{(s)}+n-1$ can be estimated with an additive error of
$O(\varepsilon n)$, which is sufficient for our purpose. For the second term $D_j:=n-c_j^{(s)}$, 
it is also tempting to directly applying \Cref{alg:appncc} to approximate it. However, the resulting additive error will be too large when $c_j^{(s)}$ is large, i.e., close to $n$. 

\subsubsection{Approximating $D_j$}

\paragraph{High Level Idea of Approximating $D_j$.} To resolve the above issue, we relate $D_j$ to the number of \emph{non-isolated} vertices and the connected components in the graph induced by such vertices in the threshold graph $G_j^{(s)}$. 

Specifically, for a given graph $G$ and its subgraph $H$ with the same vertex set, we define $H_{\nis}$ to be the subgraph of $H$ induced by all non-isolated vertices w.r.t. $H$.
The number of vertices in $H_{\nis}$ is defined as $n'$,
and the number of connected components is $c'$. It is important to note that the number of connected components $\textrm{cc}(H)$ of $H$ is exactly $n-n'+c'$. Thus, the quantity $n-\textrm{cc}(H)=n'-c'$, and one can approximate $n-\textrm{cc}(H)$ by approximating $n'$ and $c'$ respectively. We remark that the edges of subgraph $H$ are implicitly defined, so we must
inspect \emph{all} edges incident to a vertex $v$ in $G$ to find its neighbors in $H$. Based on this, we give an algorithm for approximating the $n-\textrm{cc}(H)=n'-c'$ with small enough additive error in \cref{alg:appncc_sim}. 

Now we give a bit more details of \cref{alg:appncc_sim}, we sample vertices to simultaneously estimate
the number $\textrm{cc}(H)$ of connected components, 
the number $c'$ of connected components in $H_{\nis}$ (each of which is a component with at least $2$ vertices in $H$) and
the number $n'$ of non-isolated vertices. If there are only a few isolated vertices, we estimate $ n - \textrm{cc}(H) $ using $ n - \hat{c} $, where $ \hat{c} $ is an estimate of $ \textrm{cc}(H) $. Otherwise, we estimate $ n - \textrm{cc}(H) = n'_j - c'_j $ using $ \hat{n'} - \hat{c'} $, where $ \hat{n'} $ and $ \hat{c'} $ are estimates of $ n' $ and $ c' $, respectively.

Then to approximate $D_j=n-c_j^{(s)}$, we let $H$ be the  threshold graph $G^{(s)}_j$, and let 
and $n'_j, c'_j$ be the number of vertices, and the number of connected components of $H_{\nis}$, respectively. 
Then, one can observe that $\textrm{cc}(H)=c^{(s)}_j=n-n'_j+c'_j$, so $D_j=n'_j-c'_j$, which can be approximated by \cref{alg:appncc_sim} with appropriate input parameters.

\begin{algorithm}[h!]
    \DontPrintSemicolon
    \caption{\textsc{AppNCCSim}($G, H,\varepsilon,k,d$)}
    \label{alg:appncc_sim}
    \SetKwInOut{Input}{input}
    \SetKwInOut{Output}{output}
    \Input{graph $G$, subgraph $H$, approximation parameter $\varepsilon$, threshold 
    $k$, avg. degree $d$}
    \Output{an estimate $\widehat{D}$ of $n-c$, where $c$ is the number of 
    connected component in $H$}
    \BlankLine
    choose $r=\lceil64\frac{k}{\varepsilon^2}\rceil$ vertices $v_1,\dots,v_r$ uniformly at random\;
    \For{each sampled vertex $v_i$}{
        identify neighbors of $v_i$ in $H$ by examining all incident edges in $G$\;
        \eIf{$v_i$ is isolated in $H$}{
        set $x_i=0$\;
        }
        {set $x_i=1$\;}
    }
    set $\hat{n'}=\frac{n}{r}\sum_{i=1}^r x_i$\;
    choose another $r=\lceil64\frac{k}{\varepsilon^2}\rceil$ vertices 
    $u_1,\dots,u_r$ uniformly at random\;
    set threshold $\Gamma=\lceil4\frac{k}{\varepsilon}\rceil$ and $d^{(G)}=d\cdot\Gamma$\;
    \For{each sampled vertex $u_i$}{
        set $\alpha_i=0$, $\beta_i=0$\;
        
        identify neighbors of $u_i$ in $H$ by examining all incident edges in $G$\;
        let $d_{u_i}^{(G)}$ be the degree of $u_i$ in $G$\;
        \eIf{$u_i$ is isolated in $H$}
        {set $\beta_i=1$\;}
        {
            (*) flip a coin\;
            
            \If{(heads) \& (\# vertices visited $\in H<\Gamma$ during BFS) \& (no visited vertex $\in H$ has degree in $G>d^{(G)}$ during BFS)}{
            resume BFS on $H$, doubling the number of visited edges in $G$\;
            
            \eIf{this allows BFS on $H$ to complete}
            {set $\alpha_i=\beta_i=d_{u_i}^{(G)}\cdot 2^{\text{\# coin flips}}/\text{\#edges visited in $G$}$\;
            }{go to (*)\;}
            }
        }    
    }
    set $\hat{c}=\frac{n}{r}\sum_{i=1}^r\beta_i$,
    $\hat{c'}=\frac{n}{r}\sum_{i=1}^{r}\alpha_i$\;
    \eIf{$\hat{n'}< 0.5n$}{
    \Return $\widehat{D}=\hat{n'}-\hat{c'}$\;
    }
    {
    \Return $\widehat{D}=n-\hat{c}$\;
    }
\end{algorithm}

\paragraph{Analysis of \cref{alg:appncc_sim}.} We first provide an approximation guarantee of \cref{alg:appncc_sim} in the following lemma. 

\begin{lemma}%
\label{lem:appncc_sim}
Let $0<\varepsilon<1$, and $k$ is an integer greater than $10$.
Let $r=\lceil\frac{64k}{\varepsilon^2}\rceil$, and $\Gamma=\lceil\frac{{4k}}{\varepsilon}\rceil$.
Suppose the average degree $d$ of graph $G$ is known, and set $d^{(G)}=d\cdot\Gamma=d\cdot\lceil\frac{{4k}}{\varepsilon}\rceil$.
Given access to the graph $G$ and an implicit subgraph $H\subseteq G$ 
(where $H$ shares the same vertex set as $G$, and the edges of $H$ are determined by evaluating 
conditions on the edges of $G$), \Cref{alg:appncc_sim} runs in time $O(\frac{{k}}
{\varepsilon^2}d\log(\frac{k}{\varepsilon} d))$ and outputs a value $\widehat{D}$ such that
\[
\abs{\widehat{D}-D}\leq\varepsilon\cdot\max\left\{\frac{n}{k},\min\{D,n-D\}\right\}
\]

with error probability at least $\frac{3}{4}$,
where $D=n-c$ and $c$ is the number of connected components in $H$.
\end{lemma}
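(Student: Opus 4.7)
The plan is to first establish the key identity $D = n - c = n' - c'$, then prove concentration for each of the three estimators $\hat{n'}, \hat{c}, \hat{c'}$, and finally use case analysis on the algorithm's decision $\hat{n'}$ versus $n/2$ to bound $|\widehat{D} - D|$. The identity follows because every isolated vertex in $H$ is its own singleton component, so $c = (n - n') + c'$ and hence $D = n - c = n' - c'$; this is the structural observation that motivates using $\hat{n'} - \hat{c'}$ as one estimator and $n - \hat{c}$ as the other.

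For concentration, I would analyze each of $\hat{n'}, \hat{c}, \hat{c'}$ by Chebyshev's inequality, mirroring the proof of \cref{lem:appncc}. For $\hat{n'}$ the $x_i$ are Bernoulli indicators with $\E[x_i] = n'/n$, so $\E[\hat{n'}] = n'$ and $\Var[\hat{n'}] \le n\,n'/r$; with $r = \lceil 64k/\varepsilon^2\rceil$, splitting into the cases $n' \ge n/k$ and $n' < n/k$ yields $|\hat{n'} - n'| \le (\varepsilon/8)\max\{n/k, n'\}$ with probability at least $15/16$. For $\hat{c}$ the argument from \cref{lem:onbetai,lem:appncc} applies verbatim, giving $|\hat{c} - c| \le (\varepsilon/8)\max\{n/k, c\}$. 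For $\hat{c'}$, observe that $\alpha_i = \beta_i$ exactly when the sampled vertex $u_i$ is non-isolated in $H$ (its component has size $\ge 2$), so a variance bound analogous to \cref{lem:onbetai} yields $\E[\hat{c'}] = c'_U$ and $\Var[\hat{c'}] \le 2n c'_U/r$, where $c'_U$ differs from $c'$ by at most $2n/\Gamma = O(\varepsilon n/k)$ from truncating large or high-degree components; this gives $|\hat{c'} - c'| \le (\varepsilon/8)\max\{n/k, c'\}$ with probability at least $15/16$. A union bound yields all three estimates simultaneously with probability at least $3/4$.

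The main case analysis splits on the algorithm's decision. In Case A ($\hat{n'} < n/2$), the output is $\hat{n'} - \hat{c'}$, whose error is at most $|\hat{n'} - n'| + |\hat{c'} - c'| \le (\varepsilon/8)(n' + c') + O(\varepsilon n/k)$. Using $c' \le n'/2$ (each nontrivial component has at least two vertices), we get $D = n' - c' \ge n'/2$, so $n' \le 2D$ and the error is at most $(3\varepsilon/16)n' \le (3\varepsilon/8)D$. If $D \le n/2$ this matches the target $\varepsilon\cdot\max\{n/k, D\}$ up to the constants built into $r$; if instead $D > n/2$, the concentration of $\hat{n'}$ combined with $\hat{n'} < n/2$ forces $n' \le (1+O(\varepsilon))n/2$, hence $n - D \ge n - n' \ge (1-O(\varepsilon))n/2$, which again dominates the error. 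In Case B ($\hat{n'} \ge n/2$), the output is $n - \hat{c}$ with error $|\hat{c} - c| \le (\varepsilon/8)(n-D) + O(\varepsilon n/k)$. When $D \ge n/2$ this is immediately at most $\varepsilon(n-D) = \varepsilon\min\{D, n-D\}$; when $D < n/2$, the condition $\hat{n'} \ge n/2$ forces $n' \ge (1-O(\varepsilon))n/2$ and hence $D \ge n'/2 \ge (1-O(\varepsilon))n/4$, so $n-D = O(D)$ and the error is $O(\varepsilon D)$, again matching the target.

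The main obstacle is the intermediate regime where $D \approx n/2$: here the ``correct'' error target flips between $\varepsilon D$ and $\varepsilon(n-D)$, and the algorithm must commit to one estimator based only on the noisy quantity $\hat{n'}$. The argument succeeds because both branches of the decision rule are self-correcting -- whichever side of $n/2$ the estimate $\hat{n'}$ falls on, the corresponding estimator's error is dominated by the target for that regime, thanks to the sandwich $D \in [n'/2, n']$. The running-time analysis is identical to that of \cref{lem:appncc}: truncated BFS with the coin-doubling trick contributes expected $O(d\log(\Gamma d))$ work per sampled vertex, so across the $2r = O(k/\varepsilon^2)$ samples in the two rounds the total expected cost is $O\!\left(\tfrac{k}{\varepsilon^2}\, d\log(\tfrac{k}{\varepsilon}d)\right)$.
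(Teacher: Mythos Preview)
Your proposal is correct and shares the paper's core ingredients: the identity $D = n' - c'$, the sandwich $D \le n' \le 2D$ coming from $2c' \le n'$, and Chebyshev concentration for each of $\hat{n'}$, $\hat{c}$, $\hat{c'}$. The chief organizational difference is that the paper splits into four cases on the \emph{true} value of $D$ (the ranges $[0,n/k]$, $(n/k,0.2n]$, $(0.2n,0.6n]$, $(0.6n,n]$) and in each argues which branch the algorithm takes with high probability, whereas you condition directly on the algorithm's branch and then split on $D$ versus $n/2$; both are valid, and your decomposition is arguably tidier since it follows the algorithm's own flow.

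One minor caveat: the constant $\varepsilon/8$ in your concentration claim for $\hat{n'}$ is too optimistic for $r = \lceil 64k/\varepsilon^2 \rceil$. For instance, when $n' \approx n/k$, Chebyshev gives
\[
\Pr\bigl[|\hat{n'} - n'| \ge (\varepsilon/8)n'\bigr] \;\le\; \frac{64(n-n')}{r\varepsilon^2 n'} \;\le\; \frac{n-n'}{k\,n'} \;\approx\; 1 - \tfrac{1}{k},
\]
which is not small. Replacing $\varepsilon/8$ by $\varepsilon/2$ does give failure probability at most $1/16$ in both regimes, and the paper similarly ends by rescaling $\varepsilon \to \varepsilon/4$; so this is a bookkeeping issue, not a gap in the approach.
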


Consider the \cref{alg:appncc_sim}, we have the following lemma on the estimators $\hat{n'}$ and $\hat{c'}$.

\begin{lemma}\label{lem:on_n'_c'} 
    Let $r,\Gamma,d^{(G)}$ be the same parameters set in \cref{lem:appncc_sim}.
    Let $U$ be the set of vertices that lie in components in subgraph $H_{nis}$ with fewer 
    than $\Gamma$ vertices; and all of these vertices in the original graph 
    are of degree at most $d^{(G)}$. Let $c'_U$ denote the number of connected 
    components in $H_{nis}[U]$. Then we have,
    \[
    \E[\hat{n'}]=n',\ \Var[n']=\frac{n'(n-n')}{r}
    \]

    And $\abs{\hat{c'}-c'}\leq\max\{\frac{n}{k},c'\}$, with probability at least $\frac{7}{8}$.
\end{lemma}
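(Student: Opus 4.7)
The plan is to handle the two estimators separately, following the template of \cref{lem:onbetai,lem:appncc}, since the algorithm for $\hat{c'}$ is essentially the CRT-style estimator applied to the subgraph $H_{\nis}$ rather than $H$, and the algorithm for $\hat{n'}$ is a plain Bernoulli sampling estimator.

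For $\hat{n'}$, observe that each $x_i$ is the indicator that a uniformly sampled vertex $v_i$ is non-isolated in $H$, so $x_i \sim \mathrm{Bernoulli}(n'/n)$ and the $x_i$'s are mutually independent. Then $\E[\hat{n'}] = \frac{n}{r} \cdot r \cdot \frac{n'}{n} = n'$, and by independence
\[
\Var[\hat{n'}] = \frac{n^2}{r^2}\cdot r \cdot \frac{n'}{n}\left(1 - \frac{n'}{n}\right) = \frac{n'(n-n')}{r}.
\]

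For $\hat{c'}$, I would repeat the derivation in \cref{lem:onbetai} but now with $H_{\nis}$ in place of $H$. The key observation is that whenever $u_i$ is non-isolated in $H$, its connected component in $H$ coincides with its connected component in $H_{\nis}$, so the truncated BFS from $u_i$ on $H$ explores exactly the same vertices and edges as the corresponding BFS on $H_{\nis}$. When $u_i$ is isolated in $H$ we set $\alpha_i = 0$ (unlike $\beta_i = 1$), so isolated vertices contribute nothing to $\hat{c'}$, which is the right behaviour since they are not counted by $c'$ either. Mirroring the derivation of \cref{lem:onbetai}, one obtains $\E[\alpha_i] = c'_U / n$ and $\Var[\alpha_i] \leq 2c'_U/n$ (using $\alpha_i \leq 2$), and therefore $\E[\hat{c'}] = c'_U$ and $\Var[\hat{c'}] \leq 2n c'_U/r \leq 2nc'/r$.

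To pass from $c'_U$ back to $c'$, I would bound the number of components of $H_{\nis}$ that are excluded from $U$ by the usual two-source counting argument: by size, at most $n/\Gamma$ components of $H_{\nis}$ contain more than $\Gamma$ vertices; by degree, at most $nd/d^{(G)} = n/\Gamma$ components contain a vertex of degree more than $d^{(G)}$ in $G$. Hence $|c' - c'_U| \leq 2n/\Gamma \leq \varepsilon n/(2k)$. Combining this with Chebyshev's inequality in the two regimes $c' < n/k$ and $c' \geq n/k$, exactly as in the proof of \cref{lem:appncc}, the choice $r = \lceil 64k/\varepsilon^2 \rceil$ makes each failure probability at most $1/8$, yielding $|\hat{c'} - c'| \leq \varepsilon \max\{n/k, c'\}$ with probability at least $7/8$. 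The main (routine) obstacle is verifying that the BFS truncation argument carries over when the implicit target subgraph is $H_{\nis}$ rather than $H$; this is immediate once we note that non-isolated vertices in $H$ have identical local neighborhoods in $H$ and $H_{\nis}$, so the same size and degree thresholds $\Gamma$ and $d^{(G)}$ suffice. The rest is a verbatim adaptation of the variance and Chebyshev bounds from \cref{lem:appncc}.
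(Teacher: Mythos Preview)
Your proposal is correct and follows essentially the same approach as the paper: both arguments compute $\E[x_i]$ and $\Var[x_i]$ directly for the Bernoulli estimator $\hat{n'}$, then for $\hat{c'}$ observe that $\alpha_i$ coincides with the CRT-style $\beta_i$ except on isolated vertices (where $\alpha_i=0$), derive $\E[\alpha_i]=c'_U/n$ and $\Var[\alpha_i]\le 2c'_U/n$, bound $|c'-c'_U|\le 2n/\Gamma$ via the size and degree thresholds, and finish with Chebyshev in the two regimes $c'<n/k$ and $c'\ge n/k$.
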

\begin{proof}
\textbf{Analysis on $\hat{n'}$.}
For each sampled vertex $v_i$, $x_i=1$ if $v_i\in H_{nis}$; and $x_i=0$ otherwise. Then,
\[
\E[x_i^2]=\E[x_i]=\Pr[v_i\in H_{nis}]=\frac{1}{n}\sum_{v_i\in V}\frac{n'}{n}
=\frac{n'}{n},\ 
\Var[x_i]=\frac{n'}{n}(1-\frac{n'}{n})
\]

As $\hat{n'}=\frac{n}{r}\sum_{i=1}^r x_i$, so
$\E[\hat{n'}]=\frac{n}{r}\cdot r\cdot \E[x_1]=n'$, and 
$\Var[\hat{n'}]=(\frac{n}{r})^2\cdot r\cdot \Var[x_1]=\frac{n'(n-n')}{r}$.

\textbf{Analysis on $\hat{c'}$.} By the definition of $c'_U$, we have, 
$c'-\frac{2n}{\Gamma}\leq c'_U\leq c'$.

Since $\alpha_i$ has the same value as $\beta_i$ in \cref{lem:onbetai}, 
except that when $u_i$ is isolated in $H$, $\alpha_i=0$, so 

\begin{align*}
    \E[\alpha_i]&=\Pr[u_i\in H_{nis}]\E[\alpha_i|u_i\in H_{nis}]+\Pr[u_i\in H\setminus H_{nis}]\E[\alpha_i|u_i\in H\setminus H_{nis}]\\
    &=\frac{n'}{n}\cdot\frac{1}{n'}\left(\sum_{u\in H_{nis}\setminus U}0+
    \sum_{u\in H_{nis}\cap U}2^{-\left\lceil\log\left(\frac{\vol(C_{u})}{d_{u}^{(G)}}\right)\right\rceil}
    \cdot2^{\left\lceil\log\left(\frac{\vol(C_{u})}{d_{u}^{(G)}}\right)\right\rceil}
    \frac{d_u^{(G)}}{\vol(C_{u})}\right)
    +\frac{n-n'}{n}\cdot 0\\
    &=\frac{c'_U}{n}
    \tag{since $\sum_{u\in U}\frac{d_u^{(G)}}{\vol(C_{u})}=c_U$}
\end{align*}

And $\Var[\alpha_i]\leq\E[\alpha_i^2]\leq2\cdot\E[\alpha_i]=\frac{2c'_U}{n}$. Then,
\[
\E[\hat{c'}]=\frac{n}{r}\cdot\sum_{i=1}^r\E[\alpha_i]
=\frac{n}{r}\cdot r\cdot\frac{c'_U}{n}=c'_{U},\ 
\Var[\hat{c'}]\leq\frac{n^2}{r^2}\cdot r\cdot\frac{2c'_U}{n}=
\frac{2c'_U n}{r}\leq\frac{2c'n}{r}
\]

If $c'< \frac{n}{k}$, by Chebyshev's inequality, 
\[
\Pr[|\hat{c'}-c'_U|\geq \frac{\varepsilon n}{2k}]\leq\frac{\Var[\hat{c'}]
\cdot 4k^2}{\varepsilon^2n^2}
\leq\frac{2nc'\cdot 4k^2}{r\cdot\varepsilon^2n^2}
\leq\frac{8c'\cdot k^2}{64k\cdot n}
\leq\frac{1}{8}
\]

The last inequality holds as $c'<\frac{n}{k}$. Therefore, the error of $\hat{c'}$ is
\[
|\hat{c'}-c'|\leq|\hat{c'}-c'_U|+|c'_U-c'|\leq\frac{\varepsilon n}{2k}+\frac{2n}{\Gamma}
\leq \frac{\varepsilon n}{2k}+\frac{\varepsilon n}{2k}
=\frac{\varepsilon n}{k}
\]

Else, when $c'\geq \frac{n}{k}$,
\[
\Pr[|\hat{c'}-c'_U|\geq\frac{\varepsilon c'}{2}]\leq\frac{4\Var[\hat{c'}]}{\varepsilon^2c'^2}
\leq\frac{8nc}{r\cdot\varepsilon^2c'^2}
\leq\frac{8n}{64k\cdot c'}
\leq\frac{1}{8}
\]

The last inequality holds as $c'\geq\frac{n}{k}$. And the error of $\hat{c'}$ is
\[
|\hat{c'}-c'|\leq|\hat{c'}-c'_U|+|c'_U-c'|\leq
\frac{\varepsilon c'}{2}+\frac{2n}{\Gamma}= \frac{\varepsilon c'}{2}
+\frac{\varepsilon n}{2k} \leq\varepsilon c'
\]

Combining both cases, we have that 
$\abs{\hat{c'}-c'}\leq \varepsilon\max\{\frac{n}{k},c'\}$, 
with probability more than $\frac{7}{8}$.
\end{proof}

Now we are ready to prove \cref{lem:appncc_sim}.

\begin{proof}[Proof of \cref{lem:appncc_sim}] 
In the graph $H_{\nis}$, each connected component has at
least two vertices, implying $2c'\leq n'$. 
Therefore, $D=n'-c'\geq 2c'-c'$, which simplifies to $0\leq c'\leq D$.
Consequently, $D\leq n'\leq2D$.
According to \cref{lem:on_n'_c'}, $\abs{\hat{c'}-c'}\leq\varepsilon\cdot\max\{\frac{n}{k}, c'\}$.
Therefore, $\abs{\hat{c'}-c'}\leq\varepsilon\cdot\max\left\{\frac{n}{k}, D\right\}$,
with probability at least $\frac{7}{8}$.

\begin{itemize}
    \item Case (I): $0\leq D\leq\frac{n}{k}$. Then $n'\leq2D\leq2\frac{n}{k}$.
    By Chebyshev's inequality,
    \[
    \Pr[\abs{\hat{n'}-n'}\geq \varepsilon\frac{n}{k}]
    \leq\frac{\Var[\hat{n'}]k^2}{\varepsilon^2 n^2}
    =\frac{n'(n-n')k^2}{r\cdot\varepsilon^2 n^2}
    \leq\frac{2\frac{n}{k}(n-n')k^2}{64\cdot kn^2}
    \leq\frac{1}{32}
    \]

    Therefore, $\hat{n'}\leq n'+\varepsilon\frac{n}{k}<(2+1)\frac{n}{k}<0.5n$, 
    as $k\geq 10$ and $\varepsilon<1$.
    In this case, we let $\widehat{D}=\hat{n'}-\hat{c'}$, then
    \[
    \abs{\widehat{D}-D}\leq\abs{\hat{n'}-n}+\abs{\hat{c'}-c'}
    \leq\varepsilon\frac{n}{k}+\varepsilon\cdot\max\left\{\frac{n}{k}, D\right\}
    \leq2\varepsilon\cdot\frac{n}{k}
    \]

    By union bound over $\hat{n'}$ and $\hat{c'}$, the above inequality is true
    with probability more than $1-(\frac{1}{32}+\frac{1}{8})\geq\frac{3}{4}$.
    
    \item Case (II): $\frac{n}{k}<D\leq0.2n$. Since $D\leq n'\leq2D$,
    by Chebyshev's inequality,
    \[
    \Pr[\abs{\hat{n'}-n'}\geq \frac{\varepsilon}{2}D]
    \leq\frac{4\Var[\hat{n'}]}{\varepsilon^2 D^2}
    =\frac{4n'(n-n')}{r\cdot\varepsilon^2D^2}
    \leq\frac{8\cdot D\cdot(n-D)}{64\cdot k\cdot D^2}
    \leq\frac{n-\frac{n}{k}}{8\cdot k\cdot\frac{n}{k}}
    \leq\frac{1}{8}
    \]

    Therefore, we have $\hat{n'}\leq n'+\frac{\varepsilon}{2}D
    \leq(2+\frac{\varepsilon}{2})D<0.5n$.
    In this case, we let $\widehat{D}=\hat{n'}-\hat{c'}$. Then,
    \[
    \abs{\widehat{D}-D}\leq\abs{\hat{n'}-n}+\abs{\hat{c'}-c'}
    \leq\frac{\varepsilon}{2}D+\varepsilon\cdot\max\left\{\frac{n}{k}, D\right\}
    \leq2\varepsilon D
    \]

    The above inequality is true with probability at least 
    $1-(\frac{1}{8}+\frac{1}{8})=\frac{3}{4}$.

    \item Case (III): ${0.6n<D\leq n}$. Then $n-n'\leq n-D\leq 0.4n$. By Chebyshev's inequality,
    \[
    \Pr[\abs{\hat{n'}-n'}\geq \varepsilon\cdot0.1n]\leq\frac{\Var[\hat{n'}]}{\varepsilon^2 \cdot0.01n^2}
    =\frac{n'(n-n')}{r\cdot\varepsilon^2\cdot0.01n^2}
    \leq\frac{n-n'}{r\cdot\varepsilon^2\cdot0.01n}
    \leq\frac{0.4n}{64k\cdot 0.01n}
    \leq\frac{1}{16}
    \]
    
    The last inequality follows from the assumption that $k\geq 10$.
    
    Then with probability at least $1-\frac{1}{16}$, $\hat{n'}\geq n'-\varepsilon\cdot0.1n\geq0.6n-\varepsilon\cdot0.1n\geq 0.5n$. 
Conditioned on this event, we let $\widehat{D}=n-\hat{c}$.
    From the proof of \cref{lem:appncc}, we know that 
    $\abs{\widehat{D}-D}\leq\abs{\hat{c}-c}\leq\varepsilon\cdot
    \max\{\frac{n}{k},c\} = \varepsilon\cdot\max\{\frac{n}{k},n-D\}$, with probability at least $\frac{7}{8}$. Thus, with probability 
    at least $1-(\frac{1}{16}+\frac{1}{8})\geq\frac{3}{4}$, $\abs{\widehat{D}-D}\leq\varepsilon\cdot\max\{\frac{n}{k},n-D\}$.

    \item Can (IV): $0.2n<D\leq 0.6n$. If the algorithm uses
    $\widehat{D}=n-\hat{c}$ to estimate $D$, then with probability at least 
    $\frac{7}{8}\geq\frac{3}{4}$, we have $\abs{\widehat{D}-D}
    \leq\abs{\hat{c}-c}\leq \varepsilon c\leq4\varepsilon D$ (as when $D>0.2n$, $c=n-D<0.8n<4D$).
    
    On the other hand, if the algorithm uses $\widehat{D}=\hat{n'}-\hat{c'}$ to estimate $D$,
    we have that $0.2n\leq n'<n$. By Chebyshev's inequality,
    \[
    \Pr[\abs{\hat{n'}-n'}\geq \varepsilon\cdot0.2n]
    \leq\frac{\Var[\hat{n'}]}{\varepsilon^2\cdot0.04 n^2}
    =\frac{n'(n-n')}{r\cdot\varepsilon^2\cdot0.04 n^2}
    \leq\frac{n'0.8n}{r\cdot\varepsilon^2\cdot0.04 n^2}
    \leq\frac{2}{64k\cdot0.1}
    \leq\frac{1}{16}
    \]

    Therefore, with probability more than $1-(\frac{1}{16}+\frac{1}{8})\geq\frac{3}{4}$, we have
    \[
    \abs{\widehat{D}-D}\leq\abs{\hat{n'}-n}+\abs{\hat{c'}-c'}
    \leq\varepsilon\cdot0.2n+\varepsilon\cdot\max\left\{\frac{n}{k}, D\right\}
    \leq\varepsilon D+\varepsilon D
    \leq2\varepsilon D.
    \]
\end{itemize}

Combining the above case analysis, we have that with probability more than $\frac{3}{4}$,
\begin{displaymath}
\abs{\widehat{D}-D} \leq \left\{ \begin{array}{ll}
4\varepsilon\cdot\max\left\{\frac{n}{k},D\right\} & \textrm{if $0\leq D\leq 0.5n$}\\
2\varepsilon\cdot\max\{\frac{n}{k},n-D\} & \textrm{if $0.5n<D\leq n$}\\
\end{array} \right.
\end{displaymath}

Replacing $\varepsilon$ with $4\varepsilon$, the statement of the lemma is satisfied.

\textbf{Running time analysis.} 
Similar to analysis of running time of \cref{alg:appncc},
consider our truncation approach. 
It ensures that there are at most $O(\Gamma^2 d)$ edges visited during BFS, starting from a sampled vertex $u$.
Consequently, the number of coin flips is at most $O(\log(\Gamma d)$, and the expected running time associated with the sampled vertex $u$ is $O(d_u\log(\Gamma d)$.
Since we sample $r$ vertices and each vertex $u\in V$ is sampled with probability $\frac{1}{n}$, the overall running time of \cref{alg:appncc_sim} is $O(r\cdot\frac{1}{n}\sum_{u\in V} d_u\log(\Gamma d))$.
Given that we set $r=\frac{k}{\varepsilon^2}$, and $\Gamma=\frac{k}{\varepsilon}$, so the running time is $O(\frac{k}{\varepsilon^2}d\log(\frac{k}{\varepsilon}d))$.

\end{proof}

Similar to \cref{alg:median-trick}, by applying median trick to \cref{alg:appncc_sim}, 
we have the following corollary.
\begin{corollary}\label{cor:appncc_sim_delta}
For every $1\leq i \leq W$, an appropriately amplified version of \cref{alg:appncc_sim}
accepts the same inputs as \cref{alg:appncc_sim}, except for an additional 
success parameter $\delta$;
and it outputs $\widehat{D}$ satisfying
\[
\abs{\widehat{D}-D}\leq\varepsilon\cdot\max\left\{\frac{n}{k},\min\{D,n-D\}\right\}
\]

with probability at least $1-\delta$.
Here, $D=n-c$ where $c$ is the number of connected components in the subgraph $H$. 
The algorithm runs in time $O(\frac{k\log(1/\delta)}{\varepsilon^2}d\log(\frac{k}{\varepsilon}d))$.
\end{corollary}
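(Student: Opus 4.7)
The plan is to apply the standard median-trick amplification to \cref{alg:appncc_sim}, exactly analogous to how \cref{alg:median-trick} was derived from \cref{alg:appncc}. Concretely, I would run $K = \lceil C \log(1/\delta) \rceil$ independent copies of \cref{alg:appncc_sim} with the given inputs $(G, H, \varepsilon, k, d)$, obtain outputs $\widehat{D}_1, \dots, \widehat{D}_K$, and return their median $\widehat{D}$. Here $C$ is a sufficiently large absolute constant to be fixed by the Chernoff calculation below.

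The correctness analysis rests on a standard observation about medians: if strictly more than half of the $\widehat{D}_i$ lie in the interval $I := [D - \tau, D + \tau]$, where $\tau := \varepsilon \cdot \max\{n/k, \min\{D, n-D\}\}$, then the median must also lie in $I$. Let $X_i$ be the indicator that $\widehat{D}_i \in I$. By \cref{lem:appncc_sim}, each $X_i$ is independently $1$ with probability at least $3/4$, so $\mathbb{E}[\sum_i X_i] \geq 3K/4$. Applying the Chernoff--Hoeffding bound (\cref{thm:chernoff}) with deviation parameter $1/3$, we have
\[
\Pr\Bigl[\sum_{i=1}^K X_i \leq K/2\Bigr] \;\leq\; \Pr\Bigl[\sum_i X_i \leq (1 - 1/3)\,\mathbb{E}[\tsum X_i]\Bigr] \;\leq\; e^{-\mathbb{E}[\sum X_i]/18} \;\leq\; e^{-K/24}.
\]
Choosing $C \geq 24$ ensures that $e^{-K/24} \leq \delta$, so with probability at least $1 - \delta$, more than half of the copies are good and hence $\widehat{D} \in I$, which is precisely the desired guarantee $|\widehat{D} - D| \leq \varepsilon \cdot \max\{n/k, \min\{D, n-D\}\}$.

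For the running time, each of the $K$ invocations takes expected time $O(\frac{k}{\varepsilon^2} d \log(\frac{k}{\varepsilon} d))$ by \cref{lem:appncc_sim}, so summing over all copies gives total expected time $O(\frac{k \log(1/\delta)}{\varepsilon^2} d \log(\frac{k}{\varepsilon} d))$, and computing the median of $K$ numbers is an additional $O(K) = O(\log(1/\delta))$ term absorbed into the bound. I do not foresee any real obstacle here: the argument is a verbatim instantiation of the probability-amplification template already used for \cref{cor:numberccappdelta}, with the only mild subtlety being that \cref{lem:appncc_sim} already yields a two-sided additive approximation, so the ``falls in the interval $I$'' framing makes the median preservation property immediate without any additional case analysis.
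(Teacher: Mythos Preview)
Your proposal is correct and matches the paper's approach exactly: the paper simply remarks that the corollary follows by applying the median trick to \cref{alg:appncc_sim}, analogous to \cref{alg:median-trick}, and does not spell out the details. Your sketch fills in precisely that standard amplification argument (independent copies, Chernoff on the count of good outputs, median lands in the good interval), so there is nothing to add.
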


\subsubsection{Adapting Binary Search for Clustering Cost Estimation}
Note that $(D_1, \dots, D_W)$ is a non-increasing array with values in the range $[0, n-1]$. 
Therefore, we can utilize binary search as described in \cref{sec:binarysearch}, similar to the approach used for the distance case, to obtain effective estimators for all elements in this sequence. Specifically, we will apply binary search on the sequence $\widehat{D}=(\widehat{D}_1,\dots,\widehat{D}_W)$ with some newly defined search keys $B_i$'s. Now we define $\widehat{D}_j$'s in the following lemma.

\begin{lemma}\label{lem:hatDj}
For any $1\leq j\leq W$, let $\widehat{D}_j=\min\{\max\{\widehat{D}_j',0\},n-1\}$, where $\widehat{D}_j'$ is the output of \cref{cor:appncc_sim_delta} with input $G$, $H=G_j^{(s)}$, $\varepsilon/8$, 
$k=W$, $d$, $\delta=1/(8W)$. 
Then with probability at least $7/8$,
it holds that for \emph{all} $1\leq j\leq W$, $\widehat{D}_j\in[0,n-1]$, and
\[
\abs{\widehat{D}_j-D_j}\leq T_j:=\frac{\varepsilon}{8}\cdot\max\left\{\frac{n}{W},\min\{D_j,n-D_j\}\right\}.
\]
\end{lemma}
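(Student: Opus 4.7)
The plan is to mirror the proof of \cref{lem:hatcj} almost verbatim, since the statement is the natural analogue of that lemma for the similarity setting, with \cref{cor:appncc_sim_delta} taking the role of \cref{cor:numberccappdelta}. First I would fix any index $j\in\{1,\dots,W\}$ and invoke \cref{cor:appncc_sim_delta} with the specified parameters $G$, $H=G_j^{(s)}$, $\varepsilon/8$, $k=W$, $d$, and $\delta=1/(8W)$. By the guarantee of that corollary, the raw output $\widehat{D}_j'$ satisfies
\[
\bigl|\widehat{D}_j' - D_j\bigr| \;\leq\; \frac{\varepsilon}{8}\cdot\max\!\left\{\frac{n}{W},\,\min\{D_j,n-D_j\}\right\} \;=\; T_j
\]
with probability at least $1-\frac{1}{8W}$, where $D_j = n - c_j^{(s)}$.

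Next I would apply a union bound over all $W$ values of $j$. This yields that the event $\{|\widehat{D}_j'-D_j|\leq T_j \text{ for all } j\}$ holds with probability at least $1 - W\cdot\frac{1}{8W} = \frac{7}{8}$, matching the claimed success probability.

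Finally I need to argue that the clipping step $\widehat{D}_j=\min\{\max\{\widehat{D}_j',0\},n-1\}$ preserves (and in fact cannot worsen) the error bound. The key observation is that $D_j = n - c_j^{(s)}\in[0,n-1]$, since the subgraph $G_j^{(s)}$ has between $1$ and $n$ connected components when $G$ is connected. Therefore, if $\widehat{D}_j'<0$ we have $\widehat{D}_j=0$ and $|\widehat{D}_j - D_j| = D_j \leq |\widehat{D}_j' - D_j|$; symmetrically, if $\widehat{D}_j'>n-1$ the clipping only moves the estimate closer to $D_j$; and if $\widehat{D}_j'\in[0,n-1]$ then $\widehat{D}_j=\widehat{D}_j'$ and the bound is unchanged. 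In all cases $|\widehat{D}_j-D_j|\leq|\widehat{D}_j'-D_j|\leq T_j$, and by construction $\widehat{D}_j\in[0,n-1]$. The statement then follows.

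There is no real obstacle here beyond carefully tracking the parameters: the only thing to verify is that $k=W\geq 10$ so that \cref{lem:appncc_sim} (and hence \cref{cor:appncc_sim_delta}) applies, which is ensured by the standing assumption $W>10$ used elsewhere in the similarity results, and that $\log(1/\delta)=\log(8W)=O(\log W)$ so the amplification overhead is harmless for the later running-time analysis.
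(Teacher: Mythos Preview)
Your proposal is correct and follows essentially the same approach as the paper: invoke \cref{cor:appncc_sim_delta} for each $j$ to get the $T_j$-bound on $\widehat{D}_j'$ with failure probability $1/(8W)$, union bound over all $j$, and observe that clipping to $[0,n-1]$ can only decrease the error since $D_j\in[0,n-1]$. Your treatment is in fact slightly more careful than the paper's (you explicitly justify $D_j\in[0,n-1]$ and spell out the clipping cases), and your side remark about the prerequisite $k=W\ge 10$ for \cref{lem:appncc_sim} is a valid observation.
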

\begin{proof}
According to \cref{cor:appncc_sim_delta}, for any $j\in[W]$, we know that with probability of $1-\varepsilon/(8W)$, the estimator $\widehat{D}_j'$ satisfies
$\abs{\widehat{D}_j'-D_j}\leq
\frac{\varepsilon}{8}\cdot\max\left\{\frac{n}{W},\min\{D_j,n-D_j\}\right\}=T_j$. 
Besides, by rounding $\widehat{D}_j'$ to $\widehat{D}_j=\min\{\max\{\widehat{D}_j',0\},n-1\}$, we can guarantee that $D_j\in[0,n-1]$.
If $\widehat{D}_j'$ is out of the range $[0,n-1]$, then the rounding process decreases the gap between $\widehat{D}_j$ and $D_j$, and ensures the error remains within the error bound $T_j$.

By the union bound on all estimates
$\widehat{D}_1,\dots,\widehat{D}_W$, with probability at least $\frac{7}{8}$, 
for \emph{all} $j\in [W]$, $\widehat{D}_j\in[0,n-1]$, and $\abs{\widehat{D}_j-D_j}\leq T_j$.
\end{proof}

\emph{Throughout the following, we will assume that for \emph{all} $j$, the inequality $\abs{\widehat{D}_j-D_j}\leq T_j$ holds, and $\widehat{D}_j\in[0,n-1
]$}. According to \cref{lem:hatDj}, this occurs with probability at least $7/8$. 

Now we define the endpoints of intervals which partition $[0,n-1]$.

\begin{definition}\label{def:interval_similarity}
    Let $0<\varepsilon<1$, $t_1$ be the largest integer such that $n-{t_1}\frac{\varepsilon n}{W}\ge n-\frac{n}{W}$, $t_2$ be the largest integer such that $n-(1+\varepsilon)^{t_2}\frac{n}{W}\ge 0.5n$ (i.e. the largest integer such that $\frac{n}{2(1+\varepsilon)^{t_2}}\ge\frac{n}{W}$), and let $t_3$ be the largest integer such that $\frac{n}{W}(1-t_3\varepsilon)\ge\frac{\varepsilon n}{W}$. Note that $t_1=\lfloor\frac{1}{\varepsilon}\rfloor$, $t_2=\lfloor\log_{1+\varepsilon}\frac{W}{2}\rfloor$, and $t_3=\lfloor\frac{1-\varepsilon}{\varepsilon}\rfloor$.
    Then we define $B_i$ as
    \[
    B_i=
    \begin{cases}
        n-1 & \quad \text{if $i=1$}\\
        n-i\frac{\varepsilon n}{W} & \quad \text{if $1< i\le t_1$}\\
        n-(1+\varepsilon)^{i-t_1}\frac{n}{W} & \quad \text{if $t_1<i\le t_1+t_2$}\\     
        \frac{n}{2(1+\varepsilon)^{i-(t_1+t_2)}} & \quad \text{if $t_1+t_2<i\le t_1+2t_2$}\\
        \frac{n}{W}(1-(i-t_1-2t_2)\varepsilon) & \quad \text{if $t_1+2t_2<i\le t_1+2t_2+t_3$}\\
        0 & \quad \text{if $i=t:=t_1+2t_2+t_3+1$}
    \end{cases}
    \]
\end{definition}

We have the following fact.

\begin{fact}\label{fact:interval_similarity}
    It holds that
    \begin{enumerate}
        \item $t=t_1+2t_2+t_3+1=O(\frac{2}{\varepsilon}+2\log_{1+\varepsilon}W)=O(\log W/\varepsilon)$.
        
        \item When $1\le i\le t_1$, $n-\frac{n}{W}\le B_i\le n$;
        
        when $t_1<i\le t_1+t_2$, $0.5n\le B_i<n-\frac{n}{W}$;
        
        when $t_1+t_2<i\le t_1+2t_2$, $\frac{n}{W}<B_i\le 0.5n$;
        
        and when $t_1+2t_2<B_i\le t_1+2t_2+t_3$, $\frac{\varepsilon n}{W}\le B_i<\frac{n}{W}$.
        
        \item the gap between neighboring endpoints is,
        \[
        B_{i-1}-B_i=
        \begin{cases}
            \frac{\varepsilon n}{W} & \quad \text{if $1<i\le t_1$}\\
            \varepsilon(1+\varepsilon)^{i-1}\frac{n}{W}=\varepsilon(n-B_{i-1}) & \quad \text{if $t_1+2\le i\le t_1+t_2$}\\     
            (1+\varepsilon)B_i-B_i=\varepsilon B_i & \quad \text{if $t_1+t_2+2\le i\le t_1+2t_2$}\\
            \frac{\varepsilon n}{W} & \quad \text{if $t_1+2t_2+2\le i\le t_1+2t_2+t_3$}\\
        \end{cases}
        \]
        \item $B_{t_1}-B_{t_1+1}\ge n-\frac{n}{W}-(n-(1+\varepsilon)\frac{n}{W})=\frac{\varepsilon n}{W}$;

        $B_{t_1+t_2}-B_{t_1+t_2+1}\ge 0.5n-\frac{n}{2(1+\varepsilon)} =\frac{\varepsilon n}{2(1+\varepsilon)}=\varepsilon B_{t_1+t_2+1}$;

        $B_{t_1+2t_2-1}-B_{t_1+2t_2}=\varepsilon B_{t_1+2t_2}\ge\frac{\varepsilon n}{W}$;

        $B_{t_1+2t_2}-B_{t_1+2t_2+1}\ge\frac{n}{W}-\frac{n}{W}(1-\varepsilon)=\frac{\varepsilon n}{W}$

        \item 
     $n-(t_1+1)\frac{\varepsilon n}{W}<n-\frac{n}{W}$, and thus $B_{t_1}=n-t_1\frac{n}{W}<n-\frac{n}{W}(1-\varepsilon)$;
     
$n-(1+\varepsilon)^{t_2+1}\frac{n}{W}<\frac{n}{2}$, 
        and thus $n-B_{t_1+t_2}>\frac{n}{2(1+\varepsilon)}$;
        
 $\frac{n}{2(1+\varepsilon)^{t_2+1}}<\frac{n}{W}$,
        and thus $B_{t_1+2t_2}<(1+\varepsilon)\frac{n}{W}$.
    \end{enumerate}
\end{fact}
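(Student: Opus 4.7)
The plan is to verify each of the five items by direct substitution into Definition 7.4, treating the four ``interior'' pieces (two arithmetic, two geometric) separately from their boundaries, and exploiting the maximality of $t_1, t_2, t_3$ at the seams.

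For item 1, I would substitute $t_1 = \lfloor 1/\varepsilon \rfloor$, $t_3 = \lfloor (1-\varepsilon)/\varepsilon \rfloor$ (both $O(1/\varepsilon)$) and $t_2 = \lfloor \log_{1+\varepsilon}(W/2) \rfloor$, then use $\log_{1+\varepsilon} x \le 2(\ln x)/\varepsilon$ valid for $\varepsilon \in (0,1)$ to conclude $t = O(\log W / \varepsilon)$. For item 2, each range bound is obtained by plugging the extreme indices of each piece into the corresponding formula: for instance, on $t_1 < i \le t_1 + t_2$ one has $B_i = n - (1+\varepsilon)^{i-t_1}(n/W)$, which is decreasing in $i$ and equals $n - (n/W)(1+\varepsilon)$ and $n - (n/W)(1+\varepsilon)^{t_2}$ at the endpoints; the range $[0.5 n, n - n/W)$ then follows from the definition of $t_2$ as the largest integer with $(1+\varepsilon)^{t_2}(n/W) \le n/2$. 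The other three pieces are handled identically. For item 3, I would simply differentiate the piecewise formula: for the arithmetic pieces the gap is constant $\varepsilon n / W$; for the geometric pieces $B_{i-1} - B_i$ equals $\varepsilon (1+\varepsilon)^{i-1-t_1}(n/W) = \varepsilon (n - B_{i-1})$ on the first geometric block and $\varepsilon B_i$ on the second, both immediate from the closed-form expressions.

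The boundary gaps in item 4 are the only nontrivial entries. For $B_{t_1} - B_{t_1+1}$, by maximality of $t_1$ we have $B_{t_1} = n - t_1 \varepsilon n/W \ge n - n/W$, while $B_{t_1+1} = n - (1+\varepsilon)(n/W)$, so the gap is at least $(1+\varepsilon)(n/W) - n/W = \varepsilon n / W$. For $B_{t_1 + t_2} - B_{t_1 + t_2 + 1}$, by maximality of $t_2$, $B_{t_1+t_2} = n - (1+\varepsilon)^{t_2}(n/W) \ge n/2$ while $B_{t_1 + t_2 + 1} = n/(2(1+\varepsilon))$, giving a gap of at least $n/2 - n/(2(1+\varepsilon)) = \varepsilon n / (2(1+\varepsilon)) = \varepsilon B_{t_1+t_2+1}$. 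The remaining two boundary computations ($B_{t_1+2t_2-1} - B_{t_1+2t_2}$ and $B_{t_1+2t_2} - B_{t_1+2t_2+1}$) use the same idea: at the transition from the second geometric block to the final arithmetic block, maximality of $t_2$ forces $B_{t_1+2t_2} \ge n/W$, so combining with $B_{t_1+2t_2+1} = (n/W)(1-\varepsilon)$ gives the claimed $\varepsilon n/W$ gap.

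Item 5 is the cleanest: each inequality is the direct contrapositive of the maximality defining $t_1$, $t_2$, or $t_3$. For instance, $n - (t_1+1)\varepsilon n/W < n - n/W$ because $t_1$ was defined to be the largest integer with $n - t_1 \varepsilon n/W \ge n - n/W$, and this rearranges to give $B_{t_1} < n - (n/W)(1-\varepsilon)$ as stated; analogously for the other two. The main obstacle throughout is purely bookkeeping — keeping track of which piece of the definition applies at each boundary index and ensuring the floor-induced slack does not invert any inequality. Once the correspondence between each $B_i$ and its defining branch is pinned down, every item reduces to algebraic identities and the maximality constraints on $t_1, t_2, t_3$.
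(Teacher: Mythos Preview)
Your proposal is correct and matches the paper's treatment: the paper states Fact~\ref{fact:interval_similarity} without proof, relying on exactly the kind of direct piecewise verification from Definition~\ref{def:interval_similarity} (and the maximality of $t_1,t_2,t_3$) that you outline. There is nothing to compare --- your bookkeeping plan is the intended argument.
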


For each $j\in[W]$, we define the error bound as $T_j=\frac{\varepsilon}{8}\max\{\frac{n}{W},\min\{D_j,n-D_j\}\}$.
Then, we prove that the sequence of $B_i$'s defined above form a valid discretization of $[0,n-1]$ w.r.t.
$(D_1,\dots,D_W)$ and error bound $T=(T_1,\dots,T_W)$.

\begin{lemma}\label{lem:discretization_similarity}
    The sequence of endpoints $(B_1,\dots,B_t)$ defined in \cref{def:interval_similarity} is
    a valid discretization of $[0,n-1]$ w.r.t. $(D_1,\dots,D_W)$ and error bound 
    $T_j=\frac{\varepsilon}{8}\max\{\frac{n}{W},\min\{D_j,n-D_j\}\}$. %
\end{lemma}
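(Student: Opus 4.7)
The plan is to mirror the proof of \cref{lem:discretization_distance}, doing a case analysis on which of the five subranges (defined in \cref{def:interval_similarity}) the index $i$ with $B_{i+1}\le D_j\le B_i$ falls into. In each case I will use the explicit formulas and lower bounds in \cref{fact:interval_similarity} for the inter-endpoint gaps $B_{i-1}-B_i$ and $B_{i+1}-B_{i+2}$, and I will bound $\min\{D_j,n-D_j\}$ from above by the relevant endpoint (which is determined by whether $D_j$ is close to $n$, bounded away from both extremes, or close to $0$). The factor of $\tfrac18$ in $T_j$ provides the slack needed to dominate the small numerical losses that appear when crossing between subranges.

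More concretely, I partition the case analysis as follows. \textbf{(I)} If $2\le i\le t_1-1$ (the arithmetic top subrange, away from the $t_1$-boundary), then $D_j\ge B_{i+1}\ge n-(1-\varepsilon)n/W$, so $n-D_j\le n/W$ and $T_j\le \varepsilon n/(8W)$, while both $B_{i-1}-B_i$ and $B_{i+1}-B_{i+2}$ equal $\varepsilon n/W$ by \cref{fact:interval_similarity}(3); the inequality is immediate. The cases $i=1$ and $i=t_1$ are handled analogously, invoking \cref{fact:interval_similarity}(4) to bound $B_{t_1}-B_{t_1+1}$ and \cref{fact:interval_similarity}(5) to bound $n-D_j\le(1+\varepsilon)n/W$ when $i=t_1$. \textbf{(II)} If $t_1<i\le t_1+t_2$ (geometric upper subrange), then $D_j\ge B_{i+1}$ and $D_j\le B_i$ with $0.5n\le B_i\le n-n/W$, so $n-D_j\ge n/W$ and $T_j=(\varepsilon/8)(n-D_j)\le(\varepsilon/8)(1+\varepsilon)^{i-t_1+1}n/W$; the surrounding gaps $B_{i\pm 1}-B_{i\pm 2}$ scale as $\varepsilon(1+\varepsilon)^{i-t_1\pm 1}n/W$ by \cref{fact:interval_similarity}(3), so the $1/8$ factor again suffices. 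The boundary $i=t_1+t_2$ uses \cref{fact:interval_similarity}(4)'s explicit bound $B_{t_1+t_2}-B_{t_1+t_2+1}\ge \varepsilon B_{t_1+t_2+1}$. \textbf{(III)} If $t_1+t_2<i\le t_1+2t_2$ (geometric lower subrange), the situation is symmetric: here $n/W<B_i\le 0.5n$ and $D_j\le B_i$, so $\min\{D_j,n-D_j\}=D_j\le B_i$, and both gaps are $\varepsilon B_i$-sized. \textbf{(IV)} If $t_1+2t_2<i\le t_1+2t_2+t_3$ (arithmetic bottom subrange), then $D_j<n/W$, so $T_j\le\varepsilon n/(8W)$, while the surrounding gaps are at least $\varepsilon n/W$ by \cref{fact:interval_similarity}(3)(4); again immediate.

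Combining the four cases (and the six transition indices $i\in\{t_1,t_1+1,t_1+t_2,t_1+t_2+1,t_1+2t_2,t_1+2t_2+1\}$, each of which requires invoking the corresponding transition bound from \cref{fact:interval_similarity}(4)), I obtain that in every case $T_j<\min\{B_{i-1}-B_i,B_{i+1}-B_{i+2}\}$ whenever both quantities are defined (with the one-sided condition when $i=1$ or $i=t-1$), which establishes the valid discretization property.

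The main obstacle I expect is the careful book-keeping at the four subrange transitions: at each such transition one of the two surrounding gaps is arithmetic ($\varepsilon n/W$) and the other is geometric, and one must verify that the $1/8$ slack in $T_j$ together with the definitions of $t_1,t_2,t_3$ (in particular $(1+\varepsilon)^{t_2}\le W/2$ and $t_1\varepsilon\le 1$) are enough to swallow the $(1+\varepsilon)$ factors that appear when $D_j$ straddles a boundary between two subranges. All of this follows by direct computation from \cref{fact:interval_similarity}, but the verification is tedious and must be done for each transition separately.
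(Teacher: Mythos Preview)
Your proposal is correct and takes essentially the same approach as the paper's own proof: a case analysis over the five subranges of \cref{def:interval_similarity}, bounding $T_j$ via the location of $D_j$ relative to the endpoints and bounding the neighboring gaps from below via \cref{fact:interval_similarity}. The paper's proof is organized as a longer flat list of about ten cases (splitting out each transition index separately), whereas you group them into four main subranges plus six transitions, but the computations are the same.
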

\begin{proof}
We consider the sequence $(\widehat{D}_1, \dots, \widehat{D}_W)$ as the approximation of $D=(D_1,\dots,D_W)$, where each $\widehat{D}_j$ is defined as in \cref{lem:hatDj}. According to the previous assumption, it holds that for all $j$, $\widehat{D}_j\in[0,n-1]$ and $\abs{\widehat{D}_j - D_j}\leq T_j$. 
\junk{If we let $\widehat{D}'=(\widehat{D}_1+1,\dots,\widehat{D}_W+1)$, then each entry $\widehat{D}_j$ in $\widehat{D}'$ belongs to $[1,n]$, and  $\abs{\widehat{D}_j+1-(D_j+1)}=\abs{\widehat{D}_j-D_j}\leq T_j$. }

 Now we consider any $j\in [W]$. Suppose that $D_j\in[B_{i+1},B_i]$.

When $i=1$, $B_{i+1}-B_{i+2}=\frac{\varepsilon n}{W}>T_j$.
    
    When $2\le i\le t_1-1$, we have $B_{i-1}-B_{i}=\frac{\varepsilon n}{W}$
    and $B_{i+1}-B_{i+2}\ge\frac{\varepsilon n}{W}$.
    In this case, $D_j\ge B_{i+1}>n-\frac{n}{{W}}$, and as $(1+\varepsilon)^2\leq4$, 
    we have $T_j=\frac{\varepsilon n}{8 W} < \min\{B_{i-1}-B_i,B_{i+1}-B_{i+2}\}$.

    When $i=t_1$, we have $B_{i-1}-B_i=\frac{\varepsilon n}{W}$,
    and $B_{i+1}-B_{i+2}=\varepsilon(n-B_{i+1})=\varepsilon(1+\varepsilon)\frac{n}{W}$.
    In this case,
    \[
    T_j=\frac{\varepsilon}{8}\max\{\frac{n}{W},n-D_j\}
    \le\frac{\varepsilon}{8}\max\{\frac{n}{W},n-B_{i+1}\}
    =\frac{\varepsilon}{8}(1+\varepsilon)\frac{n}{W}
    \le\min\{B_{i-1}-B_i,B_{i+1}-B_{i+2}\}
    \]

    When $i=t_1+1$, we have $B_{i-1}-B_i\ge\frac{\varepsilon n}{W}$,
    and $B_{i+1}-B_{i+2}=\varepsilon(n-B_{i+1})=\varepsilon(1+\varepsilon)^2\frac{n}{W}$.
    In this case,
    \[
    T_j=\frac{\varepsilon}{8}\max\{\frac{n}{W},n-D_j\}
    \le\frac{\varepsilon}{8}\max\{\frac{n}{W},n-B_{i+1}\}
    =\frac{\varepsilon}{8}(1+\varepsilon)^2\frac{n}{W}
    \le\max\{B_{i-1}-B_i,B_{i+1}-B_{i+2}\}
    \]

    When $t_1+2\le i\le t_1+t_2-2$, we have $B_{i-1}-B_i=\varepsilon(n-B_{i-1})=
    \frac{\varepsilon}{(1+\varepsilon)^2}(n-B_{i+1})$, and $B_{i+1}-B_{i+2}=\varepsilon(n-B_{i+1})$. 
    In this case, $D_j\le B_{t_1+2}<n-\frac{n}{W}$, and $D_j\ge B_{t_2-2}>0.5n$. Thus,
    \[
    T_j=\frac{\varepsilon}{8}(n-D_j)\le\frac{\varepsilon}{8}(n-B_{i+1})
    <\frac{\varepsilon}{(1+\varepsilon)^2}(n-B_{i+1})=\min\{B_{i-1}-B_i,B_{i+1}-B_{i+2}\}
    \]

    When $i=t_1+t_2-1$, we have $B_{i-1}-B_i=\frac{\varepsilon}{(1+\varepsilon)^2}(n-B_{i+1})$.
    Since $B_{i+1}\ge0.5n$, then $B_{i+1}-B_{i+2}=B_{t_1+t_2}-B_{t_1+t_2+1}\ge\frac{\varepsilon n}{2(1+\varepsilon)}\ge\frac{\varepsilon}{1+\varepsilon}(n-B_{i+1})$.
    In this case,
    \[
    T_j=\frac{\varepsilon}{8}(n-D_j)\le\frac{\varepsilon}{8}(n-B_{i+1})
    \le\frac{\varepsilon}{(1+\varepsilon)^2}(n-B_{i+1})
    =\min\{B_{i-1}-B_i,B_{i+1}-B_{i+2}\}
    \]

    When $i=t_1+t_2$, we have $B_{i-1}-B_i=B_{t_1+t_2-1}-B_{t_1+t_2}=\frac{\varepsilon}{1+\varepsilon}(n-B_{t_1+t_2})>\frac{\varepsilon n}{2(1+\varepsilon)^2}$,
    and $B_{i+1}-B_{i+2}=\varepsilon B_{i+2}=\frac{\varepsilon n}{2(2+\varepsilon)^2}$.
    In this case,
    \[
    T_j=\frac{\varepsilon}{8}\min\{n-D_j,D_j\}\le\frac{\varepsilon}{16}n\le\frac{\varepsilon}{2(1+\varepsilon)^2}n
    =\min\{B_{i-1}-B_i,B_{i+1}-B_{i+2}\}
    \]

    When $t_1+t_2+1\le i\le t_1+2t_2-2$, we have $B_{i-1}-B_i\ge\varepsilon B_i$, and
    $B_{i+1}-B_{i+2}=\varepsilon B_{i+2}=\frac{\varepsilon}{(1+\varepsilon)^2}B_i$. 
    In this case, $D_j\le B_{t_1+t_2+2}<0.5n$, and $D_j\ge B_{t_1+t_2+t_3-2}>\frac{n}{W}$. Thus, 
    \[
    T_j=\frac{\varepsilon}{8}D_j\le\frac{\varepsilon}{8}B_i
    <\frac{\varepsilon}{(1+\varepsilon)^2}B_i=\min\{B_{i-1}-B_i,B_{i+1}-B_{i+2}\}
    \]

    When $i=t_1+2t_2-1$, we have $B_{i-1}-B_i\ge\varepsilon B_i$.
    Since $\frac{B_{i+1}}{1+\varepsilon}=\frac{B_{t_1+2t_2}}{1+\varepsilon}<\frac{n}{W}$,
    $B_i=(1+\varepsilon)B_{i+1}\le(1+\varepsilon)^2\frac{n}{W}$.
    Then, $B_{i+1}-B_{i+2}\ge\frac{\varepsilon n}{W}\ge\frac{\varepsilon}{(1+\varepsilon)^2}B_i$.
    Thus, 
    \[
    T_j=\frac{\varepsilon}{8}D_j\le\frac{\varepsilon}{8}B_i
    <\frac{\varepsilon}{(1+\varepsilon)^2}B_i=\min\{B_{i-1}-B_i,B_{i+1}-B_{i+2}\}
    \]

    When $t_1+2t_2\le i$, we have $B_{i-1}-B_i\ge\frac{\varepsilon n}{W}$ and $B_{i+1}-B_{i+2}\ge\frac{\varepsilon n}{W}$. 
    In this case, $D_j\le B_i\le B_{t_1t_2+t_3+2}<\frac{n}{W}$. Thus, 
    $T_j=\frac{\varepsilon n}{8W}<\min\{B_{i-1}-B_i,B_{i+1}-B_{i+2}\}$.

    Therefore, the sequence $(B_1,\dots,B_t)$ is a valid discretization of $[0,n-1]$ w.r.t. $(D_1,\dots,D_W)$
    and the error bound $T=(T_1,\dots,T_W)$.
\end{proof}

\paragraph{The Algorithm} %
Now we are ready to exploit the binary search based algorithm for succinctly approximating the sequence $D=(D_1,\dots,D_W)$ to estimate the clustering cost $\cost^{(s)}(G)$. The algorithm simply performs binary search \cref{alg:binarysearch} on the estimated array 
$\widehat{D}=\{\widehat{D}_1,\dots,\widehat{D}_W,0\}$ with search keys $B_1,\dots,B_t$ 
as defined in \cref{def:interval_similarity}, where $\widehat{D}_j$'s are estimators of $D_j$'s as defined in \cref{lem:hatDj}. Then we obtain the corresponding indices $\hat j_1,\dots,\hat j_t$. For $i\in\{1,\dots,t-1\}$, let $\hat J_i=\{\hat j_i,\dots,\hat j_{i+1}-1\}$. Note that $\hat J_1,\dots, \hat J_{t-1}$ form a partition of the index set $[W]=\{1,\dots, W\}$. 
For any $j\in\hat J_i$, we define $\overline{D}_j=B_i$ as the estimate for $D_j$ and then use  %
$\overline{D}_j$'s to estimate the cost $\cost^{(s)}(G)$. %
The algorithm is described in \cref{alg:appcost_sim}. Note that we estimate $c_j^{(s)}$ and $D_j$ separately,
and then estimate $\cost^{(s)}(G)$ by multiplying the estimates of $c_j^{(s)}$ and $D_j$. Furthermore, for the same reason as in distance case, we do not need to access the whole array 
$\widehat{D}$, but only access $\poly(\log W)$ values $\widehat{D}_i$ in it. We also store previously estimated
value of $\widehat{D}_i$ to maintain consistency. 

\begin{algorithm}[h!]
    \DontPrintSemicolon
    \caption{\textsc{AppCostSim}($G, \varepsilon, d, W$)}
    \label{alg:appcost_sim}
    \SetKwInOut{Input}{input}
    \SetKwInOut{Output}{output}
    \Input{graph $G$, approximation parameter $\varepsilon$}
    \Output{$\widehat{\cost^{(s)}}(G)$, which estimates $\cost^{(s)}(G)$}
    \BlankLine
    set $t$ according to 
    \cref{def:interval_similarity}\;
    get estimates $\{\hat{c}_1^{(s)},\dots,\hat{c}_W^{(s)}\}$ from 
    \cref{cor:numberccappdelta} with parameters
    $G,H=G_j,\varepsilon,k=1,d^{(G)}=d\cdot\lceil\frac{4}{\varepsilon}\rceil,\delta=1/8W$\;
    \For{$1\leq i\leq t$}{
        set $B_i$ according to \cref{def:interval_similarity}\;
        invoke \textsc{BinarySearch$(\widehat{D},1,W,B_i)$} and get output index $\hat j_i$, where
        $\widehat{D}=(\widehat{D}_1,\dots,\widehat{D}_W,0)$ and each $\widehat{D}_j=\min\{\max\{\widehat{D}_j',0\},n-1\}$
        and $\widehat{D}_j'$ is the output of the algorithm in \cref{cor:appncc_sim_delta} with input $G, H=G^{(s)}_j, \varepsilon/8, k=W$, $\delta=1/(8W)$ %
        \Comment{for consistency, reuse stored value of $\widehat{D}_j$, if previously estimated}\;
    }
    for any $j\in\hat J_i:=\{\hat j_i,\dots,\hat j_{i+1}-1\}$, we let $\overline{D}_j=\overline{D}_{\hat j_i}=B_i$\;
    set $\widehat{\cost^{(s)}}(G)=\frac{1}{2}
    \sum_{j=1}^{W}(\hat{c}_j^{(s)}+n-1)\cdot\overline{D}_j$\;
    output $\widehat{\cost^{(s)}}(G)$ and the sequence $(\hat j_1,\dots,\hat j_t)$
\end{algorithm}

Now we first give the following guarantee on the estimates $\overline{D}_j$'s.

\begin{lemma} \label{lem:bound_of_bar_D}
Assume that for \emph{all} $j\in [W]$, the inequality $\abs{\widehat{D}_j-D_j}\leq T_j$ holds. Then it holds that for all $j\in [W]$, 
\[
\abs{\overline{D}_j-D_j} \leq 5\varepsilon\cdot\max\left\{\frac{n}{W},\min\{D_j,n-D_j\}\right\}.
\]
\end{lemma}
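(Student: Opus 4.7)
The plan is to mirror the structure of the distance-case argument (\cref{lem:bound_of_bar_cj}), but with the extra complication that the error bound $T_j$ is two-sided: it depends on $\min\{D_j, n-D_j\}$ rather than just $D_j$, and correspondingly the endpoint sequence in \cref{def:interval_similarity} has \emph{four} scaling regimes rather than two. First I would invoke \cref{lem:discretization_similarity} to conclude that the preconditions of \cref{lem:bound_of_xj} are met, and apply the latter to obtain, for every $j \in \hat J_i$, the enclosure $B_{i+2} \le D_j \le B_{i-1}$. Since $\overline{D}_j = B_i$, this immediately gives
\[
|\overline{D}_j - D_j| \le \max\{B_{i-1} - B_i,\, B_i - B_{i+2}\}.
\]
The whole task then reduces to bounding the right-hand side, for each $i$, by $5\varepsilon \cdot \max\{n/W,\, \min\{D_j, n-D_j\}\}$.

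Next I would do a case analysis on $i$ driven by \cref{fact:interval_similarity}. For $2 \le i \le t_1-1$ (the arithmetic regime near $n$) both gaps equal $\varepsilon n/W$, and the bound is immediate; the boundary indices $i=1, t_1, t_1+1$ require combining a gap of $\varepsilon n /W$ with a gap of $\varepsilon(1+\varepsilon)^c n/W$, and here one uses $D_j \ge B_{i+1}$ together with $n - D_j \ge n - B_{i-1} \ge \Omega(n/W)$ to replace $n/W$ by $n - D_j$ when needed. For $t_1+2 \le i \le t_1+t_2-1$ (the ``geometric in $n - B_i$'' regime) both gaps are at most $\varepsilon(1+\varepsilon)^2 (n-B_{i+1})$, and since $n - D_j \ge n - B_{i-1} \ge \frac{n-B_{i+1}}{(1+\varepsilon)^2}$, the right-hand side is absorbed by $O(\varepsilon)(n-D_j)$. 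For $t_1+t_2 \le i \le t_1+2t_2+1$ the scaling switches from $n-B_i$ to $B_i$ itself, and one uses the symmetric estimate $D_j \ge B_{i+2} \ge B_i/(1+\varepsilon)^2$; the transition index $i = t_1 + t_2$ is the trickiest spot, as both $B_i \approx n/2$ and $n - B_i \approx n/2$, but then $\min\{D_j, n-D_j\} = \Theta(n)$ and so a bound of the form $O(\varepsilon) n$ suffices. For $i \ge t_1+2t_2+2$ we are back in an arithmetic regime of step $\varepsilon n/W$, and the bound is $2\varepsilon n/W$, absorbed into the $n/W$ term of the $\max$. The last index $i = t-1 = t_1+2t_2+t_3$ and $i=t$ are handled analogously using $B_t = 0$ together with the fact that $D_j \le B_{i-1} \le n/W$.

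The main obstacle will be carefully matching constants at the three regime boundaries $i = t_1, i = t_1+t_2, i = t_1+2t_2$, where the endpoint spacing changes shape and the inequality $B_{i-1} - B_i \le O(\varepsilon) \cdot \max\{n/W, \min\{D_j, n-D_j\}\}$ needs both the lower bound on $D_j$ \emph{and} the lower bound on $n - D_j$ to produce the required $\min\{D_j, n-D_j\}$ on the right-hand side. Checking each of the (at most) twelve sub-cases listed above, taking the worst constant, and plugging in the crude estimates $(1+\varepsilon)^2 \le 4$ and $1/(1+\varepsilon) \ge 1 - \varepsilon$ will yield the constant $5$ in the statement; any slack can be traded away by rescaling $\varepsilon$, as done in the distance case.
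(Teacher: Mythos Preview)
Your proposal is correct and follows essentially the same approach as the paper: invoke \cref{lem:discretization_similarity} and \cref{lem:bound_of_xj} to get $B_{i+2}\le D_j\le B_{i-1}$, then carry out a case analysis on $i$ across the four endpoint regimes (and their boundary indices) of \cref{def:interval_similarity}, bounding $\max\{B_{i-1}-B_i,\,B_i-B_{i+2}\}$ against $5\varepsilon\cdot\max\{n/W,\min\{D_j,n-D_j\}\}$. The paper's proof splits into roughly fifteen sub-cases (a few more than your twelve) and often treats the two sides $D_j>\overline D_j$ and $D_j\le\overline D_j$ separately rather than taking a single $\max$, but the logic and the transition indices you flag ($t_1$, $t_1+t_2$, $t_1+2t_2$) are exactly the delicate points there as well.
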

\begin{proof}
By \cref{lem:discretization_similarity}, the sequence of endpoints $(B_1,\dots,B_t)$ defined in \cref{def:interval_similarity} is a valid discretization of $[0,n-1]$ w.r.t. $(D_1,\dots,D_W)$ and error bound 
$T=(T_1,\dots,T_W)$. 
Thus, by \cref{lem:bound_of_xj}, for every $i\in[1,t-1]$, and every $j\in\hat J_i$,
we have $B_{i+2}\le D_j\le B_{i-1}$. By case distinction,

\begin{itemize}
    \item When $i=1$, $B_{i+2}\le D_j\le B_1=n$,
    then $\abs{\overline{D}_j-D_j}\le B_1-B_{i+2}=2\frac{\varepsilon n}{W}$.
    
    \item When $i\le t_1-2$, we have $B_{i-1}-B_i=\frac{\varepsilon n}{W}$,
    and $B_i-B_{i+2}=2\frac{\varepsilon n}{W}$.
    
    Thus, $\abs{\overline D_j-D_j}\le\max\{B_{i-1}-B_i,B_i-B_{i+2}\}=2\frac{\varepsilon n}{W}$.

    \item When $i=t_1-1$, since $B_{t_1}<n-\frac{n}{W}(1-\varepsilon)$,
    then $B_{t_1-1}-B_{t_1+1}=(B_{t_1}+\frac{\varepsilon n}{W})
    -(n-(1+\varepsilon)\frac{n}{W})<3\frac{\varepsilon n}{W}$.
    Besides, $B_{i-1}-B_i=\frac{\varepsilon n}{W}$.
    Thus, $\abs{\overline{D}_j-D_j}\le\max\{B_{i-1}-B_i,B_i-B_{i+2}\}
    <3\frac{\varepsilon n}{W}\le3\varepsilon\max\{\frac{n}{W}, n-D_j\}$.

    \item When $i=t_1$, we have $B_i-B_{i+2}=B_{t_1}-B_{t_1+2}<(n-\frac{n}{W}(1-\varepsilon))-(1-(1+\varepsilon)^2)\frac{n}{W}
    \le4\frac{\varepsilon n}{W}$.
    Besides, $B_{i-1}-B_i=\frac{\varepsilon n}{W}$.
    Thus, $\abs{\overline{D}_j-D_j}
    <4\frac{\varepsilon n}{W}\le4\varepsilon\max\{\frac{n}{W}, n-D_j\}$.

    \item When $i=t_1+1$, we have
    $\begin{cases}
        \text{if } D_j>\overline D_j,
        &D_j-\overline D_j\le B_{i-1}-B_i<(n-\frac{n}{W}(1-\varepsilon))-(n-(1+\varepsilon)\frac{n}{W})=2\frac{\varepsilon n}{W}\\
        \text{else, } &\overline D_j-D_j\le B_i-B_{i+2}
        =((1+\varepsilon)^2-1)(n-B_i) \le 3\varepsilon(n-D_j)
    \end{cases}$

    \item When $t_1+2\le i \le t_1+t_2-2$, since $D_j\le B_{i-1}$, then $n-D_j\ge n-B_{i-1}$; 
    and if $D_j\le \overline D_j=B_i$, then $n-D_j\ge n-B_i$. Thus,
    $\begin{cases}
        \text{if } D_j>\overline D_j,
        &D_j-\overline D_j\le B_{i-1}-B_i=\varepsilon (n-B_{i-1})\le\varepsilon(n-D_j)\\
        \text{else, } &\overline D_j-D_j\le B_i-B_{i+2}
        =((1+\varepsilon)^2-1)(n-B_i) \le 3\varepsilon(n-D_j)
    \end{cases}$

    \item When $i=t_1+t_2-1$, since $n-B_{t_1+t_2}>\frac{n}{2(1+\varepsilon)}$,
    then $n-B_{i}=\frac{n-B_{t_1+t_2}}{1+\varepsilon}>\frac{n}{2(1+\varepsilon)^2}$.
    Thus, $B_i-B_{i+2}<n-\frac{n}{2(1+\varepsilon)^2}-\frac{n}{2(1+\varepsilon)}\le\frac{5\varepsilon n}{2(1+\varepsilon)^2}$.
    As $\frac{5\varepsilon n}{2(1+\varepsilon)^2}
    = \frac{5\varepsilon}{1+\varepsilon}B_{i+2}\le 5\varepsilon D_j$,
    and $\frac{5\varepsilon n}{2(1+\varepsilon)^2}< 5\varepsilon(n-B_i)$, we have
    
    $\begin{cases}
        \text{if } D_j>\overline{D}_j,
        &D_j-\overline{D}_j \le \varepsilon(n-B_{i-1}) \le \varepsilon(n-D_j)\\
        \text{else, } & \overline{D}_j-D_j \le B_i-B_{i+2} < \frac{5\varepsilon n}{2(1+\varepsilon)^2}
        < 5\varepsilon\min\{D_j, n-D_j\}
    \end{cases}$

    \item When $i=t_1+t_2$, $B_i-B_{i+2}<n-\frac{n}{2(1+\varepsilon)}-\frac{n}{2(1+\varepsilon)^2}\le\frac{5\varepsilon n}{(1+\varepsilon)^2}$.
    As $\frac{5\varepsilon n}{2(1+\varepsilon)^2}
    = {5\varepsilon}B_{i+2}\le 5\varepsilon D_j$,
    and $\frac{5\varepsilon n}{2(1+\varepsilon)^2}< \frac{5\varepsilon}{1+\varepsilon}(n-B_i)<5\varepsilon(n-B_i)$, we have
    $\begin{cases}
        \text{if } D_j>\overline{D}_j,
        &D_j-\overline{D}_j \le \varepsilon(n-B_{i-1}) \le \varepsilon(n-D_j)\\
        \text{else, } & \overline{D}_j-D_j \le B_i-B_{i+2} < \frac{5\varepsilon n}{2(1+\varepsilon)^2}
        <5\varepsilon\min\{n-D_j,D_j\}
    \end{cases}$

    \item When $i=t_1+t_2+1$, $B_{i-1}-B_i<n-\frac{n}{2(1+\varepsilon)}-\frac{n}{2(1+\varepsilon)}=\frac{\varepsilon n}{1+\varepsilon}$. 
    Note that $\frac{\varepsilon n}{1+\varepsilon}=2\varepsilon B_i$ and 
    $\frac{\varepsilon n}{1+\varepsilon}<2\varepsilon (n-B_{i-1})$.
    Then,
    $\begin{cases}
        \text{if } D_j>\overline{D}_j,
        &D_j-\overline{D}_j \le B_{i-1}-B_i
        < \frac{\varepsilon n}{1+\varepsilon} < 2\varepsilon\min\{B_i,(n-B_{i-1})\}
        \le2\varepsilon\min\{\varepsilon D_j, n-D_j\}\\
        \text{else, } & \overline{D}_j-D_j \le B_i-B_{i+2} = ((1+\varepsilon)^2-1)B_{i+2}\le 3\varepsilon B_{i+2}
        \le 3\varepsilon D_j
    \end{cases}$

    \item When $t_1+t_2+2\le i \le t_1+2t_2-2$, we have $B_{i-1}-B_i=\varepsilon B_i$, 
    and $B_i-B_{i+2}=((1+\varepsilon)^2-1)B_{i+2}\le 3\varepsilon B_{i+2}$.
    Since $D_j\ge B_{i+1}$, then 
    $\begin{cases}
        \text{if } D_j>\overline D_j=B_i,
        &D_j-\overline D_j\le B_{i-1}-B_i\le\varepsilon D_j\\
        \text{else, } &\overline D_j-D_j\le B_i-B_{i+2}\le 3\varepsilon D_j
    \end{cases}$

    \item When $i=t_1+2t_2-1$, since $B_{i+1}=B_{t_1+2t_2}<(1+\varepsilon)\frac{n}{W}$,
    then $B_i-B_{i+2}=(1+\varepsilon)B_{i+1}-B_{i+2} <(1+\varepsilon)^2\frac{n}{W}-\frac{n}{W}(1-\varepsilon)\le 4\frac{\varepsilon n}{W}$.
    Thus,
    $\begin{cases}
        \text{if } D_j>\overline{D}_j,
        &D_j-\overline{D}_j \le B_{i-1}-B_i=\varepsilon B_i\le \varepsilon D_j \\
        \text{else, } & \overline{D}_j-D_j \le B_i-B_{i+2} < 4\frac{\varepsilon n}{W}
    \end{cases}$

    \item When $i=t_1+2t_2$, we have
    $\begin{cases}
        \text{if } D_j>\overline{D}_j,
        &D_j-\overline{D}_j \le B_{i-1}-B_i=\varepsilon B_i\le \varepsilon D_j \\
        \text{else, } & \overline{D}_j-D_j \le B_i-B_{i+2} < (1+\varepsilon)\frac{n}{W}-\frac{n}{W}(1-2\varepsilon)=3\frac{\varepsilon n}{W}
    \end{cases}$

    \item When $i=t_1+2t_2+1$, we have
    $\begin{cases}
        \text{if } D_j>\overline{D}_j,
        &D_j-\overline{D}_j \le B_{i-1}-B_i<(1+\varepsilon)\frac{n}{W}-\frac{n}{W}(1-\varepsilon)=2\frac{\varepsilon n}{W} \\
        \text{else, } & \overline{D}_j-D_j \le B_i-B_{i+2} = 2\frac{\varepsilon n}{W}
    \end{cases}$

    \item When $t_1+2t_2+2\le i$, we have $B_{i-1}-B_i=\frac{\varepsilon n}{W}$, 
    and $B_i-B_{i+2}=2\frac{\varepsilon n}{W}$.

    Thus, $\abs{\overline D_j-D_j}\le\max\{B_{i-1}-B_i,B_i-B_{i+2}\}=2\frac{\varepsilon n}{W}$.

    \item When $i=t$, $\abs{\overline{D}_j-D_j}\le B_{t-1}-B_{t}=B_{t_1+2t_2+t_3}-B_t<2\frac{\varepsilon n}{W}$.
    
\end{itemize}

Therefore, for \emph{all} $j\in[W]$, we have that 
$\abs{\overline D_j-D_j}\le 5\varepsilon\max\{\frac{n}{W},\min\{D_j,n-D_j\}\}$.
\end{proof}

\paragraph{Analysis of \cref{alg:appcost_sim}}

Now we give the analysis on the approximation ratio of \cref{alg:appcost_sim} as stated in \cref{thm:cost_similarity}.

\similarity*

\begin{proof}[Proof of \cref{thm:cost_similarity}]
Setting $\delta=\frac{1}{8W}$ and $k=1$, we obtain estimates
$\{\hat{c}_1^{(s)},\dots,\hat{c}_W^{(s)}\}$ from \cref{cor:numberccappdelta}, such that each estimate 
$\hat{c}_j^{(s)}$ satisfies $\abs{\hat{c}_j^{(s)}-c_j^{(s)}}\leq\varepsilon n$
with probability at least $1-\frac{1}{8W}$.
Using the union bound, all these estimates simultaneously have additive error $\varepsilon n$ with probability at least $\frac{7}{8}$.

Besides, from \cref{lem:hatDj} and \cref{lem:bound_of_bar_D}, with probability at least 
$\frac{7}{8}$, it holds that for \emph{all} $j\in[W]$, $\abs{\overline{D}_j-D_j}\leq 5\varepsilon\cdot\max\left\{\frac{n}{W},\min\{D_j,n-D_j\}\right\}$.
Therefore, for \emph{all} $j\in[W]$, both $\overline{D}_j$ and $\hat{c}_j^{(s)}$ satisfy the above error guarantee with probability at least $1-\frac{1}{8}-\frac{1}{8}=\frac{3}{4}$.
For the remainder of the proof, we assume that this event holds. 

Denote $A_j=c_j^{(s)}+n-1$, and $\hat{A}_j=\hat{c}_j^{(s)}+n-1$.
Since $1\leq c_j\leq n$, we have $n\leq A_j\leq 2n$. Furthermore, it holds that $\abs{\hat{A}_j-A_j}
=\abs{\hat{c}_j^{(s)}-c_j^{(s)}}\leq\varepsilon n\leq\varepsilon A_j$.

Case (I): $0.5n<D_j\leq n$. In this case, we have that $n<2D_j$, and thus 
\[
\abs{\overline{D}_j-D_j}\leq 5\varepsilon\cdot\max\left\{\frac{n}{{W}},n-D_j\right\}
\leq 5\varepsilon\left(\frac{2}{{W}}+2-1\right)D_j\leq 15\varepsilon D_j.
\]
Furthermore, 
\[
(1-\varepsilon)(1-15\varepsilon)A_j\cdot D_j
\leq\hat{A}_j\cdot\overline{D}_j\leq(1+\varepsilon)(1+15\varepsilon)A_j\cdot D_j
\]

That is, 
\[
\abs{\hat{A}_j\cdot\overline{D}_j-A_j\cdot D_j}\leq 
(16\varepsilon+15\varepsilon^2)A_j\cdot D_j \leq 31\varepsilon A_j\cdot D_j.
\]

Case (II): $0\leq D\leq0.5 n$. In this case, we have that $\abs{\overline{D}_j-D_j}
\leq 5\varepsilon\cdot\max\left\{\frac{n}{{W}},D_j\right\}
\leq 5\varepsilon\left(\frac{n}{W}+D_j\right)$.
Thus, 
\[
\hat{A}_j\cdot\overline{D}_j\leq(1+\varepsilon)A_j\left(D_j+5\varepsilon D_j+5\varepsilon\frac{n}{W}\right)
\leq (1+\varepsilon)(1+5\varepsilon)A_j\cdot D_j+5\varepsilon(1+\varepsilon)\frac{2n^2}{W}
\]

The last inequality follows from $A_i\leq 2n$. Furthermore,
\[
\hat{A}_j\cdot\overline{D}_j\geq(1-\varepsilon)A_j\left(D_j-5\varepsilon D_j-5\varepsilon\frac{n}{W}\right)
\geq
(1-\varepsilon)(1-5\varepsilon)A_j\cdot D_j-5\varepsilon(1-\varepsilon)\frac{2n^2}{W}
\]

Therefore, \[
\abs{\hat{A}_j\cdot\overline{D}_j-A_j\cdot D_j}
\leq (6\varepsilon+5\varepsilon^2) A_j\cdot D_j+10\varepsilon(1+\varepsilon)\frac{n^2}{W}
\leq 11\varepsilon A_j\cdot D_j+20\varepsilon\frac{n^2}{W}.
\]

Combining the analysis above,
we have that for any $j\leq W$, it holds that 
\[\abs{\hat{A}_j\cdot\overline{D}_j-A_j\cdot D_j}\leq 31\varepsilon A_j\cdot D_j+
20\varepsilon\frac{n^2}{W}.
\]
By \Cref{fact:lbcostsim-n2}, $\cost^{(s)}(G)\geq\frac{n(n-1)}{2}$, and when $n\geq 2$, 
$\cost^{(s)}(G)\geq\frac{1}{4}n^2$.
Besides, $\cost^{(s)}(G)=\frac12\sum_{j=1}^W (c_j^{(s)}+n-1)(n-c_j^{(s)})=\frac{1}{2}\sum_{j=1}^W A_j\cdot D_j$.
Now we can bound the error of $\widehat{\cost^{(s)}}(G)$ by
\[
\abs{\widehat{\cost^{(s)}}(G)-\cost^{(s)}(G)}
\leq\frac{1}{2}\sum_{j=1}^W\abs{\hat{A}_j\cdot\overline{D}_j-A_j\cdot D_j}
\leq \frac{31\varepsilon}{2}\sum_{j=1}^W A_j\cdot D_j+20\varepsilon n^2
\leq 120\varepsilon\cost^{(s)}(G)
\]

Replacing $\varepsilon$ with $\varepsilon/120$ achieves a $(1+\varepsilon)$ approximation factor.

\textbf{Running time analysis.}
According to \cref{cor:numberccappdelta}, each $\hat{c}^{(s)}_j$ is obtained in $O(\frac{d}{\varepsilon^2}\log(\frac{d}{\varepsilon})\cdot\log(\frac{1}{\delta}))=O(\frac{d}{\varepsilon^2}\log(\frac{d}{\varepsilon})\cdot\log W)$ time,
and thus the the sequence of estimates $\{\hat{c}^{(s)}_1,\dots,\hat{c}^{(s)}_W\}$ are obtained in $O(\frac{Wd}{\varepsilon^2}\log(\frac{d}{\varepsilon})\cdot\log W)$ time.

Besides, \cref{alg:appcost_sim} invokes \textsc{BinarySearch} for 
$t=O(\log(W/\varepsilon)/\varepsilon)$ search keys, and each invocation of \textsc{BinarySearch} accesses $O(\log W)$ estimates $\widehat{D}_j$.
Thus, the algorithm accesses at most $O(\log^2(\frac{W}{\varepsilon})/\varepsilon)$ estimates $\widehat{D}_j$ in $\widehat{D}$.
According to \cref{cor:appncc_sim_delta}, each estimate of $\widehat{D}_j$ can be obtained in 
$O(\frac{W}{\varepsilon^2}d\log(\frac{W}{\varepsilon}d)\cdot\log(1/\delta))=O(\frac{Wd}{\varepsilon^2}\log^2(\frac{Wd}{\varepsilon}))$ time.

Thus, the running time of \cref{alg:appcost_sim} is:
\[
O(\frac{Wd}{\varepsilon^2}\log(\frac{d}{\varepsilon})\cdot\log W)
+O(\log^2(\frac{W}{\varepsilon})/\varepsilon)\cdot O(\frac{Wd}{\varepsilon^2}\log^2(\frac{Wd}{\varepsilon}))
=O(\frac{Wd}{\varepsilon^3}\log^4(\frac{Wd}{\varepsilon}))
=\Tilde{O}(\frac{Wd}{\varepsilon^3})
\]

\end{proof}

\subsection{Estimating the Profile Vector} \label{subsec:cost_k_sim}
Recall that $\cost^{(s)}(G)=\sum_{i=1}^{n}\cost_k^{(s)}$, where 
$\cost_k^{(s)}=\sum_{i=1}^{n-k}w_i^{(s)}$ and $w_i^{(s)}$ is the $i$-th largest weight in $\mathrm{MaxST}$. We now give a sublinear time algorithm to approximate the SLC profile vector
$(\cost_1^{(s)},\dots,\cost_n^{(s)})$ and prove the following theorem.

\similarityprofile*

We will first give a formula to calculate $\cost_k^{(s)}$, which is slightly
different from the one for $\cost_k$ in the distance case. Next we give an algorithm to estimate the profile,
and the corresponding analysis.

\paragraph{Formulas of Profile}
Recall that $w_i^{(s)}$ is the weight of the $i$-th maximum edge's weight, in $\mathrm{MaxST}$.
We can derive $\cost_k^{(s)}$ as follows:
\begin{align}
    \cost_k^{(s)}&=\sum_{i=1}^{n-k}w_i^{(s)} \nonumber\\
    &=n_W\cdot W+n_{W-1}\cdot(W-1)+\dots 
    +(n-k-n_W-\dots-n_{w_{n-k}^{(s)}+1})\cdot w_{n-k}^{(s)}\nonumber\\
    &=(c_{W+1}^{(s)}-c_W^{(s)})\cdot W+(c_W^{(s)}-c_{W-1}^{(s)})\cdot(W-1)+
    \cdots+ (n-k-c_{W+1}^{(s)}+c_{w_{n-k}^{(s)}+1}^{(s)})\cdot w_{n-k}^{(s)} 
    \tag{since $n_j=c_{j+1}^{(s)}-c_j^{(s)}$} \nonumber\\
    &=(n-c_W^{(s)})\cdot W+(c_W^{(s)}-c_{W-1}^{(s)})\cdot(W-1)+
    \cdots+ (c_{w_{n-k}^{(s)}+1}^{(s)}-k)\cdot w_{n-k}^{(s)} 
    \tag{since $c_{W+1}^{(s)}=n$} \nonumber\\
    &=n\cdot W-c_W^{(s)}-c_{W-1}^{(s)}+\cdots-c_{w_{n-k}^{(s)}+1}^{(s)}
    -k\cdot w_{n-k}^{(s)} \nonumber\\
    &=\sum_{j=w_{n-k}^{(s)}+1}^{W}(n-c_j^{(s)})+(n-k)\cdot w_{n-k}^{(s)} \label{eqn:cost_k_sim}
\end{align}

We can derive the following lemma, which is similar to \cref{lem:costk}.
The main idea of the lemma is, for a given weight $j$, 
we approximate the rank %
of the first edge with weight $j$ in $\mathrm{MaxST}$. 
Then, we can take $k$ as its rank, 
and calculate $\cost_k^{(s)}$.

\begin{lemma}\label{lem:costk_sim}
Given any integer $j\in\{1,...,W\}$ 
and let $k=c_{j}^{(s)}$, where $c_{j}^{(s)}$ 
is the number of connected components in $G_j^{(s)}$.
Then 
$$
\cost_k^{(s)}=\sum_{i=j+1}^{W}(n-c^{(s)}_i)+(n-c^{(s)}_j)\cdot j.
$$
\end{lemma}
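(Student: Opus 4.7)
The plan is to mimic the proof of \cref{lem:costk} from the distance setting, but now using the similarity-based identity $n_i = c_{i+1}^{(s)} - c_i^{(s)}$ (which was established just before \cref{thm:costsimformula}) together with the formula \cref{eqn:cost_k_sim} for $\cost_k^{(s)}$.

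First I would count the number of MaxST edges of weight at least $j$. Since $n_i$ denotes the number of MaxST edges of weight $i$ and $n_i = c_{i+1}^{(s)} - c_i^{(s)}$, telescoping gives
\[
\sum_{i=j}^{W} n_i \;=\; \sum_{i=j}^{W}\bigl(c_{i+1}^{(s)} - c_i^{(s)}\bigr) \;=\; c_{W+1}^{(s)} - c_j^{(s)} \;=\; n - c_j^{(s)},
\]
using $c_{W+1}^{(s)} = n$ (isolated vertices only). Therefore the number of MaxST edges of weight strictly greater than $j$ is $n - c_{j+1}^{(s)}$, while the number with weight at least $j$ is $n - c_j^{(s)}$.

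Next I would identify the rank $w_{n-k}^{(s)}$ for the chosen $k = c_j^{(s)}$. The edges of MaxST in non-increasing order of weight are $w_1^{(s)} \ge w_2^{(s)} \ge \dots \ge w_{n-1}^{(s)}$. By the previous count, positions $1,\dots,n - c_{j+1}^{(s)}$ are occupied by edges of weight $>j$, while positions $1,\dots,n - c_j^{(s)}$ are occupied by edges of weight $\ge j$. Hence position $n-k = n - c_j^{(s)}$ is filled by an edge of weight exactly $j$, i.e. $w_{n-k}^{(s)} = j$.

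Finally I would plug $w_{n-k}^{(s)} = j$ and $k = c_j^{(s)}$ into \cref{eqn:cost_k_sim}:
\[
\cost_k^{(s)} \;=\; \sum_{i=w_{n-k}^{(s)}+1}^{W}\bigl(n - c_i^{(s)}\bigr) + (n-k)\cdot w_{n-k}^{(s)} \;=\; \sum_{i=j+1}^{W}\bigl(n - c_i^{(s)}\bigr) + (n - c_j^{(s)})\cdot j,
\]
which is exactly the claimed identity. There is no real obstacle here beyond the bookkeeping on the telescoping sum; the only subtle point is being careful that $w_{n-k}^{(s)} = j$ rather than some neighboring weight, which is guaranteed precisely because $k = c_j^{(s)}$ makes position $n-k$ land in the block of edges of weight $j$ (and not in the block of weight $j+1$ or higher).
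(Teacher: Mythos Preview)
Your proposal is correct and follows essentially the same route as the paper: both arguments telescope $n_i=c_{i+1}^{(s)}-c_i^{(s)}$ to count MaxST edges of weight at least $j$, conclude that $w_{n-k}^{(s)}=j$ when $k=c_j^{(s)}$, and then substitute into \cref{eqn:cost_k_sim}. Your write-up is in fact a bit more explicit about why position $n-k$ lands in the weight-$j$ block, but the structure is identical.
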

\begin{proof}
    Let $n_j$ be the number of edges in $\mathrm{MaxST}$ with weight $j$, then
    $n_j=c_{j+1}^{(s)}-c_j^{(s)},\ \forall 1\leq j\leq W$, and $c_{W+1}=n$. %
    Thus, the number of edges in $\mathrm{MaxST}$ with weights at least $j$ is:
    \[n_W+n_{W-1}+\dots+n_j=(n-c_W^{(s)})+(c_W^{(s)}-c_{W-1}^{(s)})
    +\dots+(c_{j+1}^{(s)}-c_{j}^{(s)})=n-c_{j}^{(s)}\]

    As $k=c^{(s)}_j$, the edge ordering $n-k=n-c^{(s)}_j$ in $\mathrm{MaxST}$ has weight $w^{(s)}_{n-k}=j$. Then according to \cref{eqn:cost_k_sim}, we have
    \[
    \cost_k^{(s)}=\cost_{c_{j}}^{(s)}(G)=
    \sum_{i=j+1}^{W}(n-c^{(s)}_i)+(n-c^{(s)}_j)\cdot j
    \]
\end{proof}

\paragraph{Algorithm to Estimate Profile}
Similarly to the distance  case, we first round each $ c_j $ for $ 1 \leq j \leq W $ to its nearest interval, such as $ (B_{i+1}, B_i] $. This allows us to estimate $ \cost_k $ for $ k $ corresponding to all the interval endpoints. The detailed algorithm is given in \cref{alg:appprofile_similarity}.

\begin{algorithm}[h!]
    \DontPrintSemicolon
    \caption{\textsc{AppProfileSim}($G, \varepsilon,d,W$)}
    \label{alg:appprofile_similarity}
    get sequence 
    $\{\hat j_1,\dots,\hat j_t\}$ from \textsc{AppCostSim($G,\varepsilon,d,W$)}\;
    set $t$ according to \cref{def:interval_similarity}\;
    {set $\overline{\cost}_{n-B_t}=0$}\;
    \For{$i=\{1,\dots,{t-1}\}$}{
        set $B_i$ according to \cref{def:interval_similarity}\;
        $\overline{\cost^{(s)}}_{n-B_i}=B_i\cdot (\hat j_i-1)+
        \sum_{k=i}^{t-1}(\hat j_{k+1}-\hat j_{k})\cdot B_{k}$\;
    }
    Output estimated vector
    $(\overline{\cost^{(s)}}_{n-B_1},\dots,\overline{\cost^{(s)}}_{n-B_t})$%
\end{algorithm}

We note that $B_i$ serves as an approximation for $D_{\hat j_i} = n - c_{\hat j_i}$. In \cref{lem:costk_sim}, if we substitute $k$ with $n - \overline{D}_{\hat j_i} = n - B_i$ and $j$ with $\hat{j}_i$, we arrive at the following expression:
\[
\overline{\cost^{(s)}}_{n - B_i} 
= \sum_{j=\hat{j}_i+1}^W \overline D_j + \overline D_{\hat j_i} \cdot \hat j_i
= B_i \cdot (\hat j_i-1) + \sum_{k=i}^{t-1} (\hat j_{k+1} - \hat j_k) \cdot B_{k}.
\]

In the last equation, we have $\hat j_i-1$ because $\overline D_{\hat j_i}$ is considered in $\hat J_i$,
which should be excluded in the summation $\sum_{j=\hat j_i+1}^W \overline{D}_j$.
Once we have such a succinct representation, we can construct an oracle, that given any $k\in[1,n]$, 
and $n-B_{i}\le k< n-B_{i+1}$, we let $\widehat{\cost^{(s)}}_k=\overline{\cost^{(s)}}_{B_{i}}$.

\begin{algorithm}[h!]
    \DontPrintSemicolon
    \caption{\textsc{ProfileOracleSim}($k,\{B_1,\dots,B_t\}, (\overline{\cost^{(s)}}_{n-B_1},\dots,\overline{\cost^{(s)}}_{n-B_t})$)}
    \label{alg:appprofile_profile_similarity}
    {define $B_{t+1}=-\infty$}\;
    use binary search over $(B_1,\dots,B_t)$, and find the index $i$ such that $n-B_{i}\le k<n-B_{i+1}$\;
    output $\widehat{\cost^{(s)}}_k:=\overline{\cost^{(s)}}_{n-B_{i}}$\;
\end{algorithm}

We call the vector $(\overline{\cost^{(s)}}_{n-B_1},\dots,\overline{\cost^{(s)}}_{n-B_t})$ a \emph{succinct representation} of the vector $(\widehat{\cost^{(s)}}_1,\dots,\widehat{\cost^{(s)}}_{n})$, and we call \cref{alg:appprofile_profile_similarity} a \emph{profile oracle}, as it can take as input any index $k$ and answer $\widehat{\cost^{(s)}}_k$.

\paragraph{Analysis of \cref{alg:appprofile_similarity} and \cref{alg:appprofile_profile_similarity}}
Now we analyze \cref{alg:appprofile_similarity} and \cref{alg:appprofile_profile_similarity}. We will show that the vector $(\widehat{\cost^{(s)}}_1,\dots,\widehat{\cost^{(s)}}_{n})$ is a good approximation of the profile vector $(\cost^{(s)}_1,\dots,\cost^{(s)}_n)$. We first prove an error bound for each individual $\widehat{\cost^{(s)}}_k$, and then bound the sum of the error and give the proof of \cref{thm:profile_sim}.

\begin{fact}\label{fact:B_i-1_i+1_sim}
    For any $2\le i\le t_1+2t_2-1$, $B_{i-1}-B_{i+1}\le 4\varepsilon B_{i+1}$.
\end{fact}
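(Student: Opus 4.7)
The plan is to prove the fact by case analysis on where the triple $(B_{i-1},B_i,B_{i+1})$ lies relative to the five segments of the definition of $B_i$ in \cref{def:interval_similarity}. Concretely, I will split into the cases (i) $2\le i\le t_1-1$ (arithmetic segment), (ii) $t_1+2\le i\le t_1+t_2-1$ (geometric segment accumulating $n-B_i$), (iii) $t_1+t_2+2\le i\le t_1+2t_2-1$ (geometric segment in $B_i$ itself), and (iv) the four boundary cases $i\in\{t_1,\,t_1+1,\,t_1+t_2,\,t_1+t_2+1\}$ where adjacent indices straddle two segments. In each case I will invoke the lower bounds on $B_{i+1}$ supplied by \cref{fact:interval_similarity} part~2, together with the explicit gap formulas in parts~3--5.

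For the pure arithmetic case~(i), I have $B_{i-1}-B_{i+1}=2\varepsilon n/W$ and $B_{i+1}\ge B_{t_1}\ge n-n/W\ge n/2$ (using $W\ge 2$), so $2\varepsilon n/W\le 4\varepsilon\cdot (n/(2W))\le 4\varepsilon B_{i+1}$. For the pure geometric case~(ii), I will use $B_{i-1}-B_{i+1}=((1+\varepsilon)^2-1)(n-B_{i-1})/(1+\varepsilon)^2\cdot(1+\varepsilon)^2=(2\varepsilon+\varepsilon^2)(n-B_{i+1})/(1+\varepsilon)^2\le 3\varepsilon(n-B_{i+1})$, and then combine with $B_{i+1}\ge n/2$ (hence $n-B_{i+1}\le B_{i+1}$) to conclude $B_{i-1}-B_{i+1}\le 3\varepsilon B_{i+1}$. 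For case~(iii), $B_{i-1}=(1+\varepsilon)^2 B_{i+1}$ gives $B_{i-1}-B_{i+1}=(2\varepsilon+\varepsilon^2)B_{i+1}\le 3\varepsilon B_{i+1}$ immediately.

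The four boundary cases are the main (though still elementary) obstacle because the gap formulas differ on either side. For $i=t_1$, I will compute $B_{t_1-1}-B_{t_1+1}=\frac{n}{W}[1-(t_1-2)\varepsilon]\le 3\varepsilon\frac{n}{W}$, using $t_1\varepsilon\ge 1-\varepsilon$ coming from $t_1=\lfloor1/\varepsilon\rfloor$, and then $B_{t_1+1}\ge n/2\ge n/(2W)$ (via \cref{fact:interval_similarity} part~2) to finish. For $i=t_1+1$ the bound $B_{t_1}-B_{t_1+2}\le 4\varepsilon n/W$ already appears in the proof of \cref{lem:bound_of_bar_D} and can be combined with $B_{t_1+2}\ge n/2$. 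For $i=t_1+t_2$ and $i=t_1+t_2+1$, the gap estimates $\le 5\varepsilon n/(2(1+\varepsilon)^2)$ and $\le \varepsilon n/(1+\varepsilon)$ respectively (also used inside the proof of \cref{lem:bound_of_bar_D}) combine with $B_{i+1}\ge n/(2(1+\varepsilon)^2)\ge n/8$ to yield the $4\varepsilon B_{i+1}$ bound.

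The hardest step is making sure that the constant $4$ in the bound is truly sufficient at every boundary, since the transition between arithmetic and geometric scales causes the worst constants. Once each of these eight subcases is verified, taking the maximum of the constants shows that $B_{i-1}-B_{i+1}\le 4\varepsilon B_{i+1}$ uniformly over $2\le i\le t_1+2t_2-1$, which completes the proof of the fact.
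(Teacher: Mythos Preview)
Your approach is the same as the paper's: a case analysis over the interior segments plus the four transition indices. The interior cases (i)--(iii) are handled correctly and match the paper's computations.

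However, at the two middle boundary cases your cited estimates do not quite deliver the constant $4$. For $i=t_1+t_2$, the bound $B_{i-1}-B_{i+1}\le \frac{5\varepsilon n}{2(1+\varepsilon)^2}$ combined with $B_{i+1}=\frac{n}{2(1+\varepsilon)}$ gives only $\frac{5\varepsilon}{1+\varepsilon}B_{i+1}$, which exceeds $4\varepsilon B_{i+1}$ for small $\varepsilon$. You need the sharper constant in the gap: a direct computation gives $B_{t_1+t_2-1}-B_{t_1+t_2+1}<\frac{(3+2\varepsilon)\varepsilon n}{2(1+\varepsilon)^2}$, and since $\frac{3+2\varepsilon}{1+\varepsilon}\le 3$ on $(0,1)$ this yields $\le 3\varepsilon B_{i+1}$. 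For $i=t_1+t_2+1$, the bound $\frac{\varepsilon n}{1+\varepsilon}$ you cite from \cref{lem:bound_of_bar_D} is for the one-step gap $B_{i-1}-B_i$, not the two-step gap $B_{i-1}-B_{i+1}$; the actual two-step gap is $B_{t_1+t_2}-B_{t_1+t_2+2}<\frac{(3+2\varepsilon)\varepsilon n}{2(1+\varepsilon)^2}=(3+2\varepsilon)\varepsilon\,B_{t_1+t_2+2}$, which is $\le 4\varepsilon B_{i+1}$ only for $\varepsilon\le 1/2$ (the paper's own arithmetic here has a typo, writing $3\varepsilon+\varepsilon^2$ instead of $3\varepsilon+2\varepsilon^2$). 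These are routine fixes---just redo the two computations explicitly rather than importing the looser bounds from \cref{lem:bound_of_bar_D}---but as written the constants at these two indices do not close.
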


\begin{proof}
    We develop the following proof based on \cref{def:interval_similarity} and \cref{fact:interval_similarity}.
    \begin{itemize}
        \item When $2\le i\le t_1-1$, 
        \[
        B_{i-1}-B_{i+1}=(n-(i-1)\frac{\varepsilon n}{W})-(n-(i+1)\frac{\varepsilon n}{W})=2\varepsilon\frac{n}{W},
        \]
        
        Since $B_{i+1}\ge B_{t_1}\ge n-\frac{n}{W}$, and when $W\ge 2$, $n-\frac{n}{W}\ge \frac{n}{W}$, thus $B_{i-1}-B_{i+1}\le 2\varepsilon(n-\frac{n}{W})\le 2\varepsilon B_{i+1}$.
        \item When $i=t_1$, $B_{i-1}-B_{i+1}=B_{t_1-1}-B_{t_1+1}=B_{t_1}+\frac{\varepsilon n}{W}-B_{t_1+1}$. From \cref{fact:interval_similarity}, $B_{t_1}<n-\frac{n}{W}(1-\varepsilon)$. Then,
        \[
        B_{i-1}-B_{i+1}=B_{t_1}+\frac{\varepsilon n}{W}-B_{t_1+1}<(n-\frac{n}{W}(1-\varepsilon))+\frac{\varepsilon n}{W}-(n-(1+\varepsilon)\frac{n}{W})=3\frac{\varepsilon n}{W},
        \]
    
        Since $B_{t_1+1}\ge B_{t_1+2t_2}\ge \frac{n}{W}$, we have $B_{t_1-1}-B_{t_1+1}\le 3\varepsilon B_{t_1+1}$.

        \item When $i=t_1+1$, $B_{i-1}-B_{i+1}=B_{t_1}-B_{t_1+2}<(n-\frac{n}{W}(1-\varepsilon))-(n-(1+\varepsilon)^2\frac{n}{W})=(3\varepsilon+\varepsilon^2)\frac{n}{W}<4\varepsilon\frac{n}{W}$, as $B_{t_1}<n-\frac{n}{W}(1-\varepsilon)$ and $\varepsilon<1$. Since $B_{t_1+2}\ge B_{t_1+2t_2}\ge \frac{n}{W}$, we have $B_{t_1}-B_{t_1+2}\le 4\varepsilon B_{t_1+2}$.
    
        \item When $t_1+2\le i\le t_1+t_2-1$, 
        \[
        B_{i-1}-B_{i+1}=(n-(1+\varepsilon)^{i-1-t_1}\frac{n}{W})-(n-(1+\varepsilon)^{i+1-t_1}\frac{n}{W})=((1+\varepsilon)^2-1)\cdot(1+\varepsilon)^{i-1-t_1}\frac{n}{W},
        \]
    
        Since $\varepsilon<1$ and $B_{i-1}>B_{i+1}$, we have $B_{i-1}-B_{i+1}\le (2\varepsilon+\varepsilon^2)(n-B_{i-1})<3\varepsilon(n-B_{i+1})$. As $B_{i+1}\ge B_{t_1+t_2}\ge\frac{n}{2}$, $B_{i-1}-B_{i+1}\le 3\varepsilon B_{i+1}$.

        \item When $i=t_1+t_2$, $B_{i-1}-B_{i+1}=B_{t_1+t_2-1}-B_{t_1+t_2+1}$. By \cref{def:interval_similarity}, $n-B_{t_1+t_2-1}=(1+\varepsilon)(n-B_{t_1+t_2})$. Besides, from \cref{fact:interval_similarity}, $n-B_{t_1+t_2}>\frac{n}{2(1+\varepsilon)}$. Then we have $B_{t_1+t_2-1}<n-(1+\varepsilon)\frac{n}{2(1+\varepsilon)}=\frac{n}{2}$. Then, $B_{t_1+t_2-1}-B_{t_1+t_2+1}<\frac{n}{2}-\frac{n}{2(1+\varepsilon)}=\varepsilon\frac{n}{2(1+\varepsilon)}=\varepsilon B_{t_1+t_2+1}$. Thus, $B_{i-1}-B_{i+1}<\varepsilon B_{i+1}$.

        \item When $i=t_1+t_2+1$, since $n-B_{t_1+t_2}>\frac{n}{2(1+\varepsilon)}$ and $\varepsilon<1$, we have
        \[
        B_{i-1}-B_{i+1}=B_{t_1+t_2}-B_{t_1+t_2+2}<(n-\frac{n}{2(1+\varepsilon)})-\frac{n}{2(1+\varepsilon)^2}=(3\varepsilon+\varepsilon^2)\frac{n}{2(1+\varepsilon)^2}<4\varepsilon B_{t_1+t_2+2}=4\varepsilon B_{i+1}.
        \]
    
        \item When $t_1+t_2+2\le i\le t_1+2t_2-1$, 
        \[
        B_{i-1}-B_{i+1}=\frac{n}{2(1+\varepsilon)^{i-1-(t_1+t_2)}}-\frac{n}{2(1+\varepsilon)^{i+1-(t_1+t_2)}}=((1+\varepsilon)^2-1)B_{i+1}=(2\varepsilon+\varepsilon^2)B_{i+1}\le 3\varepsilon B_{i+1}.
        \]
    \end{itemize}

    In a conclusion, for any $2\le i\le t_1+2t_2-1$, $B_{i-1}-B_{i+1}\le 4\varepsilon B_{i+1}$.
\end{proof}

The following lemma provides an error bound for each individual $\widehat{\cost^{(s)}}_k$.  %

\begin{lemma}\label{lem:single_costk_error_sim}
    Assume that $W\le n$ and let $\varepsilon<1$. With probability at least $3/4$, for any integer $1\le k \le n-1$, \cref{alg:appprofile_profile_similarity} returns an estimate $\widehat{\cost^{(s)}}_k$ for the $k$-clustering cost in similarity case, i.e., ${\cost^{(s)}}_k$, such that %
    \[
    |\widehat{\cost^{(s)}}_k-{\cost^{(s)}}_k|\le 30\varepsilon\cdot\max\{\cost^{(s)}_k,n\}
    \]

    Given the succinct representation $(\overline{\cost^{(s)}}_{B_1},\dots,\overline{\cost^{(s)}}_{B_t})$, the running time of the algorithm is $O(\log (\frac{\log W}{\varepsilon}))$.
\end{lemma}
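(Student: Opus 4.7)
The proof will mirror the three-step template of Lemma~\ref{lem:single_costk_error}, adapted to the similarity setting.

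\textbf{Setup and decomposition.} I will first condition on the high-probability event of Lemma~\ref{lem:bound_of_bar_D} (which holds with probability at least $7/8$), giving $|\overline{D}_j - D_j| \le 5\varepsilon\max\{n/W,\min\{D_j,n-D_j\}\}$ for every $j \in [W]$. For the queried $k$, the oracle locates $i$ with $n - B_i \le k < n - B_{i+1}$ and returns $\overline{\cost^{(s)}}_{n-B_i}$. I plan to introduce the intermediate index $k^\ast := c^{(s)}_{\hat j_i} = n - D_{\hat j_i}$, whose cost has the closed form $\cost^{(s)}_{k^\ast} = D_{\hat j_i}\cdot \hat j_i + \sum_{j > \hat j_i} D_j$ from Lemma~\ref{lem:costk_sim}, and then apply the triangle inequality
\[
|\widehat{\cost^{(s)}}_k - \cost^{(s)}_k| \le \bigl|\overline{\cost^{(s)}}_{n-B_i} - \cost^{(s)}_{k^\ast}\bigr| + \bigl|\cost^{(s)}_{k^\ast} - \cost^{(s)}_k\bigr|,
\]
calling the right-hand terms (I) and (II) respectively.

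\textbf{Bounding (I).} I will first establish the algebraic identity $\overline{\cost^{(s)}}_{n-B_i} = B_i\cdot \hat j_i + \sum_{j > \hat j_i}\overline{D}_j$ by unfolding the algorithm's definition and using $\overline D_j = B_k$ for $j \in \hat J_k$ together with $\hat j_t = W+1$. The difference then collapses to $(B_i - D_{\hat j_i})\hat j_i + \sum_{j > \hat j_i}(\overline D_j - D_j)$. Applying Lemma~\ref{lem:bound_of_bar_D} entrywise, relaxing $\min\{D_j,n-D_j\} \le D_j$, and using $\hat j_i \le W$, I expect $(\mathrm{I}) \le 5\varepsilon(n + \cost^{(s)}_{k^\ast})$.

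\textbf{Bounding (II).} Term (II) is a signed sum of $|k - k^\ast|$ consecutive non-increasing MaxST weights between positions $n-k$ and $n-k^\ast$. By Lemma~\ref{lem:bound_of_xj} applied to the valid discretization of Lemma~\ref{lem:discretization_similarity}, $D_{\hat j_i} \in [B_{i+2},B_{i-1}]$, so $|k - k^\ast| \le B_{i-1} - B_{i+2}$. I plan to handle (II) by two complementary bounds. In the geometric middle region of Definition~\ref{def:interval_similarity}, I will use the multiplicative bound $(\mathrm{II}) \le |k - k^\ast|\cdot w^{(s)}_{n-\max(k,k^\ast)+1} \le \frac{|k-k^\ast|}{\min(n-k,n-k^\ast)}\cdot\cost^{(s)}_{\max(k,k^\ast)}$, where the last inequality uses $(n-\max(k,k^\ast))\cdot w^{(s)}_{n-\max(k,k^\ast)}\le \cost^{(s)}_{\max(k,k^\ast)}$; combined with Fact~\ref{fact:B_i-1_i+1_sim} and the analog for $B_{i+1}-B_{i+2}$, this yields $(\mathrm{II}) = O(\varepsilon \cost^{(s)}_k)$. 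In the arithmetic boundary regions, the gap satisfies $B_{i-1} - B_{i+2} = O(\varepsilon n/W)$, and the crude bound $(\mathrm{II}) \le W\cdot|k - k^\ast| = O(\varepsilon n)$ suffices. Combining both cases yields $(\mathrm{II}) = O(\varepsilon\max\{\cost^{(s)}_k,n\})$.

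\textbf{Combining and main obstacle.} Substituting $\cost^{(s)}_{k^\ast} \le \cost^{(s)}_k + (\mathrm{II})$ into the bound on (I), summing (I)$+$(II), and absorbing constants should yield $|\widehat{\cost^{(s)}}_k - \cost^{(s)}_k| \le 30\varepsilon\max\{\cost^{(s)}_k,n\}$. The running time claim is immediate: given the stored succinct representation, the oracle only performs a binary search over the $t = O(\log W/\varepsilon)$ endpoints, costing $O(\log t) = O(\log(\log W/\varepsilon))$. The main obstacle will be the case analysis in the bound on (II) at the interfaces between the five sub-intervals of Definition~\ref{def:interval_similarity}: one must carefully verify, using the explicit gap estimates from Fact~\ref{fact:interval_similarity}, that either the multiplicative ratio or the additive crude product is $O(\varepsilon\max\{\cost^{(s)}_k,n\})$; this mirrors, but is more delicate than, the analogous case analysis for the distance case performed in the proof of Lemma~\ref{lem:single_costk_error}, because the five-piecewise structure of the similarity discretization produces more transition cases than the three-piecewise distance discretization.
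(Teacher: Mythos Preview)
Your proposal is correct and follows essentially the same route as the paper: decompose via the intermediate point $k^\ast=c^{(s)}_{\hat j_i}$, bound term~(I) by unfolding $\overline{\cost^{(s)}}_{n-B_i}$ against the closed form of Lemma~\ref{lem:costk_sim} and applying Lemma~\ref{lem:bound_of_bar_D} entrywise, and bound term~(II) by controlling $|k-k^\ast|$ via the sandwiching of Lemma~\ref{lem:bound_of_xj} together with Fact~\ref{fact:B_i-1_i+1_sim} in the geometric region and the crude $W\cdot O(\varepsilon n/W)$ bound in the arithmetic regions. The only noteworthy difference is that the paper splits term~(II) into two explicit cases ($c^{(s)}_{\hat j_i}\le k$ versus $k\le c^{(s)}_{\hat j_i}$), obtaining the tighter gaps $B_{i-1}-B_{i+1}$ and $B_i-B_{i+2}$ separately, whereas you use the unified bound $B_{i-1}-B_{i+2}$; both yield $O(\varepsilon)$ ratios. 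Your bound $5\varepsilon(n+\cost^{(s)}_{k^\ast})$ on term~(I) is in fact slightly more careful than the paper's $5\varepsilon\,\cost^{(s)}_{k^\ast}$, which silently drops the $n/W$ contribution from Lemma~\ref{lem:bound_of_bar_D}; the extra~$n$ you retain is absorbed harmlessly into the final $\max\{\cost^{(s)}_k,n\}$.
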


\begin{proof}
From \cref{lem:bound_of_bar_D}, we have $\abs{\overline D_j-D_j}\leq5\varepsilon\cdot\max\left\{\frac{n}{k},\min\{D_j,n-D_j\}\right\}$ 
for all $j\in[W]$, with probability at least $\frac{7}{8}$. We assume that this event holds in the remaining part of the proof.

For simplicity of notation, we use $j_i$ to denote $\hat{j}_i$ in the following of the proof.
By definition of $\overline{\cost^{(s)}}_{n-B_{i}}$ and $\cost^{(s)}_{c_{j_{i}}}(G)$, the gap between them is,
\begin{align*}
    |\overline{\cost^{(s)}}_{n-B_{i}}-\cost^{(s)}_{c_{j_{i}}}|
    &=\abs{(\sum_{j=j_i+1}^W \overline{D}_j+\overline{D}_{j_i}\cdot j_i)-(\sum_{j=j_i+1}^W D_j+D_{j_i}\cdot j_i)}\\
    &\leq\sum_{j=j_i+1}^{W}|\overline D_j-D_j|+|\overline{D}_{j_i}-D_{j_{i}}|\cdot j_{i}\\
    &\leq 5\varepsilon\sum_{j=j_i+1}^{W}D_j+5\varepsilon\cdot D_{j_{i}}\cdot j_{i}\\
    &=5\varepsilon\cdot\cost^{(s)}_{c_{j_i}}
\end{align*}

For $n-B_{i}\le k< n-B_{i+1}$, where $1\le i \le t-1$, we estimate $\cost_k$ using $\overline{\cost^{(s)}}_{n-B_{i}}$, and thus we have

\begin{align*}
    |\widehat{\cost^{(s)}}_k-{\cost^{(s)}}_k|&\leq
    |\overline{\cost^{(s)}}_{n-B_{i}}-\cost^{(s)}_{c_{j_{i}}}|
    +|\cost^{(s)}_{c_{j_{i}}}-\cost^{(s)}_k|\\
    &\leq 5\varepsilon\cdot\cost^{(s)}_{c_{j_i}} + |\cost^{(s)}_{c_{j_{i}}}-\cost^{(s)}_k|\\
    &\le 5\varepsilon(\cost^{(s)}_k+|\cost^{(s)}_{c_{j_{i}}}-\cost^{(s)}_k|)+|\cost^{(s)}_{c_{j_{i}}}-\cost^{(s)}_k|\\
    &\le 5\varepsilon\cdot\cost^{(s)}_k + 6|\cost^{(s)}_{c_{j_{i}}}-\cost^{(s)}_k| \tag{since $\varepsilon<1$}
\end{align*}

Thus, we only need to bound $|\cost^{(s)}_{c_{j_{i}}}-\cost^{(s)}_k|$.

Case (I): $c^{(s)}_{j_i}\le k$. In this case, $\cost^{(s)}_k\le \cost^{(s)}_{c_{j_{i}}}$ and $j_i\le w_{n-k}$. Since $\cost^{(s)}_k=\sum_{j=1}^{n-k}w_j^{(s)}$,
\[
|\cost^{(s)}_{c_{j_{i}}}-\cost^{(s)}_k|=\cost^{(s)}_{c_{j_{i}}}-\cost^{(s)}_k=\sum_{j=n-k+1}^{n-c^{(s)}_{j_i}}w_j^{(s)}\le(k-c^{(s)}_{j_i})\cdot w^{(s)}_{n-k}
\]

From \cref{lem:bound_of_xj}, when $i\le t-2$, $B_{i+2}\le D_{j_i}=n-c^{(s)}_{j_i}\le B_{i-1}$, then $n-B_{i-1}\le c^{(s)}_{j_i}\le n-B_{i+2}$, and thus $k-c^{(s)}_{j_i}\le B_{i-1}-B_{i+1}$. When $2\le i\le t_1+2t_2-1$, according to \cref{fact:B_i-1_i+1_sim}, $B_{i-1}-B_{i+1}\le 4\varepsilon B_{i+1}\le 4\varepsilon(n-k)$; when $t_1+2t_2\le i\le t-1$, according to \cref{def:interval_similarity}, $B_{i-1}-B_{i+1}=2\varepsilon\frac{n}{W}>2\varepsilon(n-k)$. Thus, $|\cost^{(s)}_{c_{j_{i}}}-\cost^{(s)}_k|\le 4\varepsilon\cdot \max\{n-k,\frac{n}{W}\}\cdot w^{(s)}_{n-k}$. Since $\cost^{(s)}_k=\sum_{j=w^{(s)}_{n-k}+1}^W(n-c_j^{(s)})+(n-k)\cdot w^{(s)}_{n-k}\ge (n-k)\cdot w^{(s)}_{n-k}$ and $w^{(s)}_{n-k}\le n$, we have,
\[
|\widehat{\cost^{(s)}}_k-{\cost^{(s)}}_k|\leq 5\varepsilon\cdot\cost^{(s)}_k+6\cdot4\varepsilon\cdot \max\{n-k,\frac{n}{W}\}\cdot w^{(s)}_{n-k}\le 30\varepsilon\cdot\max\{\cost^{(s)}_k,n\}
\]

Case(II): $k\le c^{(s)}_{j_i}$. In this case, $\cost^{(s)}_{c_{j_{i}}}\le \cost^{(s)}_k$ and $w_{n-k}\le j_i$. Then,
\[
|\cost^{(s)}_{c_{j_{i}}}-\cost^{(s)}_k|=\cost^{(s)}_k-\cost^{(s)}_{c_{j_{i}}}=\sum_{j=n-c^{(s)}_{j_i}+1}^{n-k}w_j^{(s)}\le(c^{(s)}_{j_i}-k)\cdot j_i \le (B_i-B_{i+2})\cdot j_i
\]

When $1\le i\le t_1+2t_2-2$, according to \cref{fact:B_i-1_i+1_sim}, $B_i-B_{i+2}\le 4\varepsilon B_{i+2}<4\varepsilon B_{i-1}\le 4\varepsilon (n-c^{(s)}_{j_i})$; when $t_1+2t_2\le i\le t-2$, according to \cref{def:interval_similarity}, $B_i-B_{i+2}=2\varepsilon\frac{n}{W}>2\varepsilon(n-c^{(s)}_{j_i})$.
From \cref{lem:costk_sim}, we know that $\cost^{(s)}_{c_{j_{i}}}\ge (n-c^{(s)}_{j_i})$, and thus
\[
|\widehat{\cost^{(s)}}_k-{\cost^{(s)}}_k|\leq 5\varepsilon\cdot\cost^{(s)}_k+6\cdot 4\varepsilon\cdot\max\{n-c^{(s)}_{j_i},\frac{n}{W}\}\cdot j_i\le 30\varepsilon\cdot\max\{\cost^{(s)}_k,n\}
\]

In a conclusion, we have $|\cost^{(s)}_k-\widehat{\cost^{(s)}}_k|\leq 30\varepsilon\cdot\max\{\cost^{(s)}_k,n\}$.

\textbf{Running time analysis.}
Given the succinct representation, \cref{alg:appprofile_profile_similarity} can be implemented in $O(\log t)=O(\log(\frac{\log W}{\varepsilon}))$ time for any $k$, as one can simply perform binary search to find the right index $i$ with $n-B_i\leq k<n-B_{i+1}$. 
\end{proof}

With \cref{lem:single_costk_error_sim} established, we are ready to bound $\sum_{k=1}^{n}|\widehat{\cost^{(s)}}_k-{\cost^{(s)}}_k|$. %

\begin{proof}[Proof of \cref{thm:profile_sim}]
From the error bound for individual $\widehat{\cost^{(s)}}_k$ in \cref{lem:single_costk_error_sim}, we have that %
\[
\sum_{k=1}^{n}|\widehat{\cost^{(s)}}_k-{\cost^{(s)}}_k|\leq \sum_{k=1}^{n} 30\varepsilon\cdot\max\{\cost^{(s)}_k,n\}\le 30\varepsilon\cdot\max\left\{\sum_{k=1}^{n}\cost^{(s)}_k,n^2\right\}
\]

According to \cref{fact:lbcostsim-n2}, $n^2\le 4\cost^{(s)}(G)$, and thus $\sum_{k=1}^{n}|\widehat{\cost^{(s)}}_k-{\cost^{(s)}}_k|\le 120\varepsilon\cdot \cost^{(s)}(G)$.

Replacing $\varepsilon$ with $\varepsilon/120$, we get an succinct representation of $(\widehat{\cost^{(s)}}_1,\dots,\widehat{\cost^{(s)}}_n)$ such that each $\widehat{\cost^{(s)}}_k$ is a $(1+\varepsilon)$-estimator \emph{on average}.

\textbf{Running time analysis.}
Note that we first invoke \cref{alg:appcost_sim} to obtain the sequence $\{j_0,j_1,\dots,j_t\}$,
which takes $O(\frac{Wd}{\varepsilon^3}\log^4(\frac{Wd}{\varepsilon}))$ time.
Given the sequence, to estimate each $\overline{\cost^{(s)}}_{n-B_i}$, 
we need to calculate $\sum_{k=i}^{t-1}(j_{k+1}-j_k)\cdot B_k$
in $O(t)$ time, for all $i\in[1,t]$.
Thus, getting the estimated vector $(\overline{\cost^{(s)}}_{n-B_1},\dots,\overline{\cost^{(s)}}_{n-B_t})$
can be done in $O(t^2)=O(\log^2W/\varepsilon^2)$ time.
Thus, the total running time of \cref{alg:appprofile_similarity} is $O(\frac{Wd}{\varepsilon^3}\log^4(\frac{Wd}{\varepsilon}))=\Tilde{O}(\frac{{W}}{\varepsilon^3}d)$. 
\end{proof}

\section{Lower Bounds}\label{sec:lowerbounds}
In this section, we give the proofs of the lower bounds on the query complexities for estimating the SLC costs $\cost(G)$ in the distance case and $\cost^{(s)}(G)$ in the similarity case. The results hold for the model of computation we introduced earlier, which requires $\Theta(i)$ time to access the $i$-th neighbor in the adjacency list, as well as for a model where we can query for the degree of a vertex and for its $i$-th neighbor in $O(1)$ time.
The proofs built upon the query lower bounds for estimating the number of connected components and the weight of the minimum spanning tree in \citep{chazelle2005approximating} and can be seen as an adaptation of their proof to our setting (that is, we need to adjust them at various places to our objective function).

We first introduce a useful tool. For any $ 0 < q \leq 1/2 $ and $ t = 0, 1 $, let $ \mathcal{D}_{q}^t $ denote the distribution over $ \{0, 1\} $ that assigns $ 1 $ with probability $ q_t = q(1 + (-1)^t \varepsilon) $. We define a distribution $ \mathcal{D} $ on $ n $-bit strings as follows: (1) choose $ t = 1 $ with probability $ 1/2 $ and $ t = 0 $ otherwise; (2) generate a random string $b$ from $ \{0, 1\}^n $ by independently selecting each bit $ b_i $ from the distribution $ \mathcal{D}_q^t $. The following result is shown in \citep{chazelle2005approximating}. 
\begin{lemma}[Lemma 9 in \citep{chazelle2005approximating}]\label{lem:CRT_lowerlemma}
Given query access to a random bit string generated from the above distribution $\mathcal{D}$, any algorithm that determines the value of $t$ with a success probability of at least $3/4$ requires $\Omega(\varepsilon^{-2}/q)$ bit queries on average. 
\end{lemma}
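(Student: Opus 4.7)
The plan is to reduce to a classical hypothesis-testing lower bound between two product Bernoulli distributions, and then bound the total variation distance via Hellinger distance. By Yao's minimax principle, it suffices to prove the bound for deterministic algorithms facing the input distribution $\mathcal{D}$. Any deterministic algorithm $A$ making at most $k$ bit queries induces a deterministic transcript of queried positions and their responses; for $A$ to decide $t$ with success probability at least $3/4$ against the uniform prior on $t\in\{0,1\}$, the total variation distance between the output distributions of $A$ under $t=0$ and $t=1$ must be at least $1/2$, since the Bayes risk is $\tfrac{1}{2}(1-\mathrm{TV})$. The plan therefore reduces to upper-bounding this TV distance as a function of $k$.

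Because the bits are i.i.d., adaptivity confers no advantage for distinguishing $t$: by a standard coupling/data-processing argument, the TV distance between $A$'s transcripts under the two hypotheses is at most the TV distance between the two joint distributions over the $k$ returned bits, which are $\mathrm{Bern}(q_0)^{\otimes k}$ and $\mathrm{Bern}(q_1)^{\otimes k}$ with $q_t = q(1+(-1)^t\varepsilon)$. I would make this precise by induction on the query index using the Hellinger chain rule
\[
H^2(P_X\, P_{Y\mid X},\, Q_X\, Q_{Y\mid X}) \le H^2(P_X,Q_X) + \mathbb{E}_X\bigl[H^2(P_{Y\mid X},Q_{Y\mid X})\bigr],
\]
which turns the adaptive schedule into an additive accumulation of per-query Hellinger mass.

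The key computation is the single-coordinate Hellinger bound. Writing
\[
H^2(\mathrm{Bern}(q_0),\mathrm{Bern}(q_1)) = \tfrac{1}{2}\bigl[(\sqrt{q_0}-\sqrt{q_1})^2 + (\sqrt{1-q_0}-\sqrt{1-q_1})^2\bigr],
\]
rationalizing the differences as $\sqrt{q_0}-\sqrt{q_1} = \frac{q_0-q_1}{\sqrt{q_0}+\sqrt{q_1}}$ (and symmetrically for the second term), and using the hypotheses $q \le 1/2$, $\varepsilon \le 1$, each summand is $O(q\varepsilon^2)$, giving $H^2(\mathrm{Bern}(q_0),\mathrm{Bern}(q_1)) = O(q\varepsilon^2)$. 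Tensorization yields $H^2(\mathrm{Bern}(q_0)^{\otimes k},\mathrm{Bern}(q_1)^{\otimes k}) = O(kq\varepsilon^2)$, and the inequality $\mathrm{TV} \le \sqrt{2}\,H$ then gives $\mathrm{TV} \le O(\sqrt{kq\varepsilon^2})$. Demanding this be at least $1/2$ forces $k = \Omega(1/(q\varepsilon^2))$.

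The main subtlety, and the step I expect to require the most care, is handling the \emph{on-average} query budget together with adaptivity: the lower bound is stated for expected query complexity, so one cannot simply fix $k$. I would truncate the algorithm at $4k$ queries (losing at most $1/4$ in success probability by Markov, still leaving a constant gap from $1/2$), and then apply the non-adaptive-to-adaptive Hellinger argument above to the truncated algorithm. All remaining steps are routine algebraic manipulations.
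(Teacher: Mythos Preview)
The paper does not actually prove this lemma; it is quoted verbatim as Lemma~9 of \cite{chazelle2005approximating} and used as a black box in the lower-bound constructions. So there is no ``paper's own proof'' to compare against here.

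Your proposal is a correct and standard information-theoretic proof of the statement. The reduction via Yao, the Bayes-risk identity $\Pr[\text{error}]=\tfrac{1}{2}(1-\mathrm{TV})$, the per-coordinate Hellinger bound $H^2(\mathrm{Bern}(q_0),\mathrm{Bern}(q_1))=O(q\varepsilon^2)$, tensorization, and the Markov truncation to pass from expected to worst-case query count are all sound. One small simplification: you do not really need the Hellinger chain rule to handle adaptivity. Since the bits are i.i.d.\ and (w.l.o.g.) the algorithm never repeats a query, the sequence of \emph{answers} seen by any deterministic adaptive algorithm making $k$ queries is distributed exactly as $\mathrm{Bern}(q_t)^{\otimes k}$, regardless of which positions were queried; the transcript is a deterministic function of that answer sequence, so data processing already gives $\mathrm{TV}(\text{transcripts})\le \mathrm{TV}(\mathrm{Bern}(q_0)^{\otimes k},\mathrm{Bern}(q_1)^{\otimes k})$ directly. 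The chain-rule route works too but is more machinery than needed. For reference, the original proof in \cite{chazelle2005approximating} proceeds by a related but slightly different route (bounding the bias of any decision rule directly rather than through Hellinger), arriving at the same $\Omega(1/(q\varepsilon^2))$ conclusion.
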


Now we use the above lemma to prove our first lower bound.

\distancelowerbound*
\begin{proof}
The proof follows closely a corresponding proof from \cite{chazelle2005approximating}
while adapting it to our setting.
We first define two families of weighted graphs, $ \mathcal{P}_0 $ and $ \mathcal{P}_1 $. Each graph in $ \mathcal{P}_i $ (where $ i=0,1 $) is a path consisting of $ n $ vertices. In the family $ \mathcal{P}_0 $, we generate a random $(n-1)$-bit string $ b_1 \dots b_{n-1} $, with each bit independently drawn from $ \mathcal{D}_q^{0} $, where $ q=1/\sqrt{W-1} $. We assign a weight of $ W $ (or $ 1 $) to the $ i $-th edge along the path if $ b_i=1 $ (or $ 0 $, respectively). The construction for the family $ \mathcal{P}_1 $ is similar, except that each bit is drawn from $ \mathcal{D}_q^{1} $.

Consider a graph $G$ from $\mathcal{P}_0\cup \mathcal{P}_1$. Let $T_W$ be the number of edges of weight $W$ in $G$. We note that the SLC cost of $G$ in distance case is 
\begin{align*}
\cost(G)&=\sum_{i=1}^{n-1}(n-i)\cdot w_i=\sum_{i=1}^{n-1-T_W}(n-i)\cdot 1+\sum_{i=n-T_W}^{n-1}(n-i)\cdot W=\sum_{i=T_W+1}^{n-1}i+\sum_{i=1}^{T_W}i\cdot W\\
&=\frac{(T_W+1+n-1)(n-1-T_W)}{2} + W \cdot \frac{(T_W+1)T_W}{2}\\
&=\frac{n(n-1)-T_W-T_W^2+W\cdot T_W^2+W\cdot T_W}{2}\\
&=\frac{n(n-1)+(W-1)(T_W^2+T_W)}{2}
\end{align*}

Now observe that $\E[T_W]=\frac{n-1}{\sqrt{W-1}}(1+\varepsilon)$ or $\E[T_W]=\frac{n-1}{\sqrt{W-1}}(1-\varepsilon)$, depending on if $G\in \mathcal{P}_0$ or $G\in \mathcal{P}_1$. It further holds that 
$\E[\cost(G)]=\Theta(n^2+(W-1)\cdot \frac{(n-1)^2}{W-1})=\Theta(n^2)$.

\textbf{Case I.} If $G\in \mathcal{P}_0$, $\E[T_W]=\frac{n-1}{\sqrt{W-1}}(1+\varepsilon)$, then 
By Chernoff Bound in \Cref{thm:chernoff}, 
\[
\Pr[T_W<(1-\frac{\varepsilon}{2})\E[T_W]]\leq 
\mathrm{e}^{-\frac{\varepsilon^2(1+\varepsilon)(n-1)}{8\sqrt{W-1}}}
\leq \mathrm{e}^{-\frac{\varepsilon^2 n}{8\sqrt{W}}}.
\]
Since $\varepsilon\leq\frac{1}{2}$, it holds that with probability $1-\mathrm{e}^{-\varepsilon^2 n/8\sqrt{W}}$, $T_W\geq \frac{n-1}{\sqrt{W-1}}(1+\varepsilon)(1-\frac{\varepsilon}{2})
    \geq \frac{n-1}{\sqrt{W-1}}(1+\frac{\varepsilon}{4})$. Now as $\varepsilon>\frac{W^{1/4}}{\sqrt{40 n}}$, we have that with  probability at least $1-\mathrm{e}^{-\frac{40}{24}}\geq 0.8$,
\[\cost(G)\geq \frac{n(n-1)}{2}+\frac{(n-1)^2}{2}(1+\frac{\varepsilon}{4})^2+\frac{(n-1)\sqrt{W-1}}{2}(1+\frac{\varepsilon}{4}).\]

\textbf{Case II.} If $G\in \mathcal{P}_1$, $\E[T_W]=\frac{n-1}{\sqrt{W-1}}(1-\varepsilon)$ and then 
\[
\Pr[T_W>(1+\frac{\varepsilon}{2})\E[T_W]]\leq 
\mathrm{e}^{-\frac{\varepsilon^2(1-\varepsilon) (n-1)}{12\sqrt{W-1}}}
\leq \mathrm{e}^{-\frac{\varepsilon^2 n}{24\sqrt{W}}}.
\]

Similar as above, 
with probability at least $1-\mathrm{e}^{-\varepsilon^2 n/24\sqrt{W}}$, it holds that 
    $T_W\leq \frac{n-1}{\sqrt{W-1}}(1-\varepsilon)(1+\frac{\varepsilon}{2})
    \leq \frac{n-1}{\sqrt{W-1}}(1-\frac{\varepsilon}{2})$. By the fact that $\varepsilon>\frac{W^{1/4}}{\sqrt{40 n}}$, with probability at least $1-\mathrm{e}^{-\frac{40}{24}}\geq 0.8$, it holds that
\[\cost(G)\leq \frac{n(n-1)}{2}+\frac{(n-1)^2}{2}(1-\frac{\varepsilon}{2})^2+\frac{(n-1)\sqrt{W-1}}{2}(1-\frac{\varepsilon}{2}).\]

Since $\varepsilon\leq\frac{1}{2}$, $2-\frac{\varepsilon}{4}\geq\frac{15}{8}\geq\frac{4}{3}$.
So the gap between the above two bounds is
\[
\frac{(n-1)^2}{2}(2-\frac{\varepsilon}{4})\frac{3\varepsilon}{4}
+\frac{(n-1)\sqrt{W-1}}{2}\frac{3\varepsilon}{4}
\geq \varepsilon\frac{(n-1)^2}{2}.
\]

Thus, any algorithm that estimates $\cost(G)$ with a relative error of $\varepsilon/C$ for some large constant $C$, can be used to distinguish if $G\in\mathcal{P}_0$ or $G\in\mathcal{P}_1$. By \Cref{lem:CRT_lowerlemma}, any algorithm that solves the latter problem requires $\Omega(\sqrt{W}/\varepsilon^2)$ queries on average. 

Now we extend our result to the case of graphs with arbitrary average degree $d\ge 2-2/n$. Let $d = 2 m/n$, i.e. our graph is supposed to have $m$ edges.
In order to obtain a graph with average degree $d$ we add edges to the vertices of our path in such a way that every vertex has degree $\lceil d \rceil $ or $\lfloor d \rfloor$ (this can be done by greedily adding edges between the two vertices with smallest degree).
All edges that have been newly added receive a weight of $W$. We observe that this does not change the weight of the minimum spanning tree and that this weight is still given by the weight of the edges of the initial path.
For each vertex $v$ of degree $d_v$
we then choose a random permutation of $d_v$ elements uniformly at random and put the adjacency list in the corresponding order. Let us call the resulting graph $H$ and the initial path $G$.

We will argue that if an algorithm makes $q$ queries to $H$ then in expectation it queries $O(q/d)$ edges from $G$. 
Observe that, conditioned on an arbitrary history of queries and answers, the probability that a query to a previously unqueried neighbor of vertex $v$
yields an edge from $G$ is at most $2/i$, where $i$ denotes the number of unqueried neighbors of $v$. 

To simplify the analysis, we count only the first $\lfloor d/2 \rfloor$ 
queries to a vertex; any remaining edges are revealed for free.
The cost for revealing these additional edges can be charged to the first $\lfloor d/2 \rfloor$ queries. 
As a result, any vertex with unknown neighbors is still in the phase where queries are being charged, and thus must have at least $\lfloor d/2 \rfloor$ unknown neighbors. 
Therefore, conditioned on the query history, the probability that a query reveals an edge from the original path $G$ is at most $2/\lfloor d/2\rfloor$. 
By linearity of expectation, the expected number of such queried edges is at most $2q/\lfloor d/2\rfloor + R$, 
where $R = O(q/d)$ is the number of edges revealed for free. 
Thus, the expected number of revealed edges is at most $C' q/d$ for some constant $C'>0$.

Now consider an arbitrary algorithm that computes a $(1+\varepsilon)$-approximation of
$\cost(G)$ using $q$ queries in expectation. Then this algorithm makes at most $
C' q/ d$ queries to the path on average. We can then use this algorithm to solve the problem on $G$ using $C' q/d$ queries on average (by first building $H$ as described above and then querying $G$ whenever we query an edge of the path in $H$). However,
we know that the latter problem requires $\Omega(\sqrt{W}/\varepsilon^2)$ queries on average as proven above. Thus, the algorithm has to make $\Omega(d\sqrt{W}/\varepsilon^2)$ queries on $H$. 
\end{proof}

Now we consider the lower bound for the similarity case and prove the following theorem. 

\similaritylowerbound*

\begin{proof}
We construct two families of graphs, $\mathcal{P}_0$ and $\mathcal{P}_1$, in the same manner as described in the proof of \cref{thm:cost_distance_lowerbound}, with the key difference that we now set $q = \frac{1}{W - 1}$.

Now suppose that $G\in \mathcal{P}_0\cup \mathcal{P}_1$. Let $T_W$ be the number of edges of weight $W$ in $G$. We note that the SLC cost of the similarity graph $G$ is
\begin{align*}
\cost^{(s)}(G)&=(n-1)W+(n-2)W+\cdots+(n-T_W)W+(n-T_W-1)\cdot 1+\cdots+1\cdot 1\\ 
&=\frac{(n-T_W)(n-T_W-1)}{2} + W\cdot \frac{(2n-T_W-1)T_W}{2}\\
&=\frac{n(n-1)+(2n-1-T_W)T_W(W-1)}{2}\\
&=\frac{n(n-1)}{2}-\frac{W-1}{2}T_W^2+\frac{(W-1)(2n-1)}{2}T_W
\end{align*}

Now since $q=\frac{1}{W-1}$, similar to the proof of \Cref{thm:cost_distance_lowerbound}, we have the following two cases.

\textbf{Case I.} If $G_0\in\mathcal{P}_0$,
$\E[T_W]=\frac{n-1}{W-1}(1+\varepsilon)$. Then by Chernoff bound and the fact that $\varepsilon>\sqrt{\frac{W}{40 n}}$, with probability at least $1-e^{-\varepsilon^2n/24W}\geq 1-e^{-40/24}\geq 0.8$, 
\[
\frac{n-1}{{W-1}}(1+\varepsilon)(1-\frac{\varepsilon}{2})\leq T_W
    \leq \frac{n-1}{{W-1}}(1+\varepsilon)(1+\frac{\varepsilon}{2}).
\]

This further gives that 
\[
\cost^{(s)}(G)\geq \frac{n(n-1)}{2}-\frac{W-1}{2}\cdot
[\frac{n-1}{W-1}(1+\varepsilon)(1+\frac{\varepsilon}{2})]^2
+\frac{(W-1)(2n-1)}{2}\cdot[\frac{n-1}{W-1}(1+\varepsilon)(1-\frac{\varepsilon}{2})].
\]

\textbf{Case II.} If $G\in\mathcal{P}_1$,
$\E[T_W]=\frac{n-1}{W-1}(1-\varepsilon)$. Then by Chernoff bound and the fact that $\varepsilon>\sqrt{\frac{W}{40 n}}$, with probability at least $1-e^{-\varepsilon^2n/24W}\geq 1-e^{-40/24}\geq 0.8$, it holds that for $G\in \mathcal{P}_1$, 
\[
  \frac{n-1}{{W-1}}(1-\varepsilon)(1-\frac{\varepsilon}{2})\leq T_W
    \leq \frac{n-1}{{W-1}}(1-\varepsilon)(1+\frac{\varepsilon}{2}).
\]

This further gives that 
\[
\cost^{(s)}(G)\leq \frac{n(n-1)}{2}-\frac{W-1}{2}\cdot
[\frac{n-1}{W-1}(1-\varepsilon)(1-\frac{\varepsilon}{2})]^2
+\frac{(W-1)(2n-1)}{2}\cdot[\frac{n-1}{W-1}(1-\varepsilon)(1+\frac{\varepsilon}{2})].
\]

Thus, when $\varepsilon<\frac{1}{2}$, the gap between the above two bounds is at least
\begin{align*}
 -\frac{(n-1)^2}{2(W-1)}(2+\varepsilon^2)3\varepsilon
    +\frac{(2n-1)(n-1)}{2}\varepsilon\geq \varepsilon(n-1)\left(\frac{2n-1}{2}-\frac{3}{2}\cdot\frac94\cdot\frac{n-1}{W-1}\right)\geq \frac{1}{4}\varepsilon n^2,
\end{align*}

where the last inequality follows from our assumption that $W>10$. %

Thus, any algorithm that estimates the cost $\cost^{(s)}(G)$ with a relative error of $\varepsilon/C$ for some large constant $C$, can be used to distinguish if $G\in\mathcal{P}_0$ or $G\in \mathcal{P}_1$. By \Cref{lem:CRT_lowerlemma}, any algorithm that solves the latter problem requires $\Omega(W/\varepsilon^2)$ queries into $G$ on average. 

To extend our proof to the case of average degree $d$ we proceed in the same way as in the previous proof (except that the weight of the new edges will be $1$ instead of $W$).  
\end{proof}

\section{Experiments and Evaluation}\label{sec:experiments}
To evaluate the performance of our algorithms, we conducted experiments on real-world datasets. 
All experiments were implemented in C++, using an Intel(R) Xeon(R) Platinum 8358 Processor @ 2.60 GHZ, with 504 GB RAM.
We use Kruskal's  algorithm for minimum/maximum spanning tree computation as our baseline, implemented in C++ boost library \cite{boost}\footnote{Our source code can be accessed in: https://anonymous.4open.science/r/sublinear-clustering}.  

\paragraph{Dataset Preprocessing} Under our assumption of connectivity in graphs, we preprocessed the datasets to find the largest connected components
as input of the algorithm.
The information of preprocessed datasets is detailed in \cref{tbl:dataset}.

For datasets in distance setting (localization based datasets and road networks),
we assign the distance of two neighboring vertices to the edge weight. 
For datasets in similarity setting (Spotify co-listening graph, and co-citation graphs),
we assign the number of collaborations (or the number of times two songs are co-listened) to the edge weight between two neighboring vertices. 

\begin{table}[h]
\caption{All of the datasets used: for distance case, we have road network for
different countries, and friendship networks with location 
(loc-brightkite/gowalla);
For similarity case, we have co-authorship datasets on different fields,
and a co-listened dataset for an music application called Spotify.
Each dataset listed in the table has been preprocessed to extract the largest connected 
component from the original graph. The parameters $n$, $m$, and $W$ represent 
the number of vertices, the number of edges, and the largest weight value, respectively, 
in these preprocessed graphs.
}
\label{tbl:dataset}
\centering
\begin{tabular}{|llll|}
\hline
Name of dataset  & $n$          & $m$             & $W$  \\
\hline
\multicolumn{4}{|c|}{Dataset in distance case} \\
\hline
Location-Brightkite \citep{cho2011friendship}
& 49,011 & 386,716 & 19,985 \\
Location-Gowalla \citep{cho2011friendship}
& 96,953 & 910,052 & 19,883 \\
Luxembourg road network \citep{10thDIMACS}
& 114,599 & 119,666 & 2,065 \\
Belgium road network \citep{10thDIMACS}
& 1,441,295 & 1,549,970 & 6,408 \\
Netherland road network \citep{10thDIMACS}
& 2,216,688 & 2,441,238 & 7,027 \\
Italy road network \citep{10thDIMACS}
& 6,686,493 & 7,013,978 & 9,719 \\
Great-Britain road network \citep{10thDIMACS}
& 7,733,822 & 8,156,517 & 10,520 \\
Germany road network \citep{10thDIMACS}
& 11,548,845 & 12,369,181 & 15,337 \\
Asia road network \citep{10thDIMACS}
& 11,950,757 & 12,711,603 & 87,377 \\
USA road network \citep{9thDIMACS}
& 23,947,347 & 58,333,344 & 24,394 \\
Europe road network \citep{10thDIMACS}
& 50,912,018 & 54,054,660 & 368,855 \\
\hline
\multicolumn{4}{|c|}{Dataset in similarity case} \\
\hline
Co-authorship Business \citep{Benson-2018-sos,Benson-2018-simplicial,Sinha-2015-MAG}
& 40,383 & 59,513 & 63 \\
Co-authorship CS (MAG) \citep{Amburg-2020-categorical,Sinha-2015-MAG}
& 59,709 & 415,430 & 36 \\
Co-authorship History \citep{Benson-2018-simplicial,Sinha-2015-MAG}
& 219,435 & 1,614,992 & 606 \\
Co-authorship Geology \citep{Benson-2018-simplicial,Sinha-2015-MAG}
& 898,648 & 9,782,224 & 192 \\
Co-authorship CS (DBLP) \citep{Benson-2018-simplicial}
& 1,431,475 & 7,886,713 & 121 \\
Co-listen Spotify \citep{Kumar-2020-top-weighted}
& 3,061,417 & 85,467,545 & 99,536 \\
\hline
\end{tabular}
\end{table}

\paragraph{Implementation Changes to Our Algorithm} 
In the theoretical part, we assumed that the average degree $d$ is given, and set $d^{(G)}$ to be $d\cdot\Gamma$ as the threshold degree. To be able to deal with a setting when the average degree is not given, our algorithms more general, we sample $\Gamma$ vertices in the experiments, and let $d^{(G)}$ be the maximum degree among these sampled vertices,
similar as in \cite{chazelle2005approximating}.

We pick different constants from theory when choosing parameters. Because we are always assuming the worst case in theory, and try to bound the performance of each run; while in the real world, we can get good results with much smaller constants, and we can also refine the result by calculating the average cost among multiple runs, so we don't need to bound the performance of each run.
In the implementation, the algorithm is given parameter $r$ as sample size, and parameter $\Gamma$ is set accordingly: $\Gamma=1\cdot k\sqrt{\frac{r}{k}}=\sqrt{rk}$, as $r=O(k/\varepsilon^2)$
and $\Gamma=O(k/\varepsilon)$, where $k$'s value depends on different measurement settings.

To evaluate the robustness of our algorithms, we computed the deviation among 30 experiments by the standard error of them: $std\_err = \sqrt{\frac{\sum_{i=1}^{30} \abs{x_i - avg}^2}{30}}$, where $x_i$ is the estimated $\cost(G)$ value returned from the algorithm.

\paragraph{Experimental Results}

\begin{figure*}[!ht]
     \centering
     \begin{subfigure}[T]{0.32\textwidth}
        \centering
        \includegraphics[width=\textwidth]{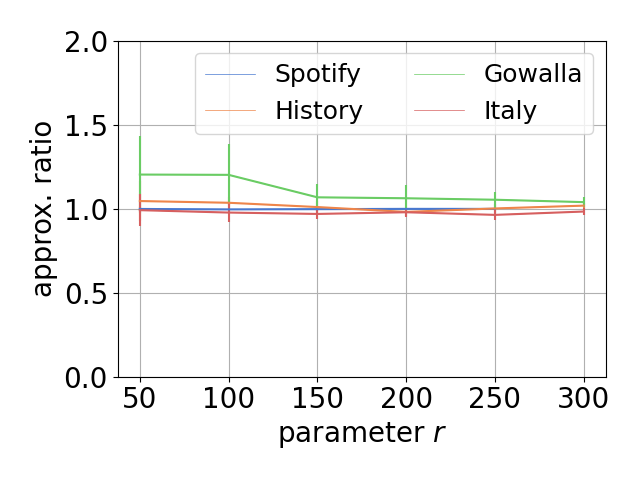}
        \caption{Approx. ratio for distance (Gowalla and Italy) and similarity (Spotify and History) datasets}
        \label{fig:appratio}
     \end{subfigure}
     \hfill
     \begin{subfigure}[T]{0.32\textwidth}
         \centering
         \includegraphics[width=\textwidth]{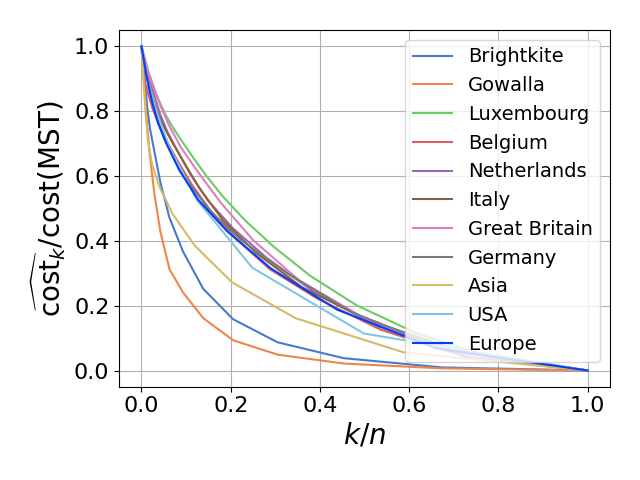}
         \caption{Profiles for distance datasets}
         \label{fig:normalized-profile-dist}
     \end{subfigure}
     \hfill
     \begin{subfigure}[T]{0.32\textwidth}
         \centering
         \includegraphics[width=\textwidth]{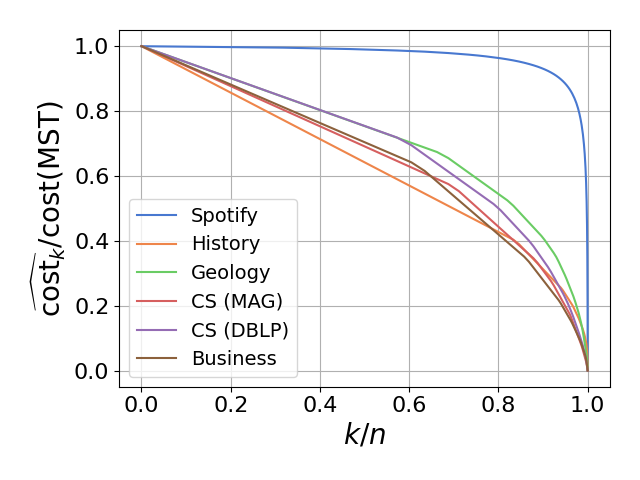}
         \caption{Profile for similarity datasets}
         \label{fig:normalized-profile-sim}
     \end{subfigure}
    \caption{Approximation ratio and normalized profiles}
    \label{fig:performance}
\end{figure*}

\begin{figure*}[!htbp]
     \centering
     \begin{subfigure}[b]{0.32\textwidth}
         \centering
         \includegraphics[width=\textwidth]{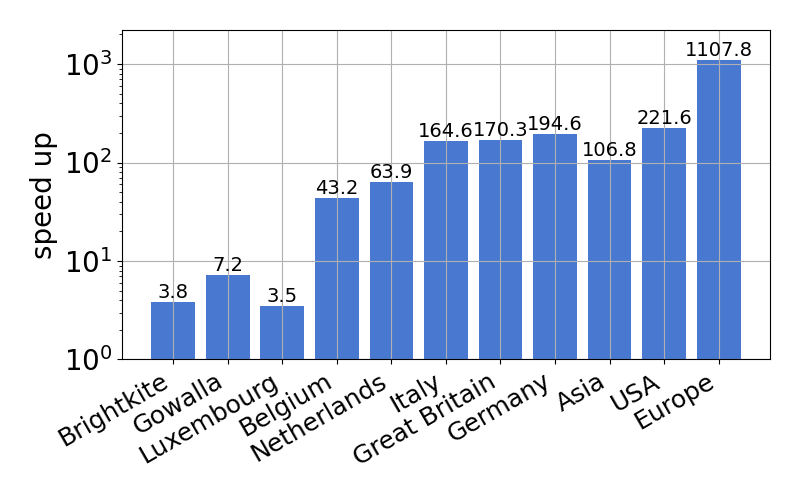}
         \caption{Distance with $r=100$}
     \end{subfigure}
     \hfill
     \begin{subfigure}[b]{0.32\textwidth}
         \centering
         \includegraphics[width=\textwidth]{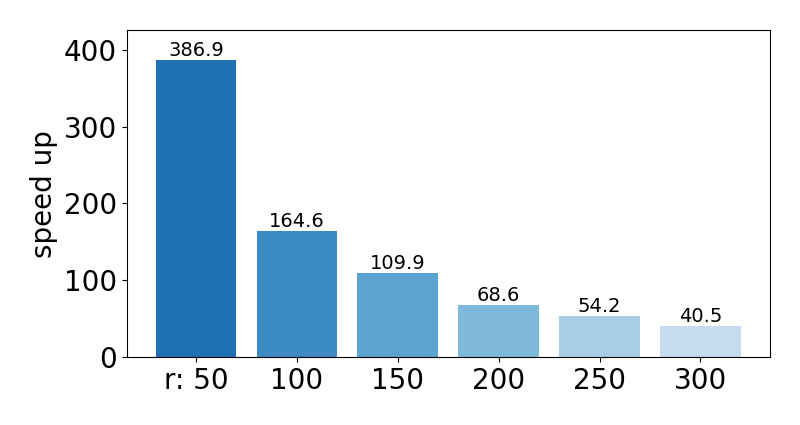}
         \caption{Speed up on `Italy' dataset, with various $r$}
     \end{subfigure}
     \hfill
     \begin{subfigure}[b]{0.32\textwidth}
         \centering
         \includegraphics[width=\textwidth]{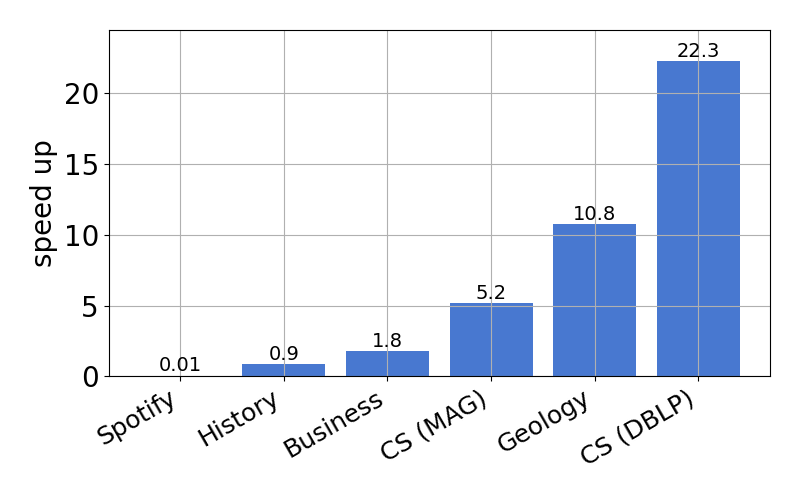}
         \caption{Similarity with $r=100$}
     \end{subfigure}
    \caption{Datasets speed up}
    \label{fig:speedup}
\end{figure*}

\cref{fig:appratio} shows the accuracy of our algorithm to approximate clustering cost $\cost(G)$ and $\cost^{(s)}(G)$, among both distance and similarity datasets. Our algorithm already has a very good approximation ratio, even when $r$ is small; and accuracy becomes better as $r$ increases, as we have finer intervals and larger sample size.
For the sake of space, we put more experiments in appendix.
For the bias where the approximation ratio begins at a value higher than $1.0$
when $r$ is small, it is likely from the algorithm
overlooking some large connected components, when the number of sampled vertices is small.
However, our results verified that the bias diminishes as $r$ increases.

For most of the datasets, our running time beats the baseline, Kruskal's MST algorithm,
shown in \cref{fig:speedup}.
We remark that when the dataset contains relatively small number of vertices and edges, then it is best to simply run Kruskal's MST algorithm to compute
$\cost(G)$. Our speedup on dataset Brightkite, Gowalla, and Luxembourg are lower,
because they are small datasets.
As Spotify has a relatively large value of $W$, and
since our running time is linear in $W$, it is more efficient to simply run an MST algorithm.

\begin{table}
\caption{The accumulated profile error compared to total clustering cost,
${\sum_{k=1}^{n}|\widehat{\cost}_k-\cost_k|}/{\cost(G)}$}
\label{tbl:sum_profile_error_over_costG}
\centering
\begin{tabular}{l|llll}
\hline
r     & Gowalla & Italy & History & Geology \\
\hline
100   & 0.578   & 0.230 & 0.054   & 0.023   \\
1000  & 0.123   & 0.067 & 0.009   & 0.005   \\
10000 & 0.033   & 0.024 & 0.003   & 0.003   \\
20000 & 0.023   & 0.017 & 0.001   & 0.002  \\
\hline
\end{tabular}
\end{table}

\cref{fig:normalized-profile-dist} and \cref{fig:normalized-profile-sim} shows the hierarchical clustering structures. 
We normalized the profile vectors
by computing the fraction of the clustering size and cost, i.e., 
$\frac{k}{\max\{1,\dots,n\}}=\frac{k}{n}$ and $\frac{\cost_k}{\max\{\cost_1,\dots,\cost_n\}}=\frac{\cost_k}{\cost(\MST)}$.
As shown in the figures, different countries have different structures, especially, Asia has a very
different structure from other countries; and road networks differ from localization based datasets.
Besides, co-citation graphs in various areas also have different structures, and Spotify is far more different from them.
The accumulated approximation error of $\widehat{\cost}_k$ is shown in \cref{tbl:sum_profile_error_over_costG},
which proves that we can estimate the profiles efficiently and with bounded error, compared to total clustering cost $\cost(G)$.

For more details on experiment results, refer to \cref{sec:appendix-experiments}.

\section*{Acknowledgment}
The work was conducted in part while Pan Peng and Christian Sohler were visiting the Simons Institute for the Theory of Computing as participants in the Sublinear Algorithms program. %
\newpage
\bibliographystyle{alpha}
\bibliography{clustering}

\appendix

\section{Generalizing to Unknown $d$ and Non-integer Edge Weights}
\subsection{Handling the Case When $d$ Is Unknown}
\label{sec:appendix-no_given_d}

    In \cref{alg:appncc} (or \cref{alg:appncc_sim}), in order to estimate the number of connected components $c$ (or $D=n-c$), we sample several vertices, and do BFS from them, regarding to two thresholds, $\Gamma=\lceil4k/\varepsilon\rceil$ and $d^{(G)}=d\cdot\Gamma$.
    Note that $k=\sqrt{W}$ in \cref{alg:appncc} and $k=W$ in \cref{alg:appncc_sim}.

    In case when $d$ is not given, we refer to \cite{chazelle2005approximating} and sample $2\Gamma$ vertices,
    and let $d^{(G)}$ to be the maximum degree among them.
    Note that we can compute $d^{(G)}$ in $d\cdot\Gamma$ time in expectation,
    because for each sampled vertex $v$, we know its degree in $\deg(v)$ time.
    We also argue that with high constant probability the rank of $d^{(G)}$ is
    $\Theta(\frac{n}{\Gamma})$, and the number of vertices 
    with degree greater than $d^{(G)}$, is at most $\frac{n}{4\Gamma}$. 
    We have
    \[
    \Pr\left[\text{rank of }d^{(G)}>\frac{n}{4\Gamma}\right]
    \leq \left(1-\frac{1}{4\Gamma}\right)^{2\Gamma}
    \leq \mathrm{e}^{-8}
    \]

    On the other hand, we also have
    \[
    \Pr\left[\text{rank of }d^{(G)}>\frac{n}{16\Gamma}\right]
    = \left(1-\frac{1}{16\Gamma}\right)^{2\Gamma}
    \geq \mathrm{e}^{-\frac{2}{8}}
    \geq 1-\frac{1}{4}=\frac{3}{4}
    \]

    Therefore, the rank of $d^{(G)}$ is in the interval
    $(\frac{n}{16\Gamma},\frac{n}{4\Gamma})$
    with probability at least $1-(e^{-8} + 1/4)$,
    which implies that $d^{(G)} \le 16 d\cdot\Gamma$ and that the number of connected components containing vertices with degree
    greater than $d^{(G)}$ (with rank lower than $d^{(G)}$'s rank), is at most 
    $\frac{n}{\Gamma}$. 
    Furthermore, the number of connected components with size greater
    than $\Gamma$ is at most $\frac{n}{\Gamma}$. Then we have that, 
    \[c-\frac{2 n}{\Gamma}\leq c_U\leq c.\]

    For the remaining part to prove theoretical guarantee of $c$ or $D=n-c$, one can refer to proofs of \cref{lem:appncc} or \cref{lem:appncc_sim}.

\subsection{Handling Non-integer Edge Weights}
\label{sec:appendix-rounding-weights}
When the weights come from the interval $[1,W]$ and are not necessarily integers and we have $1>\varepsilon>0$, we can first multiply the weights with $1/\varepsilon$ and then round them down to the nearest integer value. The resulting weights are integers from $\{1,\dots, \lfloor W/\varepsilon \rfloor\}$. Let us call the scaled weights $w'$ and the rounded and scaled weights $w''$.
Each edge weight differs at most $1$ from its correct value, that is for every each $e$ we have $|w'(e)- w''(e)| \le 1$.
If we use $M$ to denote the cost of the MST with the original weights, $M'$ the cost of the MST with the scaled weights and 
$M''$ the cost of the MST with the scaled and rounded weights, we get $M=\varepsilon M'$ and 
$|M'-M''| \le n-1$. It follows that $|M-\varepsilon M''| \le \varepsilon (n-1) \le \varepsilon M$. Thus, we can approximate the cost of the MST with scaled and rounded weights. The resulting running time in the distance case will be $\tilde O(\sqrt{W}/\varepsilon^{3.5})$. The error will be at most $(1\pm \varepsilon)^2 \le 1\pm 3\varepsilon$.
Replacing $\varepsilon$ with $\varepsilon/3$ gives a $(1\pm \varepsilon)$-approximation.

\section{Sublinear Algorithms in Metric Space: Distance Case}
\label{sec:appendix-metric-distance}
In this scenario, we consider a weighted graph $G$ in metric space, where the weight 
of each edge is the distance between its two connected vertices, and edge weights satisfy the triangle inequality. 
Our only assumption is that the distance between any pair of vertices can be accessed in constant time.
Using the same definition of $w_i$ and $\cost(G)$ as in the distance case, 
it holds that $\cost_k=\sum_{i=1}^{n-k}w_i$, and
$\cost(G)=\sum_{k=1}^{n} \cost_k=\sum_{i=1}^{n-1}(n-i)\cdot w_i$.

Here, the maximum weight $W$ is the longest distance in $G$, which can be approximated in $O(n)$ time, 
with an approximation factor of $1/2$.
Let this estimate be denoted as $W^*$, then $W^*\le W\le 2W^*$. We then rescale the distances such that $W^*=2n^2/\varepsilon$.
After scaling, all distances are in $[0,4n^2/\varepsilon]$, since the
longest distance is at most $2W^*$.
By the triangle inequality, the cost of the Minimum Spanning Tree (MST), denoted as $\cost(\MST)$,
is at least as large as the longest distance, and hence at least $W^*$. 
Therefore, $\cost(\MST)\geq 2n^2/\varepsilon$.
Since $\cost(\MST)=\cost_1\leq\cost(G)$, it follows that $\cost(G)\geq 2n^2/\varepsilon$.

To simplify our analysis, we round up all edge weights less than $1$ to be exactly $1$. Considering the formula $\cost(G)=\sum_{i=1}^{n-1}(n-i)\cdot w_i$,
this rounding operation affects the cost as follows:
if every edge in the MST has a weight below 1, then $\cost(G)$ is increased by at most
$\sum_{i=1}^{n-1}(n-i)\cdot 1=\frac{n(n-1)}{2}\le \frac{\varepsilon}{4}\cdot\frac{2n^2}{\varepsilon}\le \frac{\varepsilon}{4}\cost(G)$,
thereby introducing an error term of $\varepsilon\cdot \cost(G)/4$.
Moreover, this rounding operation preserves the triangle inequality.
For any triangle with edge weights $a,\ b$ and $c$, there are two relevant cases. 
First, consider the inequality $a+b>c$. The worst scenario is when $c$ is increased from 0 to 1 and one of $a$ or $b$ is less than 1.
After rounding, $a+b\ge 1$ and $c=1$, so $a+b>c$ still holds.
Second, consider $a-b<c$. The worst scenario is when $b\leq a<1$ and both become $1$ after rounding.
Then $a-b=0< c$ still holds. Thus, the triangle inequality is preserved.

We then increase the upper bound of weights to the nearest power of $(1+\varepsilon)$,
i.e., we let $W=(1+\varepsilon)^r$, where $r=\lceil\log_{1+\varepsilon}(4n^2/\varepsilon)\rceil=O(\log(n/\varepsilon)/\varepsilon)$.
Consequently, for the remainder of the metric space, we can assume that all the edge weights are within
$[1,(1+\varepsilon)^r]$.

\subsection{Cost Formula for $\cost(G)$}
We begin by deriving a formula for the clustering cost, $\cost(G)$, which simplifies the estimation process.

\begin{lemma}\label{lem:metric-formula}
Let $G$ be an $n$-point graph in metric space such that all pairwise distances are in the interval $[1,W]$, where $W=(1+\varepsilon)^r$. Let $\ell_j=(1+\varepsilon)^j$. For any $0\leq j\leq r$,
we let $G_j$ denote the subgraph of $G$ spanned by all edges with weights \emph{at most} $\ell_j$, and let 
$c_j$ denote the number of connected components in $G_j$. Then we have
\[
\cost(G)\leq \frac{n(n-1)}{2} + \frac{\varepsilon}{2}\cdot \sum_{j=0}^{r-1}(1+\varepsilon)^j\cdot (c_j^2-c_j) \leq (1+\varepsilon) \cost(G)
\]
\end{lemma}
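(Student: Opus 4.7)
The plan is to reduce the claim to a clean identity for a \emph{rounded} clustering cost, by mirroring the derivation used in \cref{thm:formula_cost} but with the integer thresholds replaced by the geometric sequence $\ell_j=(1+\varepsilon)^j$. Concretely, for every MST edge weight $w_i$ I would define $w_i':=\ell_{j_i}$, where $j_i\in\{0,\dots,r\}$ is the smallest index with $\ell_{j_i}\ge w_i$ (this is well-defined since weights lie in $[1,W]=[\ell_0,\ell_r]$). Then $w_i \le w_i' \le (1+\varepsilon)w_i$, so if we set $\cost'(G):=\sum_{i=1}^{n-1}(n-i)w_i'$ we immediately get the sandwich $\cost(G)\le \cost'(G)\le (1+\varepsilon)\cost(G)$. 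This supplies both inequalities of the lemma, provided we can identify $\cost'(G)$ with the middle expression.

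The main computation is to show $\cost'(G)=\tfrac{n(n-1)}{2}+\tfrac{\varepsilon}{2}\sum_{j=0}^{r-1}(1+\varepsilon)^j(c_j^2-c_j)$. The key identity is the telescoping
\[
\ell_{j_i}\;=\;1+\sum_{j=0}^{r-1}\varepsilon\ell_j\cdot\mathbf{1}[\ell_j<w_i'],
\]
which follows from $\ell_0=1$ and $\ell_{j+1}-\ell_j=\varepsilon\ell_j$. The small but important observation here is that $\ell_j<w_i'$ if and only if $w_i>\ell_j$: indeed, $w_i'$ is the smallest power of $(1+\varepsilon)$ that is at least $w_i$, so $w_i>\ell_j$ forces $w_i'\ge\ell_{j+1}>\ell_j$, while $w_i\le\ell_j$ forces $w_i'\le\ell_j$.

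Substituting the identity into $\cost'(G)$ and swapping the order of summation then gives
\[
\cost'(G)=\binom{n}{2}+\varepsilon\sum_{j=0}^{r-1}\ell_j\sum_{i:\,w_i>\ell_j}(n-i).
\]
The inner sum is evaluated exactly as in \cref{thm:formula_cost}: the MST edges with weight $\le\ell_j$ form a spanning forest of $G_j$, hence there are $n-c_j$ such edges, so the MST edges with weight $>\ell_j$ occupy the ranks $n-c_j+1,\dots,n-1$. Thus $\sum_{i:\,w_i>\ell_j}(n-i)=\sum_{k=1}^{c_j-1}k=\binom{c_j}{2}=(c_j^2-c_j)/2$, and plugging back in produces exactly the middle expression.

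I do not expect a real obstacle; the derivation is a direct geometric-bucket analogue of \cref{thm:formula_cost}. The only care is with the rounding map: one must verify that $w_i'$ is well-defined (weights in $[\ell_0,\ell_r]$ thanks to the earlier preprocessing), that it preserves the MST's edge ordering up to ties within a bucket (which is harmless because edges inside the same bucket contribute the same weight to $\cost'(G)$), and that the equivalence $\ell_j<w_i'\iff w_i>\ell_j$ holds cleanly at bucket boundaries. None of these raise genuine difficulties.
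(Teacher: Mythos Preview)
Your proposal is correct and follows essentially the same approach as the paper: round every weight up to the nearest power of $(1+\varepsilon)$, obtain the sandwich $\cost(G)\le \cost'(G)\le (1+\varepsilon)\cost(G)$, and then compute the rounded cost exactly in terms of the $c_j$'s. The only cosmetic difference is that the paper rounds all edge weights to form a graph $G'$ and then reruns the term-by-term derivation of \cref{thm:formula_cost} on $G'$, whereas you round only the MST edge weights and use the telescoping identity $\ell_k=1+\sum_{j<k}\varepsilon\ell_j$ to swap summation order directly; both computations rely on the same key fact that the number of MST edges with weight at most $\ell_j$ is $n-c_j$, and they produce the identical middle expression.
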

\begin{proof}
We let $G'$ be the weighted graph obtained by rounding every edge weight in $G$ to the nearest power of
$(1+\varepsilon)$. For example, for any pair $(u,v)$, if the distance is 
$(1+\varepsilon)^j<d(u,v)\leq(1+\varepsilon)^{j+1}$, then we round $d(u,v)$ to be $(1+\varepsilon)^{j+1}$. 
After rounding, the $i$-th smallest edge weight on the MST of $G'$ becomes $w_i'$.
Since we only increased the edge weights by at most a factor of $(1+\varepsilon)$, the cost of clustering is
also increased, by at most a factor of $(1+\varepsilon)$. That is, 
\[
\cost(G)\leq \cost(G')\leq (1+\varepsilon)\cost(G).
\]

Note that after rounding, the threshold graphs in $G'$ are not changed, and thus each $c_j$ has the same value with respect to $G$. By running Kruskal's algorithm on $G'$, we will add $n-c_0$ edges of weight $\ell_0=1$, $c_0-c_1$ edges of weight  $\ell_1$, and so on. Thus, the clustering cost of $G'$ can be written as,
\begin{align*}
\cost(G')&=\sum_{i=1}^{n-1}(n-i)\cdot w_i'\\
&=\sum_{i=1}^{n-c_0}(n-i)\cdot \ell_0+\sum_{i=n-c_0+1}^{n-c_1}(n-i)\cdot \ell_1 +\cdots+\sum_{i=n-c_{r-1}+1}^{n-c_r}(n-i)\cdot \ell_r \\
&=\sum_{i=1}^{n-c_r}(n-i)\cdot \ell_0 - \sum_{i=n-c_0+1}^{n-c_r}(n-i)\cdot \ell_0 + \sum_{i=n-c_0+1}^{n-c_r}(n-i)\cdot \ell_1 - \sum_{i=n-c_1+1}^{n-c_r}(n-i)\cdot \ell_1 + \\
&\quad\sum_{i=n-c_1+1}^{n-c_r}(n-i)\cdot \ell_2 - \sum_{i=n-c_2+1}^{n-c_r}(n-i)\cdot \ell_2 +\cdots + \sum_{i=n-c_{r-1}+1}^{n-c_r}(n-i)\cdot \ell_r\\
&=\frac{n(n-1)}{2}+ (\ell_1-\ell_0)\cdot \frac12\cdot (c_0^2-c_0)  +(\ell_2-\ell_1)\cdot \frac12\cdot (c_1^2-c_1)+\cdots +(\ell_r-\ell_{r-1})\cdot \frac12\cdot (c_{r-1}^2-c_{r-1})\\
&=\frac{n(n-1)}{2} + \frac12 \sum_{j=0}^{r-1}(\ell_{j+1}-\ell_{j})(c_j^2-c_j) \\
&=\frac{n(n-1)}{2} + \frac{\varepsilon}{2} \sum_{j=0}^{r-1}(1+\varepsilon)^j\cdot (c_j^2-c_j)
\tag{since $\ell_{j+1}-\ell_j=(1+\varepsilon)^{j+1}-(1+\varepsilon)^j=\varepsilon(1+\varepsilon)^j$}
\end{align*}

This completes the proof of the lemma.
\end{proof}

\subsection{Algorithm to Estimate Clustering Cost}

We make use of the algorithm developed by \cite{czumaj2009estimating} for estimating MST.
This algorithm is fundamentally based on estimating the number of connected components of each subgraph $G_j$. 
The core method for estimating the number of connected components is called \textsc{Clique-Tree-Traversal}. 

The main idea of \textsc{Clique-Tree-Traversal} is as follows: If two vertices $u,v$ have distance less than 
$\varepsilon(1+\varepsilon)^j$, then they have the same neighbors in $G_j$, according to triangle inequality.
Therefore, we can select vertices with pairwise distances at least $\varepsilon(1+\varepsilon)^j$ as
\emph{representative vertices}, and do traversal based on them. To save running time, the two thresholds during the traversal are:
\begin{itemize}
    \item Before traversal, pick a value $X$ with distribution $\Pr[X\geq k]=1/k$. If more than $X$
    vertices are explored (including both \emph{representative} and \emph{non-representative vertices}), then quit
    the traversal.
    \item If more then $4r/\varepsilon$ \emph{representative vertices} are explored, then quit the traversal.
\end{itemize}

Having the subroutine \textsc{Clique-Tree-Traversal}, we propose our algorithm in \cref{alg:appcost_metric}.

\begin{algorithm}[h!]
    \DontPrintSemicolon
    \caption{\textsc{AppCostMetric}($G,\varepsilon$)}
    \label{alg:appcost_metric}
for each $j\in\{0,\dots,r-1\}$, invoke \textsc{Clique-Tree-Traversal} to obtain $\hat{c}_j$\;
output $\widehat{\cost}(G)=\frac{n(n-1)}{2} + \frac{\varepsilon}{2} \sum_{j=0}^{r-1}(1+\varepsilon)^j\cdot (\hat{c}_j^2-\hat{c}_j)$\;
\end{algorithm}

\metricdistance*

\subsection{Analysis of \cref{alg:appcost_metric}}

According to \cite{czumaj2009estimating}, \textsc{Clique-Tree-Traversal} has the following performance guarantee.

\begin{lemma}\label{lem:metric-c-hat}
Let $U_{rep}^{(j)}$ be defined as in \cite{czumaj2009estimating} (a set of 
\emph{representative vertices} in the whole graph. This set is obtained by full 
\textsc{Clique-Tree-Traversal}, and with maximum cardinality). 
For any given $\varepsilon,\ j\in[1,r]$, there exists an algorithm that computes in time $\Tilde{O}(n/\varepsilon^6)$ and
outputs a value $\hat{c}_j$ such that
\[
\abs{\hat{c}_j-c_j}\leq c_j-c_{j+1}+\frac{3\varepsilon}{8r}\abs{U_{rep}^{(j)}}
\]
\end{lemma}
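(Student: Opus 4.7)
The plan is to instantiate the Clique-Tree-Traversal algorithm of Czumaj and Sohler \cite{czumaj2009estimating} at the scale $\ell_j=(1+\varepsilon)^j$, tuning its truncation thresholds so that the resulting additive error scales as $\varepsilon/r$ instead of $\varepsilon$. First I would fix a maximal $\varepsilon\ell_j$-net $U_{rep}^{(j)}$ of the metric vertex set; by the triangle inequality, every vertex lies within $\varepsilon\ell_j$ of some representative and has essentially the same $G_j$-neighborhood as that representative. Then sample $s=\tilde O(r^2/\varepsilon^2)$ vertices uniformly at random and, from each sampled $u$, run a doubly-truncated traversal on the representative-induced subgraph of $G_j$: halt the exploration as soon as either a geometric random variable $X$ with $\Pr[X\ge k]=1/k$ expires or $4r/\varepsilon$ representatives have been visited. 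Define $\beta_i$ and $\hat c_j=(n/s)\sum_i\beta_i$ exactly as in \cref{lem:onbetai}.

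The analysis decomposes as $|\hat c_j-c_j|\le|\hat c_j-c_j^U|+|c_j^U-c_j|$, where $c_j^U$ is the number of ``well-explored'' components of $G_j$, i.e.\ components containing at most $4r/\varepsilon$ representatives whose degrees at scale $\ell_j$ all fit within the truncation budget. A direct analog of \cref{lem:onbetai} yields $\E[\beta_i]=c_j^U/n$ and $\Var[\beta_i]\le 2|U_{rep}^{(j)}|/n$, so Chebyshev's inequality with $s=\tilde O(r^2/\varepsilon^2)$ samples produces the variance term $|\hat c_j-c_j^U|\le\tfrac{3\varepsilon}{8r}|U_{rep}^{(j)}|$ with constant probability (boostable by the median trick). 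The running time follows because each truncated traversal visits at most $\tilde O(1/\varepsilon^2)$ representatives, each requiring an $O(n)$ scan of the distance oracle to recover its $G_j$-neighbors in the metric model, giving an expected per-sample cost of $\tilde O(n/\varepsilon^2)$ and a total of $s\cdot\tilde O(n/\varepsilon^2)=\tilde O(n/\varepsilon^6)$.

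The crux is the deterministic bias bound $c_j-c_j^U\le c_j-c_{j+1}$. Each untruncated component contains more than $4r/\varepsilon$ representatives at pairwise distance $\ge\varepsilon\ell_j$, and the Czumaj-Sohler charging argument assigns each such large component of $G_j$ injectively to one of the $c_j-c_{j+1}$ merge events that take place when the weight threshold is raised from $\ell_j$ to $\ell_{j+1}$. The main obstacle is adapting this charging to our halted-traversal variant while preserving injectivity: a na\"ive counting only gives the coarser gap $c_j-c_{j+\Theta(r)}$, but our downstream cost formula in \cref{lem:metric-formula} is quadratic in the $c_j$'s and scaled by $(1+\varepsilon)^j$, so we need the tight single-scale gap $c_j-c_{j+1}$ in order for the aggregated bias to telescope into an $\varepsilon\cdot\cost(G)$ error. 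Once this geometric packing argument is carried out, combining the bias and variance bounds via the triangle inequality yields the lemma.
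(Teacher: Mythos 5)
The crux of your argument---the deterministic bias bound $c_j - c_j^U \le c_j - c_{j+1}$---is not what Czumaj--Sohler's analysis gives, and in fact it is false in general. What \cite{czumaj2009estimating} shows is that $\E[\hat{c}_j]$ is sandwiched as $c_{j+1} - K \le \E[\hat{c}_j] \le c_j$, where $K$ counts the components of $G_{j+1}$ in which some traversal halts after seeing more than $4r/\varepsilon$ representatives. The quantity $K$ need not vanish, so the bias $c_j - \E[\hat{c}_j]$ can exceed $c_j - c_{j+1}$; the mechanism controlling it is not a charging to merge events but a packing count---each truncated component contains more than $4r/\varepsilon$ representatives, hence $K \le \frac{\varepsilon}{4r}|U_{rep}^{(j)}|$. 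The paper's proof then splits the $|U_{rep}^{(j)}|$-budget asymmetrically: $\frac{\varepsilon}{4r}|U_{rep}^{(j)}|$ goes to the bias (absorbing $K$) and $\frac{\varepsilon}{8r}|U_{rep}^{(j)}|$ to the concentration bound, which sums to the $\frac{3\varepsilon}{8r}|U_{rep}^{(j)}|$ appearing in the lemma. Your decomposition instead puts the entire $\frac{3\varepsilon}{8r}|U_{rep}^{(j)}|$ into the Chebyshev term and asks the bias to be exactly $c_j - c_{j+1}$, which overclaims. The ``injective assignment of large $G_j$-components to merge events'' you invoke is not an argument that appears in Czumaj--Sohler, and you yourself flag that a naive count only reaches $c_j - c_{j+\Theta(r)}$; that acknowledged obstacle is not a technicality that a ``geometric packing argument'' closes, but evidence that you are trying to prove a statement that is not true. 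Once you redistribute the error budget so that the $K$ term lives in the bias as in the paper, both tails combine cleanly: $\hat{c}_j - c_j \le \frac{\varepsilon}{8r}|U_{rep}^{(j)}|$ from the upper side, and $c_j - \hat{c}_j \le (c_j - c_{j+1}) + \frac{\varepsilon}{4r}|U_{rep}^{(j)}| + \frac{\varepsilon}{8r}|U_{rep}^{(j)}|$ from the lower, which is exactly the lemma. The remainder of your proposal (the net construction, the sample size $s$, and the running-time bound $\tilde O(n/\varepsilon^6)$) is consistent with the paper's cited framework.
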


\begin{proof}
    From \cite{czumaj2009estimating}, we know that $c_{j+1}-K \le \E[\hat c_j] \le c_j$, where $K$ is the number of connected components in $G_{j+1}$ where there exists a vertex $p$, such that starting from $p$ will stop with more than $\frac{4r}{\varepsilon}$ representative vertices.
    By the definition of $K$, we have that $K\cdot\frac{4r}{\varepsilon}\le|U_{rep}^{(j)}|$.
    Thus,
    \[
    c_{j+1}-\frac{\varepsilon}{4r}|U_{rep}^{(j)}| \le \E[\hat c_j] \le c_j
    \]

    Besides, $\Pr[|\hat{c_j}-\E[\hat{c}_j]|\ge\frac{\varepsilon}{8r}|U_{rep}^{(j)}|]\le\frac{1}{16}$.
    Thus, with probability more than $15/16$, we have
    \[
    c_{j+1}-\frac{3\varepsilon}{8r}|U_{rep}^{(i)}| = c_{j+1} -\frac{\varepsilon}{4r}|U_{rep}^{(j)}| -\frac{\varepsilon}{8r}|U_{rep}^{(j)}| \le \hat{c}_j \le c_j+\frac{\varepsilon}{8r}|U_{rep}^{(j)}|
    \]

    Therefore, $\hat{c}_j-c_j\le \frac{\varepsilon}{8r}|U_{rep}^{(j)}|$ and $c_j-\hat{c}_j\le c_j-c_{j+1}+\frac{3\varepsilon}{8r}|U_{rep}^{(j)}|$, which completes the proof.

\end{proof}

Once we bound the error of estimate $\hat{c}_j$ by $\abs{U_{rep}^{(j)}}$, we can bound
$\abs{U_{rep}^{(j)}}$ by $\cost(G)$, and this will help us prove the approximation ratio 
of estimating clustering cost.

\begin{lemma}\label{lem:metric-Urep-cj-cost}
Assume that all edge weights are in $[1,4n^2/\varepsilon]$. It holds that for any $1\leq j\leq r$,
\[
\cost(G)\geq \frac{\varepsilon (1+\varepsilon)^j}{16} \cdot |U_{rep}^{(j)}|\cdot c_j
\]
\end{lemma}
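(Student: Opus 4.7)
Write $\ell_j = (1+\varepsilon)^j$ and $r_j = |U_{rep}^{(j)}|$. The plan is to derive two separate lower bounds on $\cost(G)$ -- one that is linear in $r_j$, coming from a per-component MST argument, and one that is quadratic in $c_j$, coming from \cref{lem:metric-formula} -- and then glue them via a case split on whether $r_j \ge 2 c_j$.

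First, I would prove the per-component bound $\cost(\mathrm{MST}(C)) \ge (r_C - 1)\varepsilon \ell_j/2$ for every $G_j$-connected component $C$ containing $r_C$ representatives. Because $U_{rep}^{(j)}$ is maximal with pairwise distance $\ge \varepsilon \ell_j$, every vertex lies within $\varepsilon \ell_j \le \ell_j$ of some representative, so each $G_j$-component contains at least one and $\sum_C r_C = r_j$. Taking the MST $T$ of the metric restricted to $C$, I would run an Euler tour on $T$ (total length $2\cost(T)$) and shortcut via the triangle inequality to produce a closed tour through the $r_C$ representatives using $r_C$ edges, each of weight $\ge \varepsilon \ell_j$. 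This yields $\cost(T) \ge r_C \varepsilon \ell_j / 2$ for $r_C \ge 2$, and the case $r_C = 1$ is trivial. Since Kruskal's algorithm on the metric adds all MST edges of weight $\le \ell_j$ strictly within $G_j$-components, $\cost_{c_j} = \sum_C \cost(\mathrm{MST}(C))$, so summing the per-component bound and using monotonicity of $\cost_k = \sum_{i=1}^{n-k} w_i$ in $k$ gives
\[
\cost(G) \;=\; \sum_{k=1}^n \cost_k \;\ge\; c_j \cdot \cost_{c_j} \;\ge\; \frac{c_j (r_j - c_j)\,\varepsilon \ell_j}{2}. \tag{$\ast$}
\]

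Next, I would invoke \cref{lem:metric-formula}, which (together with $c_j^2 - c_j \ge c_j^2/2$ for $c_j \ge 2$ and $1/(1+\varepsilon) \ge 1/2$) yields
\[
\cost(G) \;\ge\; \frac{\varepsilon \ell_j c_j^2}{8}. \tag{$\ast\ast$}
\]
Finally, I would split into cases. If $r_j \ge 2 c_j$, then $r_j - c_j \ge r_j/2$ and $(\ast)$ gives $\cost(G) \ge c_j r_j \varepsilon \ell_j / 4 \ge c_j r_j \varepsilon \ell_j / 16$. If $r_j < 2 c_j$ and $c_j \ge 2$, then $c_j r_j \le 2 c_j^2$ and $(\ast\ast)$ gives $\cost(G) \ge \varepsilon \ell_j c_j r_j / 16$. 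The boundary case $c_j = r_j = 1$ is handled directly: by the initial scaling of \cref{sec:appendix-metric-distance}, $\ell_j \le W = 4n^2/\varepsilon$, so $\varepsilon \ell_j / 16 \le n^2/4 \le n(n-1)/2 \le \cost(G)$ for $n \ge 2$.

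The main obstacle is the Euler-tour/shortcut argument in the first step: one must carefully confirm that summing the per-component bounds telescopes to $(r_j - c_j)\varepsilon \ell_j / 2$, rather than losing a factor depending on $c_j$ from components with $r_C = 1$. A secondary subtlety is verifying that each component of $G_j$ contains at least one representative (which is what makes $\sum_C (r_C-1) = r_j - c_j$ instead of something smaller); this uses the $\varepsilon$-net property of $U_{rep}^{(j)}$ together with $\varepsilon \ell_j \le \ell_j$.
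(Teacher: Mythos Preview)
Your proposal is correct and mirrors the paper's proof: both derive $\cost_{c_j} \ge (r_j - c_j)\cdot\Theta(\varepsilon\ell_j)$ from a per-component MST bound, then split on $r_j \gtrless 2c_j$ and invoke \cref{lem:metric-formula} in the small-$r_j$ case (the paper handles $c_j=1$ by citing \cite{czumaj2009estimating} rather than your direct numerical bound, for which you may want to use the stronger scaling fact $\cost(G)\ge\cost(\MST)\ge 2n^2/\varepsilon$ to avoid fussing over whether $W$ is exactly $4n^2/\varepsilon$). Your Euler-tour derivation of the per-component bound is in fact cleaner than the paper's, which asserts $\cost(\MST(S)) \ge (x-1)\varepsilon\ell_j$ without the factor $1/2$ and without full justification; your worry about components with $r_C=1$ is unfounded, since summing $(r_C-1)$ over the components containing at least one representative already gives $r_j - m \ge r_j - c_j$.
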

\begin{proof}

When the number of clusters developed in single-linkage is exactly $c_j$, the cost of them is 
$\cost_{c_j}$. As subgraphs $G_1,\dots,G_r$ are built in increasing edge weights, 
the vertices in the connected components in $G_j$ exactly build these single-linkage clusters.
Suppose for a certain connected component $S\in G_j$, there are $s$ vertices in total, and $x$ of them are \emph{representative vertices}. 
As the pairwise distance between \emph{representative vertices} is at least $\varepsilon(1+\varepsilon)^j$, and all the weights are at least 1,
then the cost of MST in $S$ is at least $\varepsilon(1+\varepsilon)^j\cdot(x-1)+1\cdot(s-x)$. 
Since $x\leq s$, then $\cost_S(\MST) \geq\varepsilon\cdot(1+\varepsilon)^j(x-1)$. 
Therefore, by summing up cost for each component $S\in G_j$, we have
\[
\cost_{c_j}\geq\sum_{S\in G_j}\cost_S(\MST)\geq
\sum_{S\in G_j}\varepsilon\cdot(1+\varepsilon)^j(x-1)
=\varepsilon(1+\varepsilon)^j\left(\abs{U_{rep}^{(j)}}-c_j\right)
\]

If we have fewer clusters than $c_j$, we need to use more expensive edges to connect two clusters; then we will have more cost there. 
That is to say, for all $1\leq k\leq c_j$, $\cost_k\geq\cost_{c_j}$. Therefore,
\[
\cost(G)=\sum_{k=1}^{n}\cost_k \geq\sum_{k=1}^{c_j}\cost_k \geq \cost_{c_j}\cdot c_j
\geq\varepsilon(1+\varepsilon)^j\left(\abs{U_{rep}^{(j)}}-c_j\right)\cdot c_j
\]

\textbf{Case I.} When $\abs{U_{rep}^{(j)}} \geq 2 c_j$, we have $\abs{U_{rep}^{(j)}}-c_j\geq \frac{1}{2}\abs{U_{rep}^{(j)}}$,
and so $\cost(G)\geq\varepsilon(1+\varepsilon)^j\abs{U_{rep}^{(j)}}c_j/2$.

\textbf{Case II.} When $\abs{U_{rep}^{(j)}} < 2 c_j$. From \cref{lem:metric-formula},
we have
\[
\cost(G)\ge\frac{n(n-1)}{2(1+\varepsilon)}+\frac{\varepsilon}{2(1+\varepsilon)}\sum_{j=0}^{r-1}(1+\varepsilon)^j\cdot(c_j^2-c_j)
\ge\frac{\varepsilon}{2(1+\varepsilon)}(1+\varepsilon)^j\cdot(c_j^2-c_j)
\ge\frac{\varepsilon}{4}(1+\varepsilon)^j\cdot(c_j^2-c_j).
\]

If $c_j=1$, from \cite{czumaj2009estimating} we know that 
$\cost(\MST)\geq\varepsilon\cdot(1+\varepsilon)^j\cdot\abs{U_{rep}^{(j)}}/4$, and as
$\cost(G)\geq \cost_1=\cost(\MST)$, the lemma is true.
Otherwise, $c_j\geq 2$, then $c_j^2-c_j\geq\frac{c_j^2}{2}$, and so
$\cost(G)\geq\varepsilon(1+\varepsilon)^j c_j^2/8>\varepsilon(1+\varepsilon)^j\abs{U_{rep}^{(j)}} c_j/16$.

This completes the proof of the lemma.
\end{proof}

Similarly, we have the following lemma.
\begin{lemma}\label{lem:metric-Urep2-cost}
    It holds that for any $j\in[1,r]$,
    \[
    \cost(G)\ge\frac{\varepsilon^2(1+\varepsilon)^j}{8r}\cdot|U_{rep}^{(j)}|^2
    \]
\end{lemma}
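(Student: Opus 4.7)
The plan is to combine Lemma \ref{lem:metric-Urep-cj-cost} with Lemma \ref{lem:metric-formula} through a case analysis on $c_j$.

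Lemma \ref{lem:metric-Urep-cj-cost} provides the linear-in-$|U_{rep}^{(j)}|$ bound $\cost(G)\ge \varepsilon\ell_j|U_{rep}^{(j)}|c_j/16$, while keeping only the $i=j$ term of Lemma \ref{lem:metric-formula} yields $\cost(G)\ge \varepsilon\ell_j(c_j^2-c_j)/(2(1+\varepsilon))$, which for $c_j\ge 2$ simplifies to $\cost(G)\ge \varepsilon\ell_j c_j^2/8$.

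First I would treat the case $c_j\ge 2$. Viewing the two bounds as functions of $c_j$, their maximum is minimized at $c_j^{\ast}=\varepsilon|U_{rep}^{(j)}|/2$, where both evaluate to $\varepsilon^2\ell_j|U_{rep}^{(j)}|^2/32$. Consequently $\cost(G)\ge \varepsilon^2\ell_j|U_{rep}^{(j)}|^2/32$ in this case, which is at least the claimed $\varepsilon^2\ell_j|U_{rep}^{(j)}|^2/(8r)$ whenever $r\ge 4$ -- an inequality that holds in the metric setting, since $r=\lceil \log_{1+\varepsilon}(4n^2/\varepsilon)\rceil\ge 4$ for $n\ge 2$ and $\varepsilon<1$.

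The remaining (and more delicate) case is $c_j=1$, where the $i=j$ summand of Lemma \ref{lem:metric-formula} vanishes. Here I would reuse the packing argument from the proof of Lemma \ref{lem:metric-Urep-cj-cost}: since the reps are pairwise at distance $\ge \varepsilon\ell_j$, any $G_i$-component $C$ containing $t$ such reps has MST weight at least $(t-1)\varepsilon\ell_j/2$ (by the Steiner-ratio bound in a general metric), and since each MST edge of $C$ is at most $\ell_i$, this forces $|C|-1\ge (t-1)\varepsilon\ell_j/(2\ell_i)$. Totaling over rep-containing components and using $\sum_{C}|C|\le n$ shows that whenever $\ell_i\le \varepsilon\ell_j|U_{rep}^{(j)}|/(4n)$ at least half of the reps must lie in distinct $G_i$-components, so $c_i\ge |U_{rep}^{(j)}|/2$. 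Plugging this lower bound into Lemma \ref{lem:metric-Urep-cj-cost} at level $i$ (and using $|U_{rep}^{(i)}|\ge |U_{rep}^{(j)}|$) then produces the quadratic bound; in the remaining small range of $|U_{rep}^{(j)}|$ the bound $\cost(G)\ge n(n-1)/2$ from Lemma \ref{lem:metric-formula}, combined with the scaling assumption $W\le 4n^2/\varepsilon$, is what the factor $1/r$ of slack in the lemma statement is designed to absorb.

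The main obstacle will be closing the intermediate regime -- where $|U_{rep}^{(j)}|$ is too large for the linear bound of Lemma \ref{lem:metric-Urep-cj-cost} at level $j$ to reach the quadratic target, yet not large enough for the Steiner/packing estimate at a level $i\le j$ with $\ell_i\ge 1$ to drive $c_i$ all the way up to $|U_{rep}^{(j)}|/2$. Covering this regime uniformly, and tracking the $(1+\varepsilon)$ and numerical constants so that the final denominator is exactly $8r$, is where the bulk of the calculation lies.
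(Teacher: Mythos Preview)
Your argument for the case $c_j\ge 2$ does not work. Both bounds you invoke,
\[
A(c_j)=\frac{\varepsilon(1+\varepsilon)^j}{16}\,|U_{rep}^{(j)}|\,c_j
\qquad\text{and}\qquad
B(c_j)=\frac{\varepsilon(1+\varepsilon)^j}{8}\,c_j^2,
\]
are \emph{increasing} in $c_j$, so $\max\{A,B\}$ is increasing as well and is minimized over $c_j\ge 2$ at $c_j=2$, where it yields only the linear bound $\varepsilon(1+\varepsilon)^j|U_{rep}^{(j)}|/8$. Your claimed balancing value $c_j^\ast=\varepsilon|U_{rep}^{(j)}|/2$ does not even equate $A$ and $B$ (they meet at $c_j=|U_{rep}^{(j)}|/2$), and in any event a ``balance two lower bounds'' argument needs one of them to be decreasing in the free parameter, which neither is here. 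Level-$j$ information alone simply cannot force a quadratic dependence on $|U_{rep}^{(j)}|$ when $c_j$ happens to be small; the $c_j=1$ case is not the only problematic one.

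The paper bypasses the case split on $c_j$ altogether by dropping to lower threshold levels. Since any two vertices of $U_{rep}^{(j)}$ are at distance at least $\varepsilon(1+\varepsilon)^j$, for every $i$ with $(1+\varepsilon)^i\le\varepsilon(1+\varepsilon)^j$ the paper argues that the representatives lie in distinct components of $G_i$, hence $c_i\ge|U_{rep}^{(j)}|$. Feeding this into Lemma~\ref{lem:metric-formula} and summing the resulting geometric series over all such $i$ gives $\cost(G)\ge\tfrac{\varepsilon^2(1+\varepsilon)^j}{8}|U_{rep}^{(j)}|^2$ directly when $\varepsilon(1+\varepsilon)^j>1$; the residual case $\varepsilon(1+\varepsilon)^j\le 1$ is absorbed by the trivial bound $\cost(G)\ge n(n-1)/2$ together with $|U_{rep}^{(j)}|\le n$, and this is the only place the $1/r$ slack is spent. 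No Steiner-ratio estimate, no packing argument, and no intermediate regime left to close.
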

\begin{proof}
    First of all, suppose $\varepsilon(1+\varepsilon)^j>1$.
    Recall that for any two vertices $u,b\in U_{rep}^{(j)}$, their distance is at least $d(u,v)\ge\varepsilon(1+\varepsilon)^j$.
    Then, for any $i$ such that $(1+\varepsilon)^i\le\varepsilon(1+\varepsilon)^j$,
    the subgraph $G^i$ contains all the edges with weight at most $(1+\varepsilon)^i$.
    Thus, every vertex in the $U_{rep}^{(j)}$ falls in a separate connected component.
    Therefore, $c_i\ge|U_{rep}^{(j)}|$ and
    \begin{align*}
        \cost(G)&\ge\frac{\varepsilon}{2}\sum_{i:(1+\varepsilon)^i\le\varepsilon(1+\varepsilon)^j}(1+\varepsilon)^i\cdot(c_i^2-c_i)\\
        &\ge\frac{\varepsilon}{4}\sum_{i:(1+\varepsilon)^i\le\varepsilon(1+\varepsilon)^j}(1+\varepsilon)^i\cdot c_i^2\\
        &\ge \frac{\varepsilon}{4}\cdot|U_{rep}^{(j)}|^2\cdot[(1+\varepsilon)^0+(1+\varepsilon)^1+\dots+\varepsilon(1+\varepsilon)^j]\\
        &\ge\frac{\varepsilon}{8}\cdot|U_{rep}^{(j)}|^2\cdot\varepsilon(1+\varepsilon)^j\\
        &\ge\frac{\varepsilon^2(1+\varepsilon)^j}{8r}\cdot|U_{rep}^{(j)}|^2
    \end{align*}

    Besides, if $\varepsilon(1+\varepsilon)^j\le 1$, we have $\frac{\varepsilon^2(1+\varepsilon)^j}{8r}|U_{rep}^{(j)}|^2\le\frac{\varepsilon}{8r}n^2\le\cost(G)$,
    as $|U_{rep}^{(j)}|\le n$ and $\cost(G)\ge\frac{n(n-1)}{2}\ge\frac{\varepsilon}{8r}n^2$.
    This completes the proof of the lemma.
    
\end{proof}

Given the above analysis, we are ready to prove the guarantee of estimating clustering cost in metric space.

\begin{proof}[Proof of \cref{thm:cost_metric_distance}]

From \cref{lem:metric-c-hat} we know that $ \hat{c}_j \le c_j+\frac{\varepsilon}{8r}|U_{rep}^{(j)}| $, and then
\begin{align*}
    &|\hat{c}_j^2-c_j^2| = |\hat{c}_j-c_j|\cdot|\hat{c}_j+c_j| \le |\hat{c}_j-c_j|
    \cdot(2c_j+\frac{\varepsilon}{8r}|U_{rep}^{(j)}|)
\end{align*}

Let $\widehat{\cost}(G)=\frac{n(n-1)}{2} + \frac{\varepsilon}{2} \sum_{j=0}^{r-1}(1+\varepsilon)^j\cdot (\hat{c}_j^2-\hat{c}_j)$,
and $\cost(G')=\frac{n(n-1)}{2} + \frac{\varepsilon}{2} \sum_{j=0}^{r-1}(1+\varepsilon)^j\cdot (c_j^2-c_j)$.
Then we have,
\begin{align*}
    \abs{\widehat{\cost}(G)-\cost(G')}&\le \frac{\varepsilon}{2}\sum_{j=0}^{r-1}(1+\varepsilon)^j\cdot|(\hat c_j^2-\hat c_j)-(c_j^2-c_j)|\\
    &\leq\frac{\varepsilon}{2} \sum_{j=0}^{r-1}(1+\varepsilon)^j
    \cdot(\abs{\hat{c}_j^2-c_j^2}+\abs{\hat{c}_j-c_j})\\
    &\le \frac{\varepsilon}{2}\sum_{j=0}^{r-1}(1+\varepsilon)^j \cdot \abs{\hat{c}_j-c_j} \cdot \left(2c_j+\frac{\varepsilon}{8r}|U_{rep}^{(j)}|+1 \right)
    \tag{by the bound of $\abs{\hat{c}_j^2-c_j^2}$}\\
    &\le \frac{\varepsilon}{2} \sum_{j=0}^{r-1}(1+\varepsilon)^j \cdot \left( c_j-c_{j+1}+\frac{3\varepsilon}{8r}|U_{rep}^{(j)}| \right) \cdot 
    \left( 3c_j+\frac{\varepsilon}{8r}|U_{rep}^{(j)}| \right) \tag{by the bound of $\abs{\hat{c}_j-c_j}$, and $1\le c_j$}\\
    &\le\varepsilon\sum_{j=0}^{r-1} (1+\varepsilon)^j\cdot \left( \frac{3}{2}c_j^2-\frac{3}{2}c_jc_{j+1}-\frac{\varepsilon}{16r}|U_{rep}^{(j)}|\cdot c_{j+1} + \frac{10\varepsilon}{16r}|U_{rep}^{(j)}|\cdot c_j
    +\frac{3\varepsilon^2}{128r^2}\cdot|U_{rep}^{(j)}|^2 \right) \\
    &\le \varepsilon\sum_{j=0}^{r-1} (1+\varepsilon)^j\cdot \left( \frac{3}{2}c_j^2-\frac{3}{2}c_jc_{j+1}-\frac{\varepsilon}{16r}|U_{rep}^{(j)}|\cdot c_{j+1}\right) +  \varepsilon\sum_{j=0}^{r-1}(\frac{10}{r}\cost(G)+\frac{3}{16r}\cost(G)) \tag{by \cref{lem:metric-Urep-cj-cost} and \cref{lem:metric-Urep2-cost}}\\
    &\le \varepsilon\sum_{j=0}^{r-1} (1+\varepsilon)^j\cdot \left( \frac{3}{2}c_j^2-\frac{3}{2}c_jc_{j+1}\right) + \varepsilon\cdot r\cdot (\frac{10}{r}+\frac{3}{16r})\cost(G)\\
    &\le (*) + 11\varepsilon\cdot\cost(G)
\end{align*}

Where $(*) = \varepsilon\sum_{j=0}^{r-1} (1+\varepsilon)^j\cdot \left( \frac{3}{2}c_j^2-\frac{3}{2}c_jc_{j+1}\right)$. For the first term in the summation, by the definition of $\cost(G')$, we have that $\varepsilon\sum_{j=0}^{r-1}(1+\varepsilon)^j\cdot\frac{3}{2}c_j^2=3(\cost(G)-\frac{n(n-1)}{2}+\frac{\varepsilon}{2}\sum_{j=0}^{r-1}c_j)$. For the second term in the summation, 
\[
\varepsilon\sum_{j=0}^{r-1}(1+\varepsilon)^j\cdot\frac{3}{2}c_j\cdot c_{j+1}
\ge \frac{3}{1+\varepsilon}\cdot\frac{\varepsilon}{2}\sum_{j=0}^{r-1}(1+\varepsilon)^{j+1}c_{j+1}^2 
= \frac{3}{1+\varepsilon}\cdot\frac{\varepsilon}{2}(\sum_{j=0}^{r-1}(1+\varepsilon)^{j}c_{j}^2-(1+\varepsilon)^0c_0^2+(1+\varepsilon)^rc_r^2)
\]

Denote the weight of the minimum spanning tree in the graph $G'$ as $\cost(\MST)$. Note that $\cost_1=\cost(\MST)=n-W+\varepsilon\cdot\sum_{j=0}^{r-1}(1+\varepsilon)^jc_j$ according to \cite{czumaj2009estimating}. As $(1+\varepsilon)^0c_0^2=n^2$ and $(1+\varepsilon)^rc_r^2=W$, we have that
\begin{align*}
    (*) & \le 3 \left( \cost(G')-\frac{n(n-1)}{2}+\frac{\varepsilon}{2}\sum_{j=0}^{r-1}c_j \right)
    - \frac{3}{1+\varepsilon}\cdot\frac{\varepsilon}{2} \left(\sum_{j=0}^{r-1}(1+\varepsilon)^{j}c_{j}^2-n^2+W \right)\\
    &\le 3 \left( \cost(G')-\frac{n(n-1)}{2}+\frac{\varepsilon}{2}\sum_{j=0}^{r-1}c_j \right)
    -\frac{3}{1+\varepsilon} \left( \cost(G')-\frac{n(n-1)}{2}+\frac{\varepsilon}{2}\sum_{j=0}^{r-1}c_j-\frac{\varepsilon}{2}n^2+\frac{\varepsilon}{2}W \right)
    \tag{by the definition of $\cost(G')$}\\
    &\le 3 \left( \cost(G')-\frac{n(n-1)}{2}+\frac{\varepsilon}{2}\sum_{j=0}^{r-1}c_j \right)
    -3(1-\varepsilon) \left( \cost(G')-\frac{n(n-1)}{2}+\frac{\varepsilon}{2}\sum_{j=0}^{r-1}c_j-\frac{\varepsilon}{2}n^2 \right) \tag{since $\frac{1}{1+\varepsilon}\ge 1-\varepsilon$ and $\varepsilon<1$}\\
    &= 3\varepsilon\cdot\cost(G') - 3\varepsilon\frac{n(n-1)}{2}+\frac{3\varepsilon^2}{2}\sum_{j=0}^{r-1}c_j +\frac{3\varepsilon}{2}n^2 \tag{since $1-\varepsilon\le1$}\\
    &\le 3\varepsilon\cdot\cost(G') + \frac{3\varepsilon}{2}(\cost_1-n+W) + \frac{3\varepsilon}{2}n^2 \tag{by the definition of $\cost_1$}\\
    &\le 3\varepsilon\cdot\cost(G') + \frac{3\varepsilon}{2}\cdot2\cost_1 + \frac{3\varepsilon}{2}\cdot4\cost(G') \tag{since $\cost_1=\cost(\MST)\ge W$ and $n^2\le2n(n-1)\le4\cost(G')$ for $n\ge2$}\\
    &\le 12\varepsilon\cdot\cost(G') \tag{since $\cost_1\le\cost(G')$}
\end{align*}

Therefore, $\abs{\widehat{\cost}(G)-\cost(G')} \le 12\varepsilon\cdot\cost(G') + 11\varepsilon\cdot\cost(G') = 23\varepsilon\cdot\cost(G')$.
From \cref{lem:metric-formula}, we have that $\abs{{\cost(G')}-\cost(G)}\leq\varepsilon\cdot\cost(G)$, and so
\[
\abs{\widehat{\cost}(G)-\cost(G)} \leq \abs{\widehat{\cost}(G)-\cost(G')} + \abs{\cost(G')-\cost(G)} 
\leq 23\varepsilon\cdot\cost(G')+\varepsilon\cdot\cost(G)\le 48\varepsilon\cdot\cost(G)
\]

Replacing $\varepsilon$ with $\varepsilon/48$, we get a $(1+\varepsilon)$ estimate of $\cost(G)$.

\textbf{Running time analysis.} since each invocation on \textsc{Clique-Tree-Traversal} takes $\tilde{O}(n/\varepsilon^6)$ time, 
and we invoke it for $r=O(\log(n/\varepsilon)/\varepsilon)$ times, the total running time is $r\cdot\tilde{O}(n/\varepsilon^6)=\tilde{O}(n/\varepsilon^7)$.
\end{proof}

\section{Sublinear Algorithms in Metric Space: Similarity Case}
\label{sec:appendix-metric-similarity}
We now give sublinear algorithms for SLC cost when the edge weight in a metric graph represents similarity relationship. We obtain the following result. 
\begin{restatable}{thm}{metricsimilarity}
\label{thm:cost_metric_similarity}
Let $G$ be an $n$-point graph in metric space, where each edge weight represents \emph{similarity} between two vertices, and $0< \varepsilon<1$ be a 
    parameter. \cref{alg:appcost_metric_sim} outputs an estimate
    $\widehat{\cost^{(s)}}(G)$ of single-linkage clustering cost $\cost^{(s)}(G)$ in
    metric space, such that with probability at least $3/4$,
    \[
    (1-\varepsilon)\cost^{(s)}(G)\leq \widehat{\cost^{(s)}}(G)\leq (1+\varepsilon)\cost^{(s)}(G).
    \]
    
    The query complexity and running time of the algorithm are 
    $\Tilde{O}(n/\varepsilon^7)$ in expectation.
\end{restatable}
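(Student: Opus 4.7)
The plan is to mirror the proof of \cref{thm:cost_metric_distance} with adjustments tailored to the similarity cost formula and to the ``opposite'' direction of the threshold filtration. First I would estimate the maximum similarity $W$ in $O(n)$ time to within a factor of $2$ and rescale so that the largest weight becomes $W^\ast = \Theta(n^2/\varepsilon)$; since $\cost^{(s)}(G) \geq \frac{n(n-1)}{2}$ (\cref{fact:lbcostsim-n2}) and $\cost^{(s)}(G) \geq \tfrac{n}{2}\cost(\mathrm{MaxST}) \geq \tfrac{n}{2} W^\ast$ (\cref{fact:lbcostsim-maxst}), this forces $\cost^{(s)}(G) = \Omega(n^3/\varepsilon)$, which is enough slack to absorb all additive errors of order $\varepsilon \cdot n^2$ that will arise later. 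I would then round each similarity weight \emph{down} to the nearest power of $(1+\varepsilon)$; this direction of rounding preserves the triangle inequality on similarities and shrinks $\cost^{(s)}(G)$ by at most a factor of $(1+\varepsilon)$. The result is $r = O(\log(n/\varepsilon)/\varepsilon)$ geometric weight levels $\ell_j = (1+\varepsilon)^j$.

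Next, mirroring \cref{thm:costsimformula}, I would derive the continuous analog of the cost formula for the rounded graph $G'$:
\[
\cost^{(s)}(G') = \frac{n(n-1)}{2} + \frac{\varepsilon}{2}\sum_{j=0}^{r-1}(1+\varepsilon)^j\,(n-c^{(s)}_{j+1})(c^{(s)}_{j+1}+n-1),
\]
where $c^{(s)}_j$ denotes the number of connected components in the subgraph $G^{(s)}_j$ spanned by edges of weight $\ge \ell_j$. Then I would invoke Clique-Tree-Traversal on each of the $r$ threshold graphs $G^{(s)}_j$, adapted so that ``representative vertices'' for level $j$ form a maximal set with pairwise similarity below $\ell_j/(1+\varepsilon)$ (so that any vertex very similar to a representative is already in the same component of $G^{(s)}_j$). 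Each invocation runs in time $\tilde O(n/\varepsilon^6)$ and returns an estimate $\hat c^{(s)}_j$ whose additive error is controlled in terms of $|U^{(j)}_{\mathrm{rep}}|$ and $c^{(s)}_{j+1}-c^{(s)}_j$. Plugging the $\hat c^{(s)}_j$ into the formula above yields $\widehat{\cost^{(s)}}(G)$ in total time $\tilde O(r \cdot n/\varepsilon^6) = \tilde O(n/\varepsilon^7)$.

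The main obstacle will be the error analysis, specifically the analogs of \cref{lem:metric-Urep-cj-cost} and \cref{lem:metric-Urep2-cost}: lower bounding $\cost^{(s)}(G)$ by $\ell_j \cdot |U^{(j)}_{\mathrm{rep}}|$-type quantities. In the distance case the argument is that within a component of $G_j$, representatives are pairwise at distance $\ge \varepsilon \ell_j$ and hence the MST restricted to that component has weight $\ge \varepsilon\ell_j(x-1)$. In the similarity case the dual picture is needed: representatives sit in distinct components of $G^{(s)}_j$ (pairwise similarity below $\ell_j$), so the MaxST must connect them using edges of weight $< \ell_j$ and these edges enter $\cost^{(s)}$ weighted by their \emph{large} rank-factors $(n-i)$; this should give the required lower bound. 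A further subtlety, paralleling the issue addressed in \cref{sec:similaritymeasure}, is that the product $(c^{(s)}_j+n-1)(n-c^{(s)}_j)$ is small and sensitive to additive error precisely when $c^{(s)}_j$ is close to $n$. I would handle this either by splitting into two regimes and estimating $n - c^{(s)}_j$ via a dedicated non-isolated-vertex estimator (analogous to \cref{lem:appncc_sim}) in the regime $c^{(s)}_j > n/2$, or by observing that for large $j$ essentially no MaxST weight lies above $\ell_j$ and that tail is negligible against our $\Omega(n^3/\varepsilon)$ lower bound on $\cost^{(s)}(G)$. Once these lower bounds are established, the error bookkeeping proceeds exactly as in the proof of \cref{thm:cost_metric_distance}.
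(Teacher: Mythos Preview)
Your proposal has the representative-vertex construction inverted, and this is a genuine gap rather than a cosmetic difference. In the metric setting the weights $d(\cdot,\cdot)$ still satisfy the ordinary triangle inequality $|d(p,v)-d(q,v)|\le d(p,q)$, regardless of whether we interpret them as distances or similarities. The neighbour-sharing property that makes \textsc{Clique-Tree-Traversal} efficient therefore requires $d(p,q)$ to be \emph{small}: if $d(p,q)<\varepsilon(1+\varepsilon)^{j-1}$ and $d(p,v)\ge(1+\varepsilon)^j$, then $d(q,v)\ge(1+\varepsilon)^{j-1}$. Hence the paper (see \cref{lem:metric-c-hat-sim}) keeps the \emph{same} definition as in the distance case: representatives are a maximal set with pairwise metric value at least $\varepsilon(1+\varepsilon)^{j-1}$. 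Your definition---pairwise similarity below $\ell_j/(1+\varepsilon)$---does not yield neighbour-sharing (knowing only that $d(p,q)$ is large gives no control on $d(q,v)$ from below), so the traversal argument would not go through, and your proposed lower bound ``the MaxST must connect representatives using edges of weight $<\ell_j$'' points in the wrong direction for bounding $\cost^{(s)}$.

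Once the correct definition is used, the error analysis is in fact \emph{simpler} than you anticipate, and none of the regime-splitting, the dedicated $n-c^{(s)}_j$ estimator, or the analogues of \cref{lem:metric-Urep-cj-cost} and \cref{lem:metric-Urep2-cost} are needed. Because representatives now have pairwise similarity at least $\varepsilon(1+\varepsilon)^{j-1}$, one immediately gets $\cost(\mathrm{MaxST})\ge\tfrac{\varepsilon}{2}(1+\varepsilon)^{j-1}|U_{rep}^{(j)}|$. Writing $A_j=c^{(s)}_j+n-1$, $D_j=n-c^{(s)}_j$ and the per-level additive error $a=(c^{(s)}_j-c^{(s)}_{j-1})+\tfrac{3\varepsilon}{8r}|U_{rep}^{(j)}|$, the paper simply bounds $|\hat A_j\hat D_j-A_jD_j|\le 2a(A_j+D_j)\le 4na$; a telescoping computation shows $\sum_j(1+\varepsilon)^{j-1}(c^{(s)}_j-c^{(s)}_{j-1})\le\cost(\mathrm{MaxST})$, and the $|U_{rep}^{(j)}|$ contribution is absorbed by the MaxST bound above. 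The total error is then $O(\varepsilon n\cdot\cost(\mathrm{MaxST}))$, which \cref{fact:lbcostsim-maxst} converts directly into $O(\varepsilon)\cdot\cost^{(s)}(G)$. The $\Omega(n^3/\varepsilon)$ lower bound you aim for via rescaling is not needed for the error analysis (rescaling serves only to bound the number of levels $r$).
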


\subsection{Cost Formula for $\cost^{(s)}(G)$}

Let $\ell_j=(1+\varepsilon)^j$ and $G_j^{(s)}$ be the subgraph of $G$ spanned by all edges with weight at least $\ell_j$, and let $c^{(s)}_j$ be the number of connected components in $G_j^{(s)}$.
Note that $1=c^{(s)}_0\le c^{(s)}_1\le\dots\le c^{(s)}_{r+1}=n$.
Let $w_1^{(s)}\ge w_2^{(s)}\ge\dots\ge w_{n-1}^{(s)}$ be the sorted weights on the \emph{maximum} spanning tree.
Assume that any edge weight is $(1+\varepsilon)^j$ in the graph $G'$ for some $0\le j\le r$, and $(1+\varepsilon)^r=W$. 
Let $n_j$ be the number of edges with weight $j$ on the maximum spanning tree, and we have $\sum_{j<\ell}n_j=c^{(s)}_{\ell}-1$, and thus $n_j=c^{(s)}_{j+1}-c^{(s)}_{j}$.
Then the cost function can be derived in the following theorem.%

\begin{lemma}\label{lem:metric-formula-sim}
Let $G$ be an $n$-point graph in metric space such that all pairwise distances are in the interval $[1,W]$, where $W=(1+\varepsilon)^r$. Let $\ell_j=(1+\varepsilon)^j$. For any $0\leq j\leq r$,
we let $G^{(s)}_j$ denote the subgraph of $G$ spanned by all edges with weights \emph{at least} $\ell_j$, and let 
$c^{(s)}_j$ denote the number of connected components in $G^{(s)}_j$. Then we have
\[
\cost^{(s)}(G)\leq \frac{n(n-1)}{2} + \varepsilon\sum_{j=1}^{r}(1+\varepsilon)^{j-1}\frac{(c^{(s)}_j+n-1)(n-c^{(s)}_j)}{2} \leq (1+\varepsilon) \cost^{(s)}(G)
\]
\end{lemma}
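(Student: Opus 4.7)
The plan is to mirror the proof of \cref{lem:metric-formula}, introducing a rounded graph $G'$ whose edge weights lie in $\{\ell_0,\ell_1,\dots,\ell_r\}$ and then deriving an exact formula for $\cost^{(s)}(G')$ in terms of the $c^{(s)}_j$. The sandwiching in the lemma will follow from comparing $\cost^{(s)}(G)$ and $\cost^{(s)}(G')$ via the rounding factor.

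First, I will construct $G'$ by rounding each edge weight $w(e)\in[1,W]$ to the nearest power of $(1+\varepsilon)$, in the direction that matches the definition of $c^{(s)}_j$ given in the statement (namely, the direction making the threshold $\{w'(e)\ge \ell_j\}$ coincide with $\{w(e)\ge \ell_j\}$ in $G$). Because $w(e)$ and $w'(e)$ differ by at most a factor of $(1+\varepsilon)$, every edge's contribution to the similarity-SLC cost is preserved up to the same factor, which will give a two-sided bound between $\cost^{(s)}(G)$ and $\cost^{(s)}(G')$.

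Next, I will compute $\cost^{(s)}(G')$ exactly by imitating the telescoping used in \cref{thm:costsimformula}. Writing each rounded weight as $w_i^{(s)\prime}=\ell_0+\sum_{j=1}^{r}(\ell_j-\ell_{j-1})\mathbb{1}[w_i^{(s)\prime}\ge \ell_j]$ and plugging into $\cost^{(s)}(G')=\sum_{i=1}^{n-1}(n-i)\,w_i^{(s)\prime}$, I obtain
\[
\cost^{(s)}(G')=\ell_0\cdot\frac{n(n-1)}{2}+\sum_{j=1}^{r}(\ell_j-\ell_{j-1})\sum_{i:\,w_i^{(s)\prime}\ge \ell_j}(n-i).
\]
The number of MaxST edges of $G'$ with weight at least $\ell_j$ equals $n-c^{(s)}_j$ (since deleting them from the MaxST leaves a spanning forest with $c^{(s)}_j$ components), and in non-increasing order these are exactly the edges of ranks $1,\dots,n-c^{(s)}_j$. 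Hence $\sum_{i=1}^{n-c^{(s)}_j}(n-i)=\sum_{k=c^{(s)}_j}^{n-1}k=\frac{(c^{(s)}_j+n-1)(n-c^{(s)}_j)}{2}$, and substituting $\ell_0=1$ together with $\ell_j-\ell_{j-1}=\varepsilon(1+\varepsilon)^{j-1}$ yields precisely the middle expression of the lemma.

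Combining Step~1 (the $(1\pm\varepsilon)$ sandwich from rounding) with Step~2 (the exact formula for $\cost^{(s)}(G')$) gives the desired two-sided bound. The main obstacle I expect is the careful alignment of the rounding direction with the definition of $c^{(s)}_j$: the threshold $\{w(e)\ge \ell_j\}$ on $G$ must coincide with $\{w'(e)\ge \ell_j\}$ on $G'$, otherwise the formula picks up a shifted index $c^{(s)}_{j-1}$. Once this is pinned down (possibly by perturbing generically so that no weight equals a power of $(1+\varepsilon)$), the remainder is routine algebraic manipulation analogous to the distance case, with no new ideas beyond the telescoping already used in \cref{thm:costsimformula}.
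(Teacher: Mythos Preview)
Your proposal is correct and follows essentially the same approach as the paper: round each weight to the nearest power of $(1+\varepsilon)$ to form $G'$, obtain the two-sided sandwich from the rounding factor, and then compute $\cost^{(s)}(G')$ exactly in terms of the component counts. Your indicator-function telescoping $w'^{(s)}_i=\ell_0+\sum_{j\ge 1}(\ell_j-\ell_{j-1})\,\mathbf{1}[w'^{(s)}_i\ge\ell_j]$ is more compact than the paper's longer step-by-step regrouping but lands on the same formula, and the rounding-direction alignment you flag is precisely the point the paper handles by rounding up and asserting that the threshold graphs are unchanged.
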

\begin{proof}
We let $G'$ be the weighted graph obtained by rounding every edge weight in $G$ to the nearest power of
$(1+\varepsilon)$. For example, for any pair $(u,v)$, if the distance is 
$(1+\varepsilon)^j<d(u,v)\leq(1+\varepsilon)^{j+1}$, then we round $d(u,v)$ to be $(1+\varepsilon)^{j+1}$. 
After rounding, the $i$-th largest edge weight on the MaxST of $G'$ becomes $w_i'$.
Since we only increased the edge weights by at most a factor of $(1+\varepsilon)$, the cost of clustering is
also increased, by at most a factor of $(1+\varepsilon)$. That is, 
\[
\cost^{(s)}(G)\leq \cost^{(s)}(G')\leq (1+\varepsilon)\cost^{(s)}(G).
\]

Note that after rounding, the threshold graphs in $G'$ are not changed, and thus each $c^{(s)}_j$ has the same value with respect to $G$. Thus, the clustering cost of $G'$ can be written as,
\begin{align*}
    \cost^{(s)}(G')&=\sum_{i=1}^{n-1} (n-i)\cdot w_i^{(s)} \tag{by definition}\\
    &=\sum_{i=1}^{n_r} (n-i)\cdot\ell_r+\sum_{i=n_r+1}^{n_r+n_{r-1}} (n-i)\cdot\ell_{r-1}+\dots
    +\sum_{i=n_r+\dots+n_1+1}^{n_r+\dots+n_0}(n-i)\cdot\ell_0 \tag{reorganizing the sum by grouping the terms according to edge weights}\\
    &=\sum_{i=n-c^{(s)}_{r+1}+1}^{n-c^{(s)}_r}(n-i)\cdot\ell_r+\sum_{i=n-c^{(s)}_r+1}^{n-c^{(s)}_{r-1}}(n-i)\cdot\ell_{r-1}+\dots+\sum_{i=n-c^{(s)}_1+1}^{n-c^{(s)}_0}(n-i)\cdot\ell_0 \tag{since $n_j=c^{(s)}_{j+1}-c^{(s)}_{j}$}\\
    &=\sum_{i=n-c^{(s)}_{r+1}+1}^{n-c^{(s)}_0}(n-i)\cdot\ell_r - \sum_{i=n-c^{(s)}_{r}+1}^{n-c^{(s)}_0}(n-i)\cdot\ell_r + \sum_{i=n-c^{(s)}_r+1}^{n-c^{(s)}_{0}}(n-i)\cdot\ell_{r-1} - \sum_{i=n-c^{(s)}_{r-1}+1}^{n-c^{(s)}_{0}}(n-i)\cdot\ell_{r-1} + \dots \\
    &\quad + \sum_{i=n-c^{(s)}_2+1}^{n-c^{(s)}_0}(n-i)\cdot\ell_1 - \sum_{i=n-c^{(s)}_1+1}^{n-c^{(s)}_0}(n-i)\cdot\ell_1 + \sum_{i=n-c^{(s)}_1+1}^{n-c^{(s)}_0}(n-i)\cdot\ell_0\\
    &=\ell_r\cdot\sum_{i=1}^{n-1}(n-i) - \sum_{i=n-c^{(s)}_r+1}^{n-1}(n-i)\cdot(\ell_r-\ell_{r-1}) - \dots - \sum_{i=n-c^{(s)}_1+1}^{n-1}(n-i)\cdot(\ell_1-\ell_0) \tag{since $c^{(s)}_0=1$ and $c^{(s)}_{r+1}=n$}\\
    &=(1+\varepsilon)^r \frac{n(n-1)}{2} - \varepsilon(1+\varepsilon)^{r-1}\cdot\frac{c^{(s)}_r(c^{(s)}_r-1)}{2} - \dots - \varepsilon(1+\varepsilon)^0 \cdot\frac{c^{(s)}_1(c^{(s)}_1-1)}{2}\\
    &=\frac{n(n-1)}{2} + \varepsilon\sum_{j=1}^{r}(1+\varepsilon)^{j-1}\cdot\frac{n(n-1)}{2} - \varepsilon\sum_{j=1}^{r}(1+\varepsilon)^{j-1}\cdot\frac{c^{(s)}_j(c^{(s)}_j-1)}{2} \tag{since $(1+\varepsilon)^r=1+\varepsilon\sum_{j=1}^r(1+\varepsilon)^{j-1}$}\\
    &=\frac{n(n-1)}{2} + \varepsilon\sum_{j=1}^{r}(1+\varepsilon)^{j-1}\frac{(c^{(s)}_j+n-1)(n-c^{(s)}_j)}{2}
\end{align*}

This completes the proof of the lemma.
\end{proof}

The cost of the maximum spanning tree of graph $G'$ can be derived in a similar approach.
\begin{align}
    \cost(\mathrm{MaxST}) &= \sum_{j=0}^r (1+\varepsilon)^j \cdot n_j \nonumber\\
    &= \sum_{j=0}^r (1+\varepsilon)^j(c^{(s)}_{j+1}-c^{(s)}_{j}) \nonumber\\
    &= \sum_{j'=1}^{r+1} (1+\varepsilon)^{j'-1} \cdot c^{(s)}_{j'}-\sum_{j=0}^{r} (1+\varepsilon)^{j} \cdot c^{(s)}_{j} \tag{substituting $j'=j+1$ in the first sum} \nonumber\\
    &= \sum_{j=0}^{r} (1+\varepsilon)^{j-1} \cdot c^{(s)}_{j} + (1+\varepsilon)^{r} \cdot c^{(s)}_{r+1} - (1+\varepsilon)^{-1} \cdot c^{(s)}_0 -\sum_{j=0}^{r} (1+\varepsilon)^{j} \cdot c^{(s)}_{j} \nonumber\\
    &= (1+\varepsilon)^{r} \cdot n - \frac{1}{1+\varepsilon} - \varepsilon\sum_{j=0}^{r} (1+\varepsilon)^{j-1} \cdot c^{(s)}_{j} \tag{since $c^{(s)}_{r+1}=n, c^{(s)}_0=1$} \nonumber\\
    &= Wn - 1 - \varepsilon\sum_{j=1}^{r} (1+\varepsilon)^{j-1} \cdot c^{(s)}_{j} \label{eqn:cost_maxst_metric}
\end{align}

\subsection{Estimating $c^{(s)}_j$}

For a pair of vertices $p$ and $q$, if $d(p,q)<\varepsilon(1+\varepsilon)^{j-1}$, then they share the same neighborhood inside the subgraph $G^{(s)}_j$.
Because for any neighbor $v$ of $p$, we have $d(p,v)\ge(1+\varepsilon)^j$, then 
$d(q,v)\ge d(p,v)-d(p,q)>(1+\varepsilon)^j-\varepsilon(1+\varepsilon)^{j-1}=(1+\varepsilon)^{j-1}$, which means that $v$ is also a neighbor of $q$.
We modify the original algorithm \textsc{Clique-Tree-Traversal} to obtain \textsc{Clique-Tree-Traversal-Similarity}, and the only difference is that the representative set $V_{rep}$ regarding to subgraph $G^{(s)}_j$ has pairwise distance at least $\varepsilon(1+\varepsilon)^{j-1}$.

Analogously to the metric space representing distance, we have the following lemma.

\begin{lemma}\label{lem:metric-c-hat-sim}
Let $U_{rep}^{(j)}$ be a set of 
\emph{representative vertices} in the whole graph. This set is obtained by full 
\textsc{Clique-Tree-Traversal-Similarity}, and with maximum cardinality. 
For any given $\varepsilon,\ j\in[1,r]$, there exists an algorithm that computes in time $\Tilde{O}(n/\varepsilon^6)$ and
outputs a value $\hat{c}^{(s)}_j$ such that
\[
\abs{\hat{c}^{(s)}_j-c^{(s)}_j}\leq c^{(s)}_j-c^{(s)}_{j-1}+\frac{3\varepsilon}{8r}\abs{U_{rep}^{(j)}}
\]
\end{lemma}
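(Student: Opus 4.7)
The plan is to closely mirror the proof of \cref{lem:metric-c-hat}, adapting the indexing to the similarity setting. Recall that $G^{(s)}_{j-1} \supseteq G^{(s)}_j$ (a higher threshold includes strictly fewer edges), so $c^{(s)}_{j-1} \le c^{(s)}_j$, and the ``shared-neighborhood'' property used by \textsc{Clique-Tree-Traversal-Similarity} -- namely, that any two vertices with pairwise distance less than $\varepsilon(1+\varepsilon)^{j-1}$ have the same neighbors in $G^{(s)}_j$ -- is exactly what is stated immediately before the lemma. This means that one can again restrict the exploration to the representatives in $U^{(j)}_{\mathrm{rep}}$ and reuse the same sampling machinery of \cite{czumaj2009estimating}.

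The first step is to derive the two-sided expectation bound
\[
c^{(s)}_{j-1} - \tfrac{\varepsilon}{4r}|U^{(j)}_{\mathrm{rep}}| \le \E[\hat c^{(s)}_j] \le c^{(s)}_j.
\]
The upper bound follows because the estimator attributes a contribution of at most one to each connected component of $G^{(s)}_j$, by the standard analysis of \cite{czumaj2009estimating}. For the lower bound, let $K$ denote the number of connected components of $G^{(s)}_{j-1}$ that contain some vertex $p$ from which a full clique-tree traversal in $G^{(s)}_j$ would encounter more than $4r/\varepsilon$ representative vertices before completion. Each such component contains at least $4r/\varepsilon$ elements of $U^{(j)}_{\mathrm{rep}}$, and these sets are pairwise disjoint, so $K \le \tfrac{\varepsilon}{4r}|U^{(j)}_{\mathrm{rep}}|$. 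Since each component of $G^{(s)}_{j-1}$ is a union of components of $G^{(s)}_j$, every component of $G^{(s)}_{j-1}$ not counted in $K$ contributes its full share to $\E[\hat c^{(s)}_j]$, giving $\E[\hat c^{(s)}_j] \ge c^{(s)}_{j-1} - K$.

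Next, I would apply Chebyshev's inequality in exactly the same form as in \cite{czumaj2009estimating} to obtain
\[
\Pr\bigl[|\hat c^{(s)}_j - \E[\hat c^{(s)}_j]| \ge \tfrac{\varepsilon}{8r}|U^{(j)}_{\mathrm{rep}}|\bigr] \le \tfrac{1}{16}.
\]
Combined with the two-sided expectation bound, this yields, with probability at least $15/16$,
\[
c^{(s)}_{j-1} - \tfrac{3\varepsilon}{8r}|U^{(j)}_{\mathrm{rep}}| \le \hat c^{(s)}_j \le c^{(s)}_j + \tfrac{\varepsilon}{8r}|U^{(j)}_{\mathrm{rep}}|,
\]
and subtracting $c^{(s)}_j$ on both sides gives the stated error bound $|\hat c^{(s)}_j - c^{(s)}_j| \le (c^{(s)}_j - c^{(s)}_{j-1}) + \tfrac{3\varepsilon}{8r}|U^{(j)}_{\mathrm{rep}}|$.

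The main obstacle, though modest, is the bookkeeping step: one must verify that ``bad'' starting vertices in a truncated traversal of $G^{(s)}_j$ can be charged to components of the \emph{coarser} partition $G^{(s)}_{j-1}$ (rather than to the finer $G^{(s)}_{j+1}$ as in the distance case), and that each such charge consumes at least $4r/\varepsilon$ distinct elements of $U^{(j)}_{\mathrm{rep}}$. Once this is established, the running-time analysis is identical to that of \cref{lem:metric-c-hat} and yields $\tilde O(n/\varepsilon^6)$, as all subroutines -- representative selection, thresholded traversal, and the sampling of the geometric random variable $X$ -- behave exactly as in the distance case.
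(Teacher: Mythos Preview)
Your proposal is correct and follows essentially the same approach as the paper's proof. The paper states the key sandwich $V_p^{(j)}\subseteq V_{\mathrm{exp}}\subseteq V_p^{(j-1)}$ and then jumps directly to the expectation bound and Chebyshev step; your write-up simply unpacks the $K$-argument explicitly (correctly charging to components of the coarser graph $G^{(s)}_{j-1}$), which is exactly what that sandwich encodes.
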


\begin{proof}
    Let $V_p^{(j)}$ be the set of vertices in the connected component of vertex $p$ in the graph $G^{(s)}_j$, and $V_{exp}$ be the set of visited vertices when calling subroutine \textsc{Clique-Tree-Traversal-Similarity} and starting traversal from $p$.
    We have that $V_p^{(j)}\subseteq V_{exp}\subseteq V_p^{(j-1)}$.
    
    Therefore, $c^{(s)}_{j-1}-\frac{\varepsilon}{4r}|U_{rep}^{(j)}| \le \E[\hat c^{(s)}_j] \le c^{(s)}_j$ and $\Pr[|\hat{c}^{(s)}_j-\E[\hat{c}^{(s)}_j]|\ge\frac{\varepsilon}{8r}|U_{rep}^{(j)}|]\le\frac{1}{16}$.
    Thus, with probability more than $15/16$, we have
    \[
    c^{(s)}_{j-1}-\frac{3\varepsilon}{8r}|U_{rep}^{(i)}| = c^{(s)}_{j-1} -\frac{\varepsilon}{4r}|U_{rep}^{(j)}| -\frac{\varepsilon}{8r}|U_{rep}^{(j)}| \le \hat{c}^{(s)}_j \le c^{(s)}_j+\frac{\varepsilon}{8r}|U_{rep}^{(j)}|
    \]

    Therefore, $\hat{c}^{(s)}_j-c^{(s)}_j\le \frac{\varepsilon}{8r}|U_{rep}^{(j)}|$ and $c^{(s)}_j-\hat{c}^{(s)}_j\le c^{(s)}_j-c^{(s)}_{j-1}+\frac{3\varepsilon}{8r}|U_{rep}^{(j)}|$, which completes the proof.

\end{proof}

\subsection{Estimating $\cost^{(s)}(G)$}

\begin{algorithm}[h!]
    \DontPrintSemicolon
    \caption{\textsc{AppCostMetricSim}($G,\varepsilon$)}
    \label{alg:appcost_metric_sim}
    for each $j\in\{1,\dots,r\}$, invoke \textsc{Clique-Tree-Traversal-Similarity} to obtain $\hat{c}^{(s)}_j$\;
    output $\widehat{\cost^{(s)}}(G)=\frac{n(n-1)}{2} + \frac{\varepsilon}{2} \sum_{j=1}^{r}(1+\varepsilon)^{j-1}\cdot (\hat{c}^{(s)}_j+n-1)(n-\hat{c}^{(s)}_j)$\;
\end{algorithm}

\metricsimilarity*

\begin{proof}[Proof of \cref{thm:cost_metric_similarity}]
    By \cref{lem:metric-c-hat-sim}, $\abs{\hat{c}^{(s)}_j-c^{(s)}_j}\leq c^{(s)}_j-c^{(s)}_{j-1}+\frac{3\varepsilon}{8r}\abs{U_{rep}^{(j)}}$, and we denote this additive error as $a$.
    Denote ${c}^{(s)}_j+n-1$ as $A_j$ and $n-{c}^{(s)}_j$ as $D_j$. Then for each $1\le j\le r$, 
    \begin{align*}
        &A_j-a \le \widehat{A}_j\le A_j+a\\
        &D_j-a \le \widehat{D}_j\le D_j+a\\
        &(A_j-a)(D_j-a)\le \widehat{A}_j\cdot\widehat{D}_j\le (A_j+a)(D_j+a)\\
    \end{align*}

    Therefore, $|\widehat{A}_j\cdot\widehat{D}_j-A_j\cdot D_j| \le a(A_j+D_j)+a^2$. 
    Since $a=c^{(s)}_j-c^{(s)}_{j-1}+\frac{3\varepsilon}{8r}\abs{U_{rep}^{(j)}}\le n+\frac{3\varepsilon}{8r}\cdot n\le 2n-1 = A_j+D_j$,
    then $|\widehat{A}_j\cdot\widehat{D}_j-A_j\cdot D_j| \le 2a(A_j+D_j)\le(c^{(s)}_j-c^{(s)}_{j-1}+\frac{3\varepsilon}{8r}\abs{U_{rep}^{(j)}})\cdot 4n$.

    Thus, the error of the estimate $\widehat{\cost^{(s)}}(G)$ is,
    \begin{align*}
        |\widehat{\cost^{(s)}}(G)-\cost^{(s)}(G')| &\le \frac{\varepsilon}{2}\sum_{j=1}^r(1+\varepsilon)^{j-1}|\widehat{A}_j\cdot\widehat{D}_j-A_j\cdot D_j|\\
        &\le \frac{\varepsilon}{2}\sum_{j=1}^r(1+\varepsilon)^{j-1}(c^{(s)}_j-c^{(s)}_{j-1}+\frac{3\varepsilon}{8r}|U_{rep}^{(j)}|)\cdot 4n\\
    \end{align*}

    On the one hand, since all the representative vertices inside $U_{rep}^{(j)}$ have pairwise distances at least $\varepsilon(1+\varepsilon)^{j-1}$, the cost of the maximum spanning tree of $G$ is at least $\cost(\mathrm{MaxST})\ge \varepsilon(1+\varepsilon)^{j-1}(|U_{rep}^{(j)}|-1)\ge\frac{\varepsilon}{2}(1+\varepsilon)^{j-1}|U_{rep}^{(j)}|$, if $|U_{rep}^{(j)}|\ge2$.

    On the other hand, we have
    \begin{align*}
        \sum_{j=1}^r(1+\varepsilon)^{j-1}(c^{(s)}_j-c^{(s)}_{j-1}) &= \sum_{j=1}^r(1+\varepsilon)^{j-1}c^{(s)}_j - (1+\varepsilon)\sum_{j=1}^r(1+\varepsilon)^{j-2}c^{(s)}_{j-1}\\
        &= \sum_{j=1}^r(1+\varepsilon)^{j-1}c^{(s)}_j - (1+\varepsilon)\sum_{j'=0}^{r-1}(1+\varepsilon)^{j'-1}c^{(s)}_{j'} \tag{substituting $j'=j-1$ in the second sum}\\
        &= \sum_{j=1}^r(1+\varepsilon)^{j-1}c^{(s)}_j - (1+\varepsilon) \left( \sum_{j=1}^r(1+\varepsilon)^{j-1}c^{(s)}_{j}+(1+\varepsilon)^{-1}c^{(s)}_0-(1+\varepsilon)^{r-1}c^{(s)}_{r} \right)\\
        &= -\varepsilon\sum_{j=1}^r(1+\varepsilon)^{j-1}c^{(s)}_j - 1 + (1+\varepsilon)^r c^{(s)}_r \tag{since $c^{(s)}_0=1$}\\
        &= \cost(\mathrm{MaxST})-Wn+1 -1 + Wc^{(s)}_r \tag{by \cref{eqn:cost_maxst_metric} and since $(1+\varepsilon)^r=W$}\\
        &\le \cost(\mathrm{MaxST}) \tag{since $c^{(s)}_r\le n$}
    \end{align*}

    Therefore, the error of $\widehat{\cost^{(s)}}(G)$ is
    \begin{align*}
        |\widehat{\cost^{(s)}}(G)-\cost^{(s)}(G')|&\le \frac{\varepsilon}{2}\cost(\mathrm{MaxST})\cdot 4n + \sum_{j=1}^r \frac{3\varepsilon}{8r}\cost(\mathrm{MaxST})\cdot 4n \tag{by the above analysis}\\
        &= \frac{7\varepsilon}{2}\cost(\mathrm{MaxST})\cdot n
    \end{align*}

    Furthermore, the lower bound of $\cost^{(s)}(G')$ is,
    \begin{align*}
        \cost^{(s)}(G') &=\frac{n(n-1)}{2} + \varepsilon\sum_{j=1}^{r}(1+\varepsilon)^{j-1}\frac{(c^{(s)}_j+n-1)(n-c^{(s)}_j)}{2} \tag{by \cref{lem:metric-formula-sim}}\\
        &\ge \frac{n(n-1)}{2} + \frac{n}{2}\cdot \varepsilon\sum_{j=1}^{r}(1+\varepsilon)^{j-1}(n-c^{(s)}_j) \tag{since $c^{(s)}_j\ge1$}\\
        &= \frac{n(n-1)}{2} + \frac{n}{2}\cdot \left( \varepsilon\cdot\frac{(1+\varepsilon)^{0}((1+\varepsilon)^r-1)}{\varepsilon}n-\varepsilon\sum_{j=1}^{r}(1+\varepsilon)^{j-1} c^{(s)}_j \right)\\
        &= \frac{n(n-1)}{2} + \frac{n}{2}\cdot(Wn-n-\varepsilon\sum_{j=1}^{r}(1+\varepsilon)^{j-1} c^{(s)}_j) \tag{since $(1+\varepsilon)^r=W$}\\
        &= \frac{n(n-1)}{2} + \frac{n}{2}\cdot(\cost(\mathrm{MaxST})+1-n) \tag{by \cref{eqn:cost_maxst_metric}}\\
        &= \frac{n}{2}\cdot \cost(\mathrm{MaxST})
    \end{align*}

    Thus, $|\widehat{\cost^{(s)}}(G)-\cost^{(s)}(G')|\le 7\varepsilon\cdot \cost^{(s)}(G')$. Since $|{\cost^{(s)}}(G')-\cost^{(s)}(G)|\le\varepsilon\cdot\cost^{(s)}(G)$, we have that
    \begin{align*}
        |\widehat{\cost^{(s)}}(G)-\cost^{(s)}(G)|&\le |\widehat{\cost^{(s)}}(G)-\cost^{(s)}(G')| + |{\cost^{(s)}}(G')-\cost^{(s)}(G)|\\ 
        &\le 7\varepsilon\cdot \cost^{(s)}(G') + \varepsilon\cdot\cost^{(s)}(G)\\ 
        &\le 16\varepsilon\cdot\cost^{(s)}(G)
    \end{align*}

    Replacing $\varepsilon$ with $\varepsilon/16$, we get a $(1+\varepsilon)$ estimate of $\cost^{(s)}(G)$.

    \textbf{Running time analysis.} since each invocation on \textsc{Clique-Tree-Traversal-Similarity} takes $\tilde{O}(n/\varepsilon^6)$ time, 
    and we invoke it for $r=O(\log(n/\varepsilon)/\varepsilon)$ times, the total running time is $r\cdot\tilde{O}(n/\varepsilon^6)=\tilde{O}(n/\varepsilon^7)$.

\end{proof}

\section{More on Experiments}\label{sec:appendix-experiments}

\subsection{Experiments in Distance Graphs}

For road networks, \cref{fig:appratio-dist-road1}, \cref{fig:appratio-dist-road2} and \cref{fig:appratio-dist-road3} report the approximation ratio for estimating clustering cost $\cost(G)$ among these datasets;
\cref{fig:profile-dist-road1}, \cref{fig:profile-dist-road2} and \cref{fig:profile-dist-road3} compare the exact and estimated profile values for different choices of the sample size $r$.
In addition, \cref{fig:appratio-profile-dist-loc} presents the approximation ratio for estimating both $\cost(G)$ and profiles on localization datasets.

\begin{figure}
     \centering
     \begin{subfigure}[b]{0.32\textwidth}
         \centering
         \includegraphics[width=\textwidth]{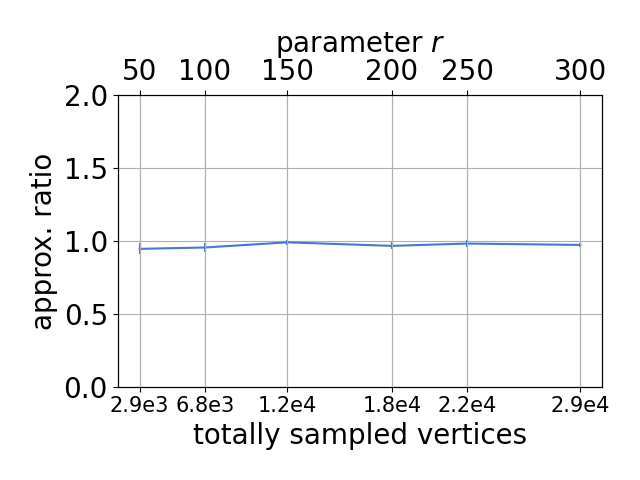}
         \caption{Luxembourg}
     \end{subfigure}
     \hfill
     \begin{subfigure}[b]{0.32\textwidth}
         \centering
         \includegraphics[width=\textwidth]{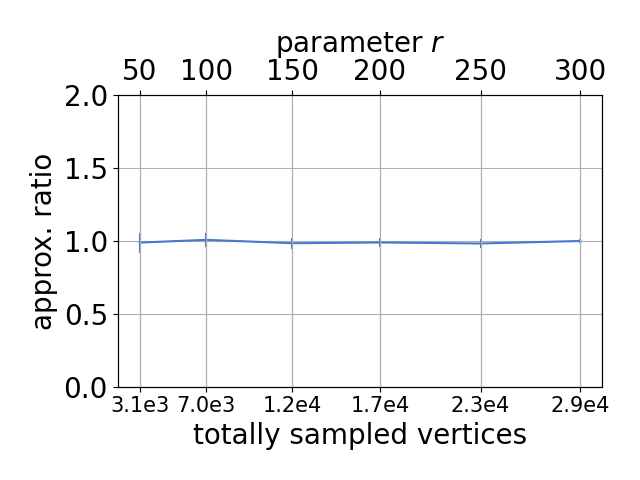}
         \caption{Belgium}
     \end{subfigure}
     \hfill
     \begin{subfigure}[b]{0.32\textwidth}
         \centering
         \includegraphics[width=\textwidth]{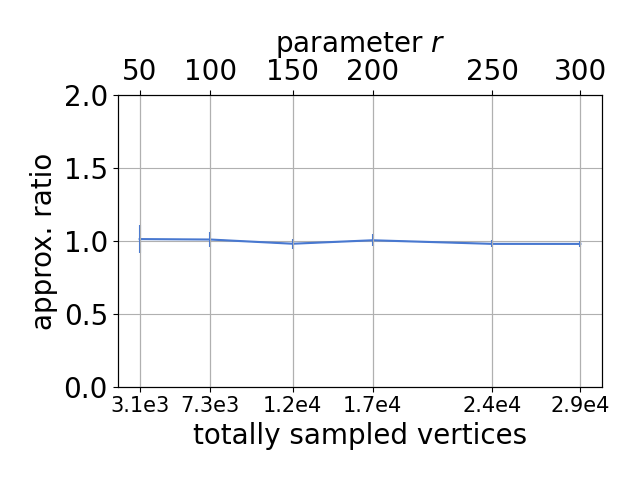}
         \caption{Netherlands}
     \end{subfigure}
    \caption{Approximation ratio in distance graphs: road networks}
    \label{fig:appratio-dist-road1}
\end{figure}

\begin{figure}
     \centering
     \begin{subfigure}[b]{0.32\textwidth}
         \centering
         \includegraphics[width=\textwidth]{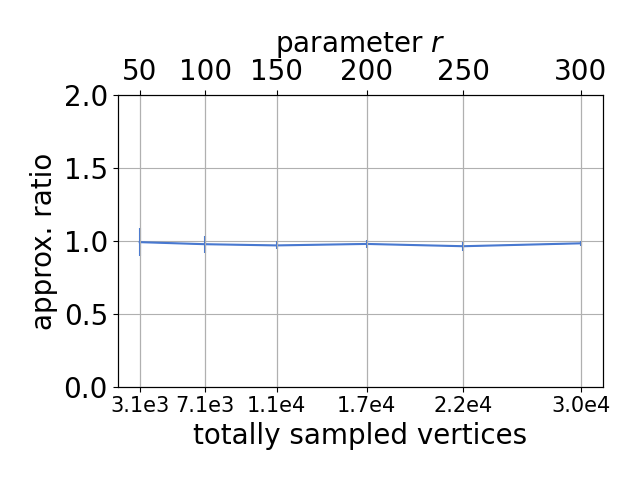}
         \caption{Italy}
     \end{subfigure}
     \hfill
     \begin{subfigure}[b]{0.32\textwidth}
         \centering
         \includegraphics[width=\textwidth]{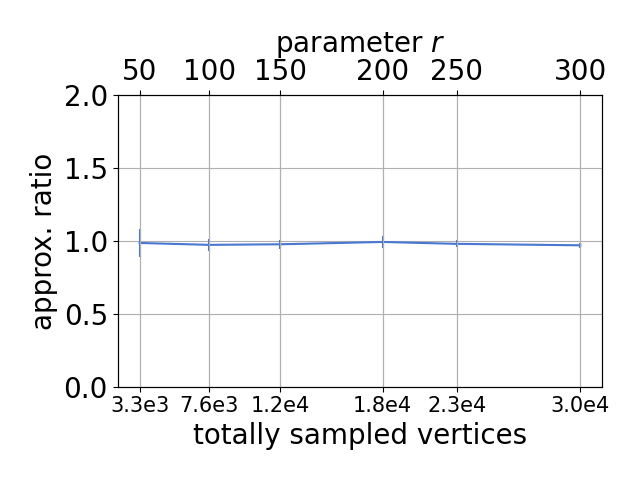}
         \caption{Great Britain}
     \end{subfigure}
     \hfill
     \begin{subfigure}[b]{0.32\textwidth}
         \centering
         \includegraphics[width=\textwidth]{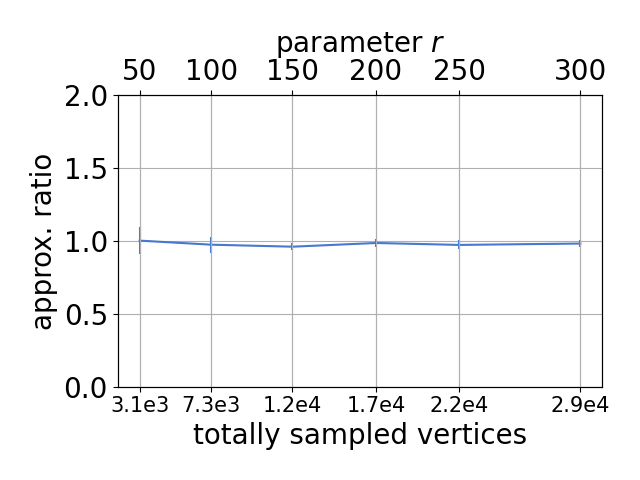}
         \caption{Germany}
     \end{subfigure}
    \caption{Approximation ratio in distance graphs: road networks}
    \label{fig:appratio-dist-road2}
\end{figure}

\begin{figure}
     \centering
     \begin{subfigure}[b]{0.32\textwidth}
         \centering
         \includegraphics[width=\textwidth]{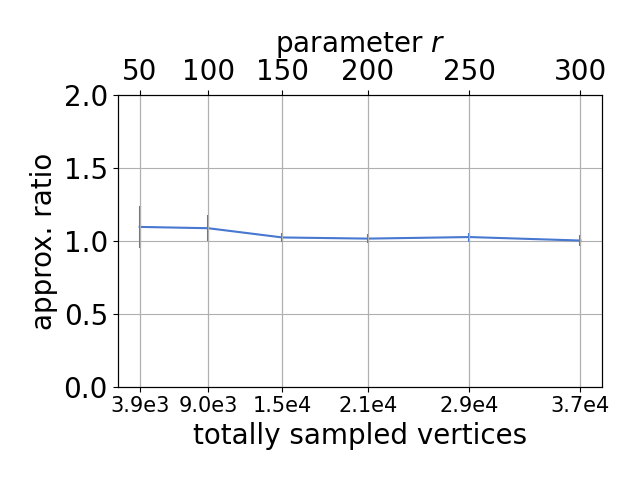}
         \caption{Asia}
     \end{subfigure}
     \hfill
     \begin{subfigure}[b]{0.32\textwidth}
         \centering
         \includegraphics[width=\textwidth]{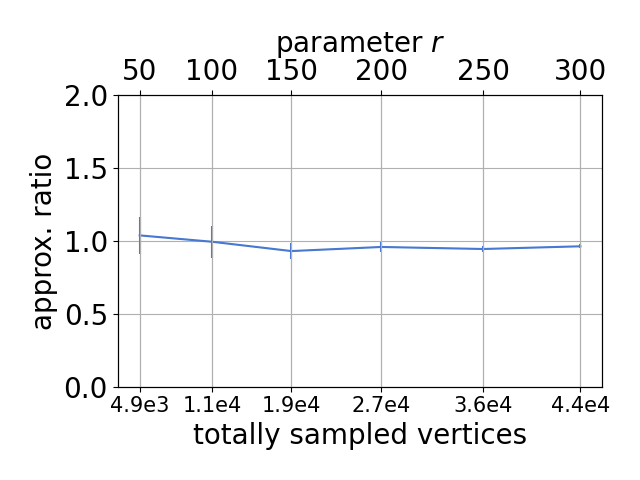}
         \caption{USA}
     \end{subfigure}
     \hfill
     \begin{subfigure}[b]{0.32\textwidth}
         \centering
         \includegraphics[width=\textwidth]{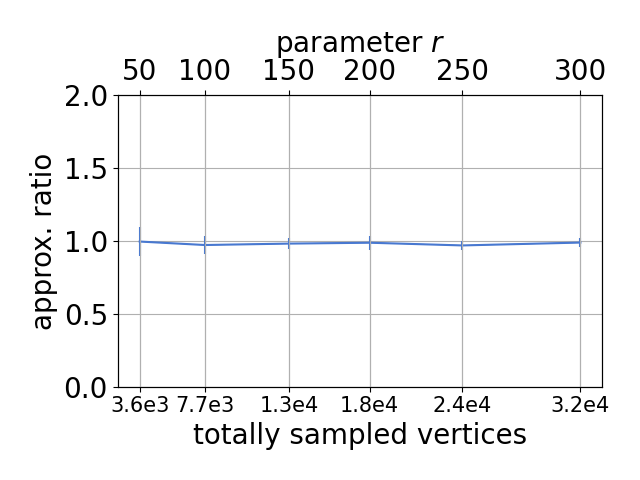}
         \caption{Europe}
     \end{subfigure}
    \caption{Approximation ratio in distance graphs: road networks}
    \label{fig:appratio-dist-road3}
\end{figure}

\begin{figure}
     \centering
     \begin{subfigure}[b]{0.24\textwidth}
         \centering
         \includegraphics[width=\textwidth]{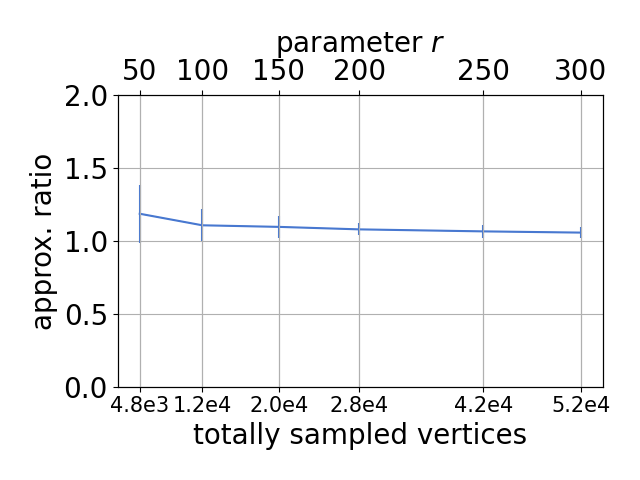}
         \caption{Accuracy of Brightkite}
     \end{subfigure}
     \hfill
     \begin{subfigure}[b]{0.24\textwidth}
         \centering
         \includegraphics[width=\textwidth]{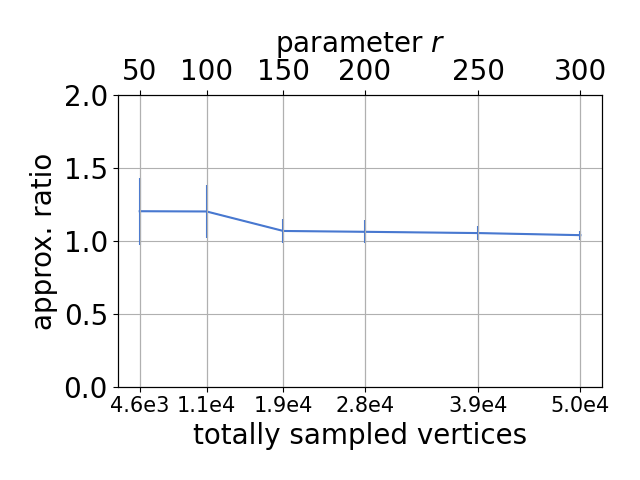}
         \caption{Accuracy of Gowalla}
     \end{subfigure}
     \hfill
     \begin{subfigure}[b]{0.24\textwidth}
         \centering
         \includegraphics[width=\textwidth]{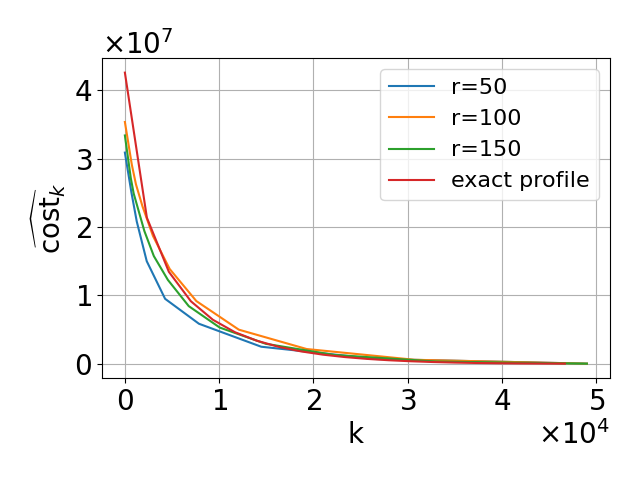}
         \caption{Profiles of Brightkite}
     \end{subfigure}
     \hfill
     \begin{subfigure}[b]{0.24\textwidth}
         \centering
         \includegraphics[width=\textwidth]{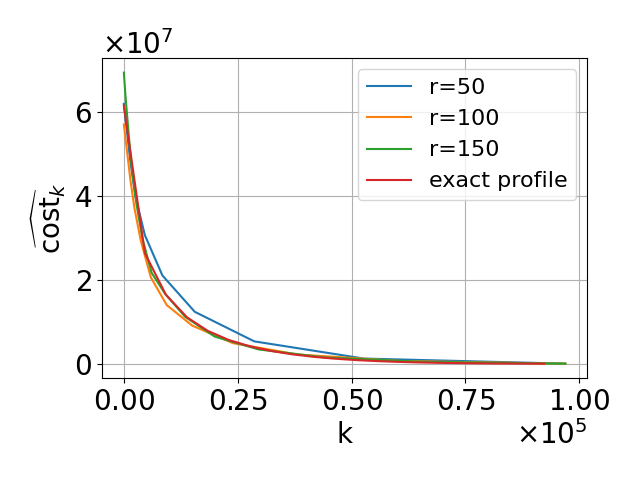}
         \caption{Profiles of Gowalla}
     \end{subfigure}
     
    \caption{Approximation ratio and profiles for distance graphs in localization based datasets}
    \label{fig:appratio-profile-dist-loc}
\end{figure}

\begin{figure}
     \centering
     \begin{subfigure}[b]{0.32\textwidth}
         \centering
         \includegraphics[width=\textwidth]{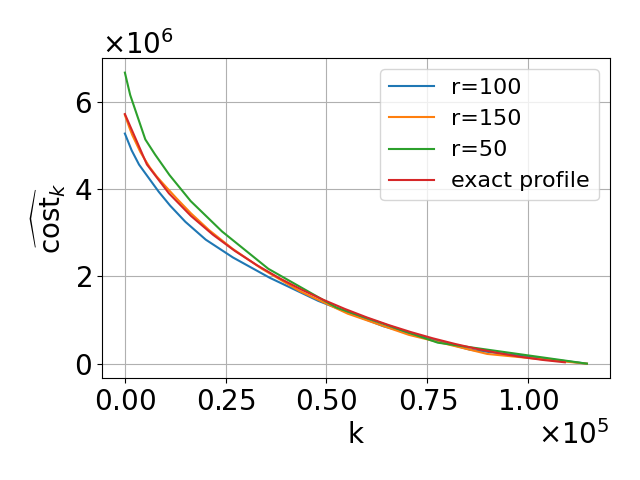}
         \caption{Luxembourg}
     \end{subfigure}
     \hfill
     \begin{subfigure}[b]{0.32\textwidth}
         \centering
         \includegraphics[width=\textwidth]{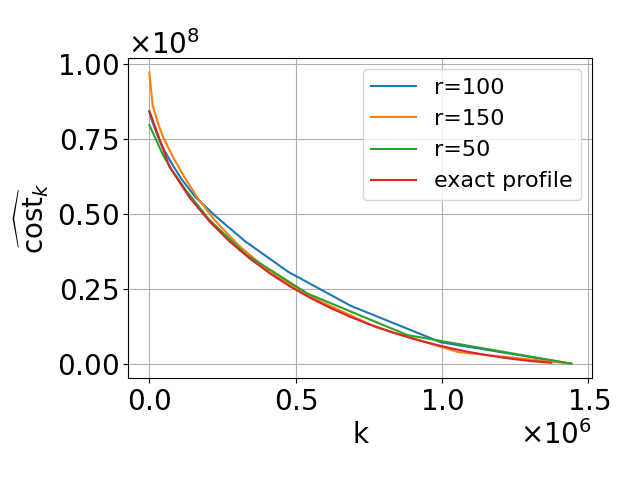}
         \caption{Belgium}
     \end{subfigure}
     \hfill
     \begin{subfigure}[b]{0.32\textwidth}
         \centering
         \includegraphics[width=\textwidth]{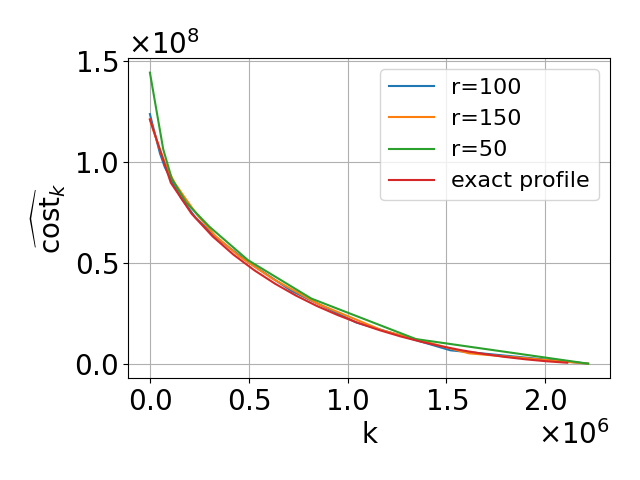}
         \caption{Netherlands}
     \end{subfigure}
    \caption{Profiles for distance case in road networks}
    \label{fig:profile-dist-road1}
\end{figure}

\begin{figure}
     \centering
     \begin{subfigure}[b]{0.32\textwidth}
         \centering
         \includegraphics[width=\textwidth]{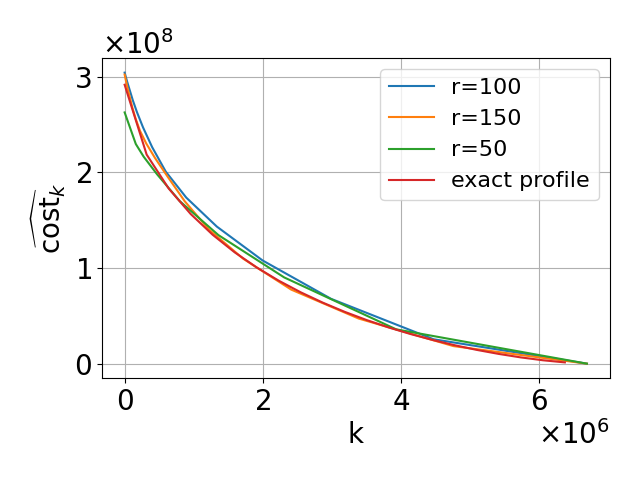}
         \caption{Italy}
     \end{subfigure}
     \hfill
     \begin{subfigure}[b]{0.32\textwidth}
         \centering
         \includegraphics[width=\textwidth]{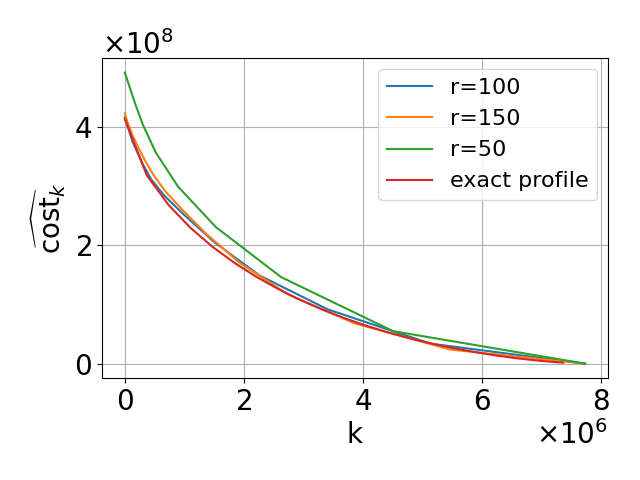}
         \caption{Great Britain}
     \end{subfigure}
     \hfill
     \begin{subfigure}[b]{0.32\textwidth}
         \centering
         \includegraphics[width=\textwidth]{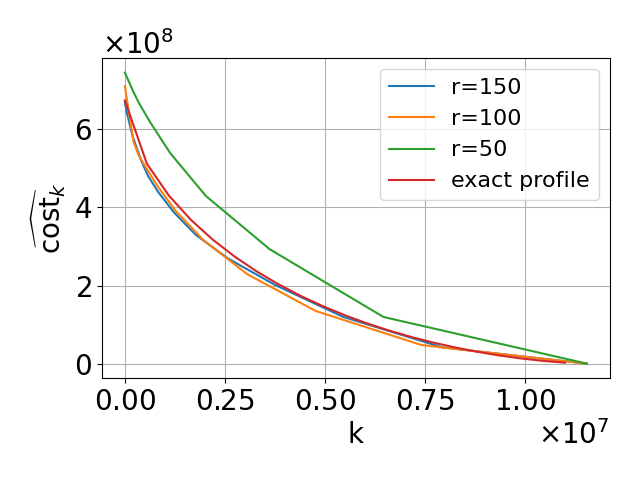}
         \caption{Germany}
     \end{subfigure}
    \caption{Profiles for distance case in road networks}
    \label{fig:profile-dist-road2}
\end{figure}

\begin{figure}
     \centering
     \begin{subfigure}[b]{0.32\textwidth}
         \centering
         \includegraphics[width=\textwidth]{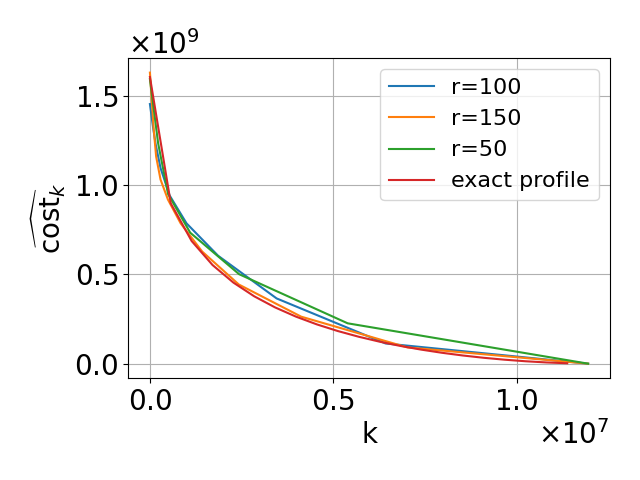}
         \caption{Asia}
     \end{subfigure}
     \hfill
     \begin{subfigure}[b]{0.32\textwidth}
         \centering
         \includegraphics[width=\textwidth]{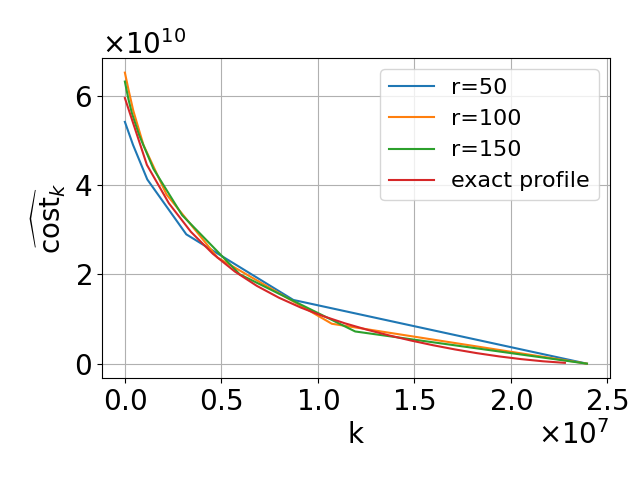}
         \caption{USA}
     \end{subfigure}
     \hfill
     \begin{subfigure}[b]{0.32\textwidth}
         \centering
         \includegraphics[width=\textwidth]{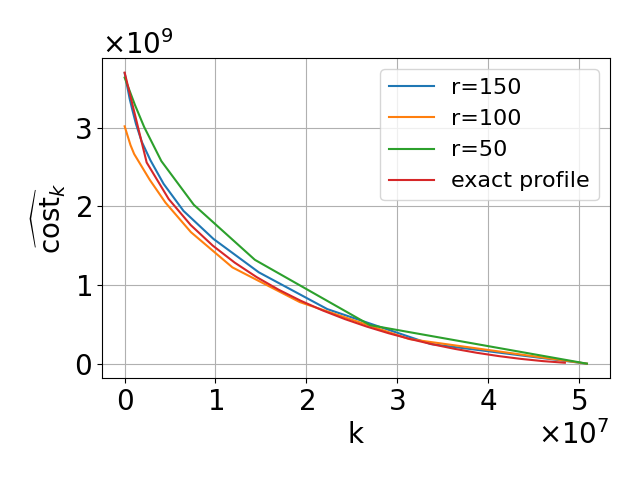}
         \caption{Europe}
     \end{subfigure}
    \caption{Profiles for distance case in road networks}
    \label{fig:profile-dist-road3}
\end{figure}

\subsection{Experiments in Similarity Graphs}

When setting parameters, we choose smaller constants than theory suggests, since real-world datasets often admit good performance with more modest settings; in practice, one could also average results across multiple runs to reduce variance, relaxing the need for per-run accuracy guarantees.

Specifically, in \cref{alg:appcost_sim}, we estimate two types of values: we use \cref{alg:appncc} to estimate $\hat{c}_j$,
and use \cref{alg:appncc_sim} to estimate $\overline{D}_j=n-\bar{c}_j$. These two algorithms use different sample sizes $r$. For the first, we set $r$ to be the input sample size of the code.
For the second, we use a larger sample size $r'$. From theory, $r=O(\frac{1}{\varepsilon^2})$ and 
$r'=\frac{W}{\varepsilon^2}$, thus $r'=O(r\cdot W)$. In practice, we set $r'=\max\{r\cdot W/\log n,r\}$. 
Dividing $r\cdot W$ by $\log n$ balances the running time of the two algorithms, 
and there are at most $O(\log n)$ intervals overall.
For the intervals, since $r=O(\frac{1}{\varepsilon^2})$, we set $\varepsilon=\frac{1}{\sqrt{r}}$ and define interval values as in \cref{def:interval_similarity}.

\begin{figure}[h]
     \centering
     \begin{subfigure}[b]{0.32\textwidth}
         \centering
         \includegraphics[width=\textwidth]{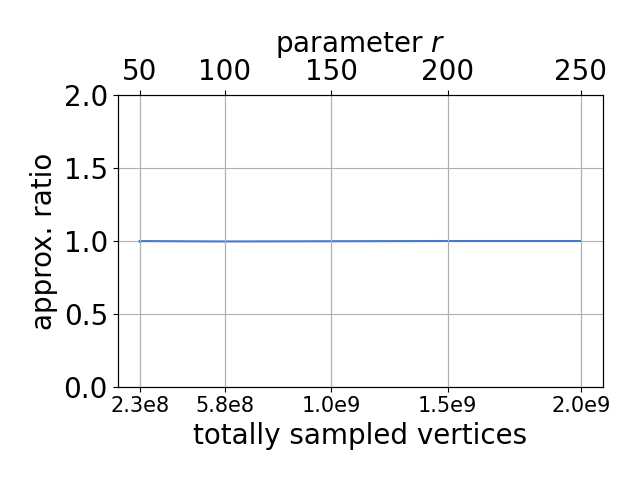}
         \caption{Spotify}
     \end{subfigure}
     \hfill
     \begin{subfigure}[b]{0.32\textwidth}
         \centering
         \includegraphics[width=\textwidth]{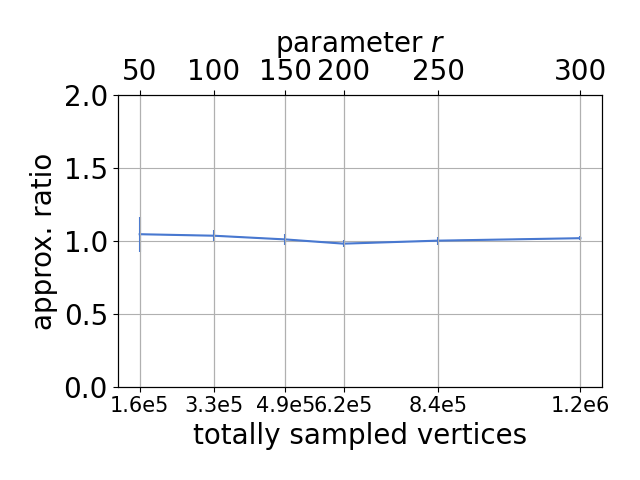}
         \caption{History}
     \end{subfigure}
     \hfill
     \begin{subfigure}[b]{0.32\textwidth}
         \centering
         \includegraphics[width=\textwidth]{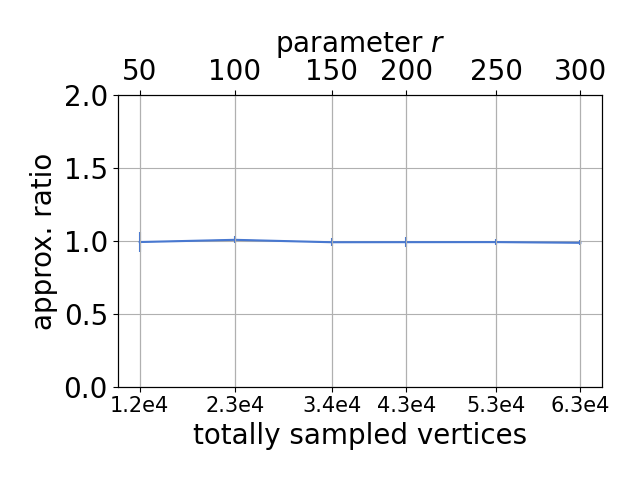}
         \caption{Business}
     \end{subfigure}
    \caption{Approximation ratio in similarity graphs}
    \label{fig:appratio-sim1}
\end{figure}

\begin{figure}[h]
     \centering
     \begin{subfigure}[b]{0.32\textwidth}
         \centering
         \includegraphics[width=\textwidth]{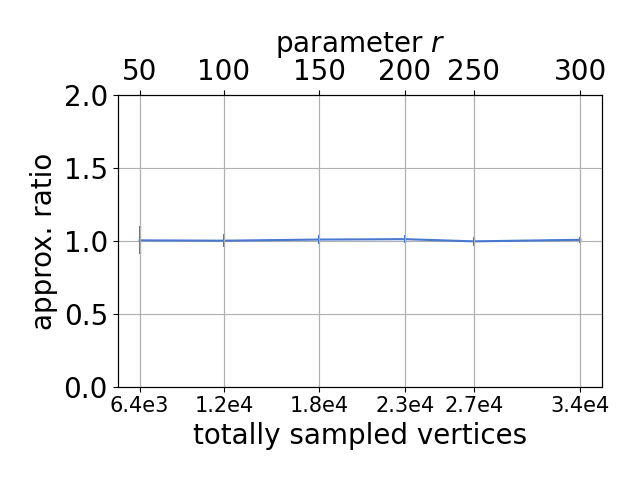}
         \caption{CS (MAG)}
     \end{subfigure}
     \hfill
     \begin{subfigure}[b]{0.32\textwidth}
         \centering
         \includegraphics[width=\textwidth]{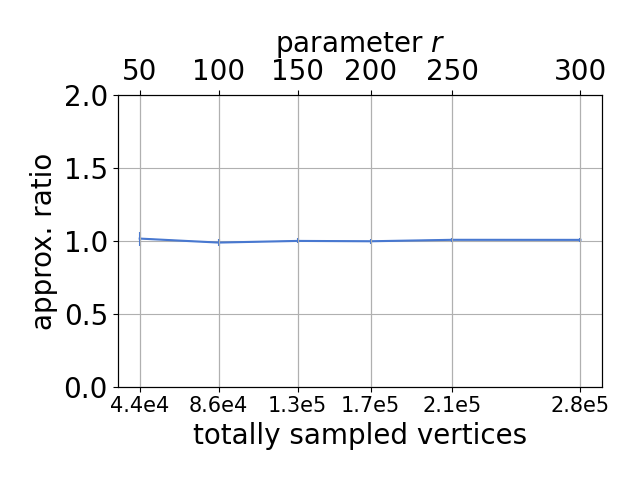}
         \caption{Geology}
     \end{subfigure}
     \hfill
     \begin{subfigure}[b]{0.32\textwidth}
         \centering
         \includegraphics[width=\textwidth]{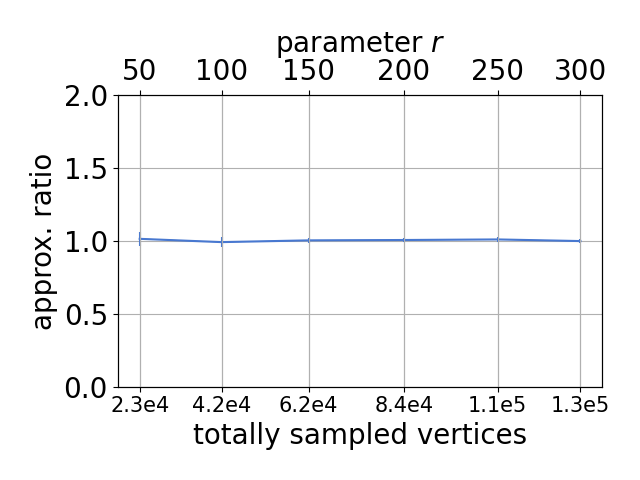}
         \caption{CS (DBLP)}
     \end{subfigure}
    \caption{Approximation ratio in similarity graphs}
    \label{fig:appratio-sim2}
\end{figure}

\cref{fig:appratio-sim1} and \cref{fig:appratio-sim2} show the approximation ratio for estimating the clustering cost $\cost^{(s)}(G)$. Larger sample size $r$ results in better average approximation ratio and smaller deviation: as $r$ increases, $r'$ also increases, improving the estimates of each $\hat{c}_j$ and $\widehat{D}_m$; moreover, with $\varepsilon$ tied to $r$, the intervals become finer, which improves accuracy for $c_j$ and $D_j$ for $1\leq j\leq W$. 

\begin{figure}[h]
     \centering
     \begin{subfigure}[b]{0.32\textwidth}
         \centering
         \includegraphics[width=\textwidth]{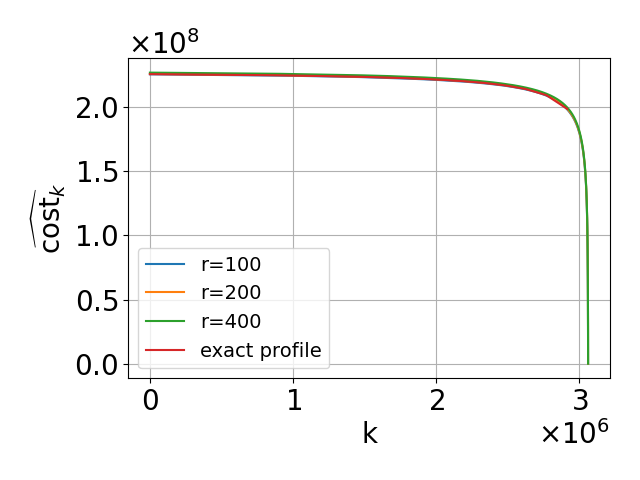}
         \caption{Spotify}
     \end{subfigure}
     \hfill
     \begin{subfigure}[b]{0.32\textwidth}
         \centering
         \includegraphics[width=\textwidth]{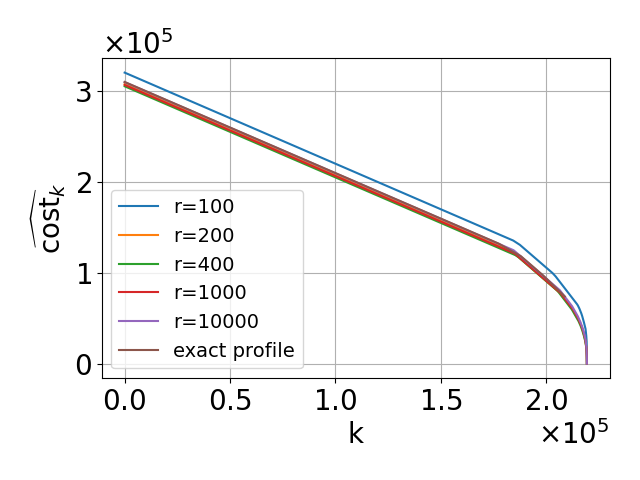}
         \caption{History}
     \end{subfigure}
     \hfill
     \begin{subfigure}[b]{0.32\textwidth}
         \centering
         \includegraphics[width=\textwidth]{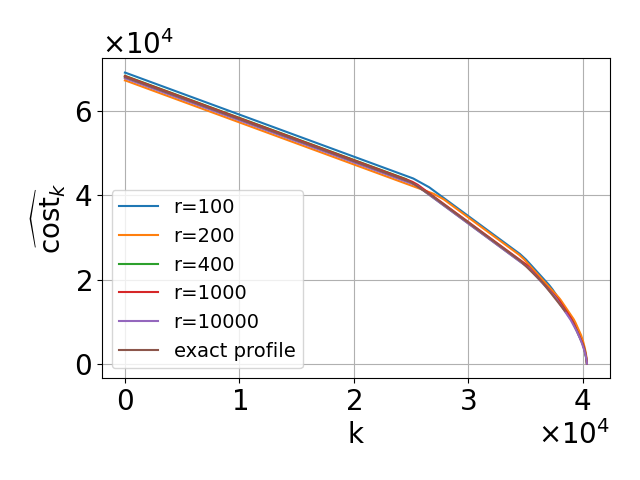}
         \caption{Business}
     \end{subfigure}
    \caption{Profiles for similarity graphs}
    \label{fig:profile-sim1}
\end{figure}

\begin{figure}[h]
     \centering
     \begin{subfigure}[b]{0.32\textwidth}
         \centering
         \includegraphics[width=\textwidth]{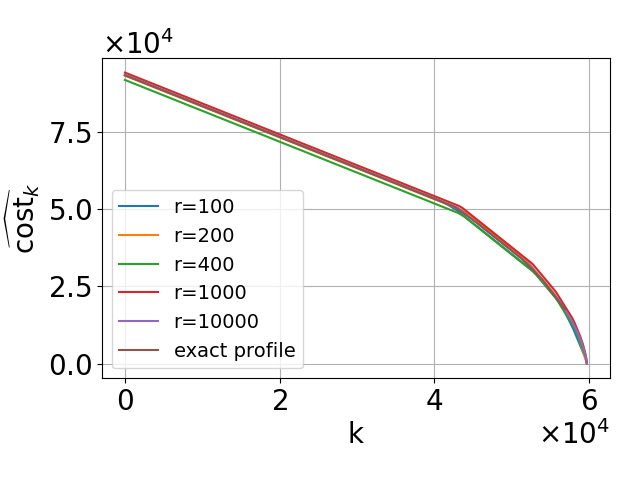}
         \caption{CS (MAG)}
     \end{subfigure}
     \hfill
     \begin{subfigure}[b]{0.32\textwidth}
         \centering
         \includegraphics[width=\textwidth]{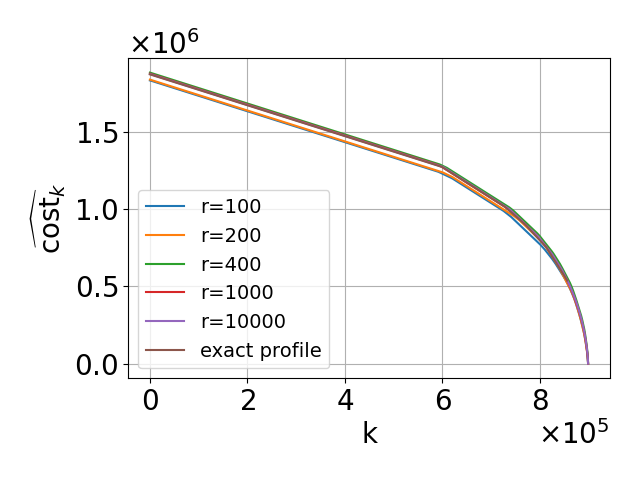}
         \caption{Geology}
     \end{subfigure}
     \hfill
     \begin{subfigure}[b]{0.32\textwidth}
         \centering
         \includegraphics[width=\textwidth]{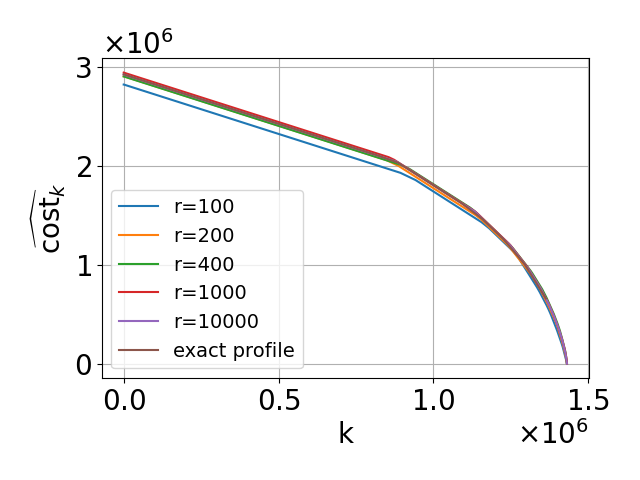}
         \caption{CS (DBLP)}
     \end{subfigure}
    \caption{Profiles for similarity graphs}
    \label{fig:profile-sim2}
\end{figure}

We also compute profiles for similarity datasets, as shown in \cref{fig:profile-sim1} and \cref{fig:profile-sim2}, which demonstrate that our algorithm estimates profiles accurately.
\cref{fig:normalized-profile-sim} presents normalized profiles for similarity datasets: The Spotify curve differs markedly from the co-authorship graphs, indicating that
the profile method distinguishes different types of graphs. A higher curve suggests more
collaboration; for instance, the Spotify dataset exhibits more `collaboration' between songs. 
Note that both \textit{Computer Science} and \textit{DBLP} are co-authorship graphs in the
computer science field, but \textit{DBLP} is more complete and thus reveals more structure, including greater collaboration among authors.

\begin{figure}[h!]
     \centering
     \begin{subfigure}[b]{0.32\textwidth}
         \centering
         \includegraphics[width=\textwidth]{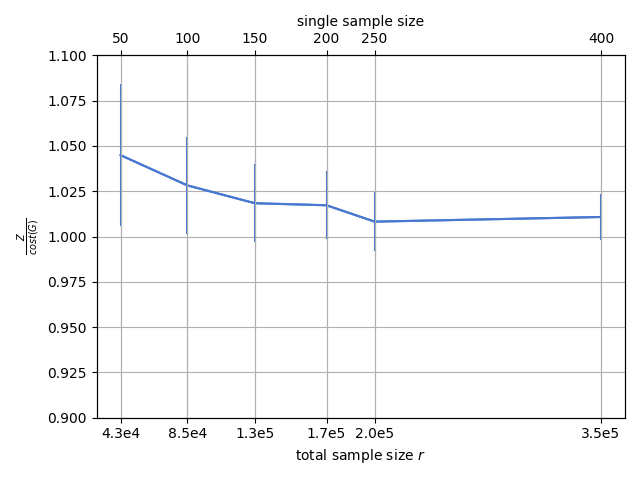}
         \caption{Zoom in on y-axis}
         \label{fig:similarity-bias-phenomynon}
     \end{subfigure}
     \hfill
     \begin{subfigure}[b]{0.32\textwidth}
         \centering
         \includegraphics[width=\textwidth]{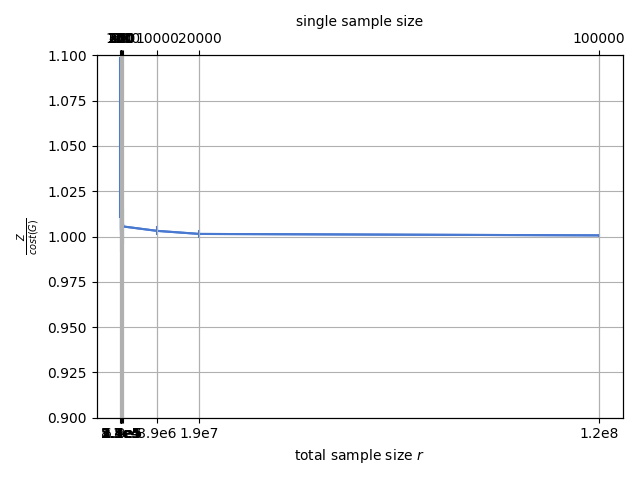}
         \caption{Extremely large $r$}
         \label{fig:similarity-convergence}
     \end{subfigure}
     \hfill
     \begin{subfigure}[b]{0.32\textwidth}
         \centering
         \includegraphics[width=\textwidth]{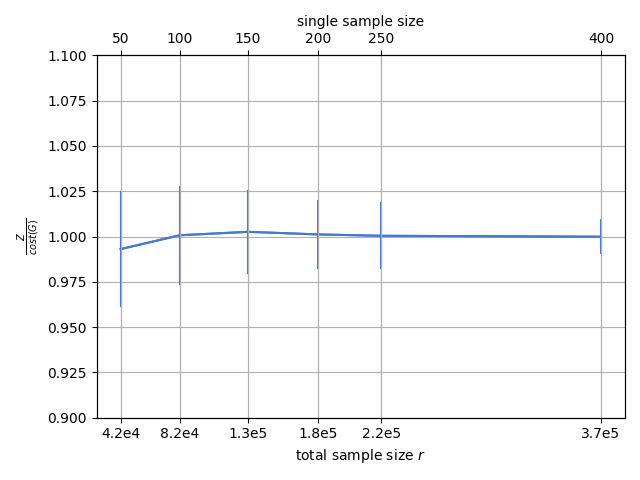}
         \caption{Set the last interval$=[0,1]$}
         \label{fig:similarity-bias-remove}
     \end{subfigure}
    \caption{Bias of co-citation datasets, take Geology as an example}
    \label{fig:similarity-bias}
\end{figure}

\subsection{Discussion on Bias in the Algorithm}

\cref{fig:similarity-bias-phenomynon} shows that, when the sample size $r$ is small, the approximation ratio begins above 1. 
But when $r$ becomes very large,
\cref{fig:similarity-convergence} shows that this bias diminishes and the curve converges to 1.
This behavior may stem from a characteristic of the similarity datasets, as indicated in \cref{tbl:last-interval}:
in the last interval $[0,\frac{\varepsilon n}{W}]$,
more $D_i$ values lie below the interval average.
Consequently, when $r$ is small, the last interval is too sparse,
and $D_i$ values within it are consistently over-estimated, producing the observed bias.
Setting the last interval to the smaller range $[0,1]$ removes this effect, as shown in \cref{fig:similarity-bias-remove}: the average ratio concentrates around 1, with deviation on both sides, meaning some experiments under-estimate the cost while others over-estimate it.

\begin{table}[h]
\caption{In the last interval, $D_i$'s distribution}
\label{tbl:last-interval}
\centering
\begin{tabular}{llll}
\hline
dataset & W   & \#$c_j<\frac{\varepsilon n}{W}$ & \#$c_j<\frac{\varepsilon n}{W}/2$ \\
\hline
History & 606 & 430                     & 329                       \\
Geology & 192 & 133                     & 119                       \\
CS (MAG)     & 36  & 22                      & 19\\
\hline         
\end{tabular}
\end{table}

\end{document}